\documentclass[12pt]{article}
\usepackage{amsmath, amssymb, amsthm}
\usepackage{xcolor}
\usepackage{mathtools}
\usepackage{hyperref}
\usepackage{graphicx}
\usepackage{enumerate}
\usepackage{caption}
\usepackage{bibunits}
\usepackage{minitoc}
\usepackage{subcaption}
\usepackage{placeins}

\usepackage{tocloft}\makeatletter
\renewcommand{\numberline}[1]{\@cftbsnum #1\@cftasnum~\@cftasnumb }

\captionsetup[figure]{font={stretch=1.2}}  

\usepackage{natbib}
\usepackage{cleveref}
\crefname{appendix}{Supplement}{Supplements}

\usepackage{macros}

\usepackage{etoolbox}
\newtoggle{showdeleted}
\togglefalse{showdeleted}

\iftoggle{showdeleted}{
  \newcommand{\deleted}[1]{{\color{gray} #1}}
  
}{
  \newcommand{\deleted}[1]{}
  
}

\newcommand{\blind}{1}

\addtolength{\oddsidemargin}{-.5in}\addtolength{\evensidemargin}{-1in}\addtolength{\textwidth}{1in}\addtolength{\textheight}{1.7in}\addtolength{\topmargin}{-1in}

\usepackage{IEEEtrantools}
\usepackage{algorithm,algcompatible,amsmath}

\usepackage[utf8]{inputenc}

\begin{document}

\doparttoc \faketableofcontents 

\def\spacingset#1{\renewcommand{\baselinestretch}{#1}\small\normalsize} \spacingset{1}

\if1\blind
{
  \title{\bf Robustifying Likelihoods by Optimistically Re-weighting Data}
\author{Miheer Dewaskar\thanks{
    M.D. and C.T. contributed equally. }\hspace{.2cm}\\
    Department of Mathematics and Statistics,\\ University of New Mexico, Albuquerque,  NM 87106\\
    and \\
    Christopher Tosh\footnotemark[1]  \\
    Department of Epidemiology and Biostatistics, \\
    Memorial Sloan Kettering Cancer Center, New York, NY 10065\\
    and\\
    Jeremias Knoblauch\\
    Department of Statistics, UCL, London WC1E7HB, UK
\\
    and\\
    David B. Dunson\\
    Department of Statistical Science, Duke University, Durham,  NC 27705}
  \maketitle
} \fi

\if0\blind
{
  \bigskip
  \bigskip
  \bigskip
  \begin{center}
    {\LARGE\bf Robustifying Likelihoods by Optimistically Re-weighting Data}
\end{center}
  \medskip
} \fi

\begin{bibunit}[plainnat]
\begin{abstract}
Likelihood-based inferences have been remarkably successful in wide-spanning application areas. However, even after due diligence in selecting a good model for the data at hand, there is inevitably some amount of model misspecification: outliers, data contamination or inappropriate parametric assumptions such as Gaussianity mean that most models are at best rough approximations of reality.
A significant practical concern is that for certain  inferences, even small amounts of model misspecification may have a substantial impact; a problem we refer to as {\em brittleness}.
This article attempts to address the brittleness problem in likelihood-based inferences by choosing the most model friendly data generating process in a distance-based neighborhood of the empirical measure. This leads to a new Optimistically Weighted Likelihood (OWL), which robustifies the original likelihood by formally accounting for a small amount of model misspecification. Focusing on total variation (TV) neighborhoods, we
study theoretical properties, develop estimation algorithms and illustrate the methodology in applications to mixture models and regression.
\end{abstract}

\noindent {\it Keywords:}
Coarsened Bayes; Data contamination;  Mixture models; Model misspecification; Outliers;  Robust inference; Total variation distance. 
\vfill

\newpage
\spacingset{1.9}

\section{Introduction}

When the likelihood is correctly specified, there is arguably no substitute for likelihood-based statistical inferences \cite[see e.g.][]{zellner1988optimal,royall2017statistical}. 
However, if the likelihood is misspecified, inferences may be flawed \citep{huber1964robust, tsou1995robust, huber2009, Hampel2005,miller2018robust}.
This has motivated methods for model comparison and goodness-of-fit assessment \cite[see e.g.][]{huber2012goodness, claeskens2015model}.
A key is to verify that the assumed likelihood is consistent with the data at hand.
Yet, even with substantial care in model assessment, some amount of model misspecification is inevitable.

Unfortunately, even slight model misspecification can have dire consequences in certain settings; a problem we refer to as {\em brittleness}.
Brittleness can occur in various applications, including in high dimensional problems \cite[e.g.][]{bradic2011penalized, bradic2016robustness,zhou2020detangling}, in the presence of outliers and contaminating distributions \cite[e.g.][]{huber1964robust,huber2009, Hampel2005}, or for mixture models \cite[e.g.][]{markatou2000mixture, chandra2023escaping, liu2023robustly, cai2021finite}.
This article is focused on robustifying likelihoods to avert such brittleness.

\Cref{fig:simple-failure} illustrates brittleness and our proposed solution in the setting of model-based clustering with kernel mixture models.
Here, most of the data are perfectly modeled by a mixture of two well-separated Gaussians.
However, a small fraction of the data have been corrupted, and are instead drawn uniformly from an interval between the two modes.
As the left panel demonstrates, the maximum likelihood estimate (MLE) 
is sensitive to the corrupted data.
\new{Our proposal is to instead maximize an optimistically weighted likelihood (OWL), which slightly perturbs the data to be more consistent with the assumed model. The left panel 
shows that maximizing the OWL reduces  brittleness, while the right panel shows the corresponding \emph{optimistically re-weighted data.}}

\begin{figure}
	\centering
	\includegraphics[width=0.95\textwidth]{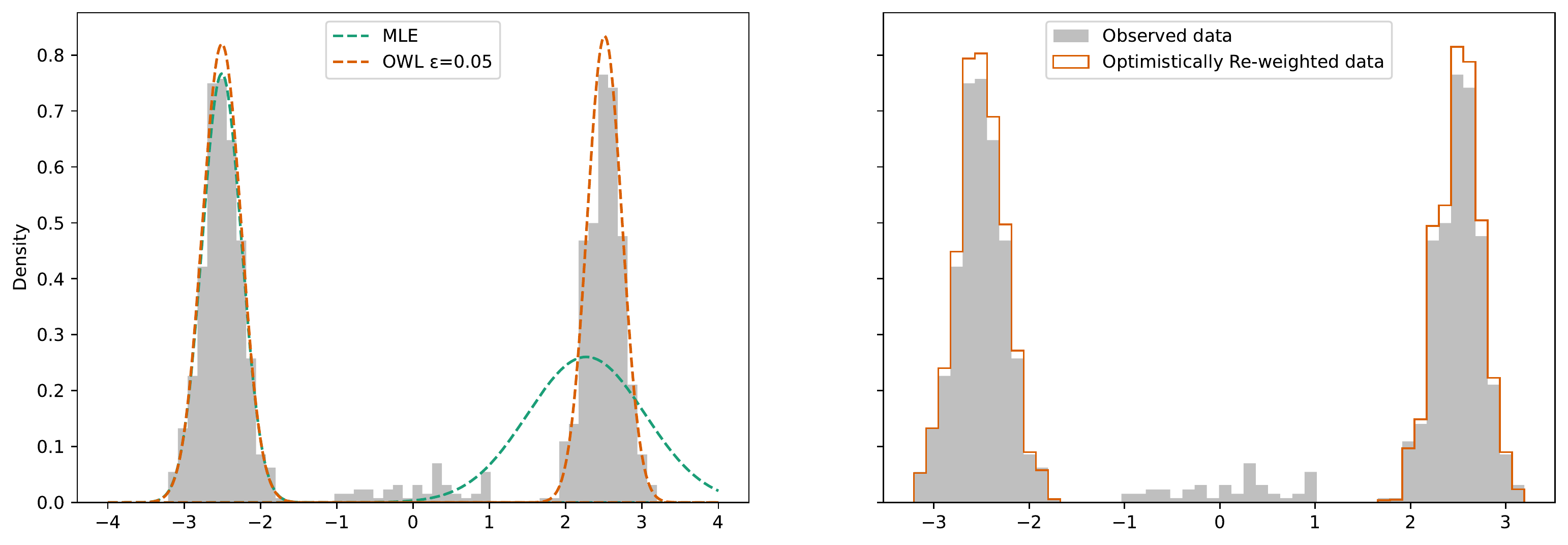}
	\caption{An example illustrating failures of  maximum likelihood estimation (MLE) for misspecified models.
		The data consist \new{of $n=1000$} observations randomly generated from an equally-weighted mixture of two Gaussians with means $-2.5$ and $2.5$, and with standard deviation $1/4$.
$5\%$ of the data were corrupted and drawn i.i.d uniform$(-1,1)$.
\textbf{Left}: The \new{green dashed  line} denotes the MLE found using expectation maximization. The \new{orange dashed line} denotes the solution found by maximizing our optimistically weighted likelihood based on $\varepsilon_0=0.05$.
\textbf{Right}: \new{Depicts the \emph{optimistically re-weighted data} that were used in place of the observed data to compute the MLE.}}
	\label{fig:simple-failure}
\end{figure}

\new{Although in principle one can address the issue of misspecification by designing more flexible models, e.g.~via mixture or nonparametric models \citep{taplin1993robust, scarpa2009bayesian, beath2018mixture}, such flexibility often comes at the expense of interpretability, identifiability, and efficiency, both computational and statistical. Moreover, there is always the risk that one's model is not flexible enough and is therefore subject to the brittleness concerns outlined above. These issues motivate the core approach of the field of robust estimation: rather than designing a complicated model that fits all aspects of the data generating process, one instead fits a model that is only approximately correct, albeit with a different methodology. At the heart of many robust estimation methods is the concept of the \emph{weighted likelihood}.}

For a likelihood function $f$, the weighted likelihood replaces $L(x_{1:n}|\theta) =
\prod_{i=1}^n f(x_i|\theta)$ by the weighted counterpart
\begin{IEEEeqnarray}{rCl}
	L_\bw(x_{1:n}|\theta) & = &
	\prod_{i=1}^n f(x_i|\theta)^{w_i}
	\label{eq:wlikelihood}
\end{IEEEeqnarray}
for a collection of weights $\bw = \{w_i\}_{i=1}^n$ that typically depend on $\theta$ and the data.
\new{Estimation of $\theta$ and $\bw$ often proceeds via \emph{iteratively reweighted maximum likelihood estimation (IRW-MLE)} by repeating the following steps until convergence: a) maximize likelihood $\theta \mapsto L_{\bw}(x_{1:n}|\theta)$  based on the current $\bw$ and, b) update $\bw$ using the current $\theta$ \citep{avella2015robust,maronna2019robust}. A variety of classical robust methods  \citep{huber2009,Hampel2005} fall in this framework, providing weights measuring the influence of each observation \citep[e.g., Figure 8 in][]{mancini2005optimal}. Typical practice
assumes $w_i = w(x_i, \theta)$ with $w(\cdot)$ chosen so that $w_i \approx 0$ for observations having very low likelihood under the presumed model; e.g., $w(x,\theta)$ is a function of the density $f(x|\theta)$ \citep{windham1995} or its cumulative distribution function \citep{field1994, dupuis2002}. See \Cref{sec:more-classical-robustness} for more discussion about these methods.}

\new{Typical robust estimation methods focus on sensitivity to outliers, while our interest is in broader types of  misspecification. 
In \Cref{fig:model-selection-example} from \Cref{sec:model-selection-illustration}, we illustrate how misspecification in the shape of the  kernel can lead to errors in selecting the number of components in mixture models. We aim to develop a general weighted likelihood methodology that allows for model misspecification beyond outliers. We focus on robustness to a small amount of misspecification with respect to an appropriate metric $\D$ in the space of probability distributions.}

\new{
A well-studied approach to achieving robustness is \emph{minimum distance estimation} \citep{MinimumDistanceMethod,william1981minimum}, which attempts to find a model that minimizes the distance to the generating distribution with respect to the metric $\D$. While this method can be shown to be robust to misspecification in $\D$ \citep{donoho1988automatic}, it is often challenging to compute \citep[e.g.][]{yatracos1985rates} beyond suitable $\phi$-divergences  \citep{lindsay1994efficiency}. In continuous spaces, when $\D$ is a $\phi$-divergence with smooth $\phi$, minimum distance estimation can be computed using weighted likelihood equations with $w(x,\theta) = \frac{A(\delta(x|\theta)) -  A(-1)}{\delta(x|\theta) + 1}$ where $A$ is the residual adjustment function corresponding to $\phi$ and $\delta(x|\theta) = \hat{p}(x)/f(x|\theta) - 1$ is the Pearson residual based on a density estimate $\hat{p}$ of the observed data \citep{basu1994minimum,basu1998,markatou1998,greco2020weighted}. Unfortunately, most $\phi$-divergences outside the Hellinger and total-variation (TV) cases are not metrics, making interpretation difficult. Moreover, the important case of TV distance is excluded by this work since the corresponding $\phi$ is not smooth.  Recent work \citep{Cherief-Abdellatif2019,briol2019statistical,bernton2019parameter} has provided tools for computing minimum distance estimators when $\D$ is the maximum mean discrepancy (MMD) or Wasserstein distance, however this approach relies on generative models and does not make use of the likelihood.}

\new{
Here, 
we develop a weighted likelihood methodology that can perform \emph{approximate} minimum distance estimation, providing a recipe for robustifying  Bayesian and frequentist approaches by re-weighting the likelihood (see \Cref{sec:model-selection-illustration}). For non-negative weights $\bw = \{w_i\}_{i=1}^n$ that sum to $n$, our key observation is that the weighted likelihood \eqref{eq:wlikelihood} is an ordinary (i.e.~un-weighted) likelihood for perturbed data represented by the weighted empirical distribution  $\hat{Q}_{\bw} = n^{-1}\sum_{i=1}^n w_i \delta_{x_i}$, where $\delta_{x}$ denotes the Dirac measure at $x$. Thus IRW-MLE simultaneously finds: i) an  optimal data perturbation $\hat{Q}_{\bw^*}$, and ii)  the ordinary maximum likelihood estimate $\hat{\theta}$ based on the perturbed data $\hat{Q}_{\bw^*}$. For approximate minimum distance estimation, we require the perturbations $\hat{Q}_{\bw}$ to lie in a $(\D,\epsilon)$ neighborhood around the empirical distribution of the observed data $\hat{P} = n^{-1}\sum_{i=1}^n \delta_{x_i}$ for a choice of $\epsilon > 0$ given by the user.}

\new{
Our optimal weights $\bw^*$ are determined using the principle of \emph{optimism}: in the absence of  information other than constraint $\D(\hat{Q}_{\bw}, \hat{P}) \leq \epsilon$,  we choose the perturbation $\hat{Q}_{\bw}$ closest to the model family in the Kullback Leibler (KL) sense. If $\{P_\theta\}_{\theta \in \Theta}$ denotes our model family and  $\hat{\KL}$ is an estimator for  the KL divergence, we (\Cref{alg:owl}) adapt the IRW-MLE procedure to solve the following \emph{Optimistic Kullback Leibler} (OKL) minimization:}
\begin{equation}
\label{eq:okl-intuitive}
\min_{\theta \in \Theta} \min_{\bw: \D(\hat{Q}_\bw, \hat{P}) \leq \epsilon} \hat{\KL}(\hat{Q}_{\bw}|P_\theta).
\end{equation}

\new{When $\epsilon = 0$, OKL minimization is equivalent to the usual MLE,
and the choice $\epsilon > 0$ reflects the degree of misspecification. If $P_0$ is the true distribution generating $x_1, \ldots, x_n$, and $\epsilon$ is greater than the degree of misspecification $\varepsilon_0 \doteq \min_{\theta \in \Theta} \D(P_0, P_\theta)$, then 
any global minimizer $\hat{\theta}$ of \eqref{eq:okl-intuitive} can be asymptotically shown to lie in the set $\{\theta : \D(P_0, P_\theta) \leq \epsilon\}$ (\Cref{prop:owl-consistency}). Thus when $\epsilon$ is tuned to approximate $\varepsilon_0$ (\Cref{sec:tune-epsilon}), $\hat{\theta}$ will be a minimum distance estimator. }

\new{In this work, we show that the OKL is intimately related to the concept of the \emph{coarsened likelihood} \citep{miller2018robust} -- a genuine  likelihood that acknowledges small misspecification in terms of $\D$. While even evaluating the coarsened likelihood requires the computation of a high-dimensional integral, we use techniques from large deviation theory \citep{demboLargeDeviationsTechniques2010} to show that the coarsened likelihood is asymptotically equivalent to the OKL. Thus, the OWL methodology can be seen as a bridge, connecting minimum distance estimation and weighted likelihood methods on one side and the coarsened likelihood approach on the other.}

\new{
When $\D(Q,P) = \int \log \frac{dQ}{dP} dQ$ is the KL divergence, 
the Lagrangian formulation of our OKL minimization problem corresponds to minimizing a Cressie-Read power-divergence between the observations and the model family \citep{ghosh2018new}. Since different computational strategies may be required across metrics, we focus on  total variation (TV) distance for $\D$, which is easy to interpret, is a proper metric, and is robust to outliers and contamination \citep{donoho1988automatic}. Further, TV based estimators have been shown to attain minimax estimation rates under Huber’s contamination model \citep{chen2016general}.}

\new{
Coarsened likelihoods \citep{miller2018robust} in general, and the principle of optimism used here in particular, are related to statistical learning from fuzzy or imprecise data \citep{hullermeier2015superset,hullermeier2014learning}; the key idea is to treat observed data as imprecise to facilitate robust learning \citep{lienen2021instance}. 
Among various approaches to robust inference in regression \cite[e.g.][]{she2011outlier,rousseeuw2005robust,avella2018robust},  \cite{bondell2013efficient} implements weighted least squares with weights  learned within an empirical likelihood framework subject to a target weighted squared residual error.  While OKL minimization  \eqref{eq:okl-intuitive} finds weights 
in a fashion similar to this and other empirical likelihood approaches \citep[e.g.][]{choi2000rendering}, our optimization problem arises from large deviation formulas and leads to weights that are constrained to lie within a small total-variation ($\ell_1$) distance of the uniform weight vector.}

The remainder of this paper is structured as follows:
\Cref{sec:methodology} presents the OWL methodology, the associated alternating optimization scheme, \new{and discusses some formal robustness guarantees}.
\Cref{sec:coarsened-inference} demonstrates the asymptotic connection between OWL and coarsened inference.
\Cref{sec:simul} presents a suite of simulation experiments for the OWL methodology in both regression and clustering tasks.
 \Cref{sec:micro-credit} uses OWL to infer the average intent-to-treat effect in a micro-credit study \citep{angelucci2015microcredit}, whose inference using ordinary least squares (OLS) was shown to be brittle to the removal of an handful of observations \citep{broderick2020automatic}. Finally, we provide a robust clustering application for single-cell RNAseq data in \Cref{sec:application}.  
\section{Optimistically Weighted Likelihoods}
\label{sec:methodology}

As \Cref{fig:simple-failure} shows, maximum likelihood estimation can be brittle when the data generating distribution $P_0$ \new{lies outside, albeit suitably close, to the  model family $\{P_\theta\}_{\theta \in \Theta}$.}  
To assuage the problem of brittleness under misspecification, we propose an Optimistically Weighted Likelihood (OWL) approach that iterates between (1) an optimistic (or model-aligned) re-weighting of the observed data points and (2) updating the parameter estimate by maximizing a weighted likelihood based on the current data weights.

In \Cref{sec:okl}, we study this parameter inference methodology at the population level where \new{we formally allow $P_0$ to be misspecified as long as it is close to the model family in the total-variation (TV) distance.} Here we introduce the population level Optimistic Kullback Leibler (OKL) function with parameter $\epsilon \in [0,1]$ (\Cref{def:okl}), and show that its minimizer will be a parameter $\theta$ for which $P_\theta$ is $\epsilon$-close to $P_0$ in TV distance. \new{We provide formal robustness guarantees for these minimizers by analyzing a  bias-distortion curve}.

Motivated by this population analysis, in \Cref{sec:owl}, we derive the OWL-based parameter estimation methodology when only samples $x_1, \ldots, x_n$ from $P_0$ are available.

\subsection{Population level Optimistic Kullback Leibler minimization}
\label{sec:okl}

\begin{figure}
	\includegraphics{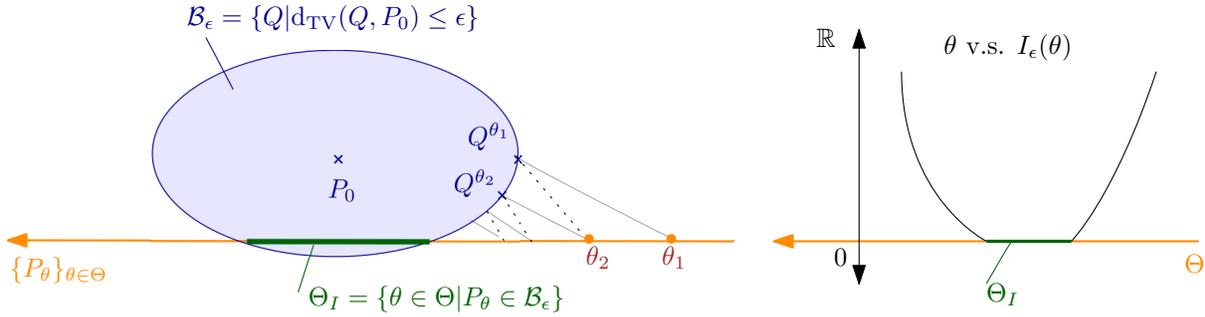}
	\caption{Population level description of OWL (left) and the OKL function (right). The point $P_\theta$ is labeled using $\theta \in \Theta$. The inference  problem is to find a point in $\Theta_I \subseteq \Theta$ where the model family intersects the  $\epsilon$-neighborhood $\ball_\epsilon$ of the data distribution $P_0$. The set $\Theta_I$ is the set of minimizers of the OKL function $\theta \mapsto I_\epsilon(\theta)$ (right). Starting from initial point $\theta_1 \in \Theta$, the OWL procedure finds a saddle point of the OKL function by iterating $\theta_{t+1} = \argmin_{\theta \in \Theta} \KL(Q^{\theta_t}|P_\theta)$ for $t=1,2, \ldots,$ until convergence, where $Q^\theta = \argmin_{Q \in \ball_\epsilon}  \KL(Q|P_\theta)$ denotes the information projection (\cite{amari2016information}) of the model $P_\theta$ on the total variation (TV) neighborhood $\ball_\epsilon$ of $P_0$. Iterations alternate between I-projection and weighted likelihood estimation steps, illustrated via solid and dashed lines.} 
\label{fig:okl}
\end{figure}

Let $\cP(\cX)$ denote the space of probability distributions on the data space $\cX$. In principle, our methodology can accommodate misspecification in terms of a variety of probability metrics (e.g.~MMD or Wasserstein) on $\cP(\cX)$, but here we mainly focus on the total variation (TV) distance for concreteness and interpretability. Let $\tv(P,Q) = \sup_{A \subseteq \cX} |P(A)-Q(A)|$ denote the TV metric between two probability distributions $P, Q \in \cP(\cX)$, \new{where the supremum ranges over all measurable subsets $A$ of $\cX$}. Given a model family $\{P_\theta\}_{\theta \in \Theta} \subseteq \cP(\cX)$, we assume that the data generating distribution $P_0 \in \cP(\cX)$ for the data population in question satisfies $\tv(P_0, P_{\theta^*}) \leq \epsilon$ for a known value $\epsilon \geq 0$ and some unknown $\theta^* \in \Theta$. In other words, we make the following assumption (satisfied under Huber's $\epsilon$-contamination model~\citep{huber1964robust} given by $P_0 = (1-\epsilon) P_{\theta^*} + \epsilon C$ for any $C \in \cP(\cX)$).

\begin{assume}
	\label{ass:misspecification}
Given $\epsilon \geq 0$ and the true data distribution $P_0$, the set of parameters
$\Theta_I = \{ \theta \in \Theta | \tv(P_0, P_\theta) \leq \epsilon\}$ is non-empty.
\end{assume}

\new{
In terms of \cite{liu2009building}, \Cref{ass:misspecification} requires that our model be \emph{adequate for the data} at level $\epsilon$ in terms of the TV neighborhood. While \cite{liu2009building} develop a likelihood-ratio test for this assumption, here we focus on estimating parameters by maximizing a natural likelihood (\Cref{sec:coarsened-inference}), leading to a different optimization problem.}

In general, under \Cref{ass:misspecification}, it may only be possible to identify the set $\Theta_I$, rather than any particular $\theta^* \in \Theta$. 
Although such indeterminacy may be inherent, it is practically insignificant whenever $\epsilon$ is sufficiently small so that the distinction between two elements from $\Theta_I$ is irrelevant~\citep{huber1964robust}. In line with this insight, the goal throughout the rest of the paper will be to identify \emph{some} parameter in $\Theta_I$. \new{Indeed, selecting $\epsilon$ to approximate $\min_{\theta} \tv(P_0,P_\theta)$ (see \Cref{sec:tune-epsilon}) will recover a minimizer of the map $\theta \mapsto \tv(P_0,P_\theta)$.}

At the population level, maximum likelihood parameter estimation amounts to minimizing the Kullback Leibler (KL) function $\theta \mapsto \KL(P_0|P_{\theta})$ on the parameter space $\Theta$. Even under small amounts of misspecification, KL minimizers are very brittle since any minimizer of the KL function must place sufficient probability mass wherever $P_0$ does, including on outliers.
In contrast, TV distance is far less sensitive to \new{outliers and slight shifts in the shape of the distribution.} 
Hence one may minimize $\theta \mapsto \tv(P_0, P_\theta)$ as a robust alternative, particularly under  \Cref{ass:misspecification}. However, direct minimization of TV distance over the parameter space $\Theta$ is difficult to implement in practice due to the lack of suitable optimization primitives (e.g.~maximum likelihood estimators) and the  non-convex and non-smooth nature of the optimization problem \citep[see e.g.][]{yatracos1985rates}. 

Instead, we modify the KL loss by minimizing its first argument over the $\epsilon$-neighborhood of $P_0$ in the TV distance.
The resulting function, which we term the Optimistic Kullback Leibler (OKL), is defined as follows.

\begin{definition}(Optimistic Kullback Leibler) Given $P_0$ and $\epsilon > 0$, the OKL function $I_\epsilon: \Theta \to [0,\infty]$  is defined as:
\begin{equation}
	\label{eq:okl}
    I_\epsilon(\theta) = \inf_{Q \in \ball_\epsilon(P_0)} \KL(Q|P_\theta),
\end{equation}
where $\ball_\epsilon(P_0) = \{Q \in \cP(\cX) | \tv(P_0,Q) \leq \epsilon\}$ is the TV ball of radius $\epsilon$ around $P_0$.
If $I_\epsilon(\theta) < \infty$, the underlying optimization over $\ball_\epsilon(P_0)$ has a unique minimizer $Q^\theta$ called the I-projection \citep{csiszar1975divergence}.
\label{def:okl}
\end{definition}

The OKL function $\theta \mapsto I_\epsilon(\theta)$ measures the fit of a model $P_\theta$ to the data $P_0$, allowing for a degree $\varepsilon$ of  data re-interpretation in the TV distance before assessing model fit. More precisely, since $I_\epsilon(\theta) = \KL(Q^\theta|P_\theta)$, the OKL is equal to the KL divergence between the \emph{optimistic} (i.e.~most model aligned) distribution $Q^\theta \in \ball_\epsilon(P_0)$ in the neighborhood of $P_0$ and the model $P_\theta$.
Here,  $\epsilon \geq 0$ regulates the permitted degree of re-interpreting the data by controlling the neighborhood size.

\new{
The OKL function can be used to find a parameter from the set $\Theta_I$.
Indeed, as illustrated in \Cref{fig:okl} (right), the minimum value of zero for the OKL function is attained exactly on the set $\Theta_I$ since $Q^\theta = P_\theta$ if and only if $\theta \in \Theta_I$.}
However, the OKL can be non-convex as a function of $\theta$, so that calculating the global minimizer of OKL may not be straightforward. 
\new{Fortunately, the OKL lends itself to a feasible alternating optimization scheme (\Cref{fig:okl}; left) that is guaranteed to decrease the OKL objective or reach a saddle point under regularity conditions.}

\deleted{
Minimizing the OKL function is equivalent to solving the joint minimization problem $\min_{\theta \in \Theta} \min_{Q \in \ball_\epsilon(P_0)} \KL(Q|P_\theta)$. For this, we can use the following alternating optimization strategy: start from a $\theta_1 \in \Theta$ such that $I_\epsilon(\theta_1)  < \infty$, and perform the $Q$-step: $Q^{\theta_{t}} = \argmin_{Q \in \ball_\epsilon(P_0)} \KL(Q|P_{\theta_t})$ and the $\theta$-step: $\theta_{t+1} = \argmin_{\theta \in \Theta} \KL(Q^{\theta_t}|P_\theta)$ for $t \in \{1,2,
\ldots \}$. If the model family $\{P_\theta\}_{\theta \in \Theta}$ and $P_0$ have densities $\{p_\theta\}_{\theta \in \Theta}$ and $p_0$ with respect to a common measure $\lambda$, then the I-projection $Q^{\theta_t}$ will also have a density $q_t \in \Den = \{q : \cX \to [0,\infty) \mid \int q(x) d\lambda(x) = 1\}$ with respect to $\lambda$, and the iterations will take the following form:
\begin{enumerate}
	\item \underline{Q-step}: $q_t = \argmin_{q} \int q(x) \log \frac{q(x)}{p_{\theta_t}(x)} d\lambda(x)$, where the minimization is over all $q \in \Den$ that satisfy the total-variation constraint $\frac{1}{2} \int |q-p_0| d\lambda  \leq \epsilon$.

	\item \underline{$\theta$-step}: $
	\theta_{t+1} = \argmax_{\theta \in \Theta} \int q_t(x) \log p_\theta(x) d\lambda(x).
	$
\end{enumerate}

The above iterations provide a scheme to minimize the OKL function in which the $Q$-step can  approximately be performed using tools from convex optimization, while the $\theta$-step can be approximated by maximization of a suitably weighted log-likelihood, which can be computed for many standard models. 
Our resulting population level methodology is illustrated in \Cref{fig:okl}.}

\deleted{We remark that one can use optimization of the OKL function $I_\epsilon(\theta)$ as a subroutine to minimize the function $\theta \mapsto \tv(P_0, P_\theta)$. Namely, we can perform binary search over $\epsilon \in [0,1]$, increasing $\epsilon$ whenever we have $I_\epsilon(\theta) > 0$ and decreasing $\epsilon$ whenever we have $I_\epsilon(\theta) = 0$. Used this way, OKL optimization can be seen as a computationally-palatable approach to minimizing the TV distance over a model class.}

\new{We now summarize formal robustness guarantees for the global minimizers of the OKL function against perturbations of $P_0$ in small TV-neighborhoods around the model family; details can be found in \Cref{sec:robustness-okl-minimizer}. We call the mapping from $P_0$ to the global minimizers of the OKL function as the \emph{OWL functional} $\OF: \cP(\cX) \to 2^{\Theta}$, defined as $\OF(P_0) \doteq \argmin_{\theta \in \Theta} I_\epsilon(\theta)$ where $2^{\Theta}$ denotes the power-set of $\Theta$. Motivated by the robustness theory for minimum distance functionals \citep{donoho1988automatic}, when $P_0 = P_{\theta^*}$ lies on the model family, we bound the growth of a bias-distortion (BD) curve $\delta \mapsto \maxbias(\delta|P_0)$ defined as $\maxbias(\delta|P_0) \doteq \sup_{\substack{P \in \cP(\cX) \\ \dtv(P, P_0) \leq \delta}} \dH(\OF(P), \OF(P_0))$, where $\dH$ denotes the Hausdorff distance on $2^{\Theta}$. Indeed, classically important robustness indicators like \emph{sensitivity} and \emph{qualitative-robustness} are properties about the growth of a suitable BD curve when $\delta$ is close to zero, and the \emph{breakdown-point} is the point at which the BD curve has a vertical asymptote \citep[Figure 1]{donoho1988automatic}. Given this setup, under suitable regularity conditions, our two main results in \Cref{sec:robustness-okl-minimizer} show that the \emph{OWL functional} is qualitatively-robust with finite sensitivity (corollary of \Cref{lem:bd-small-bounds}) and has a breakdown-point of at least $\epsilon$ whenever $\epsilon$ is less than half of the best breakdown-point for the model family among all Fisher consistent functionals (\Cref{lem:owl-breakdown}).} 

\subsection{Optimistically Weighted Likelihood (OWL) estimation}
\label{sec:owl}

Here we extend the population level methodology from \Cref{sec:okl} to handle the practical case when  samples $x_1, \ldots, x_n \iid P_0$ are available, \new{providing a computable approximation to the alternating procedure in \Cref{fig:okl}.
Particularly, the I-projection step is now approximated by a convex optimization problem over weight vectors constrained to lie within the intersection of the $n$-dimensional probability simplex and the $\ell_1$ ball of radius $2\epsilon$ around the uniform probability vector; these optimal weights can be interpreted as an optimistic re-weighting of the original data points $x_{1}, \ldots, x_{n}$ to match the current model estimate. Given these weights, a new parameter estimate is then found by maximizing the weighted likelihood.} We call this the Optimistically Weighted Likelihood (OWL) method.

\subsubsection{Approximating OKL by a finite dimensional optimization problem}

Given observed data $x_1, \ldots, x_n \iid P_0$,  we start by approximating the OKL function $I_\epsilon(\theta)$ in terms of a finite dimensional optimization problem. Henceforth, let us assume that the model family $\{P_\theta\}_{\theta \in \Theta}$ and measure $P_0$ have densities $\{p_\theta\}_{\theta \in \Theta}$ and $p_0$ with respect to a common measure $\lambda$. We will focus on two cases of interest: when $\cX$ is a \new{discrete} space and $\lambda$ is the counting measure, and when $\cX=\R^d$ and $\lambda$ is the Lebesgue measure.

When $\cX$ is \new{discrete}, we look to solve the optimization problem in \cref{eq:okl} over data re-weighting $Q = \sum_{i=1}^n w_i \delta_{x_i}$ as the weight vector $w=(w_1, \ldots, w_n)$ varies over the \ndsimplex\ $\Delta_n$ and satisfies the TV constraint $\frac{1}{2}\|w-o\|_1 \leq \epsilon$ where $o=(1/n, \ldots, 1/n) \in \Delta_n$. Formally, our finite space OKL approximation is given by

\begin{equation}
	\label{eq:finiteokle}
	\finitehatI(\theta) = \inf_{\substack{w \in {\Delta}_n \\ \frac{1}{2} \|w-o\|_1 \leq \epsilon}} \sum_{i=1}^n w_i \log \frac{nw_i \pfinite(x_i)}{p_\theta(x_i)},
\end{equation}
where $\pfinite(y) = \frac{|\{i \in [n] | x_i = y\}|}{n}$ is the histogram estimator for $p_0$ based on observed data. An application of the log sum inequality~\cite[Theorem~2.7.1]{cover2006elements} shows that the weights that solve \cref{eq:finiteokle} have the appealing and natural property that $w_i = w_j$ whenever $x_i=x_j$.
Moreover, when the support of $p_0$ \new{is finite and} contains the support of $p_\theta$, $\finitehatI(\theta)$ converges to $I_\epsilon(\theta)$ at rate $n^{-1/2}$, as demonstrated by the following result (\Cref{sec:okl-estimation-finite}). \new{A modified proof can establish the weaker result of consistency of $\finitehatI(\theta)$ even when the support of $p_0$ is (countably) infinite, provided  $\supp(p_\theta) \subseteq \supp(p_0)$. The latter condition is mild and holds whenever the model family has a fixed support (e.g. exponential family) and $p_0$ is a Huber's contamination from the model.}
\begin{theorem}
\label{thm:finite-okl-convergence-tv}
Suppose that $I_{\epsilon_0}(\theta) < \infty$ for some $\epsilon_0 > 0$ and pick $\delta > 0$ and $\epsilon > \epsilon_0$. If $\supp(p_\theta) \subseteq \supp(p_0)$, \new{$\supp(p_0)$ is finite}, and $x_1,\ldots,x_n \iid p_0$, then with probability at least $1-\delta$, 
\[ |I_\epsilon(\theta) - \finitehatI(\theta)| \leq  O\left(\frac{ |\supp(p_0)|}{\epsilon - \epsilon_0} \sqrt{\frac{1}{n} \log \frac{|\supp(p_0)|}{\delta}} \right), \]
where $\supp(p) = \{ x \in \Xcal : p(x) > 0 \}$.
\end{theorem}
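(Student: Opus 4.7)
The plan is to exploit the fact that $\hat p$ is close to $p_0$ in total variation with high probability, then use a convex-interpolation argument to transfer between $I_\epsilon(\theta)$ and $\finitehatI(\theta)$. As a first step, I would apply the log-sum inequality to the weights in \cref{eq:finiteokle} (as already hinted in the paper) to note that the optimal $w_i$ must be constant on each fiber $\{i : x_i = y\}$. Reparameterizing by the induced distribution $q(y) = \sum_{i:x_i=y} w_i$ then gives
\[ \finitehatI(\theta) \;=\; \inf_{\substack{q \in \cP(\supp(\hat p))\\ \tv(q, \hat p) \leq \epsilon}} \KL(q \| p_\theta), \]
placing $\finitehatI(\theta)$ in exactly the same form as $I_\epsilon(\theta)$, but centered on $\hat p$ (and carrying a support restriction).

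Next, I would quantify how close $\hat p$ is to $p_0$ using a pointwise Hoeffding bound and a union bound over $S := \supp(p_0)$: with probability at least $1-\delta$, $\sup_{y \in S} |\hat p(y) - p_0(y)| \leq \sqrt{\log(2|S|/\delta)/(2n)}$. Summing over $y$ yields
\[ \Delta := \tv(\hat p, p_0) \;\leq\; \tfrac{|S|}{2}\sqrt{\log(2|S|/\delta)/(2n)} \;=\; O\!\left(|S|\sqrt{n^{-1}\log(|S|/\delta)}\right), \]
which produces the $|S|$ factor appearing in the theorem.

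The core step is a Lipschitz-type bound $|I_\epsilon(\theta) - \finitehatI(\theta)| \leq I_{\epsilon_0}(\theta)\,\Delta/(\epsilon-\epsilon_0)$, proved by convex interpolation using any $Q_0 \in \ball_{\epsilon_0}(P_0)$ with $\KL(Q_0\|P_\theta) = I_{\epsilon_0}(\theta) < \infty$ as an anchor. For the upper direction, let $Q^*$ attain $I_\epsilon(\theta)$, set $t = \Delta/(\epsilon-\epsilon_0)$ (assuming $t\leq 1$; else the bound is trivial) and define $Q_t := (1-t)Q^* + tQ_0$. By convexity of $\KL$ in its first argument, $\KL(Q_t\|P_\theta) \leq I_\epsilon(\theta) + t\,I_{\epsilon_0}(\theta)$, while by the triangle inequality $\tv(Q_t, \hat p) \leq (1-t)\epsilon + t\epsilon_0 + \Delta = \epsilon$, so $Q_t$ is feasible for $\finitehatI$ and the upper bound follows. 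For the lower direction, any $q$ feasible for $\finitehatI$ satisfies $\tv(q, P_0) \leq \epsilon + \Delta$, hence $\finitehatI(\theta) \geq I_{\epsilon+\Delta}(\theta)$; an analogous interpolation between the minimizer of $I_{\epsilon+\Delta}$ and $Q_0$ (now with $t = \Delta/(\epsilon+\Delta-\epsilon_0)$) bounds $I_\epsilon(\theta) - I_{\epsilon+\Delta}(\theta) \leq \Delta\,I_{\epsilon_0}(\theta)/(\epsilon-\epsilon_0)$. Combining these with the concentration estimate gives the claimed rate.

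The most delicate issue is the support restriction $\supp(q) \subseteq \supp(\hat p)$ coming out of the reformulation: the interpolant $Q_t$ in the upper-direction argument is supported on $\supp(P_\theta) \subseteq S$, but a particular atom $y \in \supp(P_\theta)$ may happen to be unseen in the sample. I would handle this by enlarging the failure event to also contain $\{\supp(\hat p) \not\supseteq \supp(P_\theta)\}$, whose probability is at most $\sum_{y \in \supp(P_\theta)}(1-p_0(y))^n$ and decays exponentially in $n$ (and is therefore dominated by the Hoeffding term in the regimes where the bound is non-trivial); alternatively, one can redistribute $Q_t$'s mass on unseen atoms into $\supp(\hat p) \cap \supp(P_\theta)$, at a cost that can be absorbed into $\Delta$. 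The remaining ingredients---convexity of $\KL$, the TV triangle inequality, and Hoeffding---are standard.
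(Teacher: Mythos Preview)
Your proposal is correct and follows essentially the same route as the paper. The paper isolates your convex-interpolation step as a standalone continuity lemma for $\epsilon \mapsto I_\epsilon(\theta)$ (its \Cref{lem:okl-continuity}), and it resolves the support restriction exactly via your first option---conditioning on the event $\supp(p_\theta)\subseteq \hat X_n$, which costs an extra $\delta/2$ and a lower bound on $n$ depending on $\min_{y\in \supp(p_\theta)} p_0(y)$.
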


When $\cX=\R^d$, the above approximation strategy needs to be modified, since $\KL(Q|P_\theta)$ is unbounded whenever $P_\theta$ is a continuous distribution and $Q = \sum_{i=1}^n w_i \delta_{x_i}$ is a discrete distribution. 
However, using a suitable kernel $\K: \cX \times \cX \to [0,\infty)$ (\new{e.g.~$\K_h(x,y) = \frac{1}{(\sqrt{2\pi} h)^d}e^{-\frac{\|x-y\|^2}{2h^2}}$)} one can establish the continuous space approximation (\Cref{sec:okl-estimation-cont})
\begin{equation}
	\label{eq:okle}
	\hatI(\theta) = \inf_{\substack{w \in \hat{\Delta}_n \\ \frac{1}{2}\|w-o\|_1 \leq \epsilon}} \sum_{i=1}^n w_i \log \frac{n w_i \hat{p}(x_i)}{p_\theta(x_i)}
\end{equation}
where $\hat{p}$ is a density estimator for $p_0$ based on $x_1,\ldots, x_n$, $A$ is an $n \times n$ matrix with entries $A_{ij} = \frac{\K(x_i,x_j)}{n \hat{p}(x_i)}$, and $\hat{\Delta}_n = A \Delta_n$ is the image of the \ndsimplex\ under linear operator $A$. We will typically take $\hat{p}(\cdot) = \frac{1}{n} \sum_{j=1}^n \K(\cdot, x_j)$ to be the kernel-density estimate based on the same kernel $\K$, in which case $A$ is the stochastic matrix obtained by normalizing the rows of the kernel matrix $K = (\K(x_i,x_j))_{i,j \in [n]}$ to sum to one. 

The continuous space approximation in \cref{eq:okle} yields the finite space approximation in  \cref{eq:finiteokle} as a special case when $\K(x, y) = \I{x = y}$ is taken to be the indicator kernel. The weights vectors in $\hat{\Delta}_n$ are always non-negative and approximately sum to one for large values of $n$, since $\sum_{i=1}^n A_{ij} = \frac{1}{n}\sum_{i=1}^n \frac{\kappa(x_i, x_j)}{\hat{p}(x_i)} \approx \int \frac{\kappa(x, x_j)}{\hat{p}(x)} p_0(x) dx \approx \int \kappa(x, x_j) dx = 1$.
	
Now we briefly describe the large sample convergence of $\hatI(\theta)$ to $\okl$ shown in \Cref{sec:okl-estimation-cont}.
Proving this exact statement is technically challenging, as it requires proving uniform-convergence of the objective in \cref{eq:okle} when the weights are allowed to vary over the entire range $\hat{\Delta}_n$. To get around this, we restrict the optimization domain to $\hat{\Delta}_n^\beta = A \Delta_n^\beta$ where $\Delta_n^\beta = \{ (v_1, \ldots, v_n) \in \Delta_n \, : \,  v_i \in [\frac{\beta}{n}, \frac{1}{n\beta}] \}$ for a suitably small constant $\beta > 0$. Thus, the estimator that we theoretically study is given by
\begin{equation}
	\label{eq:okle-theoretical}
	\hat{I}_{\epsilon,\beta}(\theta) = \inf_{\substack{w \in \hat{\Delta}_n^\beta \\ \frac{1}{2}\|w-o\|_1 \leq \epsilon}} \sum_{i=1}^n w_i \log \frac{n w_i \hat{p}(x_i)}{p_\theta(x_i)}.
\end{equation}
Given this change, we can show the following result.
\begin{theorem}
\label{thm:continuous-okle-convergence}
Suppose $\supp(p_0) = \supp(p_\theta) = \Xcal$ is a compact subset of $\R^d$ and there exists a constant $\gamma > 0$ such that $p_0(x), p_\theta(x) \in [\gamma, 1/\gamma]$ for all $x \in \Xcal$. Suppose that we use the probability kernel $\kappa(x,y) = \frac{1}{h^d} \phi(\|x - y \|/h)$ having bandwidth $h > 0$, where $\kappa$ is positive semi-definite and $\phi$ is a bounded function with exponentially decaying tails. Assume that $p_0$ and $\log p_\theta$ are $\alpha$-H\"{o}lder smooth over $\Xcal$, and suppose that we use the clipped density estimator $\hat{p}(x) = \min(\max(\frac{1}{n}\sum_{i=1}^n \kappa(x_i, x), \gamma), 1/\gamma)$. Then for any constant $0 < \beta \leq \gamma^2/4$
\[ 
\left|\hat{I}_{\epsilon, \beta}(\theta) - I_\epsilon(\theta) \right| \leq \tilde{O}\left(n^{-1/2} h^{-d} + h^{\alpha/2} + \psi(\sqrt{h}) \right) ,  \]
with probability at least $1 - 1/n$, where $\psi(r) = \frac{\lambda (\Xcal \setminus \Xcal_{-r})}{\lambda(\Xcal)}$, $\lambda(\cdot)$ is the $d$-dimensional Lebesgue measure, $\Xcal_{-r} = \{ x \in \Xcal : B(x, r) \subseteq \cX \}$, and $\tilde{O}(\cdot)$ hides constants and logarithmic factors.
\end{theorem}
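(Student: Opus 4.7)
My approach is to reparametrize the finite-sample problem by introducing densities on $\Xcal$. For $v \in \Delta_n^\beta$, set $w = Av$ and define the kernel-smoothed density
\begin{equation*}
q_v(x) \;=\; \sum_{j=1}^n v_j\, \kappa(x, x_j),
\end{equation*}
which is a valid probability density because $\kappa$ is a probability kernel. From $A_{ij} = \kappa(x_i,x_j)/(n\hat p(x_i))$ one gets the key identity $n w_i \hat{p}(x_i) = q_v(x_i)$, so the finite-sample objective rewrites as $\frac{1}{n}\sum_i \frac{q_v(x_i)}{\hat p(x_i)} \log\!\frac{q_v(x_i)}{p_\theta(x_i)}$, an importance-sampled estimator of $\KL(q_v | p_\theta) = \int q_v \log(q_v/p_\theta)\, d\lambda$. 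A triangle-inequality argument further gives $\tfrac12 \|w - o\|_1 \approx \tv(q_v, p_0)$, up to KDE error. Thus, modulo concentration and smoothing errors, the finite-sample program approximates $\inf \{ \KL(q_v | p_\theta) : v \in \Delta_n^\beta,\ \tv(q_v, p_0) \leq \epsilon\}$, which is a relaxation of the population problem defining $I_\epsilon(\theta)$.

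From here I would split into upper and lower bounds. For the upper bound, let $q^\star$ be the population I-projection of $p_\theta$ onto $\ball_\epsilon(P_0)$, so $\KL(q^\star | p_\theta) = I_\epsilon(\theta)$. I would exhibit a feasible $v^\star \in \Delta_n^\beta$ by importance sampling, e.g.\ $v^\star_j \propto q^\star(x_j)/(n\hat p(x_j))$. Then $q_{v^\star}(x) \approx \int q^\star(y) \kappa(x,y)\, d\lambda(y) = (\kappa_h * q^\star)(x)$, which is close to $q^\star$ in $L^1$, with convolution bias $O(h^\alpha)$ in the interior $\Xcal_{-\sqrt h}$ by $\alpha$-H\"older smoothness and an additional $\psi(\sqrt h)$ contribution from the boundary strip $\Xcal \setminus \Xcal_{-\sqrt h}$. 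The box constraint $v^\star_j \in [\beta/n, 1/(n\beta)]$ and the TV constraint are satisfied with high probability when $\beta \leq \gamma^2/4$, using $\gamma$-boundedness of $p_0, p_\theta$ (and hence of the likelihood ratio $q^\star/p_0$) together with standard concentration of $\sum_j q^\star(x_j)/(n\hat p(x_j))$ around $1$.

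For the lower bound, the goal is to show that every feasible $v \in \Delta_n^\beta$ has objective at least $I_\epsilon(\theta) - \text{error}$. This reduces to two uniform statements over $v \in \Delta_n^\beta$: \textbf{(a)} the importance-sampled sum approximates $\KL(q_v | p_\theta)$, and \textbf{(b)} the weight-space TV constraint implies $\tv(q_v, p_0) \leq \epsilon + o(1)$, so $q_v$ is almost feasible for the population problem. Both follow from: pointwise concentration of $\hat p$ around $\kappa_h * p_0$ at rate $\tilde O((nh^d)^{-1/2})$; a union bound over a sufficiently fine $\eta$-net of the bounded simplex $\Delta_n^\beta$, where $\beta$-boundedness of the weights yields a bounded log-ratio and hence a controlled Lipschitz constant of the objective in $v$; and $\alpha$-H\"older smoothness to bound $\|\kappa_h * p_0 - p_0\|_\infty$ in the interior. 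Combining \textbf{(a)} and \textbf{(b)} with continuity of the population value in the constraint level then yields the claimed inequality.

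The main obstacle is the uniform control in the lower-bound step: although $\Delta_n^\beta$ sits in an $n$-dimensional simplex, the functional $v \mapsto q_v$ ranges in a comparatively tame cone of smoothed mixtures, but the sup-norm of $q_v/\hat p$ still scales like $h^{-d}$, producing the $n^{-1/2} h^{-d}$ term after concentration (rather than the naive KDE rate $n^{-1/2} h^{-d/2}$). The second subtle point is the boundary behavior: $\hat p$ is biased in a strip of width $h$ and $\kappa_h * q^\star$ departs from $q^\star$ in the same region, which is why the bias must be split at radius $\sqrt h$, trading an interior contribution of $O(h^{\alpha/2})$ against a boundary volume of $\psi(\sqrt h)$. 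The clipping in the definition of $\hat p$ and the lower bound $\beta$ on the weights are precisely what keeps all inverse-density and logarithmic factors bounded so that this trade-off can be carried out.
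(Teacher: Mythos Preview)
Your overall architecture matches the paper's---reparametrize via $q_v$, upper bound by exhibiting an importance-weighted $v^\star$, lower bound by uniform convergence, and split the bias at radius $\sqrt{h}$---but two of the steps you propose contain genuine gaps.

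First, the union bound over an $\eta$-net of $\Delta_n^\beta$ cannot deliver the uniform convergence you need in the lower bound. The simplex $\Delta_n^\beta$ is $(n{-}1)$-dimensional, so its $\ell_1$-covering number scales like $(C/\eta)^{n-1}$; after taking logs, the union bound contributes an $O(n)$ term that exactly cancels the $1/\sqrt{n}$ concentration rate, and the error does not vanish. The paper instead exploits the positive semi-definiteness of $\kappa$: the class $\{q_v : v \in \Delta_n\}$ sits inside an RKHS ball of radius $\sqrt{c_0/h^d}$, and a Rademacher-complexity bound in the style of Bartlett--Mendelson gives complexity $O(c_0/(h^d\sqrt{n}))$ directly, with no dependence on the dimension of the weight simplex. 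This is exactly where the $n^{-1/2}h^{-d}$ rate originates and is the reason the PSD assumption appears in the hypotheses.

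Second, your upper-bound construction leans on two unjustified properties of the I-projection $q^\star$. To place $v^\star_j \propto q^\star(x_j)/p_0(x_j)$ inside $\Delta_n^\beta$ you need $q^\star/p_0 \in [\gamma^2, \gamma^{-2}]$; this does not follow from boundedness of $p_0$ and $p_\theta$ alone, but from a separate \emph{sandwiching lemma} showing $\min(p_0,p_\theta) \leq q^\star \leq \max(p_0,p_\theta)$ pointwise. And to get $\kappa_h * q^\star$ close to $q^\star$ with bias $O(h^{\alpha/2})$ you would need $q^\star$ itself to be H\"older, which is nowhere assumed. The paper avoids this entirely by proving only the one-sided inequalities it actually needs: $\tv(\kappa_h * q, p_0) \leq \tv(q, p_0) + \tv(\kappa_h * p_0, p_0)$ (convolution contracts TV, so only smoothness of $p_0$ is used) and $\int (\kappa_h * q)\log(\kappa_h * q) \leq \int q\log q$ (Jensen on the convex map $t\mapsto t\log t$), with the cross-entropy handled via smoothness of $\log p_\theta$. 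No regularity of $q^\star$ is ever invoked.
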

Here $\psi(r)$ measures the fraction of the volume of $\Xcal$ that is contained in the envelope of width $r$ closest to the boundary. For well-behaved sets, we expect $\psi(r)$ to decrease to 0 as $r \rightarrow 0$. For example, if $\Xcal$ is a $d$-dimensional ball of radius $r_0$, then $\psi(r) = 1 - (1 - \frac{r}{r_0})^d$.

We prove our theory with the truncated estimator $\hat{I}_{\epsilon, \beta}(\theta)$ instead of $\hat{I}_{\epsilon}(\theta)$ for technical reasons. By a suitable version of the sandwiching lemma (\Cref{sec:sandwiching}), under the assumptions of \Cref{thm:continuous-okle-convergence}, we conjecture that the optimal weights in   \eqref{eq:okle} lie  in the set $\hat{\Delta}^\beta_n$ for a small enough constant $\beta > 0$, in which case we will have $\hat{I}_{\epsilon, \beta}(\theta) = \hat{I}_{\epsilon}(\theta)$.

Given the large sample convergence of $\hatI(\theta)$ to $I_\epsilon(\theta)$ for every $\theta \in \Theta$, one expects that any estimator $\hat{\theta} \in \Theta$ that is a global minimizer of $\hatI$ will approach the set $\Theta_I$ (\Cref{ass:misspecification}) as $n \to \infty$. A qualitative version of this result, proved in \Cref{sec:owl-consistency}, is shown next.
\begin{proposition} 
\label{prop:owl-consistency}
Suppose that (i) $\cX$ and $\Theta$ are compact metric spaces, and $(x, \theta) \mapsto \log p_\theta(x)$ is a continuous function from $\cX \times \Theta \to \R$. Let $\epsilon \geq 0$ be such that (ii) \Cref{ass:misspecification} is satisfied. Let $x_1, \ldots, x_n \iid p_0$, and suppose that (iii) for any fixed $\theta \in \Theta$, we have $\hatI(\theta) \pconv \okl$ as $n \to \infty$. If an estimator $\hat{\theta} \in \Theta$ satisfies $|\hatI(\hat{\theta}) - \min_{\theta \in \Theta} \hatI(\theta) | \pconv 0$, then for every $\delta > 0$
$$
	\lim_{n \to \infty} \prob( \tv(P_{\hat{\theta}}, P_0)  > \epsilon + \delta) = 0.
$$
\end{proposition}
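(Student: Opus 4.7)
The plan is to run a standard $M$-estimator consistency argument with target set $\Theta_I = I_\epsilon^{-1}(\{0\})$: I aim to prove (a) a uniform law $\sup_{\theta \in \Theta} |\hatI(\theta) - I_\epsilon(\theta)| \pconv 0$, and (b) the separation $m(\delta) := \inf\{I_\epsilon(\theta) : \tv(P_0, P_\theta) \ge \epsilon + \delta\} > 0$ for every $\delta > 0$. Both rely on compactness of $\Theta$ from (i) and the uniform continuity of $(x, \theta) \mapsto \log p_\theta(x)$ on the compact product $\cX \times \Theta$.

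For (b), continuity of $I_\epsilon$ on $\Theta$ follows from the envelope swap at the I-projection: since $Q^{\theta_2} \in \ball_\epsilon(P_0)$ is feasible for $I_\epsilon(\theta_1)$,
\[
I_\epsilon(\theta_1) - I_\epsilon(\theta_2) \le \KL(Q^{\theta_2} \mid P_{\theta_1}) - \KL(Q^{\theta_2} \mid P_{\theta_2}) = \int \log \frac{p_{\theta_2}}{p_{\theta_1}} \, dQ^{\theta_2},
\]
so $|I_\epsilon(\theta_1) - I_\epsilon(\theta_2)| \le \sup_{x \in \cX} |\log p_{\theta_1}(x) - \log p_{\theta_2}(x)| \to 0$ by uniform continuity. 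An analogous argument using the boundedness of $p_\theta$ on compact $\cX$ (with dominated convergence) gives continuity of $\theta \mapsto \tv(P_0, P_\theta)$, so $C_\delta := \{\theta : \tv(P_0, P_\theta) \ge \epsilon + \delta\}$ is closed and hence, by (i), compact. Since $I_\epsilon(\theta) = 0$ iff $P_\theta \in \ball_\epsilon(P_0)$ iff $\theta \in \Theta_I$, continuity of $I_\epsilon$ on the disjoint compact set $C_\delta$ produces a positive minimum, giving $m(\delta) > 0$ (with the case $C_\delta = \emptyset$ trivial).

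For (a), the key observation is that in the definition of $\hatI(\theta)$ the feasible weight set is independent of $\theta$, so the same envelope swap yields the deterministic equicontinuity bound
\[
|\hatI(\theta_1) - \hatI(\theta_2)| \le \Bigl( \sup_{w \in \hat{\Delta}_n,\, \tfrac12\|w - o\|_1 \le \epsilon} \sum_{i=1}^n w_i \Bigr) \cdot \max_{1 \le i \le n} |\log p_{\theta_1}(x_i) - \log p_{\theta_2}(x_i)|.
\]
The total mass equals $1$ identically in the discrete case and is bounded on a high-probability event in the continuous case, since $\sum_i w_i = \sum_j v_j \sum_i A_{ij}$ for some $v \in \Delta_n$ and the column sums $\sum_i A_{ij}$ concentrate around $1$. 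Thus $\{\hatI(\cdot)\}_n$ is equicontinuous with a deterministic modulus inherited from the uniform continuity of $\log p_\theta$ on $\cX \times \Theta$. A standard finite-cover argument---pick an $\eta$-dense set $\{\theta_k\}_{k=1}^K$ of $\Theta$, control the finite maximum $\max_k |\hatI(\theta_k) - I_\epsilon(\theta_k)| \pconv 0$ via (iii), and absorb the remaining discrepancy using equicontinuity---then upgrades pointwise convergence to the desired uniform convergence in probability.

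The conclusion follows by the standard chain. Pick any $\theta^* \in \Theta_I$, which is non-empty by (ii). By (iii), $\hatI(\theta^*) \pconv I_\epsilon(\theta^*) = 0$, and near-optimality of $\hat{\theta}$ gives $\hatI(\hat{\theta}) \le \hatI(\theta^*) + o_P(1) \pconv 0$. Uniform convergence then forces $I_\epsilon(\hat{\theta}) \le \hatI(\hat{\theta}) + o_P(1) \pconv 0$. If $\tv(P_0, P_{\hat{\theta}}) > \epsilon + \delta$, then $\hat{\theta} \in C_\delta$ and hence $I_\epsilon(\hat{\theta}) \ge m(\delta) > 0$, so
\[
\prob\bigl(\tv(P_0, P_{\hat{\theta}}) > \epsilon + \delta\bigr) \le \prob\bigl(I_\epsilon(\hat{\theta}) \ge m(\delta)\bigr) \to 0.
\]
The main obstacle is justifying the equicontinuity bound uniformly in $n$ in the continuous-$\cX$ setting, where the feasible weights need not sum to exactly $1$. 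One can either verify concentration of $\sup_w \sum_i w_i$ using the row-stochasticity of $A$, or work throughout with the truncated estimator $\hat{I}_{\epsilon, \beta}$ of \Cref{thm:continuous-okle-convergence}, in which the weights are bounded by $1/(n\beta)$, so $\sum_i w_i \le 1/\beta$ holds deterministically.
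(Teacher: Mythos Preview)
Your proof is correct and shares the paper's core structure for part (a): both establish equicontinuity of $\hatI$ and $I_\epsilon$ in the sup-log metric and then upgrade pointwise to uniform convergence by a finite-cover argument. One simplification you missed: the feasible constraint $\tfrac12\|w-o\|_1\le\epsilon$ itself gives the \emph{deterministic} bound $\sum_i w_i = \|w\|_1 \le \|o\|_1 + \|w-o\|_1 \le 1+2\epsilon$, since $w\ge 0$ (being $Av$ with $A$ nonnegative and $v\in\Delta_n$). So no high-probability event, concentration of column sums, or truncation is needed; the equicontinuity modulus is simply $(1+2\epsilon)\sup_x|\log p_{\theta_1}(x)-\log p_{\theta_2}(x)|$, exactly as the paper records.

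The genuine difference is in the endgame. You prove a separation bound $m(\delta)>0$ by showing $I_\epsilon$ and $\theta\mapsto\tv(P_0,P_\theta)$ are continuous and invoking compactness of $C_\delta$. The paper instead uses Pinsker's inequality at the I-projection $Q^\theta$: since $\tv(P_\theta,P_0)\le\tv(P_\theta,Q^\theta)+\tv(Q^\theta,P_0)\le\sqrt{I_\epsilon(\theta)/2}+\epsilon$, one gets the quantitative one-line bound $(\tv(P_{\hat\theta},P_0)-\epsilon)^+\le\sqrt{I_\epsilon(\hat\theta)/2}$ and the conclusion follows immediately from $I_\epsilon(\hat\theta)\pconv 0$. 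Your compactness route is the textbook $M$-estimator argument and is perfectly valid here, but Pinsker is shorter, avoids the auxiliary continuity of $\theta\mapsto\tv(P_0,P_\theta)$, and yields an explicit rate linking the OKL value to the TV excess.
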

\Cref{thm:finite-okl-convergence-tv} and \Cref{thm:continuous-okle-convergence} (with the conjectured $\beta = 0$) provide sufficient conditions for condition (iii) in \Cref{prop:owl-consistency} to hold. Condition (i) is primarily used to show that the convergence in (iii) is uniform, i.e. $\sup_{\theta \in \Theta} |\hatI(\theta) - \okl| \pconv 0$. Finally, condition (ii) is satisfied when the data  $P_0 = (1-\epsilon) P_{\theta^*} + \epsilon C$ is a contamination. \Cref{prop:owl-consistency} and triangle inequality then  bound the model estimation error: $\tv(P_{\hat{\theta}}, P_{\theta^*}) \leq 2\epsilon + o_P(1)$.

\subsubsection{OWL Methodology}

\new{Based on the computable approximation $\hatI$ (or $\finitehatI$) to the OKL function, \Cref{alg:owl} uses an alternating optimization to minimize $\hatI(\theta)$ over $\theta$. The procedure simultaneously estimates a minimizer $\hat{\theta}$ and \emph{optimistic weights} $w_1, \ldots, w_n \geq 0$ that sum to $n$.} 

In \Cref{sec:comp}, we expand further on the computational details for the $\theta$-step and $w$-step in \Cref{alg:owl}. In particular, the $w$-step is a convex optimization problem which we can solve by using the 
 consensus ADMM algorithm \citep{boyd2011distributed, parikh2014proximal} based on  proximal operators that can be efficiently computed. Further, the $\theta$-step corresponds to maximizing a weighted likelihood, which can be performed for many models through simple modifications of procedures for the corresponding maximum likelihood estimation. As shown in \Cref{sec:comp}, the latter computation is easy to perform exactly for exponential families, while for mixture models we can use a weighted variant of the `hard EM' algorithm \citep{samdani2012unified}.

In \Cref{sec:owl-and-MM-DC}, we show that the iterates $\{\theta_t\}_{t \geq 1}$ of \Cref{alg:owl} will decrease the objective value $\theta \mapsto \hat{I}_\epsilon(\theta)$ at each step, and that \Cref{alg:owl} is an instance of the well-studied  Majorization-Minimization (MM) \citep{hunter2004tutorial} and Difference of Convex (DC) \citep{le2018dc} class of algorithms. We use this to analyze the convergence of $\{\theta_t\}_{t \geq 1}$ when $\{p_\theta\}_{\theta \in \Theta}$ is an exponential family (see \Cref{rem:owl-iterates-exponential-family}).

\begin{algorithm}
	\caption{OWL Methodology}
	\label{alg:owl}
	\begin{algorithmic}
		\STATE \textbf{Input:} Model $\{p_\theta\}_{\theta \in \Theta}$, radius $\epsilon \geq 0$, kernel $\K$, initial $\theta_1 \in \Theta$, and iteration limit $T$.
		\FOR{$t=1,\ldots,T$}
		\STATE \underline{$w$-step:} Find $w_t=(w_{t,1}, \ldots, w_{t,n}) \in \Rnn^n$ that minimizes \cref{eq:okle} for  $\theta=\theta_t$. 
		\STATE \underline{$\theta$-step:} Find $\theta_{t+1}$ that maximizes the weighted likelihood $\theta \mapsto \sum_{i=1}^n w_{t,i} \log p_\theta(x_i)$.
		\ENDFOR
		\STATE \textbf{Output:} The robust parameter estimate $\theta_T$ and the data weights $\new{n w_T}$.
	\end{algorithmic}
\end{algorithm}

In practice, when the data lie in a  continuous space, we often avoid using the kernel-based estimator \cref{eq:okle} to determine the weights in the $w$-step of \Cref{alg:owl} because it greatly slows down the computation (see \Cref{sec:admm-complexity})\deleted{, and the resulting weights are sensitive to the choice of kernel $\K$}. Instead, we often perform the $w$-step by solving the \emph{unkernelized optimization} problem using  $\kappa(x, y)=\I{x=y}$. We demonstrate via simulations (\Cref{sec:simul}) that the unkernelized version of the OWL procedure has equally good performance compared to the kernelized version with a suitably tuned bandwidth. \new{The theoretical impact of this approximation can be explored by studying the limiting behavior of the OKL estimator for suitably fixed choices of $\K$ as $n \to \infty$.} 

\subsubsection{Setting the corruption fraction $\epsilon$}
\label{sec:tune-epsilon}

So far, we have assumed that the parameter $\epsilon \in (0,1)$, which can be interpreted as the fraction of corrupted samples in the population distribution, is fixed at a known value that satisfies \Cref{ass:misspecification}. Now let us see how the population level analysis (Section \ref{sec:okl}) can inform our choice of $\epsilon$.  \Cref{ass:misspecification} is satisfied as long as $\epsilon \geq \varepsilon_0$, where
$$
\varepsilon_0 = \min_{\theta \in \Theta} \tv(P_0, P_\theta) = \left\{\epsilon \in [0,1] : \min_{\theta \in \Theta} I_\epsilon(\theta) = 0 \right\}.
$$

Hence, in principle, we could set $\epsilon = \varepsilon_0$ to use OWL to perform minimum-TV estimation \citep{yatracos1985rates}, which has the following advantages: (1) while directly minimizing TV distance is computationally intractable, the OWL methodology decomposes this problem into alternating convex optimization and weighted MLE steps, both of which are standard problems that often tend to be well-behaved, and (2) the OWL methodology provides us with weight vectors that can indicate outlying observations and relates minimum TV-estimation to  likelihood based inference. 

In order to choose $\epsilon \approx \varepsilon_0$ in practice, we define the function $\hat{g}(\epsilon) = \hatI(\hat{\theta}_\epsilon)$, where $\hat{\theta}_\epsilon$ is the parameter estimate computed by the OWL procedure for a given $\epsilon$. At the population level, the corresponding function $g(\epsilon) = \min_{\theta \in \Theta} \okl$ is monotonically decreasing in $\epsilon$ until $\epsilon = \varepsilon_0$, at which point it remains at 0. This introduces a kink, or elbow, at $\epsilon_0$ that we hope to identify in the sample estimate $\hat{g}$. Thus, our $\epsilon$-search procedure is to compute $\hat{g}$ over a fixed grid of $\epsilon$-values, smooth the resulting grid, and then select amongst the points of largest curvature (computed numerically), where the curvature of a twice-differentiable function $f$ at a point $x$ is given by $f''(x)/(1 + f'(x)^2)^{1.5}$~\citep{satopaa2011finding}. Despite the various approximations involved, our simulation results (\Cref{sec:simul}) show that the OWL procedure with such a tuned value of $\epsilon$ provides almost identical performance when compared with the OWL procedure with the true value of $\epsilon$.

\subsection{OWL extension to non-identically-distributed data}
\label{sec:niid}

While the population level analysis and theoretical results for the OKL estimator were derived under the assumption that data are generated i.i.d.~from a distribution $P_0$, the OWL procedure can be adapted to robustify likelihood based inference in the setting where the data are conditionally independent, but not necessarily identically distributed.

Suppose data $z_1,\ldots, z_n \in \cZ$ are conditionally independent, with the likelihood having the product form $p_\theta(z_{1:n}) = \prod_{i=1}^n p_{\theta,i}(z_i)$, for known functions $\{p_{\theta,i}\}_{i=1}^n$. For example, if $z_i = (y_i, x_i) \in \R \times \cX$ for $i=1,\ldots,n$, this includes the case of regression models $\{q_\theta(y|x)\}_{\theta \in \Theta}$ under the setup $p_{\theta, i}(z_i) = q_\theta(y_i|x_i)$. Another example of this setup includes mixture models if we expand the parameter space to also include cluster assignments (see \Cref{sec:weighted-likelihood}).

To robustify inference based on the product likelihood $p_\theta(z_{1:n}) = \prod_{i=1}^n p_{\theta, i}(z_i)$, we can replace the $w$- and $\theta$- steps in \Cref{alg:owl} by analogous steps in the product likelihood case. In particular, the modified $w$-step is given by
$$
w_t = \argmin_{\substack{w \in \Delta_n \\ \frac{1}{2}\|w-o\|_1 \leq \epsilon }} \left\{- \sum_{i=1}^n w_i \log p_{\theta_t, i}(x_i)  + \sum_{i=1}^n w_i \log w_i\right\}
$$
and the modified $\theta$-step is given by
$$
\theta_{t+1}= \argmin_{\theta \in \Theta} \sum_{i=1}^n w_{t,i} \log p_{\theta,i}(x_i).
$$
Despite our lack of theory in the non-identically-distributed case, we continue to see good empirical performance of OWL (see \Cref{sec:simul}, \Cref{sec:micro-credit} and \Cref{sec:application}).

\section{Asymptotic connection to coarsened inference}
\label{sec:coarsened-inference}

The development of the OWL methodology in \Cref{sec:methodology} followed from a presumed form of misspecification given by \Cref{ass:misspecification}. An alternative way to frame and address such misspecifications in a probabilistic framework was proposed by \cite{miller2018robust} who introduced a Bayesian methodology centered around the concept of a \emph{coarsened likelihood} defined as
\begin{equation}
    \label{eq:clike}
    L_\epsilon(\theta|x_{1:n}) \doteq   \prob_\theta\left(\D(\EmpDist{Z_{1:n}}, \EmpDist{x_{1:n}}) \leq \epsilon\right), 
\end{equation}
where $\D$ is a suitably chosen discrepancy between empirical probability measures.
Here, 
$\EmpDist{x_{1:n}} = n^{-1}\sum_{i=1}^n \delta_{x_i}$ denotes the empirical distribution of data $x_{1:n}$, and the probability is computed under $\prob_\theta$---the distribution underlying the artificial data $Z_1, \ldots Z_n \iid P_\theta$ from which the random measure $\EmpDist{Z_{1:n}} = n^{-1}\sum_{i=1}^n \delta_{Z_i}$ is constructed. 
The coarsened likelihood implicitly captures the likelihood of a probabilistic procedure in which idealized data are first generated by some model $\prob_\theta$ in the model class under consideration, but are then corrupted in such a way that the discrepancy between empirical measures of the idealized data and the observed data is bounded by $\epsilon$. 

When $\D$ is an estimator for the KL-divergence and an exponential prior is placed on $\epsilon$, \cite{miller2018robust} showed that the Bayes posterior based on $L_\epsilon(\theta|x_{1:n})$ could be approximated  by raising the likelihood to a power less than one in the formula for the standard posterior. 
However, to obtain a robustified alternative to maximum likelihood estimation, one may wish to maximize $\theta \mapsto L_\epsilon(\theta|x_{1:n})$ directly for a choice of $\D$ that guarantees robustness (e.g. Maximum Mean Discrepancy or the TV distance).
Such an approach would in general be quite challenging since evaluating \cref{eq:clike} corresponds to computing a high-dimensional integral. 

In this section, we show that for large $n$, the coarsened likelihood can be approximately maximized via the OWL methodology when $\D$ is an estimator for the TV distance. Specifically, if the observed data $x_1, \ldots, x_n$ are generated i.i.d.~from some distribution $P_0$ and $\D$ satisfies some regularity conditions, then the quantity $-\frac{1}{n} \log L_\epsilon(\theta|x_{1:n})$ asymptotically converges  as $n \to \infty$ to a variant of $I_\epsilon(\theta)$ based on $\D$. Hence, the OWL methodology asymptotically maximizes the coarsened likelihood $\theta \mapsto L_\epsilon(\theta|x_{1:n})$. 

We state this asymptotic connection first for finite spaces and then for continuous spaces. These results rely on Sanov's theorem from large deviation theory~\citep{demboLargeDeviationsTechniques2010} and are proved in \Cref{sec:coarsened-likelihood-asymptotics}.

\subsection{Asymptotic connection in finite spaces}
\label{sec:asymp-finite}

Let $\cX$ be a finite set and denote the space of probability distributions on $\cX$ by the simplex $\Delta_{\cX} \doteq \{q \in [0,1]^{\cX} | \sum_{x \in \cX} q(x) = 1\}$. Let $\{p_\theta\}_{\theta \in \Theta} \subseteq \Delta_{\cX}$ denote the collection of model distributions, and $p_0 \in \Delta_{\cX}$ denote the true data generating distribution. To connect the OKL with the coarsened likelihood in \cref{eq:clike}, we take $\D(p, q) = \frac{1}{2}\|p - q\|_1$. 

Given this setting, we can show that $-\frac{1}{n} \log L_\epsilon(\theta | x_{1:n})$ converges in probability to the OKL function $I_\epsilon(\theta)$ at a rate of $n^{-1/2}$, as demonstrated by the following theorem.
\begin{theorem}
\label{thm:finite-cposterior-tv}
Suppose that $I_{\epsilon_0}(\theta) < \infty$ for some $\epsilon_0 > 0$ and let $\delta > 0$. If $\epsilon > \epsilon_0$ and $x_1, x_2, \ldots, x_n \iid p_0$, then with probability at least $1-\delta$,
\[ \left|I_\epsilon(\theta) + \frac{1}{n} \log L_\epsilon(\theta|x_{1:n}) \right| \leq O\left( \frac{|\Xcal|}{\epsilon - \epsilon_0} \sqrt{\frac{1}{n} \log \frac{1}{\delta}}  + \frac{|\Xcal|}{n} \left( |\Xcal| + \log(n) + \frac{1}{\epsilon - \epsilon_0} \right)  \right).\]
\end{theorem}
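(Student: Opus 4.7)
The approach is to reduce the theorem to the method-of-types version of Sanov's theorem combined with \Cref{thm:finite-okl-convergence-tv}. First I would rewrite the coarsened likelihood as $L_\epsilon(\theta \mid x_{1:n}) = \prob_\theta\!\left(\EmpDist{Z_{1:n}} \in E\right)$ where $E = \{q \in \Delta_{\cX} : \tfrac{1}{2}\|q - \pfinite\|_1 \leq \epsilon\}$. A log-sum-inequality argument (remarked upon right after \cref{eq:finiteokle}) collapses the $n$-dimensional weight optimization in \cref{eq:finiteokle} to an infimum over probability measures on $\cX$, giving $\finitehatI(\theta) = \inf_{q \in E} \KL(q | p_\theta)$. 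Hence it is enough to show
\begin{equation*}
\left|\tfrac{1}{n}\log L_\epsilon(\theta \mid x_{1:n}) + \finitehatI(\theta)\right| = O\!\left(\frac{|\Xcal|\log n}{n} + \frac{|\Xcal|}{n(\epsilon-\epsilon_0)}\right),
\end{equation*}
after which \Cref{thm:finite-okl-convergence-tv} handles the remaining gap $|\finitehatI(\theta) - I_\epsilon(\theta)|$.

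The method-of-types upper bound $\prob_\theta(\EmpDist{Z_{1:n}} \in E) \leq (n+1)^{|\Xcal|} \exp(-n \finitehatI(\theta))$ immediately delivers $-\tfrac{1}{n}\log L_\epsilon \geq \finitehatI(\theta) - |\Xcal|\log(n+1)/n$. For the matching direction it suffices to exhibit an $n$-type $\tilde q \in E$ with $\KL(\tilde q | p_\theta)$ close to $\finitehatI(\theta)$ and then invoke the type-probability lower bound $\prob_\theta(\EmpDist{Z_{1:n}} = \tilde q) \geq (n+1)^{-|\Xcal|} \exp(-n \KL(\tilde q | p_\theta))$.

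Constructing this type is the main obstacle. An arbitrary minimizer $q^*$ of $\finitehatI(\theta)$ need not be a type, and coordinatewise rounding to multiples of $1/n$ may push it outside $E$ by up to $|\Xcal|/n$ in TV distance. To create room, I would exploit the hypothesis $I_{\epsilon_0}(\theta) < \infty$ with $\epsilon_0 < \epsilon$ by selecting a witness $q_0$ with $\tv(q_0, p_0) \leq \epsilon_0$ and $\KL(q_0 | p_\theta) < \infty$ (in particular $q_0 \ll p_\theta$). Standard multinomial concentration gives $\tv(\pfinite, p_0) \lesssim \sqrt{|\Xcal|\log(1/\delta)/n}$ with probability $1-\delta$, so on this event the interpolation $q_\lambda = (1-\lambda) q^* + \lambda q_0$ obeys $\tv(q_\lambda, \pfinite) \leq \epsilon - \Omega(\lambda(\epsilon - \epsilon_0))$ for large $n$. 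Taking $\lambda = \Theta(|\Xcal|/[n(\epsilon-\epsilon_0)])$ leaves a TV margin of at least $|\Xcal|/n$, so coordinatewise rounding yields an $n$-type $\tilde q \in E$. Convexity of KL then gives $\KL(q_\lambda | p_\theta) \leq \finitehatI(\theta) + \lambda\,\KL(q_0 | p_\theta)$, and since $q_\lambda \geq \lambda q_0$ throughout $\supp(p_\theta)$ the rounding perturbs the KL by only $O(|\Xcal|\log(n)/n)$.

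Combining these Sanov upper and lower bounds establishes the displayed approximation of $-\tfrac{1}{n}\log L_\epsilon$ by $\finitehatI(\theta)$; \Cref{thm:finite-okl-convergence-tv} then converts $\finitehatI(\theta)$ into $I_\epsilon(\theta)$ at the stated rate, completing the proof. The hardest step is the discretization argument just described, which is precisely what pulls the factor $1/(\epsilon-\epsilon_0)$ into the $O(|\Xcal|/n)$ term of the final bound.
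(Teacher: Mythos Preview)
Your overall strategy---Sanov via the method of types, plus an interpolation/rounding step to manufacture a good $n$-type for the lower bound---is exactly what the paper does. But there is a real slip in your reduction. The log-sum inequality only collapses \cref{eq:finiteokle} to an infimum over distributions supported on the \emph{observed} alphabet $\hX_n$, so
\[
\finitehatI(\theta)\;=\;\inf_{\substack{q\in E\\ \supp(q)\subseteq \hX_n}}\KL(q\,|\,p_\theta)\;\geq\;\inf_{q\in E}\KL(q\,|\,p_\theta)\;=:\;J,
\]
and the inequality can be strict when $\supp(p_\theta)\not\subseteq\hX_n$. The Sanov upper bound on $L_\epsilon$ is governed by $J$, not by $\finitehatI$, so your claimed inequality $-\tfrac{1}{n}\log L_\epsilon \geq \finitehatI(\theta) - |\Xcal|\log(n+1)/n$ does not follow. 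Since \Cref{thm:finite-cposterior-tv} does not carry the hypothesis $\supp(p_\theta)\subseteq\supp(p_0)$ (which is what would force $J=\finitehatI$ with high probability and make your invocation of \Cref{thm:finite-okl-convergence-tv} legitimate), this is a genuine gap. The fix is easy: work with $J$ throughout and bound $|J-I_\epsilon(\theta)|$ directly via the triangle inequality $I_{\epsilon+\alpha_n}\leq J\leq I_{\epsilon-\alpha_n}$ (on the Hoeffding event) together with continuity of $r\mapsto I_r(\theta)$.

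The paper avoids this detour by centering the Sanov ball at the \emph{deterministic} $p_0$ rather than at $\pfinite$: it first proves $\bigl|\tfrac{1}{n}\log\prob_\theta(\tv(\EmpDist{Z_{1:n}},p_0)\leq r)+I_r(\theta)\bigr|=O(|\Xcal|(\log n+|\Xcal|+(r-\epsilon_0)^{-1})/n)$ for every $r>\epsilon_0$, then sandwiches $L_\epsilon$ between the cases $r=\epsilon\pm\alpha_n$ using Hoeffding on $\tv(\pfinite,p_0)$ and \Cref{lem:okl-continuity}. A smaller point: your justification ``$q_\lambda\geq\lambda q_0$ throughout $\supp(p_\theta)$'' is not quite right (finiteness of $\KL(q_0|p_\theta)$ gives $\supp(q_0)\subseteq\supp(p_\theta)$, not the reverse inclusion), so the $O(|\Xcal|\log(n)/n)$ rounding cost also needs more care; the paper handles this with a dedicated rounding lemma that separately controls the entropy and cross-entropy shifts.
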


\Cref{thm:finite-okl-convergence-tv} and \Cref{thm:finite-cposterior-tv} together show that, in the large sample limit, the OWL methodology and coarsened likelihood philosophy are two sides of the same coin: they both provide approximations of the OKL and, in turn, must approximate each other.

\subsection{Asymptotic connection in continuous spaces}
\label{sec:asymp-continuous}

Suppose $\Xcal$ is a Polish space (e.g. $\cX = \R^d$). Similar to the finite case, we can establish the following asymptotics for the coarsened likelihood for a suitable class of discrepancies $\D$, which includes the Wasserstein distance, Maximum Mean Discrepancy with  suitable choice of kernels \citep{simon2018kernel},  and the smoothed TV distance (\Cref{def:smoothed-tvd} in \cref{sec:smoothed-tvd}).

\begin{theorem}
\label{thm:clikelihood-asymptotics-density}
Suppose $I_{\epsilon_0}(\theta) < \infty$ for some $\epsilon_0 > 0$ and $\D: \cP(\cX) \times \cP(\cX) \to [0,\infty)$ is a pseudometric that is convex in its arguments and continuous with respect to the weak convergence topology on $\cP(\cX)$. If $\epsilon > \epsilon_0$ and $x_1, \ldots, x_n \iid P_0$, then
	\begin{equation*}
		-\frac{1}{n} \log L_\epsilon(\theta|x_{1:n}) \pconv \inf_{\substack{Q \in \cP(\cX) \\ \D(Q, P_0) \leq \epsilon}} \KL(Q|P_\theta) \quad 
		\text{ as $n \to \infty$.}
	\end{equation*}
\end{theorem}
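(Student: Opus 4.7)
The plan is to invoke Sanov's theorem on $\cP(\cX)$ equipped with the weak convergence topology. Since $\cX$ is Polish, so is $\cP(\cX)$, and the empirical measure $\hat{Q}_n = n^{-1}\sum_{i=1}^n \delta_{Z_i}$ of $Z_1,\ldots,Z_n \iid P_\theta$ satisfies the LDP with good rate function $Q \mapsto \KL(Q|P_\theta)$ (e.g.~Theorem 6.2.10 of \citealp{demboLargeDeviationsTechniques2010}). Writing $L_\epsilon(\theta|x_{1:n}) = \prob_\theta(\hat{Q}_n \in A_n)$ with $A_n = \{Q \in \cP(\cX) : \D(Q, \EmpDist{x_{1:n}}) \leq \epsilon\}$, the central difficulty is that the set $A_n$ is itself random, so Sanov cannot be applied directly. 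I would resolve this by sandwiching $A_n$ between two \emph{deterministic} sets, one open and one closed, and then passing to the limit.

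\textbf{Sandwiching via Sanov.} By Varadarajan's theorem, $\EmpDist{x_{1:n}} \to P_0$ weakly almost surely, and since $\D$ is weakly continuous, $\D(\EmpDist{x_{1:n}}, P_0) \to 0$ almost surely. Fixing any $\delta \in (0, \epsilon - \epsilon_0)$, the triangle inequality for the pseudometric $\D$ yields, almost surely for all sufficiently large $n$,
\begin{equation*}
O_\delta := \{Q : \D(Q, P_0) < \epsilon - \delta\} \,\subseteq\, A_n \,\subseteq\, C_\delta := \{Q : \D(Q, P_0) \leq \epsilon + \delta\}.
\end{equation*}
Continuity of $\D$ makes $O_\delta$ weakly open and $C_\delta$ weakly closed. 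Applying the Sanov upper bound to the closed set $C_\delta$ and the Sanov lower bound to the open set $O_\delta$ (and comparing probabilities via the above inclusions) gives, almost surely,
\begin{equation*}
\inf_{Q \in C_\delta} \KL(Q|P_\theta) \;\leq\; \liminf_{n\to\infty} -\tfrac{1}{n}\log L_\epsilon(\theta|x_{1:n}) \;\leq\; \limsup_{n\to\infty} -\tfrac{1}{n}\log L_\epsilon(\theta|x_{1:n}) \;\leq\; \inf_{Q \in O_\delta} \KL(Q|P_\theta).
\end{equation*}

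\textbf{Closing the sandwich.} Writing $f(r) = \inf_{\D(Q,P_0) \leq r} \KL(Q|P_\theta)$, the remaining task is to show that both endpoints of the sandwich converge to $f(\epsilon)$ as $\delta \downarrow 0$. This is the main obstacle and is where the convexity of $\D$ and the hypothesis $I_{\epsilon_0}(\theta) < \infty$ enter. The latter furnishes a reference $Q_0$ with $\D(Q_0, P_0) \leq \epsilon_0$ and $\KL(Q_0|P_\theta) < \infty$. For any $Q$ with $\D(Q,P_0) \leq \epsilon + \delta$, set $Q_\lambda := (1-\lambda)Q + \lambda Q_0$; convexity of $\D$ in its first argument gives $\D(Q_{\lambda_\delta}, P_0) \leq \epsilon$ for $\lambda_\delta := \delta/(\epsilon + \delta - \epsilon_0)$, while convexity of $\KL(\cdot | P_\theta)$ gives $\KL(Q_{\lambda_\delta}|P_\theta) \leq (1-\lambda_\delta)\KL(Q|P_\theta) + \lambda_\delta \KL(Q_0|P_\theta)$. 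Taking infima over $Q \in C_\delta$ and letting $\delta \downarrow 0$ (so $\lambda_\delta \to 0$) forces $\liminf_{\delta \downarrow 0} \inf_{Q \in C_\delta} \KL(Q|P_\theta) \geq f(\epsilon)$, while monotonicity supplies the opposite inequality. A symmetric interpolation pushing $Q$ with $\D(Q,P_0) \leq \epsilon$ into the strict region $\D(Q_\lambda, P_0) < \epsilon - \delta$ collapses $\inf_{Q \in O_\delta} \KL(Q|P_\theta)$ to $f(\epsilon)$ in the limit. Without this interior-point assumption, $f$ could jump at $\epsilon$ and the sandwich would never close; convexity of $\D$ and $\KL$ is precisely what rules out such a jump. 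Almost-sure convergence of the sandwich to $f(\epsilon)$ then implies the in-probability convergence asserted.
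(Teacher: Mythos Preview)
Your proposal is correct and follows essentially the same route as the paper. Both arguments hinge on Sanov's theorem for the empirical measure on $\cP(\cX)$, a sandwich of the random ball $A_n$ between deterministic balls of radii $\epsilon\pm\delta$ via the triangle inequality for $\D$ (after $\D(\EmpDist{x_{1:n}},P_0)\to 0$), and the continuity of $r\mapsto I_r(\theta)$ obtained by convex interpolation toward a point $Q_0$ with $\D(Q_0,P_0)\le\epsilon_0$ and $\KL(Q_0|P_\theta)<\infty$. The paper packages the last step as a separate lemma (its Lemma on continuity of the OKL in the coarsening radius) and first proves the Sanov limit for the deterministic ``population'' quantity $M_{n,\epsilon}(\theta)=\prob_\theta(\D(\EmpDist{Z_{1:n}},P_0)\le\epsilon)$ before handling the random data; you fold both into one pass and invoke Varadarajan for almost-sure weak convergence rather than the weak law, which even yields almost-sure convergence (stronger than the in-probability statement) --- but structurally the proofs coincide.
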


Recall that the limiting expression in the above theorem has the same form as that of the OKL function given in \cref{eq:okl}. However, in order to establish connection between the OKL function and the coarsened likelihood, unlike in the finite case, we cannot merely take the discrepancy $\D$ in the coarsened likelihood to be the TV distance, since the TV distance between the two empirical distributions in \cref{eq:clike} will almost surely be equal to one. Instead, we will take $\D$ to be a smoothed version of TV distance calculated by first convolving the empirical measures with a smooth kernel function $K_h:\cX \times \cX \to [0,\infty)$ indexed by a bandwidth parameter $h > 0$. Further details can be found in \Cref{sec:smoothed-tvd}.

\section{Simulation Examples}
\label{sec:simul}
We now demonstrate the OWL methodology in simulated examples with artificially injected corruptions. In each simulation, we considered two methods for choosing the points to corrupt: (i) {\em max-likelihood corruption} where we fit a maximum likelihood estimate to the uncorrupted data and select the points with the highest likelihood; and (ii) {\em random corruption} where we choose the points to corrupt uniformly at random. \new{We ran each simulation with 50 initial seeds, plotting the mean performance and its 95\% confidence band}. For clarity and space, we present only the results for max-likelihood corruptions in this section, and we defer the results for randomly-selected corruptions to \Cref{app:more-simulations}. 

In all comparisons, OWL refers to our methodology with the data-based choice of the corruption fraction $\epsilon$ as described in \Cref{sec:tune-epsilon}, while OWL ($\epsilon$ known) refers to our methodology with $\epsilon$ equal to the true level of corruption in the data. MLE refers to the standard maximum likelihood estimate. \new{The Pearson residuals baseline is the method from \cite{markatou1998} based on the Hellinger residual adjustment function: a weighted-likelihood based method aimed at performing minimum Hellinger-distance estimation}. For settings where the outcomes were continuous, the Pearson residual density estimate was computed via kernel density estimation, with the bandwidth parameter selected to minimize the empirical Hellinger distance between the final model and the density estimate. For the linear regression baselines, ridge regression was performed with L2 penalty chosen via cross validation, and Huber regression used the commonly-accepted Huber penalty of 1.345~\citep{huber1964robust}. For the logistic regression baselines, L2-regularized MLE selected the L2 penalty via cross validation. For both regression settings, Random Sample Consensus (RANSAC) utilized the ground-truth corruption fraction.

\begin{figure}[ht]
    \centering
    \includegraphics[width=1.\textwidth]{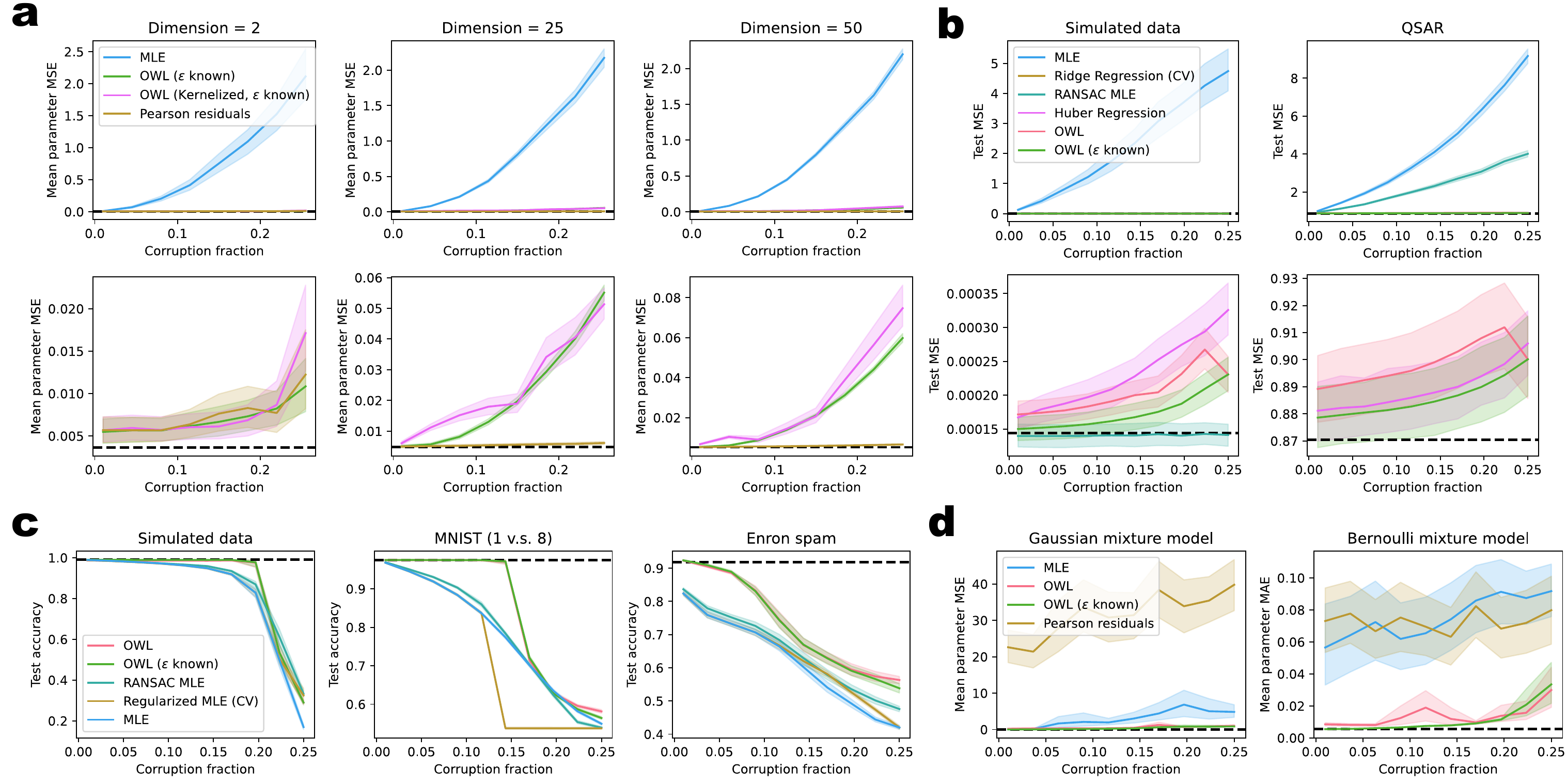}
    \caption{Simulation results for max-likelihood corruptions. (a) Mean parameter reconstruction for a multivariate normal model. (b) Test MSE for linear regression. (c) Test accuracy for logistic regression. (d) Mean parameter reconstruction for mixture models. In all figures, the dashed black line denotes median performance of MLE on full uncorrupted training set, and the shaded regions denote bootstrapped 95\% confidence intervals over 50 random seeds.}
    \label{fig:all-simulation}
\end{figure}

\paragraph{Gaussian modeling.} Our first simulation fit a multivariate normal distribution to data generated from a $p$-dimensional spherical normal distribution. The ground truth mean was drawn uniformly from $[-10,10]^p$, and the corrupted data points were also drawn uniformly from $[-10,10]^p$. The total number of points was 200, and we measured the dimension-normalized mean-squared error (MSE) of ground-truth mean parameter recovery. 

\new{\Cref{fig:all-simulation}a shows the results of the Gaussian simulation. In particular, we see that both the kernelized and the unkernelized versions of OWL had almost identical performance in this example, achieving substantial robustness to corruptions over the sample mean. In light of this similarity of performance, the rest of our simulation examples utilize the much more computationally efficient unkernelized version of OWL. We also observed that the kernelized version of OWL was not particularly sensitive to the choice of bandwidth parameter (\Cref{fig:kernel-choice-sim}). We further observe that on this example, the Pearson residuals method outperforms OWL, although the difference in performance is small relative to the gains of all methods over MLE.}

\paragraph{Linear regression.} We considered a homoscedastic model with normally distributed errors. We considered two datasets. The first is a simulated dataset with 10-dimensional i.i.d. standard normal covariates. The ground-truth regression coefficients were drawn independently from $\Ncal(0,4)$, the intercept was set to 0, and the residual standard deviation was 
$\sigma=1/4$. The training set consisted of 1,000 data points. For the test set, we drew 1,000 new data points and computed the MSE on the underlying mean response.

The second dataset was taken from a quantitative structure activity relationship (QSAR) dataset \citep{data:qsar} compiled by \cite{olier2018meta} from the ChEMBL database. It consists of 5012 chemical compounds whose activities were measured on the epidermal growth factor receptor protein erbB1. The activities were recorded as the negative log of the chemical concentration that inhibited 50\% of the protein target, i.e. the pIC$_{50}$ value. Each compound had 1024 associated binary covariates, corresponding to the 1024-dimensional FCFP4 fingerprint representation of the molecule~\citep{rogers2010extended}. We used PCA to reduce the dimension to 50. For every random seed, we computed a random 80/20 train/test split. The test MSE on this dataset is the standard MSE over the test responses.  In both datasets, for each data point selected to be corrupted, we corrupted the responses by fitting a least squares solution and observing the residuals: if the residual is positive, we set the response to be $3v$ where $v$ is the largest absolute value observed value in the training set responses, otherwise setting it to $-3v$.

\Cref{fig:all-simulation}b shows the results of the linear regression simulations for the max-likelihood corruptions. Across both datasets, we see that OWL is competitive with the best of the robust regression methods, whether that method is RANSAC or Huber regression. 

\paragraph{Logistic regression.} For the logistic regression setting, we have parameters $b \in \R$, $w \in \R^p$ and observations $(x_i, y_i) \in \R^{p + 1}$ assumed to follow the distribution
\[ y_i \sim \text{Bernoulli} \left( \frac{1}{1 + \exp\left( -\langle x_i, w \rangle - b \right)} \right) . \]
We considered three datasets. The first is a simulated dataset using the same parameters as the linear regression setting. The training labels are created according to the generative model. For test accuracy, we computed accuracy against the true sign-values, i.e. $\I{\langle w^\star , x \rangle \geq 0}$.

The second dataset is taken from the MNIST handwritten digit classification dataset~\citep{lecun1998gradient}. We considered the problem of classifying the digit `1' v.s. the digit `8,' resulting in a dataset with 14702 data points and 784 covariates,  representing pixel intensities. The third dataset is a collection of 5172 documents from the Enron spam classification dataset, preprocessed to contain 5116 covariates, representing word counts~\citep{metsis2006spam}. For both the MNIST and the Enron spam datasets, we reduced the dimensionality to 10 via PCA and used a random 80/20 train/test split.

\Cref{fig:all-simulation}c shows the results of the logistic regression simulations. Across all datasets, OWL again outperforms the other approaches in the presence of corruption.

\paragraph{Gaussian mixture models.} Recall the standard Gaussian mixture modeling setup: there are a collection of means $\mu_1, \ldots, \mu_K \in \R^p$, standard deviations $\sigma_1, \ldots, \sigma_K > 0$, and mixing weights $\pi \in \Delta^{K}$. Data points $x_i \in \R^p$ are drawn i.i.d. according to
\[ x_i \sim \sum_{k=1}^K \pi_k \Ncal(\mu_k, \sigma_k^2 I_p). \] 
For our simulations, we generated a synthetic dataset of 1000 points in $\R^{10}$ by first drawing $K=3$ means $\mu^\star_1, \ldots, \mu^\star_K$ whose coordinates are i.i.d.~Gaussian with standard deviation 2. 
We set $\sigma_1 = \sigma_2 = \sigma_3 = 1/2$ and $\pi_1 = \pi_2 = \pi_3 = 1/3$. To corrupt a data point, we randomly selected half of its coordinates and set them randomly to either a large positive value or a large negative value (here, 5 and -5).
As a metric, we measured the average mean squared Euclidean distance between the means of the fitted model and the ground truth model. 

For all methods, we used random restarts, choosing the final model based on the method's criterion: likelihood for MLE, the OKL estimate for OWL, and empirical Hellinger distance for Pearson residuals. We see that OWL remains robust against varying levels of corruptions, whereas both MLE and Pearson residuals perform significantly worse (left panel of \Cref{fig:all-simulation}d).

\paragraph{Bernoulli product mixture models.} Consider the following model for $p$-dimensional binary data: there are a collection of probability vectors $\lambda_1, \ldots, \lambda_K \in [0,1]^p$ and mixing weights $\pi \in \Delta^{K}$. Each data point $x_i$ is drawn i.i.d.~according to the process 
\begin{align*}
z_i &\sim \text{Categorical}(\pi) \\
x_{ij} &\sim \text{Bernoulli}( \lambda_{z_i j} ) \text{ for } j=1,\ldots, p.
\end{align*}
For our simulations, we generated a synthetic dataset of 1000 points in $\{0,1\}^{100}$ by first drawing $K=3$ means $\lambda^\star_1, \ldots, \lambda^\star_K$ whose coordinates are i.i.d.~from a Beta$(1/10, 1/10)$ distribution. The mixing weights were chosen to be uniform over the components. To corrupt a data point, we flipped each zero coordinate with probability 1/2.
As a metric, we measured the average $\ell_1$-distance between the $\lambda$ parameters of the fitted model and the ground truth model.

 The right panel of \Cref{fig:all-simulation}d shows the results of the Bernoulli mixture model simulations. We see that OWL remains robust against varying levels of corruptions, whereas both MLE and Pearson residuals perform significantly worse.
 
\section{Application to micro-credit study}
\label{sec:micro-credit}
In this section we apply OWL to data from a micro-credit study for which maximum likelihood estimation (MLE) is shown to be brittle. In \cite{angelucci2015microcredit}, the authors worked with one of the largest micro-lenders in Mexico to randomize their credit rollout across 238 geographical regions in the Sonora state. Within 18-34 months after this rollout, the authors surveyed $n=16,560$ households for various outcome measures.

Following \cite{broderick2020automatic}, here we focus on the \emph{Average Intention to Treat effect} (AIT) of the rollout on household profits. For $i \in \{1, \ldots, n\}$, let $Y_i$ denote the profit of the $i$th household during the last fortnight (in USD PPP units), and let $T_i \in \{0,1\}$ be a binary variable that is one if the household $i$ falls in the geographical region where the rollout happened.  The AIT on household profits is defined as the coefficient $\beta_1$ in the  model:
\begin{equation}
\label{eq:AITmodel}
Y_i = \beta_0 + \beta_1 T_i + \varepsilon_i,\quad \varepsilon_i \iid N(0, \sigma^2),  \quad i \in \{1,\ldots, n\}. \end{equation}
In \Cref{sec:brittleness}, we reproduce the brittleness in estimating $\beta_1$ using the MLE as  demonstrated in \cite{broderick2020automatic} by removing an handful of observations.

Here we compare OWL to the above data deletion approach. We fit the model \eqref{eq:AITmodel} to the full data set using 50 $\log_{10}$-spaced $\epsilon$-values between $10^{-4}$ and $10^{-1}$, and used the tuning procedure in \Cref{sec:tune-epsilon} to select the value $\epsilon_0 = 0.005$ where the minimum-OKL versus epsilon plot (\Cref{app:micro-credit}, \Cref{fig:micro-okl-plot}) has its  most prominent kink. We also calculate the MLE, which corresponds to OWL with $\epsilon = 0$. The AIT on household profit estimated by OWL as a function of $\epsilon$ can be seen in the left panel of \Cref{fig:micro_fig}. For  values of $\epsilon$ below $\epsilon_0$, the AIT estimates change rapidly as $\epsilon$ changes, while for  values of $\epsilon$ above $\epsilon_0$,  the AIT estimates are quite stable with changes in $\epsilon$. This is due to OWL automatically down-weighting the outlying observations, as seen in the right panel of  \Cref{fig:micro_fig}. We  quantify uncertainty in our estimates by using a variant of the Bootstrap (\Cref{sec:os-bootstrap}).

\begin{figure}[h]
	\centering
	\includegraphics[width=0.47\textwidth]{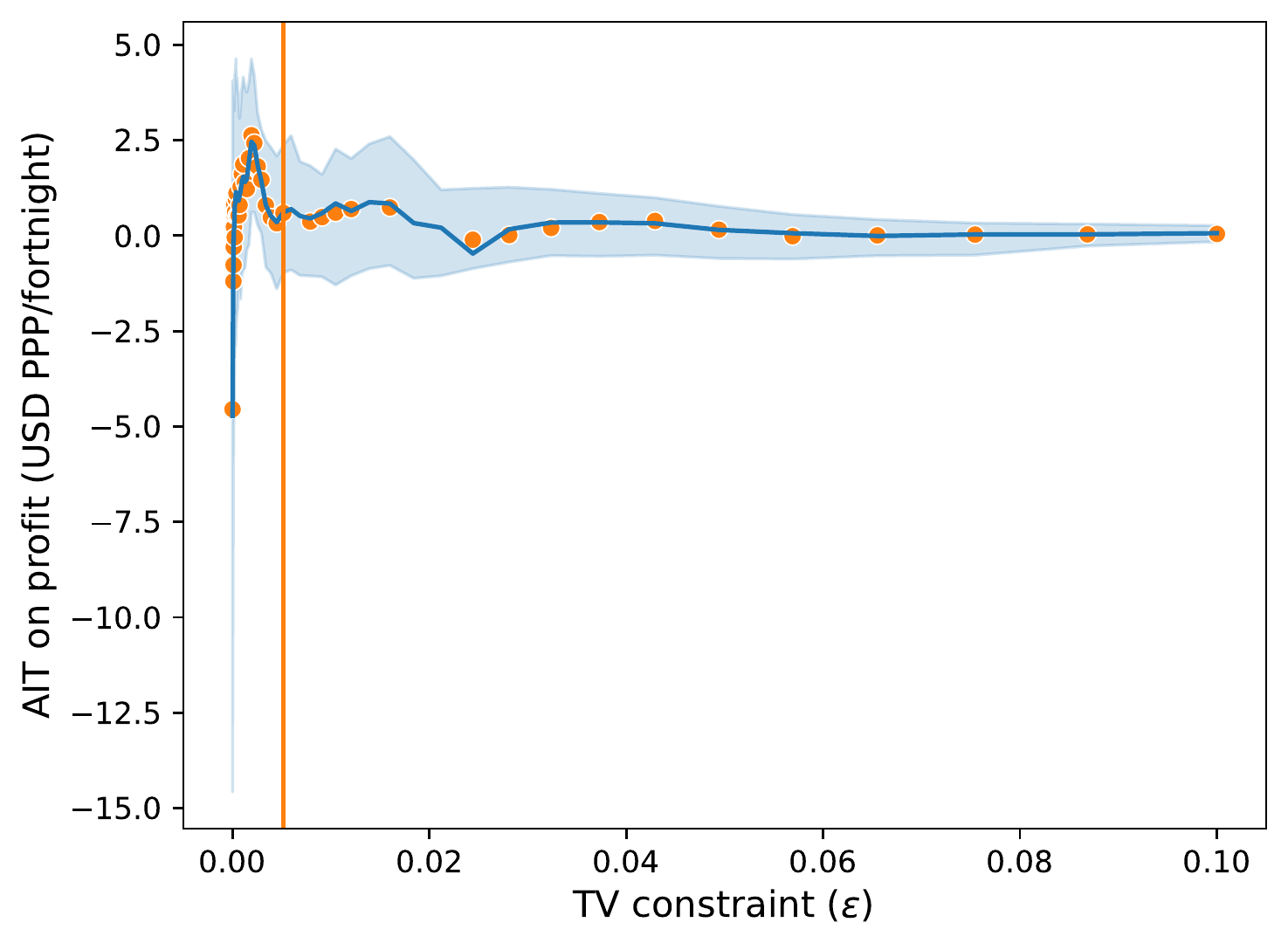}
	\includegraphics[width=0.47\textwidth]{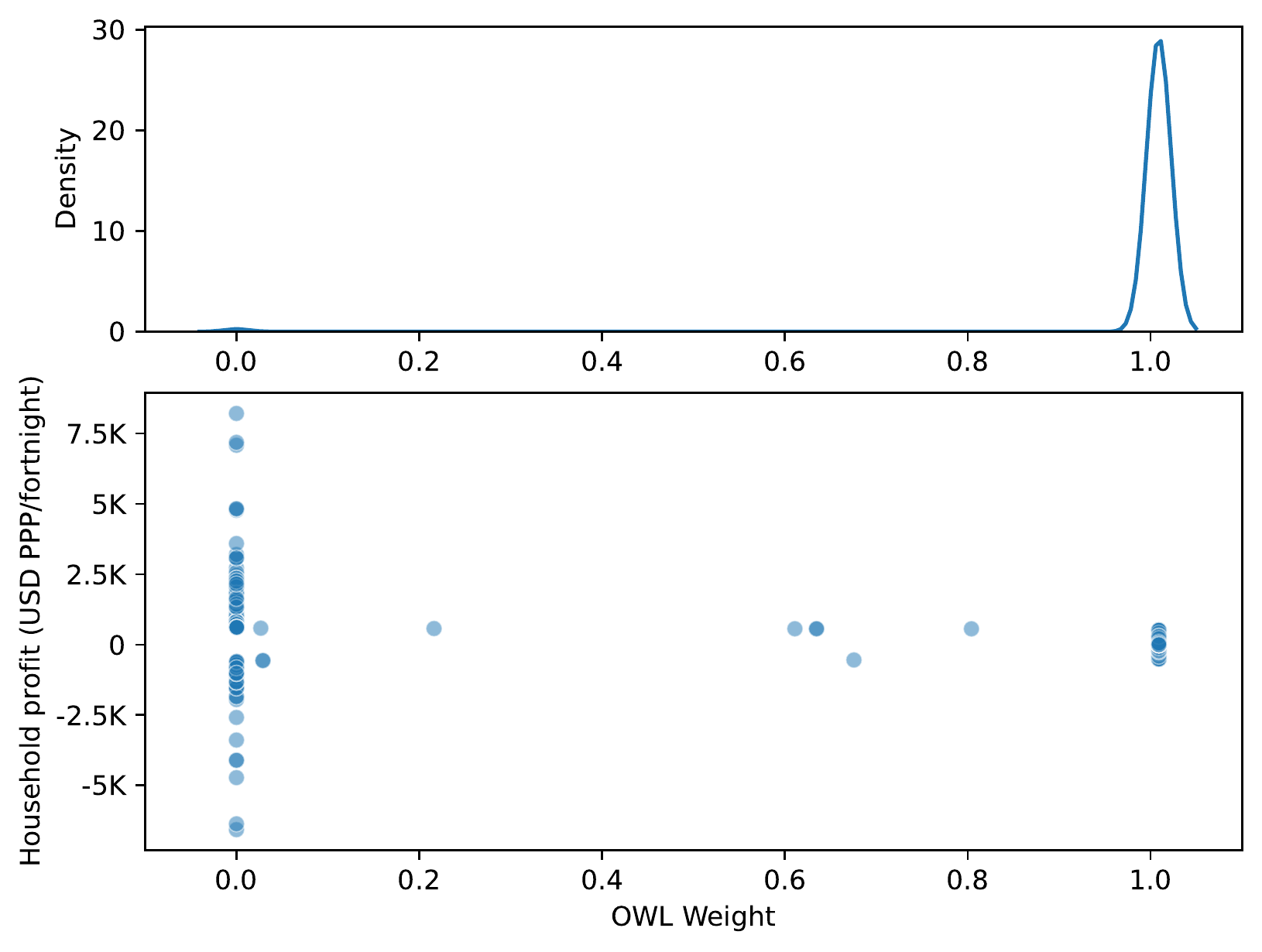}
	\caption{Estimating the Average Intention to Treat (AIT) effect on household profits in the micro-credit study \cite{angelucci2015microcredit} in the presence of outliers. Left: the AIT estimates using OWL for various values of $\epsilon$ along with $90\%$  bootstrap vertical confidence bands. The vertical line is drawn at the value $\epsilon_0 = 0.005$ obtained by the tuning procedure in \Cref{sec:tune-epsilon}, and roughly coincides with the $\epsilon$ beyond which the AIT estimates stabilize and the size of the confidence bands shrinks (see \Cref{app:micro-credit}). Right: shows that the weights estimated by OWL procedure at $\epsilon = \epsilon_0$ down-weight roughly  $1\%$ of the households that have outlying profit values (for visual clarity, we omit a down-weighted household with profit less that $-40K$ USD PPP); see also \Cref{fig:micro-outlier-hist-plot} in \Cref{app:micro-credit}.}
	\label{fig:micro_fig}
\end{figure}

In summary, the OWL procedure chose to down-weight roughly 1\% of the households with extreme profit values and estimated an AIT of $\beta_1 = 0.6$ USD PPP per fortnight based on the selected value of $\epsilon_0 = 0.005$. The value $\epsilon = \epsilon_0$, tuned using the procedure in \Cref{sec:tune-epsilon}, roughly coincides with the point at which the AIT estimates become stable with respect to  $\epsilon$ and also with the point at which the 90\% confidence bands for AIT become narrower --- both suggesting that OWL with the choice $\epsilon = \epsilon_0$ has identified and down-weighted outliers that may be causing brittleness in estimating AIT.
 
\section{Discussion}
In this paper, we introduced the optimistically weighted likelihood (OWL) methodology, motivated by brittleness issues arising from misspecification in statistical methodology based on standard likelihoods. On the theoretical side, we established the consistency \new{and robustness} of our approach and showed its asymptotic connection to the coarsened inference methodology. We also proposed a feasible alternating optimization scheme to implement the methodology and demonstrated its empirical utility on both simulated and real data.

The OWL methodology opens up several interesting future directions. One practical open problem is how to scale to larger datasets. As a weighted likelihood method, OWL requires solving for a weight vector whose dimension is the size of the dataset. While we can solve the resulting convex optimization problem for thousands of data points, the procedure becomes significantly more complicated when the size of the dataset exceeds computer memory. How do we maintain a feasible solution, i.e.~one that lies in the intersection of the simplex and some probability ball, when the entire vector cannot fit in memory?

Another practical question is how to apply the OWL approach in more complex models; for example, involving dependent data. This may be relatively straightforward for models in which the likelihood can still be written in product form due to conditional independence given random effects. This would open up its  application to nested, longitudinal, spatial and temporal data, as random effects models are routinely used in such settings.

\new{
Finally, it will be fruitful to implement the OWL method for a choice of $\D$ other than the TV distance. Indeed, as reflected in \Cref{ass:misspecification} and our robustness results, the choice of $\D$ controls the nature of robustness offered by OWL. An important choice for $\D$ may be the Wasserstein distance because it metrizes the weak convergence topology on bounded spaces \cite[Section 2.4]{huber2009} and, as argued in Chapter 1 of \cite{huber2009}, can capture robustness to both small changes in all observations (e.g.~due to rounding or grouping) and large changes in a few observations (e.g.~due to contamination or outliers). Indeed, while several of our theoretical results already hold for the Wasserstein metric, appropriate modifications to the OKL estimator and the OWL algorithm will be needed to make this approach feasible.}

\paragraph{Acknowledgement}
\if1\blind
{
The authors acknowledge funding from grants N00014-21-1-2510-P00001 from the Office of Naval Research (ONR) and R01ES027498, U54 CA274492-01 and R37CA271186  from the National Institutes of Health, as well as helpful discussions with Sayan Mukherjee and Amarjit Budhiraja.} \fi
\new{They authors also wish to thank the two anonymous referees for their comments and for pointing out the relevant literature.}

\paragraph{Supplementary Materials}
The accompanying supplementary materials contain a real data application to cluster single-cell RNAseq data, along with additional details referenced in the article. \if1\blind
See \url{https://github.com/cjtosh/owl} for code to reproduce all analyses.
\fi

\putbib[refs]
\end{bibunit}

\appendix
\part{} 

\begin{bibunit}[abbrvnat]
\pagebreak
    \begin{center}
    \textbf{\large Supplementary Materials: Robustifying Likelihoods by Optimistically Re-weighting Data}
    \end{center}
    \parttoc \setcounter{equation}{0}
    \setcounter{figure}{0}
    \setcounter{table}{0}
    \setcounter{page}{1}
    \setcounter{theorem}{0}
    \setcounter{lemma}{0}
    \setcounter{proposition}{0}

\renewcommand{\theequation}{S\arabic{equation}}
    \renewcommand{\thefigure}{S\arabic{figure}}
    \renewcommand{\thetable}{S\arabic{table}}
    \renewcommand{\thesection}{S\arabic{section}}
    \renewcommand{\thepage}{S\arabic{page}}
    \renewcommand{\thetheorem}{S\arabic{theorem}}
    \renewcommand{\thelemma}{S\arabic{lemma}}
    \renewcommand{\theproposition}{S\arabic{proposition}}
    \renewcommand{\bibnumfmt}[1]{[S#1]} \renewcommand{\citenumfont}[1]{S#1} 

\section{Comparison with classical robust statistics}
\label{sec:more-classical-robustness}
\new{
The systematic study of statistical robustness  that originated in the pioneering work of \cite{huber1964robust} and \cite{hampel1968contributions} has since extensively been developed into a practical methodology, with several book-length treatments now available on this topic like \cite{huber2009,Hampel2005, maronna2019robust}. Over the years, there have been several surveys articles that attempt to provide an overview of this field like \cite{huber1972,hampel2001robust, rousseeuw2011robust, avella2015robust}. Arguably, statistical robustness should be of primary concern in today's world of large datasets, where no interpretable (i.e.~parametric) statistical model will be sufficient to exactly describe all aspects of the data, even though such models may be adequate to describe well many important aspects of the data.

    The classical approach to construct robust estimators based on influence functions in  \cite{Hampel2005} or the minimax theory in \cite{huber2009} provide a way to choose optimal estimators from a class of estimators that is (a) Fisher consistent at the model, and (b) robust to arbitrary contamination in a small neighborhood away from the model. Indeed, the focus on properties (a) and (b) is both technical and practical, since it is possible to construct optimal M-estimators satisfying properties (a) and (b) that: have a bounded influence function, can be computed using an iteratively re-weighted maximum likelihood procedure, and are asymptotically efficient at the model \citep[see e.g.][]{avella2015robust}.

    In this work, our focus has been to systematically develop a model fitting methodology that can handle general    misspecification beyond the classical Huber's contamination model considered in (b), which was motivated by the presence of outliers (called   \emph{gross errors} in \cite{Hampel2005, huber2009}). Motivated by the problem of averting brittleness in model selection using the coarsened inference methodology of \cite{miller2018robust} (see \Cref{sec:coarsened-inference}), we particularly wanted to allow for small changes to the shape of the distribution that might not be a Huber's contamination (\Cref{sec:model-selection-illustration}). Further, unlike methods based on M-estimation, the OWL procedure outputs a set of parameter values rather than a unique optimum parameter value, which we think better represents the uncertainty of parameter estimation when our models are misspecified. Indeed, even as $n \to \infty$, it seems odd for parameter estimation methods to be increasingly confident about the parameter value when we know that our models are wrong. This also makes us wary about the use of confidence intervals based on the asymptotic normality of M-estimators \cite[Chapter 5]{van2000asymptotic} when dealing with  large datasets (i.e.~$n \to \infty)$, since these intervals shrink at the rate of $n^{-1/2}$ and may be overconfident.

A practical advantage of classical robustness methodology based on M-estimators is that such estimators reduce to solving weighted likelihood equations with weights which are in turn determined using the data \cite[see e.g.][Section 2.8]{maronna2019robust}. The computed weights provide a measure of influence that each observation has on the final procedure, explaining the behavior of such procedures in sophisticated data models (e.g.~\cite{mancini2005optimal} handles the case of dependent data and \cite{avella2018robust} handles variable selection in high-dimensional regression). Alternatively, data weights can also be chosen as a function of the cumulative distribution function or density,  aimed at down-weighting observations that lie in the tails of the model \citep{field1994,dupuis2002,windham1995}. However, while methods like \cite{field1994} can offer robustness to data contamination, they are not designed to handle small errors in the shape of the model distribution (e.g.~increased skewness) that are not attributed to outliers. On the other hand, the \emph{optimistic re-weighting} proposed in this work can correct for arbitrary changes in the shape of the distribution within a small total-variation neighborhood (see \Cref{sec:model-selection-illustration}).}

\section{Illustrating brittleness of model selection}
\label{sec:model-selection-illustration}
\new{Even minor model misspecification can have a dramatic impact on the problems of model assessment and model selection, particularly for large sample sizes. Indeed, since models are never exactly true, for a sufficiently large sample size one can always find evidence to reject any model under consideration. This has motivated the consideration of model credibility indices \citep{lindsay2009model} and tolerance regions around models \citep{liu2009building} to determine the adequacy of a model for the task at hand.}  

\new{Here we focus on the problem of selecting the number of components in kernel mixture models. Suppose that data $x_1, \ldots, x_n$ are sampled from a mixture of kernels $\sum_{j=1}^{k_0} \pi_j^* f(\cdot|\theta^*_j)$ with $k_0 < \infty$  components such that $\pi_j^* > 0$ for every $j \in [k_0]$. Estimation of $k_0$ is done using a model selection criteria like AIC or BIC  \citep{mclachlan2019finite}. Particularly, let $l_k(\bpi; \btheta) = \sum_{i=1}^n \log(\sum_{j=1}^k \pi_j f(x_i|\theta_j))$ denote the log-likelihood associated with the mixture model  with $k$ components, weights $\bpi=(\pi_1, \ldots, \pi_k) \in \Delta_k$, and parameters $\btheta = (\theta_1, \ldots, \theta_k)$. The number of mixture components $\hat{k}_0$  is estimated as the minimizer of
\begin{equation}
\label{eq:selection-criteria}
\textrm{Selection-Criterion}(k) = 2\kappa_{k,n} - 2\max_{\bpi \in \Delta_k, \btheta \in \Theta^k} l_k(\bpi, \btheta),
\end{equation}
where the AIC uses $\kappa_{k,n} = k\times(\dim(\Theta)+1)$ (the number of parameters to fit),  while the BIC uses $\kappa_{k,n} = \frac{k(\dim(\Theta)+1)}{2} \ln(n)$.
When the model is indeed correct, BIC criterion can be shown to consistently select the true number of components $k_0$ \citep{keribin2000consistent} as $n \to \infty$.}

\new{In practice, however, the data may not be sampled exactly from a mixture of the (typically Gaussian) kernels under consideration. In such cases,  mixture models with selection tend to over-cluster the data, particularly for large sample size \citep{miller2018robust,cai2021finite,mclachlan2019finite}.  In \Cref{fig:model-selection-example},  we provide a simple illustration to demonstrate the failure of AIC and BIC  to select two mixture components using Gaussian mixture models when the data consists of two well-separated clusters, one of which has a slightly skewed density, which is thus not Gaussian. This example illustrates the problem associated with using mixture models for clustering and sub-population identification when model assumptions are slightly violated. Interestingly, we are able to select the correct number of components in this example if we suitably modify the selection criteria using optimistic weights from the OWL procedure \eqref{eq:owl-selection-criteria}.  We recommend further investigation of such approaches for robust model selection.}

\new{
\paragraph{Details of the data used in \Cref{fig:model-selection-example}.} The observed data $x_1, \ldots, x_n$ in  \Cref{fig:model-selection-example} were obtained as a random sample of $n=1000$ observations drawn from the mixture distribution  $P_0 = \frac{1}{4}N(0,1) + \frac{3}{4} N\chi^2_d(5,1)$, where a sample $Y = \mu + \sigma \frac{Z - d}{\sqrt{2d}}$ from our skewed distribution $N\chi^2_d(\mu, \sigma^2)$ is an appropriately scaled and translated Chi-squared random variable $Z \sim \chi^2_d$, satisfying $\E[Y] = \mu$ and $\textrm{Var}(Y) = \sigma^2$. As $d \to \infty$, note that $N\chi^2_{d}(\mu, \sigma^2)$ converges in distribution to $N(\mu, \sigma^2)$ due to the central limit theorem. Here we considered  $d=10$ so that $N\chi^2_d$ has a noticeable  skew. 

\paragraph{Details of the selection criteria used in \Cref{fig:model-selection-example}.}
We then estimated AIC and BIC for four candidate model fitting methods: (a) Soft EM algorithm, (b) Hard EM algorithm, (c) OWL $\epsilon = 0.05$, and (d) OWL $\epsilon = 0.1$. Particularly, method (a) refers to the regular EM algorithm \citep[e.g.][]{mclachlan2019finite} used to perform the maximization  in \Cref{eq:selection-criteria}.  For method (b), the likelihood term $l_k(\bpi, \btheta)$ in \Cref{eq:selection-criteria} is replaced by $\tilde{l}_k(\bpi, \bz,\btheta) = \sum_{i=1}^n \sum_{j=1}^k \I{z_i = j} \log (\pi_j f(x_i|\theta_j))$, which is then maximized over (hard) assignments $\bz \in [k]^n$, $\bpi \in \Delta_k$, and $\btheta \in \Theta^k$ using the hard EM algorithm \citep[e.g.][]{satopaa2011finding}.  While both the soft and hard EM algorithms are designed to maximize similar likelihoods $l_k$ and $\tilde{l}_k$, they can have slightly different behaviors \citep{kearns1998information,celeux1993comparison}.  We included both of them for this analysis since the OWL procedure is based on maximizing a weighted version of $\tilde{l}_k$. For methods (c) and (d), we fit the mixture model  using OWL with the chosen value of $\epsilon$ to obtain optimal parameters $\hat{\btheta}_\epsilon \in \Theta^k$,  assignment $\hat{\bz}_\epsilon \in [k]^n$, and optimistic weights $n^{-1} \hat{w}_\epsilon \in\Delta_n$. We then modify the selection criteria \eqref{eq:selection-criteria} using optimistic weights as follows:
\begin{equation}
\label{eq:owl-selection-criteria}
\textrm{OWL-Selection-Criteria}_\epsilon(k) = 2\kappa_{n,k} - 2 \sum_{i=1}^n \hat{w}_{\epsilon,i} \sum_{j=1}^k \I{\hat{z}_{\epsilon,i} = j} \log (\pi_j f(x_i|\hat{\theta}_{\epsilon,j})).
\end{equation}
}

\begin{figure}[h]
    \centering
    \begin{subfigure}{0.8\textwidth}
    \includegraphics[width=\textwidth]{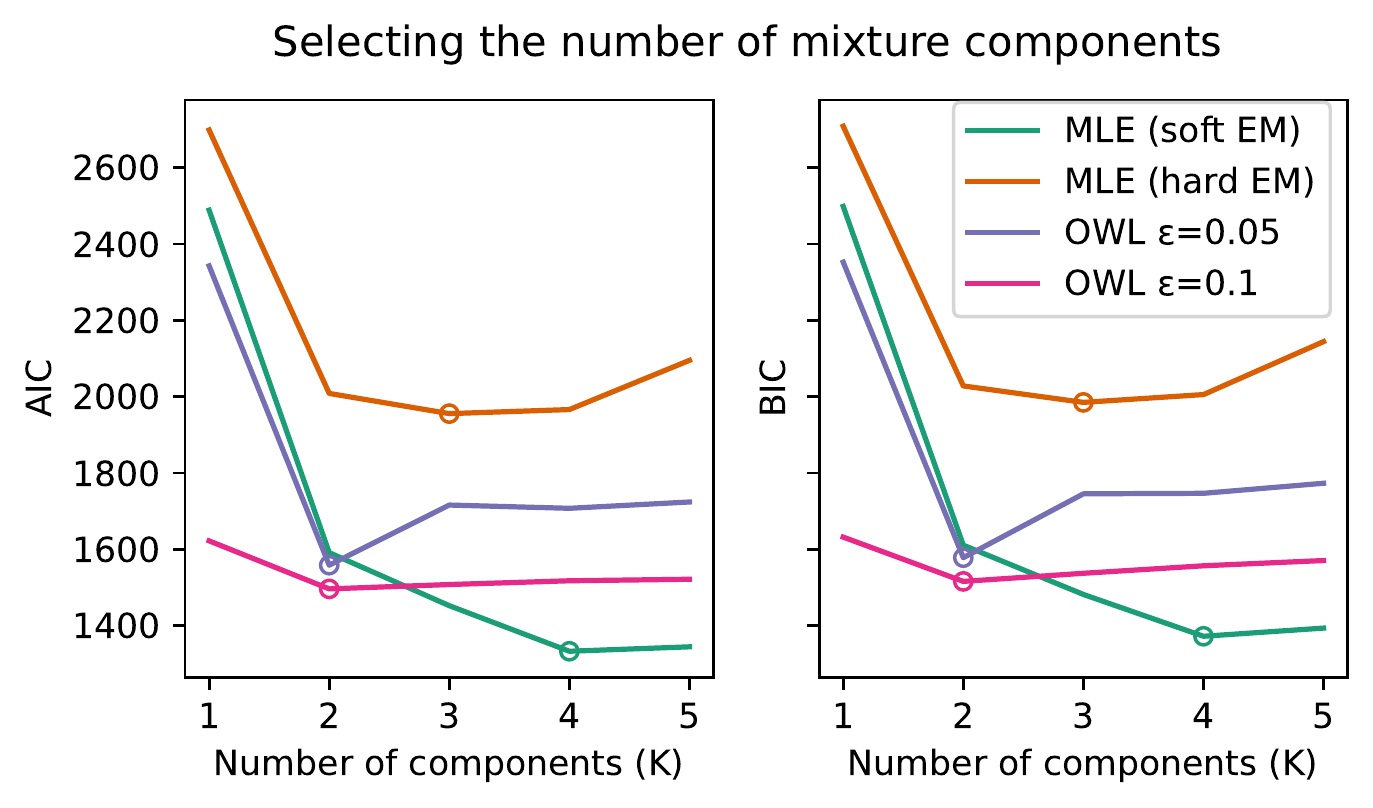}
    \end{subfigure}
    \begin{subfigure}{0.8\textwidth}
    \includegraphics[width=\textwidth]{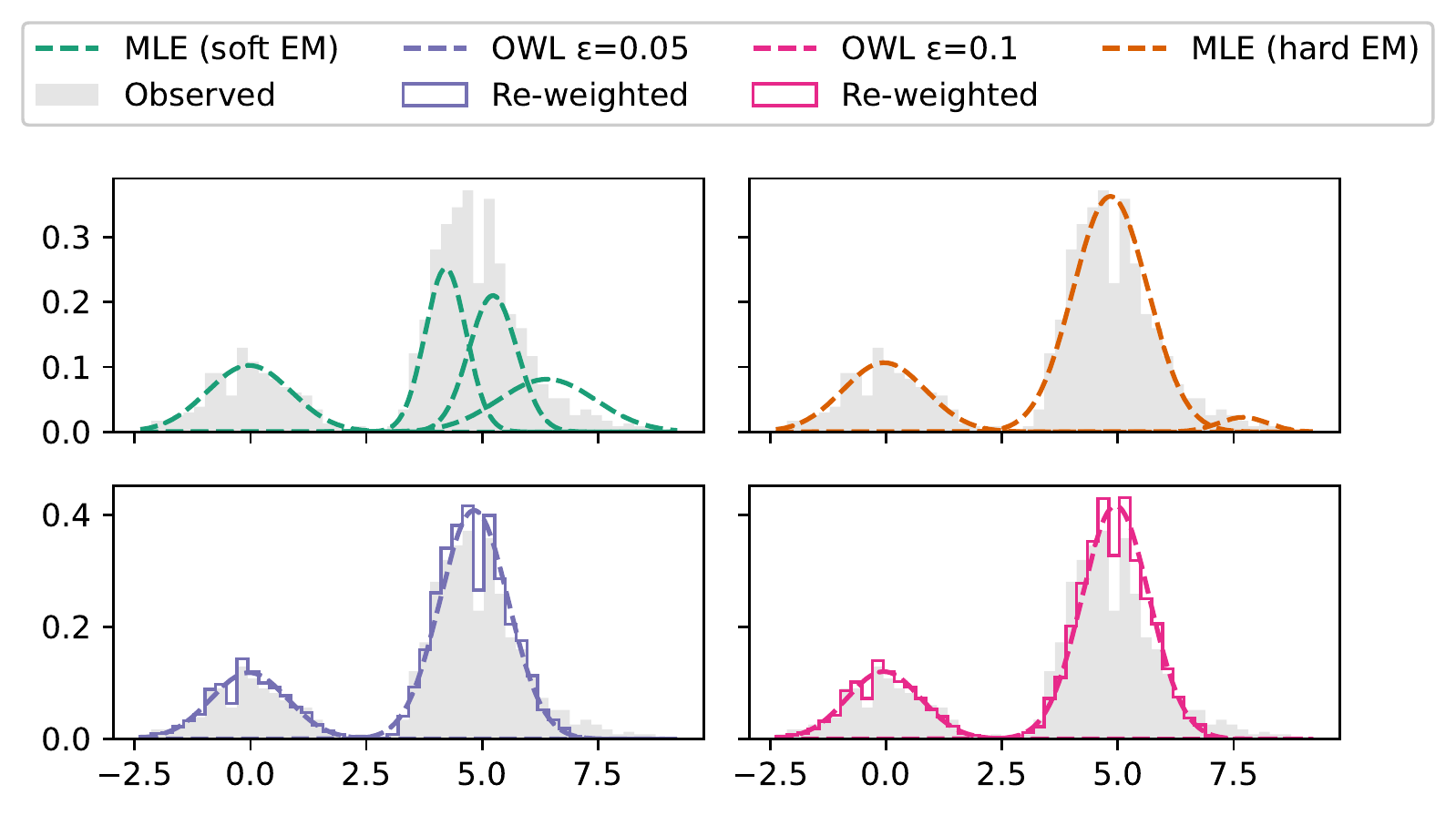}
    \end{subfigure}
    \caption{\new{Brittleness of model selection in mixture models. We fit a Gaussian mixture model using AIC and BIC criteria to  $n=1000$  observations drawn from a  well-separated mixture of a Gaussian and  a non-Gaussian component. The presence of the non-Gaussian component leads the usual AIC or BIC to select more than two components to better match the overall density. But if we modify the selection criteria using optimistic weights from the OWL procedure \eqref{eq:owl-selection-criteria} this seems to prevent the inflation in the number of components. The model fit from the various methods, along with the re-weighted data under OWL are shown.}}
    \label{fig:model-selection-example}
\end{figure}
\FloatBarrier 
\section{Properties of the OKL}
\label{sec:useful-lemmas}

This section covers some technical lemmas about the OKL function that will be used in the theoretical results in the remainder of the appendix. When showing the asymptotic connection between coarsened likelihood and OKL (\Cref{sec:coarsened-likelihood-asymptotics}), we will allow for more general distances than the total variation (TV) distance. To this end, let $\D(P, Q)$ denote a general distance between probability measures $P, Q \in \cP(\cX)$. We define the OKL function for a general distance $\D$ by
\begin{align}
	\label{eqn:okl-general-distance}
	I_\epsilon(\theta) = \inf_{\substack{Q \in  \cP(\cX) \\ \D(Q, P_0) \leq \epsilon}} \KL(Q | P_\theta). 
\end{align}

Although such extended analysis with  general distances might be possible, for simplicity, we will restrict to the case of $\D=\tv$ while proving the sandwiching property of the I-projection (\Cref{sec:sandwiching}) and the consistency of the OKL estimator in continuous spaces (\Cref{sec:okl-estimation-cont}).

\subsection{Continuity of OKL in the coarsening radius}

The following lemma shows when we can expect the OKL function to be continuous in $\epsilon$.
\begin{lemma}
\label{lem:okl-continuity}
Let $0 \leq \epsilon_0 < \epsilon$ and $\alpha > 0$. Suppose the function $Q \mapsto \D(Q, P_0)$ is convex, then
\[ 0 \leq I_{\epsilon}(\theta) - I_{\epsilon+\alpha}(\theta) \leq \frac{\alpha}{\epsilon - \epsilon_0 + \alpha} I_{\epsilon_0}(\theta). \]
If we additionally have $\alpha \leq \epsilon - \epsilon_0$, then 
\[ 0 \leq I_{\epsilon-\alpha}(\theta) - I_{\epsilon}(\theta) \leq \frac{\alpha}{\epsilon - \epsilon_0} I_{\epsilon_0}(\theta). \]
\end{lemma}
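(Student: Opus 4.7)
The plan is to use the standard convex interpolation argument: both the constraint set $\{Q : \D(Q, P_0) \leq \epsilon\}$ (convex by the assumed convexity of $\D(\cdot, P_0)$) and the objective $Q \mapsto \KL(Q | P_\theta)$ are convex, so I can interpolate between a near-optimizer for the $\epsilon_0$-problem and a near-optimizer for the other problem, picking the mixing weight exactly large enough to land inside the desired constraint set.

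For the first inequality, the lower bound $I_\epsilon(\theta) - I_{\epsilon+\alpha}(\theta) \geq 0$ is immediate from monotonicity of $\epsilon \mapsto I_\epsilon(\theta)$, since enlarging the feasible region of an infimum can only decrease its value. For the upper bound, I fix $\eta > 0$ and pick $Q_0 \in \cP(\cX)$ with $\D(Q_0, P_0) \leq \epsilon_0$ and $\KL(Q_0 | P_\theta) \leq I_{\epsilon_0}(\theta) + \eta$, together with $Q_1$ with $\D(Q_1, P_0) \leq \epsilon + \alpha$ and $\KL(Q_1 | P_\theta) \leq I_{\epsilon+\alpha}(\theta) + \eta$. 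For $\lambda \in [0,1]$, convexity of $\D(\cdot, P_0)$ gives
\begin{equation*}
\D(\lambda Q_0 + (1-\lambda) Q_1, P_0) \leq \lambda \epsilon_0 + (1-\lambda)(\epsilon + \alpha),
\end{equation*}
and choosing $\lambda = \alpha/(\epsilon - \epsilon_0 + \alpha)$ makes the right side exactly $\epsilon$, so $Q_\lambda \doteq \lambda Q_0 + (1-\lambda) Q_1$ is feasible for $I_\epsilon(\theta)$. Convexity of $\KL(\cdot | P_\theta)$ and $\KL \geq 0$ then yield
\begin{equation*}
I_\epsilon(\theta) \leq \lambda \KL(Q_0|P_\theta) + (1-\lambda) \KL(Q_1|P_\theta) \leq \lambda (I_{\epsilon_0}(\theta)+\eta) + (1-\lambda)(I_{\epsilon+\alpha}(\theta)+\eta).
\end{equation*}
Rearranging and sending $\eta \downarrow 0$ gives $I_\epsilon(\theta) - I_{\epsilon+\alpha}(\theta) \leq \lambda \bigl(I_{\epsilon_0}(\theta) - I_{\epsilon+\alpha}(\theta)\bigr) \leq \lambda I_{\epsilon_0}(\theta)$, which is the claimed bound.

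The second inequality is entirely analogous: interpolate between a near-minimizer $Q_0$ at radius $\epsilon_0$ and a near-minimizer $Q_2$ at radius $\epsilon$, using mixing weight $\mu = \alpha/(\epsilon - \epsilon_0)$, which is in $[0,1]$ precisely under the extra hypothesis $\alpha \leq \epsilon - \epsilon_0$. Then $\D(\mu Q_0 + (1-\mu) Q_2, P_0) \leq \mu \epsilon_0 + (1-\mu)\epsilon = \epsilon - \alpha$, so the interpolant is feasible for $I_{\epsilon-\alpha}(\theta)$, and convexity of KL together with $I_\epsilon(\theta) \geq 0$ yields $I_{\epsilon-\alpha}(\theta) - I_\epsilon(\theta) \leq \mu I_{\epsilon_0}(\theta)$ after again letting the slack $\eta \downarrow 0$.

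No step looks genuinely delicate; the only mild subtlety is making sure one uses \emph{approximate} minimizers in both the $\epsilon_0$ and the $\epsilon(+\alpha)$ problem, rather than assuming existence of exact I-projections, so that the argument goes through without having to invoke extra regularity on $\D$. Once that is handled via the standard $\eta$-trick above, both bounds drop out mechanically from the one-line algebra for $\lambda$ and $\mu$.
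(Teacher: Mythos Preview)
Your proof is correct and essentially identical to the paper's: both use the same convex interpolation between near-minimizers at the two radii, with the same mixing weight $\lambda = \alpha/(\epsilon - \epsilon_0 + \alpha)$, the same appeal to convexity of $\D(\cdot,P_0)$ and of $\KL(\cdot\,|\,P_\theta)$, and the same $\eta$-slack (the paper calls it $\delta$) to avoid assuming exact minimizers. The only cosmetic difference is that the paper explicitly disposes of the case $I_{\epsilon_0}(\theta)=\infty$ upfront, whereas you leave it implicit; in that case the upper bound is vacuous anyway.
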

\begin{proof}
We will prove the first statement, as the second follows identically.
First observe that we always have $I_{t}(\theta) - I_{t'}(\theta) \geq 0$ for all $t' \geq t \geq 0$. Moreover, if $I_{ \epsilon_0}(\theta)$ is infinite, then the above holds trivially. Thus, we may assume that $I_{\epsilon_0}(\theta) < \infty$. 

Pick $\delta > 0$. By the definition of OKL, for any $r>0$ there exists $Q_{r} \in \cP(\cX)$ such that $\D(Q_{r}, P_0) \leq r$ and 
\[ \KL(Q_{r} | P_\theta) \leq I_{r}(\theta) + \delta. \]
Take $Q = (1-\lambda)Q_{\epsilon + \alpha} + \lambda Q_{\epsilon_0} \in \cP(\cX)$ for $\lambda = \frac{\alpha}{\epsilon - \epsilon_0 + \alpha}$. By the convexity of $\D$, we have
\[ \D(Q, P_0) \leq (1-\lambda) \D(Q_{\epsilon + \alpha}, P_0) + \lambda \D (Q_{\epsilon_0}, P_0) \leq (1-\lambda)(\epsilon + \alpha) + \lambda \epsilon_0 \leq \epsilon.  \]
Moreover, by the convexity of KL divergence (see e.g.~\cite[Lemma~2.4]{budhiraja2019analysis})
\begin{align*}
I_{\epsilon}(\theta) 
\leq \KL(Q | P_\theta)
&\leq (1-\lambda)\KL(Q_{\epsilon + \alpha} | P_\theta) + \lambda \KL( Q_{\epsilon_0} | P_\theta) \\
&\leq (1-\lambda) (I_{\epsilon + \alpha}(\theta) + \delta) + \lambda (I_{\epsilon_0}(\theta) + \delta) \\
&\leq I_{\epsilon + \alpha}(\theta) + \frac{\alpha}{\epsilon - \epsilon_0 + \alpha} I_{\epsilon_0}(\theta) + \delta.
\end{align*}
Rearranging and noting that $\delta > 0$ was chosen arbitrarily, gives us the result in the lemma. The second statement is derived in an identical manner by substituting $\epsilon - \alpha$ in place of $\epsilon$.
\end{proof}

\subsection{Sandwiching property of I-projections}
\label{sec:sandwiching}

In parts of this appendix, we will be dealing with cases where $\Xcal \subseteq \R^d$ and $P_0$ and $P_\theta$ have corresponding densities $p_0, p_\theta \in \Den$ with respect to the Lebesgue measure $\lambda$. 
In this setting, if $I_\epsilon(\theta) < \infty$, then by \cite{csiszar1975divergence} there is a ($\lambda$-almost everywhere) unique density  $\iproj \in \Den$ that we will call the information ($I$-)projection such that $\tv(\iproj, p_0) \leq \epsilon$ and $\KL(\iproj|p_\theta) = I_\epsilon(\theta)$. We will show that $\iproj$ satisfies the following \emph{sandwiching} property relative to $p_0$ and $p_\theta$ for any value of $\epsilon > 0$. 

\begin{definition}
For probability vectors $p, q, r \in \Delta_n$, we say that $r$ is \emph{sandwiched} between $p$ and $q$ if $\min(p_i, q_i) \leq r_i \leq \max(p_i, q_i)$ for all $i=1,\ldots, n$. Similarly, if $p, q, r \in \Den$ are probability densities, then we say that $r$ is sandwiched between $p$ and $q$ if the condition  $\min(p(x),q(x)) \leq r(x) \leq \max(p(x),q(x))$ holds for $\lambda$-almost every $x$.
\end{definition}

The following proposition will be important in proving the sandwiching property for the I-projection.
\begin{proposition}
\label{prop:sandwich-kl-tv}
For probability vectors (or densities), if $r$ is sandwiched between $p$ and $q$, then $\tv(r,p) \leq \tv(q,p)$ and $\KL(r|p) \leq \KL(q|p)$.
\end{proposition}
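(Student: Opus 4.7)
My plan is to reduce both inequalities to pointwise (respectively $\lambda$-a.e., in the density case) comparisons, which can then be summed (respectively integrated). For the total-variation inequality, the sandwich hypothesis places $r_i$ on the closed segment between $p_i$ and $q_i$, so $|r_i - p_i| \le |q_i - p_i|$ coordinatewise. Multiplying by $\tfrac{1}{2}$ and summing (or integrating, for densities) yields $\tv(r,p) \le \tv(q,p)$ immediately.

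For the KL inequality, the key move is to exploit mass conservation $\sum_i r_i = \sum_i p_i = \sum_i q_i = 1$ to rewrite
\[ \KL(r\mid p) \;=\; \sum_i \phi_{p_i}(r_i), \qquad \phi_y(x) \;\doteq\; x\log(x/y) - x + y, \]
and analogously for $\KL(q\mid p)$. For each fixed $y>0$ the function $\phi_y$ is convex on $[0,\infty)$ with unique minimizer at $x=y$ (where $\phi_y(y)=0$); in particular it is decreasing on $[0,y]$ and increasing on $[y,\infty)$. The sandwich condition says that $r_i$ lies on the same side of $p_i$ as $q_i$ and is no farther from $p_i$, so this unimodality delivers the pointwise bound $\phi_{p_i}(r_i) \le \phi_{p_i}(q_i)$. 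Summing (integrating) then gives $\KL(r\mid p) \le \KL(q\mid p)$.

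The only delicate point is the edge case $p_i = 0$ (or $p(x)=0$ on a set of positive $\lambda$-measure). With the standard conventions $0\log 0 = 0$ and $x\log(x/0) = +\infty$ for $x>0$, the sandwich forces $r_i \in [0, q_i]$: if $r_i > 0$ then both divergences are infinite and the inequality is trivial, while if $r_i = 0$ the corresponding term of $\KL(r\mid p)$ vanishes and the desired inequality $\phi_{p_i}(r_i) \le \phi_{p_i}(q_i)$ still holds. Once these edge cases are handled, the density version follows verbatim by replacing sums with $\lambda$-integrals, since all relevant functions are measurable. I do not expect a substantial obstacle; the main insight is the add-and-subtract trick that expresses $\KL$ as a sum of terms each minimized at $r_i = p_i$, after which the sandwich hypothesis does all the work.
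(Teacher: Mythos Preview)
Your proof is correct. Both inequalities follow from the pointwise comparisons you describe, and your handling of the edge case $p_i=0$ is fine.

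The paper takes a different route: it proves a single statement covering all $\phi$-divergences at once (TV via $\phi(x)=|x-1|$, KL via $\phi(x)=x\log x$). Starting from the sandwich representation $r=(1-t)p+tq$ with $t(x)\in[0,1]$, it applies Jensen's inequality to the convex $\phi$ to get $D_\phi(r,p)\le D_\phi(q,p)-\int t\,\phi(q/p)\,p\,d\lambda$, and then bounds the residual integral using a subgradient of $\phi$ at $1$ together with the identity $\int t(q-p)\,d\lambda=\int(r-p)\,d\lambda=0$. Your argument is more hands-on: for TV you bypass any convexity and simply observe $|r_i-p_i|\le|q_i-p_i|$; for KL you use the add-and-subtract trick to write each summand as the Bregman-type function $\phi_y(x)=x\log(x/y)-x+y$, which is unimodal with minimum at $x=y$, so the sandwich immediately gives a pointwise bound. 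Both arguments exploit the normalization $\sum r_i=\sum p_i$ at exactly one place (the paper to kill the residual integral, you to justify the add-subtract). The paper's version is more unified; yours is slightly more elementary and makes the ``$r_i$ is on the same side of $p_i$ as $q_i$ and closer'' intuition explicit.
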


In fact, will prove the above result for any $\phi$-divergence $D_\phi(p,q) = \int \phi(p/q) q d\lambda$ when $\phi$ is a convex function $\phi(1)=0$. The total variation distance ($\phi(x) = |x - 1|$) and KL-divergence ($\phi(x) = x \log x$) will emerge as special cases.

\begin{lemma}
\label{lem:sandwich-phi-div}
Let $\phi: \R \to (-\infty, \infty]$ be a proper convex function with $\phi(1) = 0$. If a density $r$ is sandwiched between densities $p$ and $q$, then $D_\phi(r,q) \leq D_\phi(p,q)$.
\end{lemma}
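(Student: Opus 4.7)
The plan is to establish the pointwise bound $\phi(r(x)/q(x)) \leq \phi(p(x)/q(x))$ almost everywhere and then integrate against $q$. The straightforward chord inequality from convexity gives $\phi(b) \leq t\phi(a)$ whenever $b = (1-t)\cdot 1 + t\cdot a$ with $t \in [0,1]$, but this only implies $\phi(b)\leq \phi(a)$ when $\phi(a) \geq 0$. The central obstacle, then, is that nothing in the hypotheses guarantees $\phi \geq 0$, only that $D_\phi \geq 0$ via Jensen. So the first step is a linear reduction that makes $\phi$ nonnegative without changing the divergence.

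Concretely, I would fix any subgradient $c \in \partial\phi(1)$ (which exists provided $1$ lies in the interior of $\mathrm{dom}(\phi)$, the standard setting for $\phi$-divergences) and set $\tilde\phi(x) = \phi(x) - c(x-1)$. Then $\tilde\phi$ is convex with $\tilde\phi(1) = 0$ and $\tilde\phi \geq 0$ by the subgradient inequality. For any two probability densities $\mu,\nu$,
\[ D_{\tilde\phi}(\mu,\nu) = \int \tilde\phi(\mu/\nu)\,\nu\,d\lambda = D_\phi(\mu,\nu) - c\int (\mu-\nu)\,d\lambda = D_\phi(\mu,\nu), \]
since $\mu$ and $\nu$ both integrate to $1$. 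So it suffices to prove the lemma for $\tilde\phi$, and I may assume $\phi \geq 0$ henceforth.

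With $\phi \geq 0$ in hand, at each $x$ with $q(x) > 0$ set $a = p(x)/q(x)$ and $b = r(x)/q(x)$. The sandwich hypothesis puts $b$ on the segment between $1$ and $a$, so $b = (1-t)\cdot 1 + t\cdot a$ for some $t = t(x)\in[0,1]$ (the case $a<1$ is symmetric). Convexity gives $\phi(b) \leq (1-t)\phi(1) + t\phi(a) = t\phi(a)$, and since $\phi(a)\geq 0$ and $t\leq 1$, this simplifies to $\phi(b) \leq \phi(a)$. On the null set $\{q=0\}$ the sandwich forces $0 \leq r \leq p$ (since $\min(p,0)\leq r\leq \max(p,0)$ and $p,r\geq 0$), so the recession-function contribution $r\phi^\ast$ to $D_\phi(r,q)$, with $\phi^\ast = \lim_{x\to\infty}\phi(x)/x \in [0,\infty]$, is bounded by $p\phi^\ast$. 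Multiplying the pointwise bound by $q$ and integrating, then adding the recession piece, delivers $D_\phi(r,q)\leq D_\phi(p,q)$. All the difficulty is concentrated in the initial sign-reduction; the remaining steps are bookkeeping around the chord inequality and the $\{q=0\}$ boundary.
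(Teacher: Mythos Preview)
Your proof is correct and uses the same two ingredients as the paper—the chord inequality from convexity and a subgradient of $\phi$ at $1$—just in a different order: you subtract the affine minorant $c(x-1)$ upfront to reduce to $\phi\ge 0$ and get a clean pointwise bound $\phi(r/q)\le\phi(p/q)$, whereas the paper applies convexity first to obtain $D_\phi(r,q)\le D_\phi(p,q)-\int t\,\phi(p/q)\,q\,d\lambda$ and then uses the subgradient inequality together with $\int t(p-q)\,d\lambda=\int(p-r)\,d\lambda=0$ to show the residual integral is nonnegative. Your explicit treatment of the $\{q=0\}$ set via the recession function is more careful than the paper's, which tacitly assumes this set is null.
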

\begin{proof}
The sandwiching property implies that there is a function $t: \cX \to [0,1]$ such that $r = (1-t) p + t q$.  Hence	
\begin{align*}
    D_\phi(r,q) &= \int \phi((1-t) p/q + t q/q ) q d\lambda \leq \int (1-t) \phi(p/q) q d\lambda + \int t \phi(1) q d\lambda \\
    &= D_\phi(p,q) - \int t \phi(p/q) q d\lambda \leq D_\phi(p,q) - \xi\int t q(p/q - 1) d\lambda = D_\phi(p,q). 
\end{align*}
where the two inequalities follow from the convexity of $\phi$ noting that there is $\xi \in \mathbb{R}$ (called the sub-gradient) such that 
$\phi(x) \geq \phi(1) + \xi(x-1) = \xi(x-1)$ for all $x \in \R$, and the last equality holds since $\int t(p-q) d\lambda = 0$, since $p$ and $r$ are assumed to integrate to one.
\end{proof}

Now, we will need a simple lemma about total variation distance before we can prove the sandwiching property of the I-projection. For brevity, we will use notations like $\{p \leq q\}$ and $\{q > \max(p_0, p_\theta)\}$ to denote the sets $\{x \in \cX : p(x) \leq q(x)\}$ and $\{x \in \cX: q(x) > \max(p_0(x), p_\theta(x))\}$ respectively. Note that for two densities $p, q \in \Den$, the total variation distance can be expressed as $\tv(p,q) = \int_{p > q} (p-q) d\lambda = \int_{q > p} (q-p) d\lambda = \frac{1}{2} \int |p-q| d\lambda$.

\begin{lemma}
\label{lem:tv-transform}
Let $p,q,r \in \Den$. If $\{r < q\} \subseteq \{p \leq r\}$ or $\{r > q\} \subseteq \{p \geq r\}$ then $\tv(p,r) \leq \tv(p,q)$.  \end{lemma}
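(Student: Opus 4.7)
The plan is to exploit the one-sided representation of the total variation distance, $\tv(p,r) = \int (p-r)_+\,d\lambda = \int (r-p)_+\,d\lambda$, and reduce the claim to a pointwise comparison on a judiciously chosen set. Under the first hypothesis, I will bound $\tv(p,r)$ via the ``$(p-r)_+$'' form; under the second, via the ``$(r-p)_+$'' form. In both cases the contrapositive of the hypothesis identifies a set containing $\{p>r\}$ (respectively $\{r>p\}$) on which the required pointwise inequality between the integrands holds.

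Concretely, for the first case, the contrapositive of $\{r<q\}\subseteq\{p\leq r\}$ is $\{p>r\}\subseteq\{r\geq q\}$. Thus on $\{p>r\}$ we have $p>r\geq q$, and so both $(p-q)_+ = p-q$ and $p-r \leq p-q$. Integrating the pointwise inequality over $\{p>r\}$ yields
\[
\tv(p,r) \;=\; \int_{\{p>r\}} (p-r)\,d\lambda \;\leq\; \int_{\{p>r\}} (p-q)\,d\lambda \;\leq\; \int (p-q)_+\,d\lambda \;=\; \tv(p,q).
\]
The second case is entirely symmetric: the contrapositive of $\{r>q\}\subseteq\{p\geq r\}$ is $\{p<r\}\subseteq\{r\leq q\}$, and on $\{r>p\}$ this gives $q\geq r>p$, so $(q-p)_+=q-p$ and $r-p\leq q-p$; integrating then bounds $\int(r-p)_+d\lambda$ by $\tv(p,q)$.

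I expect no real obstacle here; the entire argument reduces to tracking which one-sided TV representation to use and applying the contrapositive of the given inclusion. The only point to be careful about is using $\int(p-r)_+d\lambda$ (and not $\int(r-p)_+d\lambda$) in the first case, since the hypothesis only controls the behavior on the part of the space where $p$ exceeds $r$, and analogously for the second case. No appeal to the sandwiching property or to \Cref{lem:sandwich-phi-div} is needed.
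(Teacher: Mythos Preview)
Your proof is correct and follows essentially the same approach as the paper: both use the one-sided representation of $\tv$, take the contrapositive of the inclusion to obtain $\{p>r\}\subseteq\{r\geq q\}$ (respectively $\{p<r\}\subseteq\{r\leq q\}$), and then bound the integrand pointwise on that set. The only cosmetic difference is that the paper writes the final step as enlarging the domain of integration from $\{p>r\}$ to $\{p>q\}$, whereas you pass directly to $\int (p-q)_+\,d\lambda$; these are equivalent.
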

\begin{proof} 
	Suppose $\{r < q\} \subseteq \{p \leq r\}$, then
	$$
	\tv(p,r) = \int_{\{p > r\}} (p-r) d\lambda \leq \int_{\{p > r\}} (p-q) d\lambda \leq \int_{\{p > q\}} (p-q) d\lambda = \tv(p,q)
	$$
	where the two inequalities follow from the inclusion  $\{p > r\} \subseteq \{r \geq q\} \cap \{p > q\}$.
	Similarly if $\{r > q\} \subseteq \{p \geq r\}$ then, 
	$$
	\tv(p,r) = \int_{\{ p < r\}} (r-p) d\lambda \leq \int_{\{p < r\}} (q-p) d\lambda \leq \int_{\{p < q\}} (q-p) d\lambda = \tv(p,q)
	$$
	since $\{p < r\} \subseteq \{r \leq q\} \cap \{p < q\}$.	
\end{proof}

\begin{lemma}
\label{lem:info-proj-sandwich}
Let $p_0, p_\theta$ be probability densities satisfying $I_\epsilon(\theta) < \infty$, and let $\iproj$ denote the I-projection of $p_\theta$ onto the set $\{ q \in \Den : \tv(q, p_0) \leq \epsilon \}$.
Then $\iproj$ is sandwiched between $p_0$ and $p_\theta$.
\end{lemma}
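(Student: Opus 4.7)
The plan is to prove the lemma by contradiction. Suppose that the set of $x$ on which $\iproj(x)$ fails to lie in $[\min(p_0(x), p_\theta(x)), \max(p_0(x), p_\theta(x))]$ has positive Lebesgue measure. Partition this violation set into the overshoot region $A_+ = \{\iproj > \max(p_0, p_\theta)\}$ and the undershoot region $A_- = \{\iproj < \min(p_0, p_\theta)\}$, so that $\lambda(A_+ \cup A_-) > 0$ by assumption. The goal is to construct a competitor density $\tilde q$ satisfying $\tv(\tilde q, p_0) \leq \epsilon$ and $\KL(\tilde q | p_\theta) < \KL(\iproj | p_\theta)$, which contradicts that $\iproj$ is the unique $I$-projection of $p_\theta$ onto the $\tv$-ball around $p_0$ (uniqueness from \cite{csiszar1975divergence}).

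The natural candidate is the pointwise clip
\[
\tilde q_0(x) \;=\; \min\bigl( \max\bigl(\iproj(x),\, \min(p_0(x), p_\theta(x))\bigr),\, \max(p_0(x), p_\theta(x)) \bigr),
\]
which is sandwiched between $p_0$ and $p_\theta$ by construction. Moreover $\tilde q_0$ is also sandwiched between $\iproj$ and $p_\theta$: on $A_+$ we have $\iproj > p_\theta$ and $\tilde q_0 = \max(p_0, p_\theta) \in [p_\theta, \iproj]$, and symmetrically on $A_-$. Lemma \ref{lem:sandwich-phi-div} with $\phi(x) = x \log x$ then gives $\KL(\tilde q_0 | p_\theta) \leq \KL(\iproj | p_\theta)$, strict by strict convexity of $\phi$ on $A_+ \cup A_-$ where $\iproj / p_\theta \neq 1$. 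For the $\tv$ constraint, on $A_+$ we have $\iproj > p_0$ while $p_0 \leq \tilde q_0 \leq \iproj$, so $\tilde q_0$ lies on the same side of $p_0$ as $\iproj$ but strictly closer; a symmetric statement holds on $A_-$. Applying Lemma \ref{lem:tv-transform} with $p = p_0$, $q = \iproj$, $r = \tilde q_0$ would then give $\tv(p_0, \tilde q_0) \leq \tv(p_0, \iproj) \leq \epsilon$.

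The main obstacle is that $\tilde q_0$ need not integrate to one: setting $M_+ = \int_{A_+}(\iproj - \max(p_0, p_\theta))\, d\lambda$ and $M_- = \int_{A_-}(\min(p_0, p_\theta) - \iproj)\, d\lambda$, one has $\int \tilde q_0 \, d\lambda = 1 + M_- - M_+$, which is generally not one. To repair this, I would perform a small mass-balancing perturbation supported outside $A_+ \cup A_-$, where $\iproj$ already lies strictly inside the sandwich interval $[\min(p_0, p_\theta), \max(p_0, p_\theta)]$. The mass-conservation identity $\int(\iproj - p_\theta)\, d\lambda = 0$ combined with $\int_{A_+}(\iproj - p_\theta)\, d\lambda \geq M_+$ and $\int_{A_-}(\iproj - p_\theta)\, d\lambda \leq -M_-$ guarantees that outside $A_+ \cup A_-$ there is enough capacity, in the form of sub-regions where $\iproj$ can be pushed toward either $p_0$ or $p_\theta$ while preserving the sandwich envelope and the sign of $\tilde q - p_0$, to absorb the discrepancy $M_+ - M_-$. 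After this adjustment, $\tilde q$ is a probability density that is simultaneously sandwiched between $p_0$ and $p_\theta$, sandwiched between $\iproj$ and $p_\theta$, and on the same side of $p_0$ as $\iproj$ everywhere, so the combination of Lemmas \ref{lem:sandwich-phi-div} and \ref{lem:tv-transform} yields the required strict inequalities.

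The technical heart of the argument is thus the mass-balancing step: verifying that the excess or deficit created by the clipping can be reabsorbed on $(A_+ \cup A_-)^c$ without disturbing the three properties simultaneously required for the two lemmas. This is where the positive-measure violation assumption is actually used, since if $\lambda(A_+) = \lambda(A_-) = 0$ then no perturbation would be needed and no contradiction would arise, consistent with $\iproj$ already being sandwiched $\lambda$-almost everywhere.
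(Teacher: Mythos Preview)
Your overall plan---clip the I-projection, invoke Lemma~\ref{lem:sandwich-phi-div} for the KL bound, and invoke Lemma~\ref{lem:tv-transform} for the TV bound---is the same strategy the paper uses. The difference is in how you repair the mass defect created by the clip, and that is where your argument has a real gap.

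You restrict the compensating perturbation to $(A_+\cup A_-)^c$ while insisting that $\tilde q$ stay sandwiched between $\iproj$ and $p_\theta$ there. But on that complement $\iproj$ is already sandwiched between $p_0$ and $p_\theta$, so the only admissible move is \emph{toward} $p_\theta$; your ``push toward $p_0$'' option would break the $\iproj$--$p_\theta$ sandwich you need for the KL inequality. The available capacity is therefore $\int_{C_-}(p_\theta-\iproj)$ for adding mass and $\int_{C_+}(\iproj-p_\theta)$ for removing it, where $C_\pm=(A_+\cup A_-)^c\cap\{\iproj\gtrless p_\theta\}$. Nothing in the mass-conservation identity you quote forces these to be large enough. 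For instance, take $p_\theta\equiv 1/2$ on $[0,2]$, $p_0=1/4$ on $[0,1]$ and $p_0=3/4$ on $[1,2]$, and let $q$ equal $7/8$ on $[0,1/2]$, $1/8$ on $[1/2,1]$, and $1/2$ on $[1,2]$. Then $A_+=[0,1/2]$, $A_-=[1/2,1]$, and on all of $(A_+\cup A_-)^c=[1,2]$ one has $q=p_\theta$, so $C_-=C_+=\emptyset$; yet $M_+=3/16\neq M_-=1/16$, so your clipped $\tilde q_0$ integrates to $7/8$ with nowhere to put the missing $1/8$. This $q$ is not itself an I-projection, but your construction uses no property of $\iproj$ beyond being a density in the TV ball, so it cannot rule such configurations out.

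The paper avoids this by handling overshoot and undershoot \emph{sequentially} rather than simultaneously. To eliminate $S^+$, it lowers $q$ to $\max(p_0,p_\theta)$ on $S^+$ and compensates by sliding $q$ toward $p_\theta$ on \emph{all} of $\{p_\theta>q\}$---which is allowed to intersect $S^-$. The capacity there is exactly $\int_{p_\theta>q}(p_\theta-q)=\tv(q,p_\theta)\ge v$, so it is always sufficient; the resulting density has empty $S^+$, is sandwiched between $q$ and $p_\theta$ (so KL does not increase), and satisfies $\{\bar q<q\}=S^+\subseteq\{p_0\le\bar q\}$ (so TV does not increase via Lemma~\ref{lem:tv-transform}). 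A symmetric second pass then eliminates $S^-$. Allowing the compensation to spill into the opposite violation region is exactly what makes the capacity bound automatic; your restriction to $(A_+\cup A_-)^c$ is the step that breaks.
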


\begin{proof}
	
It suffices to show that for any density $q \in \Den$, there is a density $\bar{q} \in \Den$ sandwiched between $p_0$ and $p_\theta$ such that $\tv(\bar{q},p_0) \leq \tv(q, p_0)$ and $\KL(\bar{q}|p_\theta) \leq \KL(q|p_\theta)$. We can then complete the proof by  noting that $\bar{q} = \iproj$ if we set $q = \iproj$. 

Let $q \in \Den$ be given, and suppose that the set $S^+ = \{ q  > \max(p_0,p_\theta) \}$ has non-zero Lebesgue measure. Letting  
\[ v = \int_{S^+} (q - \max(p_0, p_\theta)) \, d\lambda, \]
note that
\[  \tv(q, p_\theta)  = \int_{\{q > p_\theta \}} (q - p_\theta)  \, d\lambda 
\geq v . \]
Define the density
\[ \bar{q}(x) = 
\begin{cases} 
    \max(p_0(x), p_\theta(x)) & \text{ if } x \in S^+ \\   
    q(x) + \frac{v}{\tv(q, p_\theta)}(p_\theta(x) - q(x)) & \text{ if } p_\theta(x) > q(x) \\
    q(x) & \text{ otherwise }
\end{cases}. \]   
Then it is not hard to verify that $\bar{q}$ integrates to one and satisfies $\bar{q}(x) \leq \max(p_0(x), p_\theta(x))$ everywhere. Next,
applying \Cref{lem:tv-transform}
with $p=p_0$, $q=q$ and $r=\bar{q}$, we obtain $\tv(\bar{q}, p_0) \leq \tv(q, p_0)$ since $\{\bar{q} < q\} = S^+ \subseteq \{p_0 \leq \bar{q} \}$ holds.
Additionally, $\bar{q}$ is sandwiched between $p_\theta$ and $q$. Next, \Cref{prop:sandwich-kl-tv} implies that $\KL(\bar{q} | p_\theta) \leq \KL(q | p_\theta)$.

Now let $q \in \Den$ such that $S^+$ is empty but the set $S^- = \{ q < \min(p_0, p_\theta) \}$ is non-empty. Letting 
\[ v = \int_{S^-} (\min(p_0, p_\theta) - q) \, d\lambda, \]
note that $0 < v \leq \tv(q,p_\theta)$, and define the density
\[ \bar{q}(x) = 
\begin{cases} 
    \min(p_0(x), p_\theta(x)) & \text{ if } x \in S^- \\   
    q(x) + \frac{v}{\tv(q, p_\theta)}(p_\theta(x) - q(x)) & \text{ if } p_\theta(x) < q(x) \\
    q(x) & \text{ otherwise }
\end{cases}. \]  
Then observe that $\bar{q}$ is a density and it is sandwiched between $q$ and $p_\theta$. Similar arguments as above using \Cref{lem:tv-transform} and \Cref{prop:sandwich-kl-tv} show that $\tv(\bar{q}, p_0) \leq \tv(q, p_0)$ and $\KL(\bar{q} | p_\theta) \leq \KL(q | p_\theta)$.
\end{proof} 
\section{Quantifying robustness of the OWL functional}
\label{sec:robustness-okl-minimizer}
\new{
This section studies the robustness offered by the population (i.e.~$n \to \infty$) version of our OWL estimator, henceforth called the \emph{OWL functional}.  Similar to the approach used for proving robustness of minimum distance functionals \citep{donoho1988automatic}, here we analyze the \emph{bias-distortion curve} \citep{huber2009,donoho1988automatic} that captures the largest bias in the estimated parameter when the data distribution $\Pop$ is distorted in terms of the metric $\D$ underlying the OWL functional. 

Under regularity assumptions, our two main results in this section show a lower bound of $\epsilon$ on the breakdown point of the \emph{OWL functional}  (\Cref{sec:lower-bound-breakdown}) and a finite upper bound on a notion of  sensitivity under infinitesimal perturbations (\Cref{sec:upper-bound-sensitivity}). By using the Hausdorff metric to calculate distance between sets, our results here handle complexities that arise because the \emph{OWL functional} can output an arbitrary parameter value that minimizes the OKL function, rather than a unique parameter value.

\subsection{Mathematical Setup}
\label{sec:robustness-setup}
Suppose our data space $\cX$ is a Polish space and our parametric model $\{P_\theta\}_{\theta \in \Theta} \subseteq \cP(\cX)$  is a subset of the probability measures $\cP(\cX)$ on $\cX$. We assume that we are provided with a distance $\D$  on $\cP(\cX)$ that captures realistic departures of the true data distribution $P_0$ away from the model family $\{P_\theta\}_{\theta \in \Theta}$.  Following \Cref{sec:okl}, this means that for suitable values of  $\epsilon > 0$ the set of \emph{identifiable parameters} $\Theta_\epsilon(\Pop) \doteq \{ \theta \in \Theta :  \D(\Pop, P_\theta) \leq \epsilon \}$ is suitably small but non-empty.  Formally this happens when $\epsilon > \epsPop \doteq \inf_{\theta \in \Theta} \D(\Pop, P_\theta)$.  Denoting the power set of $\Theta$  by $2^{\Theta}$, we now define our OWL functional.

\begin{definition}
\label{def:owl-functional}
The \emph{OWL functional} $\OF: \cP(\cX) \to 2^{\Theta}$ is defined as: 
$$
\OF(\Pop)=\argmin_{\theta \in \Theta} \OKL[\theta]{\Pop},
$$
where $\OKL[\theta]{\Pop} = \min_{\substack{Q \in \cP(\cX)\\ \D(\Pop, Q) \leq \epsilon}} \KL\left(Q \mid P_\theta\right)$ is the OKL function.  \end{definition}

\begin{remark}  When $\D=\tv$ and $\Pop$ is the unknown data generating distribution, our OWL methodology (\Cref{sec:methodology}) can find a \emph{local} minimizer of the estimator $\hatI$ of the OKL function $\theta \mapsto \OKL[\theta]{\Pop}$. However, in general, the OKL function will have multiple global minimizers (see \Cref{lem:okl-minimizers-characterization} below). The OWL functional $\OF(\Pop)$ denotes the set of all such minimizers.
\end{remark}

Next, when $\D$ is suitably regular (\Cref{ass:metric-condition-for-I-projection}) and $\epsilon \geq \epsPop$, we provide a simple characterization for $\OF(\Pop)$.  \Cref{ass:metric-condition-for-I-projection} is satisfied by many common integral probability metrics  on $\cP(\cX)$, including the TV distance, the Wasserstein distance when the space $\cX$ is bounded, and suitable versions of the Maximum Mean discrepancy \citep{Sriperumbudur2009}.
\begin{assume}
\label{ass:metric-condition-for-I-projection}
The distance $\D$  is a metric on $\cP(\cX)$  such that the ball $\{Q \in \cP(\cX): \D(Q, \Pop) \leq \epsilon\}$  is convex and closed with respect to the TV distance for every $\Pop \in \cP(\cX)$ and $\epsilon > 0$.
\end{assume}

\begin{lemma}
    \label{lem:okl-minimizers-characterization}
    Suppose $\epsilon > \epsPop$ and  \Cref{ass:metric-condition-for-I-projection} holds. Then $\OF(\Pop)$ is equal to the set of identifiable parameters $\Theta_\epsilon(\Pop)\doteq \{ \theta \in \Theta :  \D(\Pop, P_\theta) \leq \epsilon \}$. 
\end{lemma}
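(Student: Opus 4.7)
The plan is to show the two-sided inclusion $\OF(\Pop) = \Theta_\epsilon(\Pop)$ by establishing that $\min_{\theta \in \Theta} I_\epsilon(\theta) = 0$ and that the minimum is attained precisely on $\Theta_\epsilon(\Pop)$. Since $\KL(Q|P_\theta) \geq 0$ always, we have $I_\epsilon(\theta) \geq 0$ for every $\theta$, so it suffices to characterize the zero set of $I_\epsilon$.

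First I would handle the easy inclusion $\Theta_\epsilon(\Pop) \subseteq \{\theta : I_\epsilon(\theta) = 0\}$. If $\theta \in \Theta_\epsilon(\Pop)$ then $\D(\Pop, P_\theta) \leq \epsilon$, so $Q = P_\theta$ is feasible in the OKL optimization, giving $I_\epsilon(\theta) \leq \KL(P_\theta | P_\theta) = 0$. Combined with non-negativity of KL, $I_\epsilon(\theta) = 0$. Next I would use the hypothesis $\epsilon > \epsPop$ to deduce that $\Theta_\epsilon(\Pop)$ is nonempty: by definition of the infimum, there exists $\theta_0 \in \Theta$ with $\D(\Pop, P_{\theta_0}) < \epsilon$, so $\theta_0 \in \Theta_\epsilon(\Pop)$. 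This also shows $\min_{\theta \in \Theta} I_\epsilon(\theta) = 0$, hence $\OF(\Pop) = \{\theta : I_\epsilon(\theta) = 0\}$.

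The main obstacle is the reverse inclusion $\{\theta : I_\epsilon(\theta) = 0\} \subseteq \Theta_\epsilon(\Pop)$, which is where \Cref{ass:metric-condition-for-I-projection} is used. Suppose $I_\epsilon(\theta) = 0$. Then there exists a sequence $\{Q_n\} \subseteq \cP(\cX)$ with $\D(\Pop, Q_n) \leq \epsilon$ and $\KL(Q_n | P_\theta) \to 0$. By Pinsker's inequality, $\tv(Q_n, P_\theta) \leq \sqrt{\KL(Q_n|P_\theta)/2} \to 0$, so $Q_n \to P_\theta$ in the TV topology. The ball $\{Q \in \cP(\cX) : \D(Q, \Pop) \leq \epsilon\}$ contains each $Q_n$ and is closed in the TV topology by \Cref{ass:metric-condition-for-I-projection}, hence $P_\theta$ lies in the ball, i.e., $\D(\Pop, P_\theta) \leq \epsilon$, so $\theta \in \Theta_\epsilon(\Pop)$.

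Combining the two inclusions yields $\OF(\Pop) = \Theta_\epsilon(\Pop)$. The closedness hypothesis in \Cref{ass:metric-condition-for-I-projection} is what couples the topological control from Pinsker's inequality with the feasibility set of OKL; the convexity in that assumption is not needed here but is used elsewhere (e.g., in \Cref{lem:okl-continuity} and subsequent robustness results). Note that if the ball were merely closed in a weaker topology than TV one would need to replace Pinsker's inequality with a stronger argument, but the TV-closedness assumption makes the passage from $\KL \to 0$ directly to membership in the ball immediate.
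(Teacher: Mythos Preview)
Your proof is correct and follows the same overall scaffold as the paper (show $\min_\theta I_\epsilon(\theta)=0$, then identify the zero set), but your argument for the reverse inclusion is genuinely different. The paper invokes Csisz\'ar's I-projection theorem: under \Cref{ass:metric-condition-for-I-projection} (convexity and TV-closedness of the ball) the infimum in $I_\epsilon(\theta)$ is \emph{attained} at a unique $Q^\theta$, so $I_\epsilon(\theta)=0$ forces $\KL(Q^\theta\mid P_\theta)=0$ and hence $Q^\theta=P_\theta$, which lies in the ball. You instead take a minimizing sequence $Q_n$, use Pinsker to get $\tv(Q_n,P_\theta)\to 0$, and then use only the TV-closedness of the ball to conclude $P_\theta$ is in it. Your route is more elementary (no external I-projection result) and, as you observe, uses strictly less of the assumption: convexity plays no role in your argument, whereas the paper's appeal to Csisz\'ar requires it. The paper's approach has the side benefit of exhibiting the I-projection $Q^\theta$ explicitly, which is conceptually aligned with how the OKL is used elsewhere, but for the bare statement of the lemma your argument is cleaner.
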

\begin{proof}

When $\epsilon > \epsPop$ let us note that $\min_{\theta \in \Theta} \OKL[\theta]{\Pop} = 0$.  Indeed, this follows since $\OKL[\theta]{\Pop}  \geq 0$ for any $\theta \in \Theta$ and we can take  $Q=P_{\theta'}$ for any $\theta' \in \Theta$ such that $\D(P_{\theta'}, P_0) < \epsilon$ to conclude that $\OKL[\theta']{\Pop} = 0$.  

To conclude the proof, we now justify the following equalities:
\[
     \OF(\Pop) = \{\theta \in \Theta : \OKL[\theta]{\Pop} = 0\} = \{\theta \in \Theta : \D(P_\theta, \Pop) \leq \epsilon\} \doteq \Theta_\epsilon(\Pop) 
    \]

Indeed, the first equality follows from \Cref{def:owl-functional} and our result that  $\min_{\theta \in \Theta} \OKL[\theta]{\Pop} = 0$.  The second equality will follow if we can show  $\OKL[\theta]{\Pop} = 0$ if and only if $\D(P_\theta, \Pop) \leq \epsilon$.  The `if' direction follows by taking $Q=P_\theta$ in the definition of the OKL. Let us now show the `only if' direction.  Indeed since \Cref{ass:metric-condition-for-I-projection} is satisfied, we can apply Theorem 2.1 in  \cite{csiszar1975divergence}  to conclude that there is a unique $Q^\theta $  such that $\D(Q^\theta, P_0) \leq \epsilon$ and $\KL(Q^\theta|P_\theta)=\OKL[\theta]{\Pop}=0$. This shows that  $Q^\theta = P_\theta$ and thus  $\D(P_\theta, P_0) \leq \epsilon$.
\end{proof}

\begin{remark} 
\label{rem:minimum-distance-functional}
Suppose \Cref{ass:metric-condition-for-I-projection} holds. Then \Cref{lem:okl-minimizers-characterization} shows that the map $\MD: \cP(\cX) \to 2^{\Theta}$ defined as $\MD(\Pop) \doteq  \cap_{\epsilon > \epsPop} \OF(\Pop)$ is the minimum distance functional $\MD(\Pop) = \argmin_{\theta \in \Theta} \D(P_{\theta}, \Pop)$ from \cite{donoho1988automatic}.
\end{remark}

Following the theory of robustness of minimum distance functionals, we will establish robustness of the  OWL functional by analyzing a suitable \emph{bias-distortion (BD) curve} \citep{donoho1988automatic,huber2009}. In order to define the BD curve for a set-valued functional like $\OF$, we lift a metric on $\Theta$ to the Hausdorff distance on its  subsets $2^{\Theta}$.  
For simplicity, suppose that $\Theta  \subseteq \R^p$  is equipped with the usual Euclidean norm $\|\cdot\|$. 
The Hausdorff distance between sets $A, B \subseteq \Theta$ is defined as  
\begin{equation}
    \label{eq:def-hausdorff}
    \dH(A,B) \doteq \max\{\rho(A|B), \rho(B|A)\}
\end{equation}
where $\rho(C|D) \doteq \sup_{x \in C} \inf_{y \in D} \|x-y\|$ for any $C, D \subseteq \Theta$.

\begin{definition} The \emph{bias-distortion curve} $\maxbias(\cdot|\Pop): \R_+ \to \R_+$ for the OWL functional at $\Pop \in \cP(\cX)$ is defined as:
    \[
    \maxbias(\delta|\Pop) \doteq \sup_{\substack{P \in \cP(\cX) \\ \D(P, \Pop) \leq \delta}} \dH(\OF(P), \OF(\Pop)).
\]
\label{def:bd}
\end{definition}
The value $\maxbias(\delta|\Pop)$ quantifies the  worst-case effect of a perturbation in the input distribution $\Pop$ of order $\delta$ (measured in terms of $\D$) on the output of the OWL functional. 
A similar BD-curve has been used in \citet[Equation 1.23]{huber2009} and \citet[Section 3]{donoho1988automatic} to study robustness of  the asymptotic form of estimators with respect to suitable neighborhoods around the input distribution $\Pop$. However, in contrast to earlier work, our formulation in \Cref{def:bd} uses  the Hausdorff distance because the OWL functional is allowed to output a set of parameter values.

\begin{remark} Similar to \cite{donoho1988automatic}, both the OWL functional (\Cref{def:owl-functional}) and the BD-curve (\Cref{def:bd}) are based on the same distance $\D$. Different choices of $\D$ capture differing forms of robustness. For example:
\begin{enumerate}
    \item If $\D$ is the bounded Lipschitz metric (i.e.~Wasserstein distance based on a bounded metric on $\cX$) then $\maxbias$ characterizes robustness of $\OF$ in terms of weak(-star) neighborhoods around $\Pop$  \cite[see][Section 2.4]{huber2009}. Such neighborhoods provide robustness against both large changes in a few observations (known as \emph{gross errors}) and small changes in many observations (like rounding and grouping) in the data \cite[Chapter 1]{huber2009}.
    \item If $\D$ is the TV distance then $\maxbias(\delta|\Pop)$ bounds the worst case bias of $\OF$ under  \emph{gross errors}  captured by $\delta$-contamination neighborhoods \cite[Section 1.4]{huber2009}. This follows since $\dtv(\Pop, (1-\delta) \Pop + \delta H) \leq \delta$ for any $H \in \cP(\cX)$. 
\end{enumerate}
\end{remark}

As discussed in \cite{donoho1988automatic}, when $\D$ is the Huber's discrepancy or the Prokhorov distance, key robustness indicators of the functional like \emph{gross-error sensitivity}, asymptotic \emph{breakdown}, and qualitative robustness can all be visualized in terms of a corresponding BD-curve \cite[see][Figure 1]{donoho1988automatic}. Following \cite{donoho1988automatic}, we extend the definition of these quantities to the OWL functional by using our BD-curve in \Cref{def:bd}.

\begin{definition} 
\label{def:breakdown}
The \emph{breakdown} point $\BD \in (0,\infty]$ of the OWL functional $\OF$ at $\Pop$ is defined as 
$$
    \BD(\Pop) \doteq \sup\{ \delta > 0 : \maxbias(\delta|\Pop) < \infty\}.
$$
\end{definition}

\begin{definition}
\label{def:sensitivity}
The \emph{sensitivity} $\sensitivity(\Pop) \in (0,\infty]$ of the OWL functional $\OF$ at $\Pop$ is defined as 
$$
    \sensitivity(\Pop) \doteq \limsup_{\delta \downarrow 0} \maxbias(\delta|\Pop)/\delta.
$$
\end{definition}

\begin{definition}
\label{def:qualitative-robustness}
The OWL functional $\OF$ is said to be \emph{robust} at $\Pop$ if 
$$
    \lim_{\delta \to 0} \maxbias(\delta|\Pop) = 0.
$$
\end{definition}

These notions continue to have the same intuition as in \cite{donoho1988automatic}. Namely, we say that the OWL functional is robust at $\Pop$ if small distortions of $\Pop$ in the metric $\D$ do not affect the output value very much. If the OWL functional has a finite \emph{sensitivity} at $\Pop$, then for a small distortion $P$ of $\Pop$, the bias in the output is at most linear in the degree of distortion, i.e.
$$
\dH(\OF(P), \OF(\Pop)) \leq \sensitivity(\Pop) (1+o(1)) \D(P, \Pop). 
$$
Finally, if the breakdown point $\BD(\Pop)$  is far from zero then a distortion of degree at least  $\BD(\Pop)$ is  needed to make the $\OF$ estimator blow-up completely.

We will now use the above notions to establish and discuss robustness properties of the OWL functional.

\subsection{Quantitative Robustness of the OWL functional}
\label{sec:growth-of-bd-curve}

Following the mathematical setup of \Cref{sec:robustness-setup}, we will now discuss and establish robustness of our OWL functional (\Cref{def:owl-functional}) by studying the \emph{bias-distortion} curve in \Cref{def:bd}. Particularly, under suitable regularity conditions, we will show that the OWL functional has (i)  finite sensitivity (i.e. $\sensitivity(\Pop) < \infty$) if $\epsilon > \epsPop$, (ii) a breakdown point $\BD(\Pop)$ of at least $\epsilon$ when $\Pop$ lies in the model family. Property (i) also establishes that the OWL functional is robust in the sense of \Cref{def:qualitative-robustness}.

Throughout this section we will assume that $\Theta$ is a closed subset of $\R^p$. We want to study robustness of the OWL functional when $\Pop=P_{\theta_0}$ is located on the model family and corresponds to the  parameter $\theta_0 \in \Theta$.  

\subsubsection{Lower-bound on the breakdown point of the OWL functional}
\label{sec:lower-bound-breakdown}

In this section, we establish a lower-bound of $\epsilon$ on the breakdown point (\Cref{def:breakdown}) of our OWL functional $\OF$.
It will be illustrative to state our result in terms of the following \emph{gauge} function introduced by \cite{donoho1988automatic} to study the breakdown point of the minimum distance (MD) functional:
$$
b_0(\delta) \doteq \sup\{\|\theta - \theta_0\|: \D(P_\theta, P_{\theta_0}) \leq \delta, \theta \in \Theta\}.
$$
Let $\epsilon^* \doteq \sup\{ \delta > 0 | b_0(\delta) < \infty \}$. It is shown in \cite{donoho1988automatic}  that (i) $\epsilon^*$ is at least as large as the largest breakdown point among all Fisher consistent functionals, and that (ii) the  MD functional always has a breakdown point  of at least $\epsilon^*/2$ .   In comparison, our following lemma shows that the breakdown point of the OWL functional $\OF$ is at least $\epsilon$ as long as $\epsilon$ is less than $\epsilon^*/2$.

\begin{lemma} 

    Suppose  \Cref{ass:metric-condition-for-I-projection} holds and let $\epsilon < \epsilon^*/2$. Then the breakdown point of the OWL functional $\OF$ is at least $\epsilon$, i.e. $\BD(\Pop) \geq \epsilon$.
    \label{lem:owl-breakdown}
\end{lemma}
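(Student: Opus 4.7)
The plan is to leverage the characterization of $\OF(P)$ from \Cref{lem:okl-minimizers-characterization} together with the triangle inequality for $\D$ and the definition of the gauge $b_0$.

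First, I would fix an arbitrary $\delta$ with $0 < \delta < \epsilon$ and show that $\maxbias(\delta | \Pop) < \infty$; since the breakdown point is defined as a supremum, this suffices. Because $\Pop = P_{\theta_0}$, we have $\epsPop = 0 < \epsilon$, so \Cref{lem:okl-minimizers-characterization} gives $\OF(\Pop) = \{\theta \in \Theta : \D(P_\theta, P_{\theta_0}) \leq \epsilon\}$. Likewise, for any $P \in \cP(\cX)$ with $\D(P, \Pop) \leq \delta$, the triangle inequality yields $\inf_{\theta} \D(P, P_\theta) \leq \D(P, P_{\theta_0}) \leq \delta < \epsilon$, so again by \Cref{lem:okl-minimizers-characterization}, $\OF(P) = \{\theta \in \Theta : \D(P_\theta, P) \leq \epsilon\}$.

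The key step is to bound both sets by a common Euclidean ball around $\theta_0$. For any $\theta \in \OF(P)$, the triangle inequality for the metric $\D$ gives
\[
    \D(P_\theta, P_{\theta_0}) \;\leq\; \D(P_\theta, P) + \D(P, P_{\theta_0}) \;\leq\; \epsilon + \delta \;<\; 2\epsilon \;<\; \epsilon^*,
\]
so by the definition of the gauge function $b_0$, we have $\|\theta - \theta_0\| \leq b_0(\epsilon + \delta) < \infty$. An identical (in fact simpler) bound applies to any $\theta \in \OF(\Pop)$: $\|\theta - \theta_0\| \leq b_0(\epsilon) \leq b_0(\epsilon+\delta)$. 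Hence both $\OF(P)$ and $\OF(\Pop)$ are contained in the closed Euclidean ball of radius $b_0(\epsilon + \delta)$ centered at $\theta_0$, and the Hausdorff distance satisfies
\[
    \dH(\OF(P), \OF(\Pop)) \;\leq\; 2\, b_0(\epsilon + \delta).
\]

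Taking the supremum over all $P$ with $\D(P, \Pop) \leq \delta$ yields $\maxbias(\delta|\Pop) \leq 2 b_0(\epsilon + \delta)$, which is finite because $\epsilon + \delta < 2\epsilon < \epsilon^*$. Since $\delta \in (0,\epsilon)$ was arbitrary, \Cref{def:breakdown} gives $\BD(\Pop) \geq \epsilon$. The only place where a subtlety might arise is ensuring that the hypothesis $\epsilon > \epsPop$ of \Cref{lem:okl-minimizers-characterization} is strict both at $\Pop$ and at the perturbed $P$; this is why we restrict to $\delta$ strictly less than $\epsilon$, which is harmless since the breakdown point is defined as a supremum. No other obstacles arise: the argument is essentially the same triangle-inequality containment used for the minimum distance functional in \cite{donoho1988automatic}, adapted here to the set-valued setting via the Hausdorff metric.
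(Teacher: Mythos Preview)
Your proof is correct and follows essentially the same route as the paper's: invoke \Cref{lem:okl-minimizers-characterization} at both $\Pop$ and the perturbed $P$, use the triangle inequality for $\D$ to contain $\OF(P)\cup\OF(\Pop)$ in a Euclidean ball around $\theta_0$ of radius controlled by the gauge $b_0$, and conclude finiteness of $\maxbias(\delta|\Pop)$ for all $\delta<\epsilon$. The only minor difference is that the paper additionally notes $\theta_0\in\OF(P)\cap\OF(\Pop)$, which sharpens the Hausdorff bound to $b_0(2\epsilon)$ rather than your $2\,b_0(\epsilon+\delta)$; this is irrelevant for the breakdown conclusion since only finiteness is needed.
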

\begin{proof}
    Consider any $P \in \cP(\cX)$ such that $\D(P, \Pop) < \epsilon$.  Since $\epsPop[P] \leq \D(P,\Pop) < \epsilon$, we have 
    \[
        \OF(P) = \{\theta \in \Theta : \D(P_\theta, P) \leq \epsilon\} \subseteq \{\theta \in \Theta : \D(P_\theta, \Pop) \leq 2\epsilon\}
    \]
where the equality follows from \Cref{lem:okl-minimizers-characterization} and  \Cref{ass:metric-condition-for-I-projection}, while the inclusion follows from the triangle inequality. We further note using \Cref{lem:okl-minimizers-characterization} that $\theta_0 \in \OF(P) \cap \OF(\Pop)$ and $\OF(P) \cup \OF(\Pop)  \subseteq \bar{B}_\Theta(\theta_0, b_0(2\epsilon))$, where $\bar{B}_\Theta(\theta_0, r) = \{\theta \in \Theta: \|\theta - \theta_0\| \leq r\}$ is the closed ball of radius $r$ around $\theta_0$ in $\Theta$. 
This shows that
 \[
 \dH(\OF(\Pop), \OF(P)) \leq b_0(2\epsilon).
 \]
Finally, note that $b_0(2\epsilon) < \infty$ since $\epsilon < \epsilon^*/2$. Since $P \in \cP(\cX)$ is an arbitrary distribution satisfying $\D(P, \Pop) < \epsilon$, we have  in fact shown that $\maxbias(\delta|\Pop) \leq b_0(2\epsilon) <  \infty$ for every $\delta < \epsilon$, allowing us to conclude $\BD(\Pop) \geq \epsilon$.

\end{proof}

\subsubsection{Upper-bound on the sensitivity of the OWL functional}
\label{sec:upper-bound-sensitivity}

We now show that our OWL functional has finite sensitivity (\Cref{def:sensitivity}) under some regularity conditions. Let $\Fpop: \Theta \to [0,\infty)$ denote the mapping  $\Fpop(\theta) = \dist(\Pop, P_\theta)$ and let $\level[\alpha] \doteq \{\theta \in \Theta : \Fpop(\theta) \leq \alpha \}$ be the lower-level set based on level $\alpha \in \R$.

\begin{assume}[Regularity assumption on level-sets] We will assume that there is a $0 < \bar{\delta} < \epsilon$ such that
    \label{ass:assumptions-for-level-set-regularity}
\begin{enumerate}
\item $\Theta$ is a closed subset of $\R^p$ with a non-empty interior $\Theta^{\circ}$,
\item $\level[\epsilon + \bar{\delta}]$ is a compact subset of $\Theta^{\circ}$,
\item $\Fpop$ is continuously differentiable on the set $\level[\epsilon +  \bar{\delta}] \setminus \level[\epsilon - \bar{\delta}]^{\circ}$, and 
$$
\inf\{\|\nabla \Fpop(\theta)\| : \theta \in \level[\epsilon +  \bar{\delta}] \setminus \level[\epsilon - \bar{\delta}]^{\circ}\} > 0.
$$ \label{item:non-zero-derivative-band}
\end{enumerate}
\end{assume}

\begin{remark}
\Cref{ass:assumptions-for-level-set-regularity} will typically be satisfied when $\D=\tv$. For example, suppose that our model consists of a family of densities $\{p_\theta\}_{\theta \in \Theta}$ (with respect to measure $\lambda$ on $\cX$) such that the map $\theta \mapsto p_\theta(x)$ is continuously differentiable at almost every $x$, and the norm of the  gradient is bounded above by an integrable function, i.e. $\|\nabla_\theta p_\theta(x) \| \leq h(x)$ with $\int h(x) d\lambda(x) < \infty$. Additionally, suppose that $\epsilon < \epsilon^*$ from \Cref{sec:lower-bound-breakdown}, so that there is a $\bar{\delta} > 0$ such that $0 < r_0 = b_0(\epsilon + \bar{\delta}) < \infty$. Further suppose that $B(\theta_0, r_0) \subseteq \Theta^{\circ}$ where $B(\theta_0, r)=\{\theta \in \R^p : \|\theta - \theta_0\| < r\}$ is the open ball in $\R^{p}$  with center $\theta_0$ and radius $r$. Then
    \begin{enumerate}
        \item  By Scheffé's lemma, $\Fpop(\theta) = (1/2)\int |p_\theta(x) - p_{\theta_0}(x)| \lambda(dx)$ is a continuous function of $\theta$ because the density $p_\theta(x)$ is continuous in $\theta$ at every $x$.

        \item $\level[\epsilon + \bar{\delta}]$ is a compact subset of $\Theta$ since it is a closed and bounded subset of $\R^p$. Indeed, $\level[\epsilon + \bar{\delta}]$ is closed since $\Fpop$ is continuous and $\Theta$ is assumed to be a closed subset of $\R^p$. Additionally, $\level[\epsilon + \bar{\delta}]$ is bounded since it is contained in $B(\theta_0, r_0)$.
        \item Finally,  suppose that $\lambda(\{x: p_\theta(x) = p_{\theta_0}(x)\}) = 0$ at any $\theta \neq \theta_0$. Then interchanging the order of the derivative and the integral \cite[Theorem 6.28]{klenkeProbabilityTheoryComprehensive2014a} and using the dominated convergence theorem shows that $\Fpop$ is continuously differentiable at $\theta$ whenever $\theta \neq \theta_0$, with 
        $$
        \nabla \Fpop(\theta) = \int \operatorname{sign}(p_\theta(x) - p_0(x)) \nabla_\theta p_\theta(x) d\lambda(x).
        $$
        Thus \Cref{item:non-zero-derivative-band} in \Cref{ass:assumptions-for-level-set-regularity} will be satisfied as long as $\nabla \Psi(\theta) \neq 0$ for every $\theta$ in the compact set $\level[\epsilon +  \bar{\delta}] \setminus \level[\epsilon - \bar{\delta}]^{\circ}$. (Note that $\nabla \Psi(\theta_0) = 0$ since $\theta_0$ minimizes $\Fpop$, but this does not cause a problem since $\theta_0 \in \level[\epsilon - \bar{\delta}]^{\circ}$ as long as $\bar{\delta} < \epsilon$.)
    \end{enumerate}
\end{remark}

The assumed regularity conditions allow us to conclude the following Lipschitz continuity of the level-sets around the level $\epsilon$.

\begin{lemma} Under \Cref{ass:assumptions-for-level-set-regularity} there are finite positive constants $C_1$ and  $\delta_1$ such that
    \[
        \dH(\level[\lambda_1], \level[\lambda_2]) \leq C_1 |\lambda_2 - \lambda_1|
    \]
for any $\lambda_1, \lambda_2 \in (\epsilon - \delta_1, \epsilon + \delta_1)$.
\label{lem:level-set-hausdorff}
\end{lemma}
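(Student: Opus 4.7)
The plan is to prove this via a gradient-flow argument: we will move points in $\level[\lambda_2]$ to $\level[\lambda_1]$ (or vice versa) by following negative gradient descent on $\Fpop$, and use the lower bound on $\|\nabla \Fpop\|$ inside the band $B \doteq \level[\epsilon+\bar\delta] \setminus \level[\epsilon-\bar\delta]^{\circ}$ to control the travel distance.

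First, set $m \doteq \inf_{\theta \in B} \|\nabla \Fpop(\theta)\| > 0$ by \Cref{ass:assumptions-for-level-set-regularity}, and pick $\delta_1 \in (0, \bar\delta)$, say $\delta_1 = \bar\delta/2$. By symmetry in $(\lambda_1,\lambda_2)$, we may assume $\lambda_1 \leq \lambda_2$. Since $\level[\lambda_1] \subseteq \level[\lambda_2]$, the Hausdorff distance reduces to bounding $\rho(\level[\lambda_2] \mid \level[\lambda_1])$. Fix $\theta \in \level[\lambda_2]$. If $\Fpop(\theta) \leq \lambda_1$, then $\theta \in \level[\lambda_1]$ and there is nothing to show; otherwise $\lambda_1 < \Fpop(\theta) \leq \lambda_2$, so $\theta \in B$.

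Consider the renormalized gradient flow
\[
\dot{\theta}(t) = -\frac{\nabla \Fpop(\theta(t))}{\|\nabla \Fpop(\theta(t))\|^{2}}, \qquad \theta(0) = \theta.
\]
On $B$ the right-hand side is continuous (by $C^1$-smoothness of $\Fpop$) and bounded in norm by $1/m$, so Peano's theorem yields a local solution. Along any such trajectory, $\frac{d}{dt}\Fpop(\theta(t)) = -1$, hence $\Fpop(\theta(t)) = \Fpop(\theta) - t$. Let $T \doteq \Fpop(\theta) - \lambda_1 \leq \lambda_2 - \lambda_1$. The key point is that the trajectory cannot leave $B$ before time $T$: as $t$ grows from $0$ to $T$, $\Fpop(\theta(t))$ decreases monotonically from $\Fpop(\theta) \in (\lambda_1, \lambda_2] \subseteq (\epsilon-\delta_1, \epsilon+\delta_1)$ down to $\lambda_1 \in (\epsilon-\delta_1, \epsilon+\delta_1)$, staying inside $[\epsilon-\bar\delta, \epsilon+\bar\delta]$; in particular the trajectory stays in $\level[\epsilon+\bar\delta]$, which by hypothesis lies in the interior of $\Theta$, so the flow does not escape $\Theta$. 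Consequently the solution extends up to time $T$ and
\[
\|\theta(T) - \theta\| \leq \int_{0}^{T} \|\dot{\theta}(s)\|\, ds \leq \frac{T}{m} \leq \frac{\lambda_2 - \lambda_1}{m}.
\]
Since $\theta(T) \in \level[\lambda_1]$, this gives $\rho(\level[\lambda_2]\mid\level[\lambda_1]) \leq (\lambda_2-\lambda_1)/m$, and taking $C_1 \doteq 1/m$ completes the proof.

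The main obstacle is ensuring that the gradient flow is genuinely defined and confined to the band $B$ long enough to reach level $\lambda_1$. This is handled by two features of the setup: (i) choosing $\delta_1 < \bar\delta$ forces both endpoints $\lambda_1,\lambda_2$ to sit strictly inside the band where $\|\nabla \Fpop\| \geq m$, and (ii) $\level[\epsilon+\bar\delta]$ is a compact subset of $\Theta^{\circ}$, preventing the trajectory from hitting $\partial \Theta$. A minor annoyance is that Peano only yields existence rather than uniqueness of the ODE, but uniqueness is irrelevant here since we merely require one trajectory realizing the bound; alternatively one may appeal to the implicit function theorem to parametrize level sets locally and paste pieces together, but the flow argument above is shorter and self-contained.
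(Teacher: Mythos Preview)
Your proof is correct and takes a genuinely different route from the paper's. The paper does not argue directly: it invokes Lemma~2.1 of \cite{li2021posterior} as a black box, obtaining constants $C_1,\delta_1$ and the slightly stronger intermediate bound $\rho(\theta\mid\{\Fpop=\lambda\})\le C_1|\Fpop(\theta)-\lambda|$ (distance to the level \emph{curve}, not just the sublevel set), from which the Hausdorff estimate follows immediately. Your approach, by contrast, is fully self-contained: the normalized gradient flow $\dot\theta=-\nabla\Fpop/\|\nabla\Fpop\|^2$ turns the problem into an arc-length bound, yielding the explicit constant $C_1=1/m$ with $m=\inf_B\|\nabla\Fpop\|$. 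This is arguably more transparent, and avoids importing an external lemma whose proof presumably relies on a similar mechanism anyway.

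One small technical point to tighten: Peano's theorem requires the vector field on an open domain, while $B$ is closed and the assumption only gives $\Fpop\in C^1$ on $B$. Your confinement argument (``$\Fpop(\theta(t))$ stays in $[\lambda_1,\lambda_2]$, hence $\theta(t)\in B$'') is slightly circular as written, since the identity $\frac{d}{dt}\Fpop(\theta(t))=-1$ is only guaranteed while the trajectory remains in $B$. The standard fix is a continuation argument: let $\tau$ be the first exit time from $B$; on $[0,\tau)$ the identity holds, so $\Fpop(\theta(\tau^-))\in(\epsilon-\bar\delta,\epsilon+\bar\delta)$ strictly, and since $B$ is closed and (in the paper's setting) $\Fpop$ is continuous on $\Theta$, the point $\theta(\tau^-)$ lies in the \emph{relative interior} of $B$ within $\Theta^\circ$, contradicting $\tau<T$. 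Alternatively, extend the vector field continuously off the compact set $B$ via Tietze and run the same argument. Either patch is routine and does not affect the substance of your proof.
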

\begin{proof}
    \Cref{ass:assumptions-for-level-set-regularity} ensures that Lemma 2.1 in \cite{li2021posterior} is satisfied with $f = \Fpop$, $c = \epsilon$, and $\epsilon_1 = \bar{\delta}$. This shows that there are constants $\delta_1, C_1 > 0$ such that for any $\lambda \in (\epsilon - \delta_1, \epsilon + \delta_1)$ and $\theta \in \Theta$ such that $|\Fpop(\theta) - \lambda| \leq 2\delta_1$ it holds that
    \[
        \rho(\theta|\level[\lambda]) \leq \rho(\theta| \{\Fpop = \lambda\}) \leq C_1 |\Fpop(\theta) - \lambda|,
    \]
    where recall $\rho(\theta|A) \doteq \rho(\{\theta\}|A) = \inf_{\theta' \in A} \|\theta - \theta'\|$ for $A \subseteq \Theta$.

    Now we can bound the Hausdorff distance between $\level[\lambda_1]$ and $\level[\lambda_2]$ for $\lambda_1, \lambda_2 \in (\epsilon - \delta_1, \epsilon + \delta_1)$. Without loss of generality, assume $\lambda_1 < \lambda_2$. Then $\rho(\level[\lambda_1]|\level[\lambda_2]) = 0$ since $\level[\lambda_1] \subseteq \level[\lambda_2]$. On the other hand, since $|\lambda_2 - \lambda_1| < 2\delta_1$, the previous result with $\lambda = \lambda_1$ shows that whenever $\theta \in \level[\lambda_2] \setminus \level[\lambda_1]$,  we have
    \[
        \rho(\theta|\level[\lambda_1]) \leq C_1 |\Fpop(\theta) - \lambda_1| \leq  C_1 |\lambda_2 - \lambda_2|.
    \]
    This shows $\rho(\level[\lambda_2]|\level[\lambda_1]) = \sup_{\theta \in \level[\lambda_2] \setminus \level[\lambda_1]} \rho(\theta|\level[\lambda_1]) \leq C_1 |\lambda_2-\lambda_1|$. Now one may use the definition of Hausdorff distance \eqref{eq:def-hausdorff} to complete the proof.

\end{proof}

Now we will establish that the OWL functional has finite sensitivity.

\begin{lemma} 
 \label{lem:bd-small-bounds}
Suppose \Cref{ass:metric-condition-for-I-projection,ass:assumptions-for-level-set-regularity} are satisfied and fixed values $\epsilon > 0$ and $\Pop=P_{\theta_0}$ are given. Then there are finite positive constants $C_1$ and $\delta_1$ such that
    \[
        \maxbias(\delta| \Pop) \leq C_1 \delta \qquad \text{ for each }  \delta \in [0, \min(\epsilon, \delta_1)).
    \]
\end{lemma}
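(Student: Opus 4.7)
The plan is to leverage \Cref{lem:okl-minimizers-characterization} to replace the OWL functional with a sublevel set of the map $\Fpop(\theta) = \D(P_\theta, \Pop)$, and then to use triangle-inequality sandwiching together with the Lipschitz estimate for level sets from \Cref{lem:level-set-hausdorff}. Since $\Pop = P_{\theta_0}$ gives $\epsPop = 0 < \epsilon$, \Cref{lem:okl-minimizers-characterization} yields $\OF(\Pop) = \level[\epsilon]$. For any perturbed $P \in \cP(\cX)$ with $\D(P, \Pop) \leq \delta < \epsilon$, the triangle inequality $\epsPop[P] \leq \delta < \epsilon$ ensures \Cref{lem:okl-minimizers-characterization} also applies at $P$, so $\OF(P) = \{\theta \in \Theta : \D(P_\theta, P) \leq \epsilon\}$.

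The first real step is to sandwich $\OF(P)$ between two sublevel sets of $\Fpop$. Using the triangle inequality for $\D$, for every $\theta \in \Theta$ we have $\Fpop(\theta) - \delta \leq \D(P_\theta, P) \leq \Fpop(\theta) + \delta$, which directly gives the chain of inclusions
\begin{equation*}
\level[\epsilon - \delta] \;\subseteq\; \OF(P) \;\subseteq\; \level[\epsilon + \delta].
\end{equation*}
Since $\OF(\Pop) = \level[\epsilon]$, the inclusion $\OF(P) \subseteq \level[\epsilon + \delta]$ gives $\rho(\OF(P) \mid \OF(\Pop)) \leq \rho(\level[\epsilon + \delta] \mid \level[\epsilon])$, while $\level[\epsilon - \delta] \subseteq \OF(P)$ gives $\rho(\OF(\Pop) \mid \OF(P)) \leq \rho(\level[\epsilon] \mid \level[\epsilon - \delta])$. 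Combining these,
\begin{equation*}
\dH(\OF(P), \OF(\Pop)) \;\leq\; \max\bigl(\dH(\level[\epsilon + \delta], \level[\epsilon]),\, \dH(\level[\epsilon], \level[\epsilon - \delta])\bigr).
\end{equation*}

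Next I would apply \Cref{lem:level-set-hausdorff}: fix $\delta_1 > 0$ as the constant provided by that lemma (shrinking it if necessary to ensure $\delta_1 < \epsilon$, so that both $\epsilon + \delta$ and $\epsilon - \delta$ stay in $(\epsilon - \delta_1, \epsilon + \delta_1)$ whenever $\delta < \min(\epsilon, \delta_1)$). The lemma then bounds each of the two Hausdorff distances above by $C_1 \delta$, with $C_1$ the constant from \Cref{lem:level-set-hausdorff}. This gives $\dH(\OF(P), \OF(\Pop)) \leq C_1 \delta$ uniformly over all $P$ with $\D(P, \Pop) \leq \delta$, and taking the supremum over such $P$ yields $\maxbias(\delta \mid \Pop) \leq C_1 \delta$.

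There is no real obstacle beyond carefully bookkeeping the regime of $\delta$: the core work is already packaged in \Cref{lem:level-set-hausdorff} and \Cref{lem:okl-minimizers-characterization}. The one subtle point worth checking is that the triangle-inequality sandwich on $\OF(P)$ only uses that $\D$ is a metric, which is guaranteed by \Cref{ass:metric-condition-for-I-projection}, and that the characterization of $\OF(P)$ as a sublevel set requires $\epsPop[P] < \epsilon$, which is automatic from $\delta < \epsilon$. Hence the whole argument reduces to three small ingredients: the sublevel-set identification of the OWL functional, triangle-inequality sandwiching, and the already-established Lipschitz continuity of level sets.
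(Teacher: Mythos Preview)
Your proposal is correct and follows essentially the same approach as the paper: identify $\OF(\Pop)$ and $\OF(P)$ as sublevel sets via \Cref{lem:okl-minimizers-characterization}, sandwich $\OF(P)$ between $\level[\epsilon-\delta]$ and $\level[\epsilon+\delta]$ by the triangle inequality, and finish with \Cref{lem:level-set-hausdorff}. The paper writes the sandwich as $\OF[\epsilon-\delta](\Pop)\subseteq\OF(P)\subseteq\OF[\epsilon+\delta](\Pop)$, but these are exactly your $\level[\epsilon\pm\delta]$, and your explicit handling of the two one-sided $\rho$ terms is a slightly more detailed version of the same step.
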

\begin{proof}
    Consider the constants $C_1, \delta_1 > 0$ as defined in  \Cref{lem:level-set-hausdorff}. Choose any $P \in \cP(\cX)$ such that $\D(P,\Pop) \leq \delta$ for $\delta \in (0, \min(\epsilon, \delta_1))$. 
    Since $\epsPop[P] = \inf_{\theta \in \Theta} \D(P, P_\theta) \leq \D(P, \Pop) < \epsilon$, \Cref{lem:okl-minimizers-characterization} shows that $\OF(Q) = \{\theta: \dist(Q, P_\theta) \leq \epsilon \}$ for $Q \in \{P, \Pop\}$. Using the triangle inequality for the metric $\D$ we can note the inclusion
    \[
        \OF[\epsilon-\delta](\Pop) \subseteq \OF(P) \subseteq \OF[\epsilon+\delta](\Pop).
    \] 
This shows that
    \begin{align*}
         \dH(\OF(P), \OF(\Pop)) &\leq \max(\dH(\OF(\Pop), \OF[\epsilon-\delta](\Pop)), \dH(\OF(\Pop), \OF[\epsilon+\delta](\Pop))) \\
                               &\leq C_1 \delta
    \end{align*}
    where the last line follows from \Cref{lem:level-set-hausdorff}. Since $P$ was an arbitrary distribution such that $\dist(P, \Pop) \leq \delta$, the proof is now complete.
\end{proof}

\begin{corollary} Under conditions of \Cref{lem:bd-small-bounds}, the OWL functional has finite sensitivity at $\Pop$ in the sense of \Cref{def:sensitivity}, i.e.
$\sensitivity(\Pop) < \infty$.
\label{cor:bounded-sensitivity}
\end{corollary}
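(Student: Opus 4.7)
This corollary is essentially an immediate consequence of \Cref{lem:bd-small-bounds}, so the plan is short. I would invoke the bound established in that lemma: there exist finite positive constants $C_1$ and $\delta_1$ such that
\[
\maxbias(\delta|\Pop) \leq C_1 \delta \qquad \text{for every } \delta \in [0, \min(\epsilon, \delta_1)).
\]
Dividing both sides by $\delta > 0$ yields $\maxbias(\delta|\Pop)/\delta \leq C_1$ on the interval $(0, \min(\epsilon, \delta_1))$, so taking $\limsup$ as $\delta \downarrow 0$ in the defining equation of sensitivity (\Cref{def:sensitivity}) gives $\sensitivity(\Pop) \leq C_1 < \infty$.

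There is no real obstacle here beyond unpacking definitions: the work is already done in \Cref{lem:bd-small-bounds}, whose proof uses \Cref{lem:okl-minimizers-characterization} (to rewrite $\OF(P)$ and $\OF(\Pop)$ as sublevel sets of $\Fpop$ when $\D(P,\Pop) < \epsilon$, which forces $\epsPop[P] < \epsilon$), the triangle inequality for $\D$ (to sandwich $\OF(P)$ between $\level[\epsilon-\delta]$ and $\level[\epsilon+\delta]$), and the Lipschitz behavior of the level sets near level $\epsilon$ from \Cref{lem:level-set-hausdorff}. The corollary simply packages the linear-in-$\delta$ bound into the $\limsup$ definition. I would write it as a one-line proof that cites \Cref{lem:bd-small-bounds} and \Cref{def:sensitivity} and concludes $\sensitivity(\Pop) \leq C_1$.
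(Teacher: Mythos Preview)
Your proposal is correct and matches the paper's approach: the paper states the corollary without proof immediately after \Cref{lem:bd-small-bounds}, treating it as the obvious consequence you describe—divide the linear bound by $\delta$ and pass to the $\limsup$ to get $\sensitivity(\Pop)\le C_1<\infty$.
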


\begin{corollary} Under conditions of \Cref{lem:bd-small-bounds}, the OWL functional is robust at $\Pop$ in the sense of \Cref{def:qualitative-robustness}.
\label{cor:qualitative-robustness}
\end{corollary}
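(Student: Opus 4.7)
The plan is to derive this corollary as a direct consequence of \Cref{lem:bd-small-bounds}. Since the hypotheses of the corollary are identical to those of that lemma, I may invoke its conclusion: there exist finite positive constants $C_1$ and $\delta_1$ such that $\maxbias(\delta|\Pop) \leq C_1 \delta$ for every $\delta \in [0, \min(\epsilon, \delta_1))$.

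From this linear upper bound, I would conclude by taking the limit $\delta \downarrow 0$. Since $\maxbias(\delta|\Pop) \geq 0$ by \Cref{def:bd} (the Hausdorff distance is non-negative), and since $C_1 \delta \to 0$ as $\delta \to 0$, the squeeze argument gives $\lim_{\delta \downarrow 0} \maxbias(\delta|\Pop) = 0$. Note that $\maxbias(\delta|\Pop)$ is defined only for $\delta \geq 0$, so the one-sided limit is the relevant object in \Cref{def:qualitative-robustness}. This is exactly the condition for qualitative robustness, so the proof is complete.

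There is essentially no obstacle here: the content of the corollary is already contained in \Cref{lem:bd-small-bounds}, and the corollary merely records the qualitative consequence (vanishing bias in the limit) of the stronger quantitative Lipschitz-type bound. In fact, the same argument yields \Cref{cor:bounded-sensitivity} as well, since $\limsup_{\delta \downarrow 0} \maxbias(\delta|\Pop)/\delta \leq C_1 < \infty$. Thus the proof will be just one or two sentences invoking the preceding lemma and the definition of robustness.
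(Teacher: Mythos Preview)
Your proposal is correct and matches the paper's approach: the corollary is stated without proof in the paper, as it follows immediately from \Cref{lem:bd-small-bounds} via the squeeze argument you describe.
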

}

\section{Convergence of OKL estimator in discrete spaces}
\label{sec:okl-estimation-finite}
In this section, we will work with a discrete data space but allow for a more general distance function. To this end, let $\D(p, q)$ denote the distance between probability vectors $p, q \in \Delta_\Xcal$. Then the OKL function for a general distance in \cref{eqn:okl-general-distance} translates to this setting as
\begin{equation}
	\label{eq:okle-general-distance-finite}
	 I_\epsilon(\theta) = \inf_{\substack{q \in {\Delta}_{\Xcal} \\ \D(p, p_0) \leq \epsilon}} \KL(q | p_\theta). 
\end{equation}

Given a dataset $x_1, \ldots, x_n \in \Xcal$, the finite approximation to the OKL for general distance $\D$ is given by
\begin{align}
\label{eqn:okl-estimator-general-distance}
\finitehatI(\theta) = \inf_{\substack{q \in {\Delta}_{\Xcal} \\ \supp(q) \subseteq \hX_n \\ \D(q, \pfinite) \leq \epsilon}} \sum_{x \in \hX_n} q(x) \log \frac{q(x)}{p_\theta(x)},
\end{align}
where $\hX_n = \{x_1, \ldots, x_n \}$ is the observed support and $\pfinite(y) = \frac{|\{i \in [n] | x_i = y\}|}{n}$. When $\D(p, q) = \frac{1}{2}\|p - q\|_1$ is the total variation distance, the above is equivalent to the form given in \cref{eq:finiteokle} as shown in the following lemma.
\begin{lemma}
\label{lem:alternate-finite-okl-form}
For any $\epsilon > 0$ and $x_1, \ldots, x_n \in \Xcal$,
\begin{align*}
	\inf_{\substack{q \in {\Delta}_{\Xcal} \\ \supp(q) \subseteq \hX_n \\ \frac{1}{2}\|q - \pfinite\|_1 \leq \epsilon}} \sum_{x \in \hX_n} q(x) \log \frac{q(x)}{p_\theta(x)}  
	\ = \ 
	\inf_{\substack{w \in {\Delta}_{n} \\ \frac{1}{2}\|w - o\|_1 \leq \epsilon}} \sum_{i=1}^n w_i \log \frac{w_i n \pfinite(x_i)}{p_\theta(x_i)}. 
\end{align*}
where $o=(1/n, \ldots, 1/n) \in \Delta_n$.
\end{lemma}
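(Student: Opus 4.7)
The plan is to establish the equality by exhibiting a pair of maps between feasible points on the two sides that preserve (or bound) the total-variation constraint and preserve (or lower-bound) the objective, then combine these into two-sided inequalities. The key observation is that the weight vector $w \in \Delta_n$ on the right indexes individual observations (with repetitions), while the distribution $q$ on the left lives on the distinct support $\hX_n$; the natural reduction is to aggregate $w$ by observed value.

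First, for the direction $\inf \text{RHS} \geq \inf \text{LHS}$, given feasible $w \in \Delta_n$ I would define $q(y) \doteq \sum_{i:\, x_i = y} w_i$ for $y \in \hX_n$. Since $\pfinite(y) = \sum_{i:\, x_i=y} 1/n$, the triangle inequality gives $\tfrac{1}{2}\|q - \pfinite\|_1 \leq \tfrac{1}{2}\|w - o\|_1 \leq \epsilon$, so $q$ is LHS-feasible. For the objective, rewrite each RHS summand as $w_i \log(w_i/a_i)$ with $a_i = p_\theta(x_i)/(n\pfinite(x_i))$; note that within the block $\{i : x_i = y\}$ the quantity $a_i$ is constant in $i$ and satisfies $\sum_{i:\,x_i=y} a_i = n\pfinite(y)\cdot p_\theta(y)/(n\pfinite(y)) = p_\theta(y)$. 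The log-sum inequality (\cite[Theorem 2.7.1]{cover2006elements}), applied block by block, then yields
\begin{equation*}
\sum_{i:\,x_i=y} w_i \log \frac{w_i}{a_i} \;\geq\; q(y)\,\log \frac{q(y)}{p_\theta(y)},
\end{equation*}
and summing over $y \in \hX_n$ shows the RHS objective dominates the LHS objective at $q$.

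For the reverse direction $\inf \text{LHS} \geq \inf \text{RHS}$, given feasible $q$ on the LHS I would define $w_i \doteq q(x_i)/(n\pfinite(x_i))$, which is constant within each observation block. A direct computation gives $\sum_i w_i = \sum_{y \in \hX_n} n\pfinite(y)\cdot q(y)/(n\pfinite(y)) = 1$, so $w \in \Delta_n$. Moreover, the per-block calculation
\begin{equation*}
\sum_{i:\,x_i=y} |w_i - 1/n| \;=\; n\pfinite(y)\,\Bigl|\tfrac{q(y)}{n\pfinite(y)} - \tfrac{1}{n}\Bigr| \;=\; |q(y) - \pfinite(y)|
\end{equation*}
summed over $y$ shows $\|w - o\|_1 = \|q - \pfinite\|_1$, so the TV constraint is preserved exactly. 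Finally, since $w_i\,n\pfinite(x_i) = q(x_i)$ for every $i$, the RHS objective at $w$ reduces to $\sum_{y \in \hX_n} q(y)\log(q(y)/p_\theta(y))$, matching the LHS objective at $q$. Combining the two inequalities yields the lemma.

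I do not anticipate a serious obstacle; the only subtle point is correctly tracking how the uniform weight $o_i = 1/n$ aggregates to $\pfinite(y)$ so that the constraint transfers cleanly, and ensuring the log-sum inequality is applied with the correct grouping (so that the $a_i$'s sum to $p_\theta(y)$ on each block).
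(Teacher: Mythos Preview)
Your proposal is correct and follows essentially the same approach as the paper's proof: the same two maps (aggregation $w \mapsto q(y) = \sum_{i:x_i=y} w_i$ and uniform splitting $q \mapsto w_i = q(x_i)/(n\pfinite(x_i))$), the triangle inequality for the constraint in the aggregation direction, and the log-sum inequality for the objective comparison.
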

\begin{proof}
For convenience, let $n_i = n \pfinite(x_i)$ and let $n(x) = n \pfinite(x)$. For any $q \in {\Delta}_{\Xcal}$ satisfying $\supp(q) \subseteq \hX_n$, let $w_{q} \in \Delta_n$ satisfy $w_{q,i} = q(x_i)/n_i$. Then we have
\begin{align*}
&\| w_q - o \|_1 = \sum_{i=1}^n \left|\frac{q(x_i)}{n_i} - \frac{1}{n} \right| = \sum_{x \in \hX_n} n(x) \left|\frac{q(x)}{n(x)} - \frac{1}{n} \right| = \| q - \pfinite \|_1 \text{ and } \\
&\sum_{i=1}^n w_{q,i} \log \frac{w_{q,i} n_i}{p_\theta(x_i)} 
= \sum_{i=1}^n \frac{q(x_i)}{n_i} \log \frac{q(x_i)}{p_\theta(x_i)} 
= \sum_{x \in \hX_n} q(x) \log \frac{q(x)}{p_\theta(x)}.
\end{align*}
The above two equalities imply that
\begin{align*}
	\inf_{\substack{q \in {\Delta}_{\Xcal} \\ \supp(q) \subseteq \hX_n \\ \frac{1}{2}\|q - \pfinite\|_1 \leq \epsilon}} \sum_{x \in \hX_n} q(x) \log \frac{q(x)}{p_\theta(x)}  
	\ \geq \ 
	\inf_{\substack{w \in {\Delta}_{n} \\ \frac{1}{2}\|w - o\|_1 \leq \epsilon}} \sum_{i=1}^n w_i \log \frac{w_i n \pfinite(x_i)}{p_\theta(x_i)}. 
\end{align*}

Now for any $w \in \Delta_n$, let $q_w \in {\Delta}_{\Xcal}$ satisfy $q_w(x) = \sum_{i: x_i = x} w_i$. Note that we trivially must have $\supp(q_w) \subseteq \hX_n$. Moreover, we have
\begin{align*}
&\|q_w - \pfinite \|_1 = \sum_{x \in \hX_n} \left| \left( \sum_{i: x_i = x} w_i \right) - \frac{n(x)}{n} \right| 
\leq \sum_{x \in \hX_n} \sum_{i: x_i = x}  \left| w_i - \frac{1}{n} \right|
= \| w - o \|_1 \text{ and } \\
&\sum_{x \in \hX_n} q_w(x) \log \frac{q_w(x)}{p_\theta(x)}
= \sum_{x \in \hX_n} \left( \sum_{i: x_i = x} w_i \right) \log \frac{ \left( \sum_{i: x_i = x} w_i \right)}{p_\theta(x)} 
\leq  \sum_{i=1}^n w_i \log \frac{w_i n_i}{p_\theta(x_i)},
\end{align*}
where the first inequality is the triangle inequality and the second inequality is the log sum inequality~\cite[Theorem~2.7.1]{cover2006elements}.
Together, the above implies that
\begin{align*}
	\inf_{\substack{q \in {\Delta}_{\Xcal} \\ \supp(q) \subseteq \hX_n \\ \frac{1}{2}\|q - \pfinite\|_1 \leq \epsilon}} \sum_{x \in \hX_n} q(x) \log \frac{q(x)}{p_\theta(x)}  
	\ \leq \ 
	\inf_{\substack{w \in {\Delta}_{n} \\ \frac{1}{2}\|w - o\|_1 \leq \epsilon}} \sum_{i=1}^n w_i \log \frac{w_i n \pfinite(x_i)}{p_\theta(x_i)}. \ \ \ \qedhere
\end{align*}
\end{proof}

To prove convergence of the estimator in \cref{eqn:okl-estimator-general-distance}, we will make the following assumptions on the space $\Xcal$ and distance $\D$.
\begin{assume}
\label{assum:finite-continuous-distance}
$\Xcal$ is a discrete set, $\D$ is a metric on $\Delta_{\cX}$ that is jointly convex in its arguments, and there exists a constant $C \geq 1$ for which $\D(p, q) \leq C \|p - q\|_1$ for all $p,q \in \Delta_\Xcal$.
\end{assume}
Observe that \Cref{assum:finite-continuous-distance} fulfills the conditions of \Cref{lem:okl-continuity}, implying that $I_{\epsilon}(\theta)$ is continuous in $\epsilon$. To get quantitative bounds, we make the following assumption.
\begin{assume}
\label{assum:finite-okl}
There exist constants $V, \epsilon_0 > 0$ such that $I_{\epsilon_0}(\theta) \leq V$.
\end{assume}
Finally, we will also require some conditions on the support of $p_\theta$ and $p_0$.
\begin{assume}
\label{assum:lower-bounded-on-support}
There exists a finite set $S \subseteq \Xcal$ and constant $\gamma_0 > 0$ such that $\supp(p_\theta) = S \subseteq \supp(p_0)$ and $p_0(x) \geq \gamma_0$ for all $x \in S$.
\end{assume}
Given the above assumptions, we have the following convergence result for $\finitehatI(\theta)$.

\begin{theorem}
\label{thm:finite-okl-convergence}
Pick $\epsilon > \epsilon_0$ and let $n \geq \max \left\{ \frac{1}{\gamma_0} \log \frac{2|S|}{\delta}, \frac{1}{2} \left( \frac{C |S|}{\epsilon - \epsilon_0} \right)^2 \log \frac{4|S|}{\delta} \right\}$, and suppose \Cref{assum:finite-continuous-distance,assum:lower-bounded-on-support,assum:finite-okl} hold. If $x_1,\ldots,x_n \iid p_0$, then with probability at least $1-\delta$, 
\[ |I_\epsilon(\theta) - \finitehatI(\theta)| \leq  \frac{CV |S|}{\epsilon - \epsilon_0} \sqrt{\frac{1}{2n} \log \frac{2|S|}{\delta}}. \]
\end{theorem}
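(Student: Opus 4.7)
The plan is to combine the Lipschitz-in-$\epsilon$ property of the OKL established in Lemma~\ref{lem:okl-continuity} with a concentration argument for the empirical distribution, sandwiching $\finitehatI(\theta)$ between $I_{\epsilon+\alpha_n}(\theta)$ and $I_{\epsilon-\alpha_n}(\theta)$ for a perturbation $\alpha_n$ that vanishes at the rate $|S|\sqrt{\log(|S|/\delta)/n}$.

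First I would pin down the high-probability event on which the argument runs. A Chernoff bound together with a union bound, using $p_0(x) \geq \gamma_0$ for $x \in S$ and the sample-size hypothesis $n \geq \gamma_0^{-1}\log(2|S|/\delta)$, forces every element of $S$ to appear in the observed sample, i.e.\ $S \subseteq \hX_n$, with probability at least $1-\delta/2$. On this event any $q$ with $\supp(q)\subseteq S$ is admissible in \eqref{eqn:okl-estimator-general-distance}. Since $\KL(q|p_\theta)<\infty$ forces $\supp(q)\subseteq S$ in both $I_\epsilon(\theta)$ and $\finitehatI(\theta)$, the only difference between the feasible sets is the constraint $\D(q,\pfinite)\le\epsilon$ versus $\D(q,p_0)\le\epsilon$. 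A coordinate-wise Hoeffding bound together with a union bound over $x\in S$ then implies that, with probability at least $1-\delta/2$, the $S$-restricted empirical deviation is uniformly small, and combined with $\D(p,q)\leq C\|p-q\|_1$ this yields
$$
\sup_{q\in\Delta_S}\bigl|\D(q,\pfinite)-\D(q,p_0)\bigr| \ \leq\ \alpha_n\ \doteq\ C|S|\sqrt{\tfrac{1}{2n}\log(4|S|/\delta)}.
$$

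Working on the intersection of these two events (total probability at least $1-\delta$), I would then run a two-sided bracketing. For the upper bound, let $q_-$ be a near-minimizer of $I_{\epsilon-\alpha_n}(\theta)$; the sample-size hypothesis guarantees $\alpha_n<\epsilon-\epsilon_0$, so $\epsilon-\alpha_n>\epsilon_0$ and the perturbed radius is still feasible. The triangle inequality gives $\D(q_-,\pfinite)\leq \D(q_-,p_0)+\alpha_n\leq \epsilon$, so $q_-$ is admissible in $\finitehatI(\theta)$. Hence $\finitehatI(\theta)\leq I_{\epsilon-\alpha_n}(\theta)$, and Lemma~\ref{lem:okl-continuity} upgrades this to $\finitehatI(\theta)-I_\epsilon(\theta)\leq \frac{\alpha_n}{\epsilon-\epsilon_0}I_{\epsilon_0}(\theta)\leq \frac{\alpha_n V}{\epsilon-\epsilon_0}$ using Assumption~\ref{assum:finite-okl}. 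For the lower bound, let $\hat q$ be a near-minimizer of $\finitehatI(\theta)$; the same triangle step gives $\D(\hat q,p_0)\leq \epsilon+\alpha_n$, hence $I_{\epsilon+\alpha_n}(\theta)\leq \finitehatI(\theta)$, and the other half of Lemma~\ref{lem:okl-continuity} yields $I_\epsilon(\theta)-\finitehatI(\theta)\leq \frac{\alpha_n V}{\epsilon-\epsilon_0}$. Combining the two directions matches the stated bound up to absolute constants.

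The main obstacle I anticipate is getting the concentration step sharp enough that the bound depends only on $|S|$ rather than $|\supp(p_0)|$: the naive triangle inequality $\D(\pfinite,p_0)\leq C\|\pfinite-p_0\|_1$ may fail to concentrate at rate $|S|/\sqrt n$ when $\supp(p_0)$ is much larger than $S$. The key observation I would verify carefully is that, because the relevant $q$'s have $\supp(q)\subseteq S$, the quantities $\D(q,\pfinite)$ and $\D(q,p_0)$ depend on $\pfinite$ and $p_0$ essentially through their restrictions to $S$ (plus the scalar masses $\pfinite(S^c),p_0(S^c)$). For $\D=\frac{1}{2}\|\cdot\|_1$ this reduction is elementary, via decomposing over $S$ and $S^c$ and applying the reverse triangle inequality. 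For a general $\D$ satisfying Assumption~\ref{assum:finite-continuous-distance}, I would either exploit the joint convexity of $\D$ to derive the same restriction-to-$S$ bound, or strengthen the hypothesis to $\supp(p_0)$ finite (as is already used in Theorem~\ref{thm:finite-okl-convergence-tv}); this is the one step that requires attention beyond a mechanical translation of the TV argument.
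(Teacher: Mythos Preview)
Your proposal is correct and follows essentially the same route as the paper: establish $S\subseteq\hX_n$ with high probability, obtain a Hoeffding-based bound $\D(\pfinite,p_0)\le C\|\pfinite-p_0\|_1\le\alpha_n$, use the triangle inequality for $\D$ to sandwich $\finitehatI(\theta)$ between $I_{\epsilon+\alpha_n}(\theta)$ and $I_{\epsilon-\alpha_n}(\theta)$, and close with Lemma~\ref{lem:okl-continuity} and Assumption~\ref{assum:finite-okl}.

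The concern you flag about getting the concentration to scale with $|S|$ rather than $|\supp(p_0)|$ is perceptive: the paper's proof simply asserts $\|\pfinite-p_0\|_1\le|S|\sqrt{\frac{1}{2n}\log\frac{4|S|}{\delta}}$ via Hoeffding without addressing what happens on $\supp(p_0)\setminus S$. Your proposed fix for $\D=\tfrac{1}{2}\|\cdot\|_1$ (decompose over $S$ and $S^c$ so that only the $S$-restricted coordinates plus the scalar $\pfinite(S^c)$ matter) is exactly what is needed there, and your fallback of replacing $|S|$ by $|\supp(p_0)|$ is precisely what the main-text Theorem~\ref{thm:finite-okl-convergence-tv} does. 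So you have in fact identified and resolved a point the paper's own proof glosses over.
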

\begin{proof}
If $n \geq \frac{1}{\gamma_0} \log \frac{2|S|}{\delta}$, then with probability $1-\delta/2$, we have $S \subseteq \hX_n$. Moreover, an application of Hoeffding's inequality tells us that with probability at least $1-\delta/2$, we have
\[ \| \pfinite - p_0 \|_1 \leq |S| \sqrt{\frac{1}{2n} \log \frac{4|S|}{\delta}}. \]
By a union bound, both of these events occur with probability at least $1-\delta$. Condition on these two events occurring.

Now observe that any $q \in \Delta_{\Xcal}$ that achieves $\KL_{\Xcal}(q | p^\theta) < \infty$ must satisfy $q(x) = 0$ for all $x \not \in S$. Thus, we may rewrite the OKL and our finite estimator as
\begin{align*}
I_\epsilon(\theta) &= \inf_{\substack{q \in \Delta_{\Xcal}\\ \supp(q) \subseteq S \\ \D(q, p_0) \leq \epsilon}} \KL(q|p_\theta) \\
\finitehatI(\theta) &= \inf_{\substack{q \in \Delta_{\Xcal}\\ \supp(q) \subseteq \hX_n \cap S \\ \D(q, \pfinite) \leq \epsilon}} \KL(q|p_\theta) = \inf_{\substack{q \in \Delta_{\Xcal}\\ \supp(q) \subseteq S \\ \D(q, \pfinite) \leq \epsilon}} \KL(q|p_\theta),
\end{align*}
where we have used the fact that we are conditioning on $S \subseteq \hX_n$.

Moreover, by \Cref{assum:finite-continuous-distance} and our bound on $\| \pfinite - p_0 \|_1$, we have:
\[ \D(q, \pfinite) \leq \D(q, p_0) + \D(\pfinite, p_0) \leq \D(q, p_0) + C \|\pfinite - p_0\|_1 \leq \D(q, p_0) + \alpha_n,  \]
where $\alpha_n = C |S| \sqrt{\frac{1}{2n} \log \frac{2|S|}{\delta}}$. Similarly, we also can conclude $\D(q, \pfinite) \geq \D(q, p_0) - \alpha_n.$

Applying \Cref{lem:okl-continuity}, we have
\begin{align*}
\finitehatI(\theta) &= \inf_{\substack{q \in \Delta_{\Xcal}\\ \supp(q) \subseteq S \\ \D(q, \pfinite) \leq \epsilon}} \KL(q|p_\theta) 
\geq \inf_{\substack{q \in \Delta_{\Xcal}\\ \supp(q) \subseteq S \\ \D(q, p_0) \leq \epsilon + \alpha}} \KL(q|p_\theta) \\
&= I_{\epsilon + \alpha_n}(\theta) \geq I_{\epsilon}(\theta) - \frac{\alpha_n}{\epsilon - \epsilon_0 + \alpha_n} V.
\end{align*}
On the other hand, if $n \geq \frac{1}{2} \left( \frac{C |S|}{\epsilon - \epsilon_0} \right)^2 \log \frac{4|S|}{\delta}$, then $\alpha_n \leq \epsilon - \epsilon_0$, and we can again apply \Cref{lem:okl-continuity} to see that
\begin{align*}
\finitehatI(\theta) &= \inf_{\substack{q \in \Delta_{\Xcal}\\ \supp(q) \subseteq S \\ \D(q, \pfinite) \leq \epsilon}} \KL(q|p_\theta) 
\leq \inf_{\substack{q \in \Delta_{\Xcal}\\ \supp(q) \subseteq S \\ \D(q, p_0) \leq \epsilon - \alpha}} \KL(q|p_\theta) \\
&= I_{\epsilon - \alpha_n}(\theta) \leq I_{\epsilon}(\theta) + \frac{\alpha_n}{\epsilon - \epsilon_0} V.
\end{align*}
Rearranging the above inequalities gives us the lemma statement.
\end{proof}

\new{
\begin{remark} \label{rem:consistency-for-discrete} When $\supp(p_0)$ is countably infinite, the proof of \Cref{thm:finite-okl-convergence} can suitably be modified to  establish a weaker consistency result for the OKL estimator as long as the support condition $\supp(p_\theta) \subseteq \supp(p_0)$ is satisfied. This allows for commonly used models like Geometric and Poisson distributions on the space $\cX$ of non-negative integers.  The following details can be used:
\begin{enumerate}
    \item Suppose that all observations $x_1, x_2, \ldots, $ are defined on the same space. Then (almost surely) we have $\lim_{n \to \infty} \|\pfinite - p_0\|_1 \to 0$ as $n \to \infty$ by the law of large numbers and Scheff\'e's lemma.
    \item Since \Cref{lem:alternate-finite-okl-form}  does not need $\cX$ to be finite, we have 
$$
\finitehatI(\theta) = \inf_{\substack{q \in \Delta_{\Xcal}\\ \supp(q) \subseteq \hX_n \\ \D(q, \pfinite) \leq \epsilon}} \KL(q|p_\theta) \geq \inf_{\substack{q \in \Delta_{\Xcal}\\ \supp(q) \subseteq \hX_n \\ \D(q, p) \leq \epsilon + \alpha_n}} \KL(q|p_\theta) \geq I_{\epsilon + \alpha_n}(\theta),
$$
where $\alpha_n \doteq \D(p_0,\pfinite) \leq C \|\pfinite - p_0\|_1 \to 0$. This allows us to use \Cref{lem:okl-continuity} to obtain the inequality  $\finitehatI(\theta) \geq I_\epsilon(\theta) - \frac{\alpha_n}{\epsilon - \epsilon_0 +\alpha_n} V$, establishing that $\liminf_{n \to \infty} \finitehatI(\theta) \geq I_\epsilon(\theta)$ almost surely.
    \item To establish a reverse inequality, let $q^* \in \Delta_{\Xcal}$ be such that $\D(q^*,p_0) \leq \epsilon - \delta$ and $\KL(q^*|p_\theta) = I_{\epsilon - \delta}(\theta)$ for a fixed value of $0 < \delta < \epsilon - \epsilon_0$ to be chosen later.  Note that $\supp(q^*) \subseteq \supp(p_\theta)$ since $I_{\epsilon - \delta}(\theta) \leq I_{\epsilon_0}(\theta) < \infty$.
    \item Consider the notation $q(A) \doteq \sum_{x \in A} q(x)$ and $\KL_A(p|q) \doteq \sum_{x \in A} p(x) \log \frac{p(x)}{q(x)}$ for any $p,q \in \Delta_{\cX}$ and $A \subseteq \cX$. Then, we may consider $\hat{q}_n \in \Delta_{\Xcal}$ such that $\hat{q}_n(x) = \frac{q^*(x)}{q^*(\hX_n)} 1_{x \in \hX_n}$ and, similar to \Cref{lem:restrict-to-S}, establish: 
$$
\KL_{\hX_n}(q^*|p_\theta) = (1-q^*(\cX \setminus \hX_n))( \KL(\hat{q}_n|p_\theta) + \log(1-q^*(\cX \setminus \hX_n)))
$$
and $\|\hat{q}_n - q^*\|_1 \leq 2 q^*(\cX \setminus \hX_n)$.  
    \item Note that $\cX_n \uparrow S \doteq \supp(p_0)$ (almost surely) by the law of large numbers. The dominated convergence theorem now shows that $\lim_{n \to \infty} q^*(\cX \setminus \hX_n) = q^*(\cX \setminus S) = 0$ and $\lim_{n \to \infty} \KL_{\hX_n}(q^*|p_\theta) = \KL_{S}(q^*|p_\theta) = \KL(q^*|p_\theta)$. By the triangle inequality, this shows  that $\limsup_{n \to \infty}|\D(\hat{q}_n, \pfinite) - \D(q^*, p_0)| = 0$ since $\D(\hat{q}_n, q^*)$ and $\D(p_0, \pfinite)$ both converge to zero.

    \item Now we can obtain the reverse inequality via the following chain of arguments. For any fixed $0 < \delta < \epsilon - \epsilon_0$, we have shown that there is $\hat{q}_n \in \Delta_{\cX}$ for each $n \in \nat$ with support contained in $\hX_n$ such that $\lim_{n \to \infty} \KL(\hat{q}_n|p_\theta) = \lim_{n \to \infty} \KL_{\hX_n}(q^*|p_\theta) = \KL_{S}(q^*|p_\theta) = I_{\epsilon-\delta}(\theta)$ and $\limsup_{n \to \infty} \D(\hat{q}_n, \pfinite) \leq \D(q^*,p_0) < \epsilon$. This is sufficient to establish that
    $$
    \limsup_{n \to \infty} \finitehatI(\theta) \leq I_{\epsilon-\delta}(\theta).
    $$
    Since the map $\delta \mapsto I_{\epsilon - \delta}(\theta)$ is continuous in a neighborhood of zero, it follows that $\lim_{n \to \infty} \finitehatI(\theta) = I_\epsilon(\theta)$.
\end{enumerate}
\end{remark}}

\section{Convergence of OKL estimator in Euclidean spaces}
\label{sec:okl-estimation-cont}

In this section, we show consistency (and convergence rates) for the OKL estimator when $\cX = \R^d$ is the Euclidean space. To avoid technical complications with tail estimation, we will restrict our analysis to the case when both the data and the model family are supported on a compact set $S \subseteq \cX$. The results and notation in this section are  self-contained, and can be read independently of other sections. 

\subsection{Introduction}
Consider the data space $\cX = \R^d$ equipped with the Euclidean norm $\|\cdot\|$ and its Borel sigma algebra $\cB$. We will let $\Den = \{f: \cX \to [0,\infty) \mid \int f(x) dx = 1, f \text{ is $\cB$-measurable}  \}$ denote the set of densities on $\cX$ with respect to the Lebesgue measure $\lambda$. Given the data density $p_0 \in \Den$, model family $\{p_\theta\}_{\theta \in \Theta} \subseteq \Den$ and coarsening radius $\epsilon > 0$, the central object of interest in this section is the OKL function defined as
\begin{equation}
	\label{eqn:okl}
	I_\epsilon(\theta) = \inf_{\substack{q \in \Den \\
			\tv(q,p_0) \leq \epsilon 
	}} \KL(q|p_\theta),
\end{equation}
where, given two densities $p,q \in \Den$, the total variation distance between them is $\tv(q, p) =\frac{1}{2}\int |q(x)-p(x)|dx$, and the KL-divergence is $\KL(p|q) = \int  p(x) \log \frac{p(x)}{q(x)}  dx$ if the absolute continuity condition $\int p(x) \I{q(x) = 0} dx = 0$ is satisfied, otherwise $\KL(p|q) = \infty$.

Given samples $x_1,\ldots, x_n \in \cX$ drawn i.i.d.~from the distribution with density $p_0$, and a suitable kernel $K_h: \cX \times \cX \to \R$, 
we will approximate $I_\epsilon(\theta)$ with the value of a finite-dimensional optimization problem given by
\begin{equation}
	\label{eqn:okle1}
	\hatI(\theta) = \inf_{\substack{w \in \hat{\Delta}_n \\ \frac{1}{2}\|w-o\|_1 \leq \epsilon}} \sum_{i=1}^n w_i \log \frac{n w_i \hat{p}(x_i)}{p_\theta(x_i)},
\end{equation}
where $\hat{p}$ is a suitable density estimator for $p_0$,
$\Delta_n = \{(v_1,\ldots, v_n)| \sum_{i=1}^n v_i = 1, v_i \geq 0 \}$ is the \ndsimplex , $o=(1/n, \ldots, 1/n) \in \Delta_n$ is the constant probability vector, $A$ is an $n \times n$ matrix with entries $A_{ij} = \frac{K_h(x_i,x_j)}{n \hat{p}(x_i)}$, $\hat{\Delta}_n = A \Delta_n$ is the image of the set $\Delta_n \subseteq \R^n_+$ under the linear operator $A$, and $\|(x_1, \ldots, x_d)\|_1 = \sum_{i=1}^d |x_i|$ is the $\ell_1$ norm of the vector.

\subsubsection{Assumptions and statement of the result}

To describe the formal statement quantifying the approximation between \cref{eqn:okl} and \cref{eqn:okle1}, we will introduce a series of assumptions. Our result will handle the case when densities $p_0$ and $p_\theta$ are smooth densities supported on a bounded set $S \subseteq \cX$ whose boundary has measure zero, i.e.~$\lambda(S \setminus S^\circ) = 0$, where $S^\circ$ denotes the interior of $S$. We fix a value $\theta \in \Theta$ throughout this section.

\begin{assume}
	\label{assump:bounded-support}
	Suppose $S \subseteq \cX$ is a subset of the closed unit ball $\bar{B}(0,R)$ of radius $R$, with finite Lebesgue measure $V_S = \lambda(S) \geq 1$, and has a boundary measure functional $\phi(r) = \frac{\lambda (S \setminus S_{-r})}{\lambda(S)}$ that satisfies $\lim_{r \rightarrow 0} \phi(r) = 0$. 
\end{assume}

Here $S_{-r} = \{x \in S : B(x, r) \subseteq S\}$ denotes the set of points for which the unit ball of radius $B(x,r) = \{y \in \cX | \|x-y\| \leq r\}$ is also contained inside $S$. For example, when $S=B(x_0,r_0)$ is the unit ball of radius $r_0$ centered at the point $x_0$, the boundary measure functional is given by
$$
\phi(r) = 1 - \frac{\lambda(B(x, r_0-r))}{\lambda(B(x, r_0))} = 1 - \left(1 - \frac{r}{r_0} \right)^d
$$
since $S_{-r} = B(x_0, r_0 - r)$. 

Next we have the following smoothness and support condition on densities $p_0, p_\theta \in \Den$. Here the support of a density $p \in \Den$ is defined as $\supp(p) = \{x \in \cX : p(x) > 0\}$.

\begin{assume}
	\label{assump:bounded-densities}
	The supports satisfy $\supp(p_0) = \supp(p_\theta) = S$, and there exists $\gamma \in (0,1]$ such that $p_0(x), p_\theta(x) \in [\gamma, 1/\gamma]$ for all $x \in S$.
\end{assume}

\begin{assume}
	\label{assump:smooth-densities}
	$p_0$ and $\log p_\theta$ are $\alpha$-H\"{o}lder smooth on $S$. More precisely, there are constants $\alpha, C_\alpha > 0$ such that $|p_0(x) - p_0(y)|, |\log p_\theta(x) - \log p_\theta(y)| \leq C_\alpha \|x - y \|^\alpha $ for all $x, y \in S$.
\end{assume}

Finally, we will require the following assumption on the kernel $K_h: \cX \times \cX \to \R$.

\begin{assume}
\label{assump:kernel-properties}
The kernel satisfies $K_h(x, y) = \frac{1}{h^d}\kappa\left( \frac{\| x - y \|_2}{h} \right)$ for a non-increasing continuous function $\kappa: \R_+ \rightarrow \R_+$ satisfying (i) $\int_{\R^d} \kappa(\|x \|_2)\, dx = 1$, (ii) $\kappa(0) \leq c_0$ for some fixed constant $c_0 >0$, and (iii) there exist constants $t_0, C_\rho, \rho$ such that $\rho \leq 1$ and $t_0 \geq 1$ so that for all $t > t_0$, $k(t) \leq C_\rho \exp(-t^\rho)$. Further, suppose that (iv) for every $h > 0$, the kernel $K_h$ is a positive semi-definite kernel, i.e. the $m \times m$ matrix $(K_h(y_i, y_j))_{i,j \in [m]}$ for any $y_1, \ldots, y_m \in \R^d$ and $m \geq 1$ is positive semi-definite. 
\end{assume}

Parts (i)-(iii) in \Cref{assump:kernel-properties} are standard assumptions in the kernel density estimation literature (see, e.g.~\cite{jiang2017uniform}), while part (iv) allows us to use techniques from reproducing kernel Hilbert spaces. By Shoenberg's theorem (see e.g.~\cite{ressel1976short}) part (iv) of \Cref{assump:kernel-properties} is satisfied whenever $\kappa$ is a completely monotone function (see e.g.~\cite{merkle2014completely}). In particular, note that \Cref{assump:kernel-properties} is satisfied by the Gaussian kernel given as $\kappa(t) = \frac{1}{(2\pi)^{d/2}}\exp(-t^2)$.  

\begin{definition} Suppose \Cref{assump:kernel-properties} holds and observations $x_1, \ldots, x_n \iid p_0$ are given. Define a mapping $L: \Delta_n \mapsto \Den$ from the \ndsimplex\ $\Delta_n$ to the space of densities $\Den$ given by $L(w) = \q{w}$, where
	$$
	\q{w}(\cdot) = \sum_{i=1}^n w_i K_h(\cdot, x_i).
	$$
	\label{defn:qhatw}
\end{definition}

For the probability density estimator $\hat{p}$ in the Monte Carlo estimate of the objective in \cref{eqn:okl}, we use the kernel density estimate $\hat{p}=\q{o}$, where $o=(1/n, \ldots, 1/n) \in \Delta_n$.

Using the parameterization $w = A v$ for $v \in \Delta_n$, one can rewrite \cref{eqn:okle1} using the above notation as:
$$
\hatI(\theta) = \inf_{\substack{v \in \Delta_n \\ \frac{1}{2n} \sum_{i=1}^n \left|\frac{\q{v}(x_i)}{\hat{p}(x_i)} - 1\right| \leq \epsilon}} \frac{1}{n}\sum_{i=1}^n \frac{\q{v}(x_i)}{\hat{p}(x_i)} \log \frac{\q{v}(x_i)}{p_\theta(x_i)}.
$$

In order to ensure stability of the optimization objective in \cref{eqn:okle1} and guarantee regularity in the optimal weights, we will restrict the optimization over weights that do not deviate too far from uniformity and replace $\hat{p}$ by the `clipped' version of our estimator: $\hat{p}_\gamma(x) = \min(\max( \hat{p}(x), \gamma), 1/\gamma)$. To that end, define $\Delta_n^r = \{ v \in \Delta_n : n \cdot v_i \in [r,1/r] \text{ for all } i \}$ for $r \leq 1$. Then for our theoretical result, we will consider a version of the estimator 
\begin{align}
	\label{eqn:I-hat-defn}
	\hatI(\theta) = \inf_{\substack{v \in \Delta_n^{\gamma^2/4} \\ \frac{1}{2n}\sum_{i=1}^n \left|  \frac{\q{v}(x_i)}{\hat{p}_\gamma(x_i)} - 1\right| \leq \epsilon}} \frac{1}{n} \sum_{i=1}^n \frac{\q{v}(x_i)}{\hat{p}_\gamma(x_i)} \log \frac{\q{v}(x_i)}{p_\theta(x_i)},
\end{align}
where $\gamma$ is the value from \Cref{assump:bounded-densities}.

Under the above assumptions, we have the following result:

\begin{theorem}
	\label{thm:okl-convergence-formal}
	Suppose \Cref{assump:bounded-support,assump:bounded-densities,assump:smooth-densities,assump:kernel-properties} hold and $n \geq n_0, h \leq \bar{h}, and \frac{\log n}{\sqrt{n} h^d} \leq \bar{\eta}$ for suitable constants $\bar{h}, n_0, \bar{\eta} > 0$ that depend on the quantities in the assumptions. Fix a $\beta > 0$, and suppose we obtain an i.i.d.~sample $x_1, \ldots, x_n$ of size $n$ from density $p_0$. Then with probability at least $1-e^{-(\log n)^{2\beta}}$,
	\[ \left| \hat{I}_\epsilon(\theta) - I_\epsilon(\theta) \right| \leq O\left( \frac{\log n}{\sqrt{n} h^d} + h^{\min(\alpha/2,1)} \log \frac{1}{h} + \phi(\sqrt{h}) \log \frac{1}{\phi(\sqrt{h})} \right),   \]
	where the $O(\cdot)$ hides constant factors depending on the quantities in Assumptions~\ref{assump:bounded-support}-\ref{assump:kernel-properties}, and on the choice of $\epsilon, \beta > 0$.
\end{theorem}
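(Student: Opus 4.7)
\textbf{Proof plan for \Cref{thm:okl-convergence-formal}.}
The plan is to sandwich $\hat{I}_\epsilon(\theta)$ between $I_{\epsilon \pm \eta_n}(\theta) \pm \delta_n$ for small random $\eta_n, \delta_n$ converging to zero at the advertised rate, and then to invoke the $\epsilon$-Lipschitz property of $\epsilon \mapsto I_\epsilon(\theta)$ established in \Cref{lem:okl-continuity} (feasible because $p_0, p_\theta$ are bounded on $S$, so $I_{\epsilon_0}(\theta) < \infty$ for any $\epsilon_0>0$). The bridge between the finite-dimensional problem \eqref{eqn:I-hat-defn} and the population problem \eqref{eqn:okl} is the kernel-smoothed density $\q{v}(x) = \sum_i v_i K_h(x, x_i)$ from \Cref{defn:qhatw}: we will compare the empirical objective and constraint in \eqref{eqn:I-hat-defn} with their continuous counterparts $\KL(\q{v}|p_\theta)$ and $\tv(\q{v}, p_0)$.

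The two main approximation results I need, both proved via standard kernel--density--estimation and empirical--process arguments, are: (i) a uniform kernel density estimation bound $\sup_{x \in S_{-\sqrt{h}}} |\hat{p}(x) - p_0(x)| \lesssim (\log n)/\sqrt{n h^d} + h^{\min(\alpha,1)}$ with probability $\geq 1-e^{-(\log n)^{2\beta}}$ (cf.\ \cite{jiang2017uniform}), which shows in particular that the clipping is inactive on the interior so $\hat{p}_\gamma = \hat{p}$ there; and (ii) a uniform Monte-Carlo--to--integral identity, for every $v \in \Delta_n^{\gamma^2/4}$ and every sufficiently smooth bounded $f$,
\begin{equation*}
\left| \frac{1}{n} \sum_{i=1}^n \frac{\q{v}(x_i)}{\hat{p}_\gamma(x_i)} f(x_i) - \int_{S} \q{v}(x) f(x)\, dx \right| \ \lesssim \ \frac{\log n}{\sqrt{n h^d}} + \phi(\sqrt{h}),
\end{equation*}
again with high probability. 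Assumption (iv) of \Cref{assump:kernel-properties} is used here to embed $\q{v}$ into the RKHS of $K_h$ with norm bounded by $O(1/\sqrt{h^d})$ (since $\|v\|_2 \leq 2/(\gamma^2 \sqrt{n})$), which reduces (ii) to a Bernstein-type concentration inequality for functions in an RKHS ball; the boundary term $\phi(\sqrt{h})$ appears because $\q{v}$ leaks mass into the $\sqrt{h}$-envelope outside $S$, and we quarantine this contribution. Applying (ii) with $f = \log(\q{v}/p_\theta)$ (bounded after controlling $\q{v}$ away from $0$ on $S_{-\sqrt{h}}$ using $v \in \Delta_n^{\gamma^2/4}$ and the smoothness of the kernel) and with $f = \mathrm{sgn}(\q{v}/\hat{p}_\gamma - 1)$ yields the two key reformulations $\hat{I}_\epsilon(\theta) \approx \inf \{ \KL(\q{v}|p_\theta) : v \in \Delta_n^{\gamma^2/4},\ \tv(\q{v}, p_0) \leq \epsilon + \eta_n\}$.

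For the upper bound $\hat{I}_\epsilon(\theta) \leq I_\epsilon(\theta) + O(\cdot)$, start from the I-projection $q^\theta$ in \eqref{eqn:okl}. By \Cref{lem:info-proj-sandwich}, $q^\theta$ is sandwiched between $p_0$ and $p_\theta$ and hence takes values in $[\gamma, 1/\gamma]$. Define weights $v^*_i \propto q^\theta(x_i)/\hat{p}_\gamma(x_i)$, normalized to $\Delta_n$. Uniform convergence (i) implies $n v^*_i \in [\gamma^2/2, 2/\gamma^2] \subseteq [\gamma^2/4, 4/\gamma^2]$ whp, so $v^* \in \Delta_n^{\gamma^2/4}$. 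A bias computation gives $\|\q{v^*} - q^\theta\|_\infty \lesssim h^{\min(\alpha,1)}$ on $S_{-\sqrt{h}}$ using \Cref{assump:smooth-densities} and exponential tails of $\kappa$, with a boundary contribution $\phi(\sqrt{h})$. This yields $\tv(\q{v^*}, q^\theta) \lesssim h^{\min(\alpha,1)} + \phi(\sqrt{h})$ and $\KL(\q{v^*}| p_\theta) \leq \KL(q^\theta|p_\theta) + O(h^{\min(\alpha,1)}\log(1/h) + \phi(\sqrt{h})\log(1/\phi(\sqrt{h})))$, the logarithms coming from the $x \log x$ shape of the integrand. Feeding $v^*$ into the equivalent reformulation above and absorbing the TV slack via \Cref{lem:okl-continuity} closes this direction.

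For the lower bound $\hat{I}_\epsilon(\theta) \geq I_\epsilon(\theta) - O(\cdot)$, take a near-optimal $\hat{v}$ in \eqref{eqn:I-hat-defn} and consider $\q{\hat{v}}$. The Monte-Carlo--to--integral identity (ii) gives $\tv(\q{\hat{v}}, p_0) \leq \epsilon + \eta_n$ and $|\KL(\q{\hat{v}} | p_\theta) - \hat{I}_\epsilon(\theta)| \leq \delta_n$ (the $\log \q{\hat{v}}$ in the objective is bounded because $\hat{v} \in \Delta_n^{\gamma^2/4}$ and $\kappa$ is bounded, so the integrand has bounded range and RKHS norm). A renormalization $q' = \q{\hat{v}}/\int \q{\hat{v}}$ restores feasibility (the normalizer is $1 + O(\phi(\sqrt{h}))$). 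This $q'$ is feasible for $I_{\epsilon + \eta_n}(\theta)$ up to an additional $O(\phi(\sqrt{h}))$ slack, so $I_{\epsilon + \eta_n'}(\theta) \leq \KL(q' | p_\theta) \leq \hat{I}_\epsilon(\theta) + \delta_n + O(\text{renorm})$. One final application of \Cref{lem:okl-continuity} converts this to $I_\epsilon(\theta) - O(\cdot) \leq \hat{I}_\epsilon(\theta)$.

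The main obstacle, and the step I expect to require the most care, is establishing the uniform Monte-Carlo--to--integral identity (ii) over the entire set $\{v \in \Delta_n^{\gamma^2/4}\}$ simultaneously. The restriction to $\Delta_n^{\gamma^2/4}$ (rather than all of $\Delta_n$) is precisely what makes this uniform bound tractable: it controls the RKHS norm of $\q{v}$ and the $L^\infty$ norm of $\log(\q{v}/p_\theta)$ on $S_{-\sqrt{h}}$, which are the two quantities driving the concentration rate and the boundedness of the integrand. This is also the place where the PSD assumption \Cref{assump:kernel-properties}(iv) is essential and where the boundary functional $\phi(\sqrt h)$ enters the final rate.
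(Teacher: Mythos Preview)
Your overall architecture is the same as the paper's: sandwich $\hat I_\epsilon(\theta)$ between $I_{\epsilon\pm\eta}(\theta)\pm\delta$ using (a) a uniform Monte--Carlo--to--integral comparison over $\Delta_n^{\gamma^2/4}$ driven by the RKHS structure of $K_h$, (b) an importance-weighted candidate $v^*_i\propto q^\theta(x_i)/\hat p_\gamma(x_i)$ built from the I-projection, and (c) \Cref{lem:okl-continuity}. There is, however, a genuine gap in the upper-bound step. You assert $\|\q{v^*}-q^\theta\|_\infty\lesssim h^{\min(\alpha,1)}$ ``using \Cref{assump:smooth-densities}'', but that assumption gives H\"older smoothness only for $p_0$ and $\log p_\theta$, not for the I-projection $q^\theta$. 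The sandwiching lemma \Cref{lem:info-proj-sandwich} pins $q^\theta$ pointwise between $p_0$ and $p_\theta$ (hence into $[\gamma,1/\gamma]$), but confers \emph{no regularity}: a TV-constrained I-projection typically has kinks where it crosses $p_0$, and nothing prevents the bias $(K_h\star q^\theta)-q^\theta$ from being order one in $L^\infty$, or from lacking a rate in $L^1$. So neither $\|\q{v^*}-q^\theta\|_\infty$ nor $\tv(\q{v^*},q^\theta)$ is under control, and your subsequent KL and TV bounds on $\q{v^*}$ do not follow.

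The paper's remedy is to never compare $\q{v^*}$ with $q^\theta$ directly, but to insert the smoothed density $K_h\star q^\theta$ as an intermediary: first $\|\q{v^*}-K_h\star q^\theta\|_\infty\lesssim (\log n)/(\sqrt n\,h^d)$ (pure variance, no smoothness of $q^\theta$ needed), and then $\KL_S(K_h\star q^\theta\mid p_\theta)\le \KL(q^\theta\mid p_\theta)+O(\cdot)$ and $\tv(K_h\star q^\theta,p_0)\le \tv(q^\theta,p_0)+O(\cdot)$ via \Cref{lem:smoothed-densities-kl-tv-approx}. That lemma uses only the convexity of $x\mapsto x\log x$ (for the entropy term), the smoothness of $\log p_\theta$ (for the cross-entropy term), and the smoothness of $p_0$ (for TV, via $\tv(K_h\star q,p_0)\le \tv(q,p_0)+\tv(K_h\star p_0,p_0)$)---never any smoothness of $q^\theta$. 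The $\sqrt h$-scale splitting in that argument is also what produces $h^{\alpha/2}$ rather than $h^{\alpha}$ in the rate. Two smaller points on your lower bound: $\q{\hat v}$ already integrates to $1$ over $\R^d$, so your ``renormalization'' should be restriction to $S$ followed by renormalization (the paper's \Cref{lem:restrict-to-S}), with defect $\q{\hat v}(S^c)=O(h+\phi(\sqrt h))$; and since $\supp(p_\theta)=S$, $\KL(\q{\hat v}\mid p_\theta)=\infty$, so the quantity your Monte-Carlo identity actually controls (and the one you should work with) is $\KL_S(\q{\hat v}\mid p_\theta)$.
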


When $S=B(0,R)$ is a Euclidean ball of radius $R$, Taylor expansion shows that $\phi(\sqrt{h}) = \frac{\sqrt{h}d}{R} + O(h)$.  Suppose, for simplicity that $\alpha \geq 1$, then taking $h=n^{-\frac{1}{2d+1}}$ and $\beta=1$, we see that $|\hatI(\theta) - \okl| = \tilde{O}(n^{-\frac{1}{4d+2}})$ upto a logarithmic factor in $n$, with probability at least $1-1/n$. 

Note that the choice of bandwidth parameter  $h=o(n^{-\frac{1}{2d}})$ required to shrink our error terms to zero decreases more slowly with $n$ than the optimal rate $h=o(n^{-1/d})$ for density estimation \citep{rigollet2009optimal}. This suggests that our error bounds can potentially be improved.
 Note also that the   above result holds for any fixed value of  $\theta \in \Theta$; extensions to uniform convergence over all $\theta \in \Theta$ may be possible with further assumptions on the complexity of the class $\Theta$ and continuity of the map $\theta \mapsto \log p_\theta(\cdot)$, but we do not pursue this direction here.

\begin{proof}[Proof outline of \Cref{thm:okl-convergence-formal}]
We sketch the proof here. 
\begin{description}
    \item[\Cref{sec:uniform-convergence}] First, \Cref{lem:uniform-convergence} shows that with high probability for large $n$, the estimators of KL divergence and total variation in \cref{eqn:I-hat-defn} are close to their population counterparts $\KL(\q{v}|p_\theta)$ and $\tv(\q{v}, p_0)$ for any $v \in \Delta_n$ such that $\q{v}$ is suitably bounded. See  \Cref{cor:all-the-bounds} for the most useful version of this statement. 
\item[\Cref{sec:kernele-density-estimation-bounds}] Next, \Cref{lem:existence-of-good-estimator} shows that for any density $q$ over $S$ satisfying Assumption~\ref{assump:bounded-densities}, with high probability for large $n$, there exists a $v \in \Delta_n^\gamma$ such that $\q{v}$ is a pointwise accurate estimator of $K_h \star q$, where $K_h \star q$ denotes the convolution of the density $q$ with the probability kernel $K_h$. Combined with \Cref{lem:info-proj-sandwich}, which establishes the boundedness of the minimizer $\iproj$ in \cref{eqn:okl}, we have that there exists a pointwise accurate estimator $\q{v^*}$ of $K_h \star \iproj$ for some $v^* \in \Delta_n^{\gamma^2/4}$. \item[\Cref{sec:okl-smoothed-approx}] \Cref{lem:smoothed-densities-kl-tv-approx} shows that under \Cref{assump:smooth-densities}, $\KL(K_h \star q | p_\theta)$ and $\tv(K_h \star q, p_0)$ must not be much greater than $\KL(q | p_\theta)$ and $\tv(q, p_0)$, respectively, when $h$ is sufficiently small. \end{description}
We combine the above arguments (see \Cref{sec:proof-of-theorem}), to show that the optimal objective value in \cref{eqn:okl} corresponding to the minimizer $\iproj$, must be close to the optimal objective value in \cref{eqn:I-hat-defn}, thus showing that $\hat{I}_\epsilon(\theta)$ can well approximate $I_\epsilon(\theta)$ when $h$ is small, and $n$ is large.
\end{proof}

While \Cref{sec:proof-of-approx-theorem} covers the main steps of proof, there are useful results in \Cref{sec:useful-lemmmas} that provide upper bounds on the tail probability and second moment of the kernel $K_h$ based on \Cref{assump:kernel-properties}. These results are used in \Cref{sec:okl-smoothed-approx} and \Cref{sec:proof-of-theorem}.

\paragraph{Additional notation}
For $p \in \Den$ and $A \subseteq \cX$, we will use the shorthand $p(A)$ to denote the quantity $\int_A p(x) dx$.
For a  probability density kernel $K_h$, let \[\smooth{p}(\cdot) = (K_h \star p)(\cdot) = \int p(y) K_h(\cdot, y) dy \in \Den \] denote the kernel-smoothed version of a density $p \in \Den$. Similarly, for a measure $\mu$ on $\cX$, we can define $(K_h \star \mu) (x) = \int K_h(x, y) \mu(dy)$. For a set $S \subseteq \R^d$ and $p, q \in \Den$, let $\KL_S(q| p) = \int_S q(x) \log \frac{p(x)}{q(x)} \, dx$. The constant $v_d$ will denote the volume of the Euclidean ball in $\R^d$. The notation $\|f\|_1 = \int |f(x)| dx$ and $\|f\|_{\infty} = \sup_{x \in \cX} |f(x)|$ will denote the $L_1$ and $L_\infty$ norm of the function $f$. For $p, q \in \Den$, recall that $\tv(p,q) = \frac{1}{2}\|p-q\|_1.$

\subsection{Kernel tail bounds}
\label{sec:useful-lemmmas}

In this sub-section, we derive tail bounds for a class of probability kernels $K_h(x,y) = \frac{1}{h^d}\kappa(\|x-y\|/h)$ on $\R^d$ indexed by parameter $h > 0$ used for density estimation, when the function $\kappa: [0, \infty) \to [0,\infty)$ has exponentially decaying  tails (see parts (i)-(iii) of \Cref{assump:kernel-properties}). The following two lemmas bound the tail distribution and the variance of the random variable $Z$ having density $x \mapsto \kappa(\|x\|)$. These results will then be used to obtain tails bounds for the kernel $K_h$.

\begin{lemma} 
	\label{lem:exp-tail-bound}
	Suppose \Cref{assump:kernel-properties} holds. If $Z$ is an $\R^d$ valued random vector with probability density $x \mapsto \kappa(\|x\|)$, then 
	\begin{equation}
		\prob(\|Z\| \geq t) = \int_{\|x\| \geq t} \kappa(\|x\|) dx \leq C_1 t^{d+1-\rho} e^{-t^\rho}
	\end{equation}
	for each $t \geq  t_1 = \max(\Gamma(a+1)^{1/{\rho(a-1)}}, t_0) \geq 1$ where $\Gamma$ is the Gamma function, $a = \frac{d+1}{\rho}$, and $C_1 > 0$ is a constant depending on constants $t_0, C_\rho, \rho > 0$ in  \Cref{assump:kernel-properties} and dimension $d$. Explicitly, $C_1=\rho^{-1}C_\rho v_d 2^{a-1}$, where $v_d = \frac{\pi^{d/2}}{\Gamma(d/2+1)}$ is the volume of the unit ball in $\R^d$.    
\end{lemma}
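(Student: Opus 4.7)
The plan is to pass to polar coordinates, apply the exponential tail hypothesis of \Cref{assump:kernel-properties}(iii) pointwise, and reduce everything to an estimate on the upper incomplete gamma function.

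Because $x \mapsto \kappa(\|x\|)$ is radially symmetric, spherical integration gives
\[
\prob(\|Z\| \geq t) \;=\; d v_d \int_t^\infty \kappa(r)\, r^{d-1}\, dr,
\]
where $d v_d$ is the surface area of the unit sphere in $\R^d$. For $t \geq t_0$ the entire integration range lies in $\{r : r \geq t_0\}$, so \Cref{assump:kernel-properties}(iii) yields $\kappa(r) \leq C_\rho e^{-r^\rho}$. Since also $t \geq 1$, I would use the crude slack bound $r^{d-1} \leq r^d$ to pick up one extra power of $r$; this is what produces the $t^{d+1-\rho}$ in the statement rather than the sharper $t^{d-\rho}$. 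Substituting $u = r^\rho$, so that $r^d\, dr = \rho^{-1} u^{a-1}\, du$ with $a = (d+1)/\rho$, converts the remaining integral into the upper incomplete gamma function:
\[
\int_t^\infty r^{d} e^{-r^\rho}\, dr \;=\; \frac{1}{\rho} \int_{t^\rho}^\infty u^{a-1} e^{-u}\, du \;=\; \frac{\Gamma(a, t^\rho)}{\rho}.
\]

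Since $\rho \leq 1$ and $d \geq 1$ force $a \geq 2$, the bound on $\Gamma(a, t^\rho)$ can be handled by a single elementary estimate. Writing $u = t^\rho + v$ pulls $e^{-t^\rho}$ outside the integral and leaves $\int_0^\infty (t^\rho + v)^{a-1} e^{-v}\, dv$. I would apply the convexity inequality $(x+y)^{a-1} \leq 2^{a-2}(x^{a-1} + y^{a-1})$ (valid for $a \geq 2$) together with $\int_0^\infty v^{a-1} e^{-v}\, dv = \Gamma(a)$ to get
\[
\Gamma(a, t^\rho) \;\leq\; 2^{a-2}\, e^{-t^\rho}\bigl[(t^\rho)^{a-1} + \Gamma(a)\bigr].
\]
The threshold condition $t \geq \Gamma(a+1)^{1/[\rho(a-1)]}$ raises to $t^\rho \geq \Gamma(a+1)^{1/(a-1)}$, and combined with $\Gamma(a+1) \geq \Gamma(a)$ this guarantees $(t^\rho)^{a-1} \geq \Gamma(a)$, so the bracket above is at most $2(t^\rho)^{a-1}$. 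Hence
\[
\Gamma(a, t^\rho) \;\leq\; 2^{a-1}\,(t^\rho)^{a-1} e^{-t^\rho} \;=\; 2^{a-1}\, t^{d+1-\rho}\, e^{-t^\rho}.
\]

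Chaining the three estimates produces the bound $\prob(\|Z\| \geq t) \leq \rho^{-1} C_\rho v_d\, 2^{a-1}\, t^{d+1-\rho} e^{-t^\rho}$ (the factor $d$ inherited from $d v_d$ is absorbed into the constant), which is exactly the claim. The only delicate point is tracking where each hypothesis on $t$ enters: $t \geq t_0$ unlocks the exponential tail, $t \geq 1$ justifies the $r^{d-1} \leq r^d$ slack step, and $t^\rho \geq \Gamma(a+1)^{1/(a-1)}$ is needed for the incomplete-gamma estimate. All three are absorbed into the definition $t_1 = \max(\Gamma(a+1)^{1/[\rho(a-1)]}, t_0) \geq 1$.
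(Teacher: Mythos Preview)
Your argument is correct and follows the same overall arc as the paper's proof: pass to a radial integral, apply the exponential tail hypothesis, convert to the upper incomplete gamma function, and then bound $\Gamma(a,t^\rho)$ by $2^{a-1}(t^\rho)^{a-1}e^{-t^\rho}$ under the threshold $t^\rho \geq \Gamma(a+1)^{1/(a-1)}$. The substantive difference lies in how the incomplete-gamma estimate is obtained. The paper invokes a sharp external inequality (Pinelis, 2020) giving $\Gamma(a,y) \leq (b_a+y)^{a-1}e^{-y}$ with $b_a = \Gamma(a+1)^{1/(a-1)}$, and then uses $y \geq b_a$ to collapse this to $2^{a-1}y^{a-1}e^{-y}$. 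Your route---shifting $u = y+v$, applying the convexity bound $(y+v)^{a-1} \leq 2^{a-2}(y^{a-1}+v^{a-1})$, and using the threshold to dominate the $\Gamma(a)$ term---is entirely elementary and self-contained, and lands on the identical bound. That is a genuine simplification: no external reference is needed. One minor discrepancy: the paper starts from a cited identity (Jiang, 2017) that writes the radial integral directly as $v_d\int_t^\infty \kappa(r)r^d\,dr$, whereas your standard polar-coordinate formula $d v_d \int_t^\infty \kappa(r) r^{d-1}\,dr$ followed by $r^{d-1}\leq r^d$ leaves an extra factor of $d$ in the constant. You absorb it, which is fine for the qualitative statement, but your argument as written yields $d\cdot C_1$ rather than the explicit $C_1 = \rho^{-1}C_\rho v_d 2^{a-1}$ claimed in the lemma.
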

\begin{proof}
	Using \cite[Lemma~4]{jiang2017uniform} and the upper bound on tails of $\kappa$ since $s \geq t \geq t_0$, 
	\begin{align*}
		\int_{\|x\| \geq t} \kappa(\|x\|) dx = v_d \int_{t}^\infty \kappa(r) r^d dr \leq C_\rho v_d \int_t^\infty e^{-r^\rho} r^{d} dr.
	\end{align*}
	Using the substitution $s = r^\rho$, we obtain
	$$
	\int_t^\infty e^{-r^\rho} r^{d} dr = \frac{1}{\rho}\int_{t^\rho}^\infty e^{-s} s^{\frac{d+1}{\rho} - 1} ds = \frac{1}{\rho}\Gamma\left(a, t^\rho\right)
	$$
	where $\Gamma(a,y) = \int_y^\infty e^{-x} x^{a-1} dx$ is the incomplete Gamma function and $a = \frac{d+1}{\rho} \geq 2$. Taking $b_a = \Gamma(a+1)^{1/(a-1)}$, we have the following bound on the incomplete Gamma function \cite[Theorem~1.1]{pinelis2020exact} for $a \geq 2$:
	$$
	\Gamma(a,y) \leq \frac{((b_a + y)^a - y^a)e^{-y}}{a b_a} = \frac{e^{-y}\int_{0}^{b_a} a (x + y)^{a-1} dx}{a b_a} \leq (b_a + y)^{a-1} e^{-y}.
	$$   
	Since $t^\rho \geq b_a$, the proof can be completed by combining the above displays along with the bound
	$$
	\Gamma(a, t^\rho) \leq 2^{a-1} t^{\rho(a-1)} e^{-t^\rho}  = 2^{a-1} t^{d+1-\rho} e^{-t^\rho}.
	$$
\end{proof}

\begin{lemma}
	\label{lem:bounded-second-moment}
	Suppose \Cref{assump:kernel-properties} holds. If $Z$ is an $\R^d$ valued random vector with probability density $x \mapsto \kappa(\|x\|)$, then
	\[\E_{Z}[\|Z\|_2^2] \leq v_d \left(c_0 t_0^{d+3} + \frac{C_\rho}{\rho} \Gamma\left( \frac{d+3}{\rho}  \right) \right) ,\] 
	where $v_d$ is the volume of the unit ball in $d$-dimensions and $\Gamma(\cdot)$ is the Gamma function.
\end{lemma}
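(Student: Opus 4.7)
The plan is to write $\E[\|Z\|^2]$ as a one-dimensional integral in the radial variable, split it at the threshold $t_0$ appearing in \Cref{assump:kernel-properties}, and apply the pointwise bounds $\kappa(0)\leq c_0$ and $\kappa(t)\leq C_\rho e^{-t^\rho}$ on the two pieces. The same radial identity used in the proof of the preceding \Cref{lem:exp-tail-bound} (via \cite[Lemma~4]{jiang2017uniform}) gives, when applied to the integrand $\|x\|^2 \kappa(\|x\|)$ on all of $\R^d$, the representation
\[
\E[\|Z\|^2] \;=\; \int_{\R^d} \|x\|^2\, \kappa(\|x\|)\, dx \;=\; v_d \int_0^\infty r^{d+2}\, \kappa(r)\, dr.
\]

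For the inner piece I would bound $\kappa(r)\leq \kappa(0)\leq c_0$ on $[0,t_0]$ (using monotonicity of $\kappa$ from \Cref{assump:kernel-properties}) and integrate, obtaining $\int_0^{t_0} r^{d+2}\,\kappa(r)\,dr \leq c_0\, t_0^{d+3}/(d+3) \leq c_0\, t_0^{d+3}$, absorbing the $(d+3)^{-1}$ factor into the constant to match the form of the stated bound. For the outer piece I would use the exponential tail bound $\kappa(r)\leq C_\rho e^{-r^\rho}$ valid for $r\geq t_0$, extend the integration to all of $[0,\infty)$ (which only inflates the bound), and apply the same substitution $s = r^\rho$ used in \Cref{lem:exp-tail-bound}. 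With $dr = \rho^{-1} s^{1/\rho - 1}\, ds$ this yields
\[
\int_{t_0}^\infty r^{d+2} e^{-r^\rho}\, dr \;\leq\; \int_0^\infty r^{d+2} e^{-r^\rho}\, dr \;=\; \frac{1}{\rho}\int_0^\infty e^{-s}\, s^{(d+3)/\rho - 1}\, ds \;=\; \frac{1}{\rho}\, \Gamma\!\left(\frac{d+3}{\rho}\right),
\]
which is exactly the second term in the claimed bound.

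Summing the two contributions and pulling out the common prefactor $v_d$ gives the desired inequality. There is no substantive obstacle here: the lemma is essentially a change-of-variables calculation, and the only point that requires any care is keeping track of the polynomial-in-$r$ factor throughout the substitution and confirming that the threshold $t_0\geq 1$ from \Cref{assump:kernel-properties} makes the crude estimate $t_0^{d+3}/(d+3)\leq t_0^{d+3}$ harmless.
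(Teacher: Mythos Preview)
Your proof is correct and follows essentially the same approach as the paper: the same radial identity from \cite[Lemma~4]{jiang2017uniform}, the same split at $t_0$ with the bound $\kappa\leq c_0$ on the inner piece and the exponential tail on the outer piece, and the same substitution $s=r^\rho$ to produce the Gamma function. The only cosmetic difference is that you make the intermediate estimate $t_0^{d+3}/(d+3)\leq t_0^{d+3}$ explicit (and note in passing that this bound holds simply because $d+3\geq 1$, so the condition $t_0\geq 1$ is not actually needed here).
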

\begin{proof}
	We can write
	\begin{align*}
		\E_{Z}[\|Z \|_2^2] = \int_{\R^d} \|z \|_2^2 \kappa(\|z \|_2) \, dz
		= v_d \int_0^\infty \kappa(t) t^{d+2} \, du,
	\end{align*}
	where the second equality follows from \cite[Lemma~4]{jiang2017uniform}. Taking $c_0, t_0, \rho, C_\rho$ as the constants from Assumption~\ref{assump:kernel-properties}, we have
	\begin{align*}
		\int_0^\infty \kappa(t) t^{d+2} \, du &= \int_{0}^{t_0} \kappa(t) t^{d+2} \, du + \int_{t_0}^{\infty} \kappa(t) t^{d+2} \, du \\
		&\leq c_0 t_0^{d+3} + C_\rho \int_{t_0}^\infty \exp(- t^\rho) t^{d+2} \, du \\
		&\leq c_0 t_0^{d+3} + \frac{C_\rho}{\rho} \int_0^\infty \exp(-u) u^{\frac{d+3}{\rho} - 1} \, du \\
		&= c_0 t_0^{d+3} + \frac{C_\rho}{\rho} \Gamma\left( \frac{d+3}{\rho}  \right),
	\end{align*}
	where we have used the substitution $u = t^\rho$ in the second inequality as well as the fact that the integrands are non-negative from 0 to $\infty$, and the last line follows from the definition of the Gamma function.
\end{proof}

We now use the above lemmas to prove properties about the kernel $K_h$.

\begin{lemma}
	\label{lem:kernel-support-concentration}
	Suppose that \Cref{assump:kernel-properties} holds. For any $x \in \R^d$, 
	\[ 
	\int_{B(x,r)} K_h(x,y) \, dy \geq 1 -  \frac{C_1 k!}{(r/h)^4} 
	\]
	whenever $r \geq t_1 h$, where $B(x,r)$ denotes the open ball of radius $r$ around $x$, constants $C_1, t_1 > 0$ are as defined in \Cref{lem:exp-tail-bound}, and $k = \lceil \frac{5+d}{\rho}\rceil - 1$. 
In particular, for any bounded set $S \subset \R^d$ and any $x \in S$, we have
	\[ \int_S K_h(x,y) \, dy \geq 1 - \frac{C_1 k!}{(r_x/h)^4}  \]
	whenever $r_x \geq t_1 h$, where $r_x = \inf_{y \in S^c} \| x -y \|_2$. 
If $0 < h \leq 
	h_0 = \min(\frac{1}{C_1 k!}, \frac{1}{t_1^2})$, we can always take $r = \sqrt{h}$, and the bound on the right hand side simplifies to $1-\frac{C_1 k!}{(r/h)^4} \geq 1- h$. 
\end{lemma}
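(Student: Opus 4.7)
The plan is to reduce the statement to the tail bound already established in Lemma \ref{lem:exp-tail-bound}, and then to match polynomial/exponential exponents in the right way.

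First, I would pass from $K_h$ to $\kappa$ by the substitution $z = (y - x)/h$. Since $K_h(x,y) = h^{-d}\kappa(\|x - y\|/h)$ and $dy = h^d\,dz$, the integral $\int_{B(x,r)} K_h(x,y)\,dy$ equals $\int_{B(0,\,r/h)} \kappa(\|z\|)\,dz$, and taking complements gives
\[
    1 - \int_{B(x,r)} K_h(x,y)\,dy \;=\; \int_{\|z\|\geq r/h} \kappa(\|z\|)\,dz \;=\; \prob(\|Z\|\geq r/h),
\]
where $Z$ is the random vector of Lemma \ref{lem:exp-tail-bound}. Under the hypothesis $r \geq t_1 h$, we have $t \doteq r/h \geq t_1$, so Lemma \ref{lem:exp-tail-bound} yields $\prob(\|Z\|\geq t) \leq C_1\, t^{d+1-\rho} e^{-t^\rho}$.

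Next I would convert this into a polynomial bound in $t$. By the standard Taylor-series lower bound $e^{x} \geq x^{k}/k!$ valid for $x \geq 0$ and any $k \in \nat$, applied to $x = t^\rho$, one has $e^{t^\rho} \geq t^{k\rho}/k!$ and hence
\[
    t^{d+1-\rho} e^{-t^\rho} \;\leq\; \frac{k!}{t^{(k+1)\rho - (d+1)}}.
\]
The choice $k = \lceil (d+5)/\rho\rceil - 1$ guarantees $k+1 \geq (d+5)/\rho$, i.e.\ $(k+1)\rho - (d+1)\geq 4$. Since $t \geq t_1 \geq 1$, the denominator on the right is $\geq t^4$, giving $\prob(\|Z\|\geq t) \leq C_1 k!/t^4 = C_1 k!/(r/h)^4$, which is the first inequality.

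For the second claim, the definition $r_x = \inf_{y \in S^c}\|x-y\|$ implies $B(x, r_x)\subseteq S$, so $\int_S K_h(x,y)\,dy \geq \int_{B(x,r_x)} K_h(x,y)\,dy$, and the previous bound with $r = r_x$ gives the result whenever $r_x \geq t_1 h$. For the final simplification, I would verify that the two constraints on $h$ yield the stated conclusion with $r = \sqrt{h}$: the condition $h \leq 1/t_1^2$ is exactly $r/h = h^{-1/2} \geq t_1$, so the first bound applies; substituting $(r/h)^4 = h^{-2}$ turns the error term into $C_1 k!\, h^2$, and the assumption $h \leq 1/(C_1 k!)$ reduces this to at most $h$. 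No step is a real obstacle; the only care needed is matching the ceiling-defined $k$ to the exponent inequality $(k+1)\rho - (d+1)\geq 4$, which is what drives the choice $k = \lceil(d+5)/\rho\rceil - 1$.
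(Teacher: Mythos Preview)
Your proposal is correct and follows essentially the same route as the paper: rewrite $1-\int_{B(x,r)}K_h(x,y)\,dy$ as $\prob(\|Z\|\geq r/h)$, apply Lemma~\ref{lem:exp-tail-bound}, then use $e^{t^\rho}\geq t^{k\rho}/k!$ together with $(k+1)\rho-(d+1)\geq 4$ to obtain the $C_1 k!/(r/h)^4$ bound, and handle the $S$ and $r=\sqrt{h}$ statements by monotonicity and direct substitution. The only cosmetic difference is that the paper phrases the change of variables as $\Pr(x+hZ\notin B(x,r))$ and writes the exponent as $k\rho-(d+1-\rho)$, which is algebraically identical to your $(k+1)\rho-(d+1)$.
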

\begin{proof}
	Let $Z$ be an $\R^d$-valued random vector with probability density $z \mapsto \kappa(\|z\|_2)$, then 
	\begin{align*}
		1 - \int_{B(x,r)} K_h(x,y) \, dy &= \Pr( x + hZ \notin B(x,r)) \\
		&= \Pr\left(\|Z\|_2 \geq \frac{r}{h}\right) \leq C_1 (r/h)^{(d+1-\rho)}e^{-(r/h)^\rho} \\
		&\leq \frac{C_1 k!}{(r/h)^{k \rho - (d+1-\rho)}} \leq \frac{C_1 k!}{(r/h)^4}.
	\end{align*}
	where the inequality in the second line follows from \Cref{lem:exp-tail-bound} whenever $r \geq t_1 h$, and the inequality on the last line follows by using  $e^y \geq \frac{y^k}{k!}$ and $k \rho - (d+1-\rho) \geq 4$. To get the statement for set $S$, we observe that $1- \int_S K_h(x,y)  \, dy \geq 1 - \int_{B(x,r_x)} K_h(x,y) \, dy$. 
	
\end{proof}

For any probability measure $\mu$ on $\cX$, we define the density $f_{h,\mu}(y) = \int K_h(y, x) \mu(dx) \in \Den$ obtained by convolving $\mu$ with the kernel $K_h$.
The next result bounds the TV-distance $\tv(f_{h,\mu}, f_{h,\nu})$ in terms of the supremum norm $\|f_{h,\mu}-f_{h,\nu}\|_{\infty}$ when measures $\mu$ and $\nu$ are supported on $S$.

\begin{corollary} 
	\label{cor:tv-uniform-estimates}
	Suppose \Cref{assump:bounded-densities,assump:kernel-properties} hold and $\mu$ and $\nu$ are two probability measures supported on $S$. Then whenever $h \leq 1/t_1$,
	$$
	\tv(f_{h,\mu}, f_{h, \nu}) \leq \frac{v_d  (R + 1)^d}{2} \|f_{h,\mu} - f_{h,\nu}\|_\infty  + C_1 h^{d+1-\rho}{e^{-h^{-1/\rho}}}
	$$ where $v_d$ is the volume of the unit ball in $\R^d$, and $t_1$ is as defined in \Cref{lem:exp-tail-bound}.
\end{corollary}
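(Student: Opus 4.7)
My plan is to split the $L^1$-integral defining the TV distance into an integral over a ball that contains the supports of $\mu$ and $\nu$ together with a unit-width ``collar,'' and an integral over its complement where the smoothed densities have very light tails that can be controlled by \Cref{lem:exp-tail-bound}.

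First I would introduce the ball $A = B(0, R+1) \subset \R^d$, whose Lebesgue measure is $v_d(R+1)^d$, and write
$$
\tv(f_{h,\mu}, f_{h,\nu}) \;=\; \tfrac{1}{2}\int_A |f_{h,\mu}(y) - f_{h,\nu}(y)|\, dy \;+\; \tfrac{1}{2}\int_{A^c}|f_{h,\mu}(y) - f_{h,\nu}(y)|\, dy.
$$
On $A$, the pointwise bound $|f_{h,\mu}(y) - f_{h,\nu}(y)| \leq \|f_{h,\mu} - f_{h,\nu}\|_\infty$ combined with $\lambda(A) = v_d(R+1)^d$ produces the first term of the claimed inequality.

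For the integral over $A^c$, the triangle inequality gives $|f_{h,\mu} - f_{h,\nu}| \leq f_{h,\mu} + f_{h,\nu}$, so it suffices to bound $\int_{A^c} f_{h,\mu}(y)\, dy$ and then add the analogous bound for $\nu$. By Fubini and the fact that $\mu$ is supported on $S \subseteq \bar{B}(0,R)$,
$$
\int_{A^c} f_{h,\mu}(y)\, dy \;=\; \int_S \int_{\{\|y\| > R+1\}} K_h(y,x)\, dy\, \mu(dx).
$$
For any $x \in S$ the reverse triangle inequality gives $\|y - x\| \geq \|y\| - \|x\| \geq 1$ whenever $\|y\|>R+1$, so the inner integral is at most $\int_{\{\|y-x\|>1\}} K_h(y,x)\, dy$. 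The change of variables $z = (y-x)/h$ turns this into $\Pr(\|Z\| > 1/h)$ where $Z$ has density $z \mapsto \kappa(\|z\|)$. Since the hypothesis $h \leq 1/t_1$ is exactly $1/h \geq t_1$, I can apply \Cref{lem:exp-tail-bound} with $t = 1/h$ to obtain the tail bound stated there. Halving (from the $\tfrac{1}{2}$ in front of the TV integral) and symmetrizing in $\mu,\nu$ then yields the second term of the claim.

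I do not foresee any substantial obstacle. The only steps that require mild care are (i) choosing the collar width $1$ so that $\|y\| > R+1$ uniformly forces $\|y-x\| \geq 1$ for every $x \in S$, and (ii) matching constants so that the kernel-tail term coming out of \Cref{lem:exp-tail-bound} at $t=1/h$ produces the exponent stated in the corollary; the condition $h \leq 1/t_1$ is used precisely here to legitimate the invocation of that lemma.
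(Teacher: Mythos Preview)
Your proposal is correct and follows essentially the same route as the paper: split the $L^1$ integral at the ball $B(0,R+1)$, bound the inner part by $\|f_{h,\mu}-f_{h,\nu}\|_\infty$ times the volume, and control the outer part via Fubini and the kernel tail bound from \Cref{lem:exp-tail-bound} at $t=1/h$ (the paper phrases the last step as invoking the proof of \Cref{lem:kernel-support-concentration}, but that amounts to the same computation). Your remark~(ii) is well placed: the exponent you obtain from \Cref{lem:exp-tail-bound} is $C_1(1/h)^{d+1-\rho}e^{-(1/h)^{\rho}}$, and the corollary's displayed term appears to carry a typo in the powers of $h$.
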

\begin{proof}
We first claim that
\begin{align}
\label{eqn:kernel-smoothed-density-tails}
\int_{\|y\| \geq R+1} f_{h,\mu}(y) dy \leq C_1 (1/h)^{d+1-\rho} e^{-(1/h)^\rho} \text{ for any $h \leq 1/t_1$} .
\end{align}
To see this, we note that $f_{h,\mu}(y) = \int_S K_h(x, y) \mu(dx)$, since $\mu$ is supported on $S$ and $K_h$ is symmetric. Thus by Fubini's theorem
$$
\int_{\|y\| \geq R+r} f_{h,\mu}(y) dy = \int_S \left(\int_{\|y\| \geq R+ r} K_h(x, y) dy \right) \mu(dx).
$$
Next, note that the proof of \Cref{lem:kernel-support-concentration} shows that 
\[ \int_{\|y\| \geq R+ 1} K_h(x, y) dy \leq 1 - \int_{B(x,1)} K_h(x, y) dy \leq C_1 (1/h)^{d+1-\rho} e^{-(1/h)^\rho}\] for any $x \in S$ since $B(x,1) \subseteq B(0,R + 1)$ whenever $x \in S \subseteq B(0,R)$, proving \cref{eqn:kernel-smoothed-density-tails}.

To finish the proof of the lemma, observe that
\begin{align*}
    \tv(f_{h,\mu}, f_{h, \nu}) &=\frac{1}{2} \int_{B(0,R+1)} |f_{h,\mu}(x) - f_{h, \nu}(x)| dx + \frac{1}{2} \int_{\|x\| \geq R+1} |f_{h,\mu}(x) - f_{h, \nu}(x)| dx \\
    &\leq  \frac{v_d}{2}(R+1)^d \|f_{h,\mu} - f_{h, \nu}\|_{\infty} + C_1 h^{d+1-\rho}{e^{-h^{-1/\rho}}}
\end{align*}
where the bound on the second term follows from \cref{eqn:kernel-smoothed-density-tails}.
\end{proof}

Next, we can see how well the convolved density $\smooth{q} = (K_h \star q) \in \Den$ approximates the density $q \in \Den$. For instance, if the density $q$ is supported on a bounded set $S \subseteq \R^d$ with non-zero Lebesgue measure $V_S$, with boundary functional $\phi(r) = \frac{\lambda(S \setminus S_{-r})}{\lambda(S)}$ (see \Cref{assump:bounded-support}), then the following Lemma provides an upper bound on the mass placed by the convolved density  $\smooth{q}$ outside the set $S$.

\begin{lemma}
	\label{lem:smooth-support-concentration}
	Suppose that \Cref{assump:bounded-support} and \Cref{assump:kernel-properties} hold. For any probability density $q \in \Den$ and density bounded above by $1/\gamma$ and supported on the set $S$,
	\[ \smooth{q}(S^c) \doteq \int_{S^c} \smooth{q}(x) dx \leq  h + \frac{V_S}{\gamma} \phi(\sqrt{h}) \]
	whenever $h \leq h_0$, where $h_0$ is as given in \Cref{lem:kernel-support-concentration}.
\end{lemma}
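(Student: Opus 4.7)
The plan is to compute $\smooth{q}(S^c)$ by Fubini, decompose the support of $q$ based on proximity to the boundary of $S$, and handle the two regions using different bounds on the mass of the kernel outside $S$.

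First I would write
\[
\smooth{q}(S^c) \;=\; \int_{S^c}\!\int_S q(y)\, K_h(x,y)\, dy\, dx \;=\; \int_S q(y) \left( \int_{S^c} K_h(x,y)\, dx \right) dy,
\]
using Fubini (valid since the integrand is non-negative) and the fact that $q$ is supported on $S$. The goal is then to bound the inner integral $\int_{S^c} K_h(x,y)\, dx$ sharply when $y$ is well inside $S$, and use a trivial bound near the boundary.

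To this end, split $S = S_{-\sqrt{h}} \sqcup (S \setminus S_{-\sqrt{h}})$. For any $y \in S_{-\sqrt{h}}$, the definition of $S_{-\sqrt{h}}$ gives $r_y \doteq \inf_{z \in S^c}\|y - z\| \geq \sqrt{h}$, so I can apply \Cref{lem:kernel-support-concentration} with $r = \sqrt{h}$: since $h \leq h_0 = \min(1/(C_1 k!),\, 1/t_1^2)$, we have $\sqrt{h} \geq t_1 h$ and $C_1 k!\, h^2 \leq h$, yielding
\[
\int_{S^c} K_h(x,y)\, dx \;\leq\; 1 - \int_S K_h(x,y)\, dx \;\leq\; \frac{C_1 k!}{(\sqrt{h}/h)^4} \;=\; C_1 k!\, h^2 \;\leq\; h.
\]
For $y \in S \setminus S_{-\sqrt{h}}$, I simply use the trivial bound $\int_{S^c} K_h(x,y)\, dx \leq 1$.

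Combining the two regions and using $q \leq 1/\gamma$ on $S$ together with the definition $\phi(\sqrt{h}) = \lambda(S \setminus S_{-\sqrt{h}})/V_S$:
\[
\smooth{q}(S^c) \;\leq\; h \int_{S_{-\sqrt{h}}} q(y)\, dy \;+\; \int_{S \setminus S_{-\sqrt{h}}} q(y)\, dy \;\leq\; h \;+\; \frac{1}{\gamma}\, \lambda(S \setminus S_{-\sqrt{h}}) \;=\; h \;+\; \frac{V_S}{\gamma}\, \phi(\sqrt{h}),
\]
which is the stated bound. No step is particularly tricky here; the only thing one must be careful about is verifying that the choice $r = \sqrt{h}$ in \Cref{lem:kernel-support-concentration} meets the hypotheses $r \geq t_1 h$ and $C_1 k!/(r/h)^4 \leq h$, which is exactly what the definition of $h_0$ guarantees.
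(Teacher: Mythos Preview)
Your proof is correct and follows essentially the same approach as the paper: apply Fubini, split $S$ into $S_{-\sqrt{h}}$ and its complement in $S$, use \Cref{lem:kernel-support-concentration} with $r=\sqrt{h}$ on the interior piece and the trivial bound on the boundary piece, then invoke $q\leq 1/\gamma$ and the definition of $\phi$. Your extra remark verifying that $r=\sqrt{h}$ satisfies the hypotheses of \Cref{lem:kernel-support-concentration} via the definition of $h_0$ just makes explicit what the paper leaves implicit.
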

\begin{proof}
	We use the following chain of arguments:
	\begin{align*}
		\smooth{q}(S^c)  &= \int_{S^c} \smooth{q}(x) dx = \int_{S^c} \int_S q(y) K_h(x, y) dy dx  \\
		&= \int_S q(y) \int_{S^c} K_h(x,y) \, dx \, dy \\
		&= \int_{S_{-\sqrt{h}}} q(y) \int_{S^c} K_h(x,y) \, dx \, dy + \int_{S\setminus S_{-\sqrt{h}}} q(y) \int_{S^c} K_h(x,y) \, dx \, dy \\
		&\leq h + \int_{S\setminus S_{-\sqrt{h}}} q(y) \, dy
		\leq h + \frac{V_S}{\gamma} \phi(\sqrt{h}),
	\end{align*}
	where the first inequality uses $\int_{S^c} K_h(x,y) dx = 1 - \int_{S} K_h(y,x) dx$ is bounded above by $h$ if $y \in S_{-\sqrt{h}}$  (\Cref{lem:kernel-support-concentration}) or by one if $y \in S \setminus S_{-\sqrt{h}}$, and the second inequality holds from the definition of boundary functional $\phi$ and the upper bound on the density $q$.
\end{proof}

For a set $S \subseteq \R^d$, let $\KL_S(q| p) = \int_S q(x) \log \frac{p(x)}{q(x)} \, dx$. The following lemma provides a useful trick to translate between $\KL$ and $\tv$ expressions for densities that are not supported on $S$ to densities that are supported on $S$.

\begin{lemma}
	\label{lem:restrict-to-S}
	Suppose $q \in \Den$ is a density that may not be supported on $S$. Then there is a density $\bar{q}$ supported on $S$ such that
	$$
	\KL_S(q|p_\theta) = (1-q(S^c))(\KL(\bar{q}|p_{\theta}) + \log (1-q(S^c)))
	$$
	and $\tv(\bar{q}, p_0) \leq \tv(q, p_0) + q(S^c).$
\end{lemma}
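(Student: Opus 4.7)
The plan is to construct $\bar{q}$ as the natural renormalization of $q$ restricted to $S$, and then verify both identities by direct computation. Assume first that $\alpha \doteq q(S) = 1 - q(S^c) > 0$ (if $\alpha = 0$ then $\KL_S(q|p_\theta) = 0$ and the claim is vacuous; any $\bar{q}$ supported on $S$ with $\tv(\bar{q},p_0) \leq 1 = \tv(q,p_0) + q(S^c)$ works, e.g.\ $\bar{q}=p_0$ restricted and renormalized to $S$, assuming $p_0(S)>0$; otherwise a small perturbation argument suffices). Define
\[
\bar{q}(x) = \frac{q(x)}{\alpha}\,\mathbf{1}_S(x),
\]
which is clearly a density supported on $S$.

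For the KL identity, I would just substitute into the definition. Recalling that $\KL_S(q|p_\theta) = \int_S q(x) \log \frac{q(x)}{p_\theta(x)}\,dx$, a direct calculation gives
\[
\KL(\bar{q}|p_\theta) = \int_S \frac{q(x)}{\alpha} \log \frac{q(x)/\alpha}{p_\theta(x)}\,dx = \frac{1}{\alpha}\KL_S(q|p_\theta) - \log\alpha,
\]
using $\int_S q(x)\,dx = \alpha$. Rearranging yields exactly
$\KL_S(q|p_\theta) = (1-q(S^c))\bigl(\KL(\bar{q}|p_\theta) + \log(1-q(S^c))\bigr)$.

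For the TV bound, I would first compute $\tv(\bar{q}, q)$ explicitly and then apply the triangle inequality. On $S$, we have $\bar{q}(x) - q(x) = q(x)(1-\alpha)/\alpha$, so $\int_S |\bar{q}-q|\,dx = q(S)\cdot (1-\alpha)/\alpha = 1-\alpha = q(S^c)$. On $S^c$, $\bar{q}(x) - q(x) = -q(x)$, so $\int_{S^c}|\bar{q}-q|\,dx = q(S^c)$. Hence $\tv(\bar{q},q) = \tfrac{1}{2}(q(S^c) + q(S^c)) = q(S^c)$, and the triangle inequality gives
\[
\tv(\bar{q}, p_0) \leq \tv(\bar{q}, q) + \tv(q, p_0) = q(S^c) + \tv(q, p_0),
\]
as claimed.

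There is no real obstacle here; the only subtleties are the harmless degenerate case $\alpha = 0$ and being careful that the definition of $\KL_S$ uses the ratio $q/p_\theta$ inside the log (so that the pulled-out $-\log\alpha$ has the correct sign to match the factored form in the statement).
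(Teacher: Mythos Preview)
Your proof is correct and follows exactly the same approach as the paper: define $\bar{q}$ as the renormalized restriction of $q$ to $S$, verify the KL identity by direct substitution, and obtain the TV bound via the triangle inequality after computing $\tv(\bar{q},q)=q(S^c)$. In fact, you supply more detail than the paper, which simply asserts the KL equality is ``straightforward'' and states $\tv(\bar{q},q)=q(S^c)$ without the split into $S$ and $S^c$.
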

\begin{proof} We will take $\bar{q}(x) = \frac{q(x)}{q(S)} \I{x \in S}$ to be the restriction of the density $q$ to $S$. It is straightforward to see that the  equality for the KL terms hold. The TV bound follows from the triangle inequality, noting that $\tv(\bar{q}, q) = \frac{1}{2} \int_{\R^d} \abs{\bar{q}(x) - q(x)} dx = q(S^c)$.
\end{proof}

\subsection{Key Lemmas}
\label{sec:proof-of-approx-theorem}

\subsubsection{Uniform convergence of KL and TV estimators}
\label{sec:uniform-convergence}

The main result of this subsection is the following.

\begin{lemma}
	\label{lem:uniform-convergence}
	There exists an absolute constant $c_1 > 0$ such that the following holds. Suppose observations $x_1, \ldots, x_n$ are drawn i.i.d. from $p_0$, and that \Cref{assump:bounded-densities} and \Cref{assump:kernel-properties} hold. Let $\q{w}$ be as defined in \Cref{defn:qhatw}. Let $\hat{p} = \q{o}$, where $o=(1/n, \ldots, 1/n) \in \Delta_n$, be the kernel-density estimator for $p_0$  based on observations $x_1, \ldots x_n$, and let $\hat{p}_\gamma(\cdot) = \min(\max( \hat{p}(\cdot), \gamma), 1/\gamma)$ be its truncated version. With probability at least $1-\delta$, we have
	\begin{align*}
		\left| \left(\| \q{w} - p_0 \|_1 - \q{w}(S^c)\right) - \frac{1}{n} \sum_{i=1}^n \left| \frac{\q{w}(x_i)}{\hat{p}_\gamma(x_i)} - 1 \right| \right| 
		&\leq \frac{c_1 u}{\gamma^3} \left(\frac{c_0}{\sqrt{n} h^{d}} + \sqrt{\frac{2}{n}\log \frac{6}{\delta}} + \tv(\hat{p}, p_0) \right) \hspace{1em} \text{ and }  \\
		\left| \KL_S(\q{w} | p_\theta) - \frac{1}{n} \sum_{i=1}^n \frac{\q{w}(x_i)}{\hat{p}_\gamma(x_i)} \log \frac{\q{w}(x_i)}{p_\theta(x_i)} \right| 
		&\leq  \left(\frac{c_1 u}{\gamma^3} \log \frac{u}{\ell \gamma}\right) \left(\frac{c_0}{\sqrt{n} h^d} + \sqrt{ \frac{2}{n}\log \frac{6}{\delta}} + \tv(\hat{p}, p_0) \right).
	\end{align*}
	uniformly over all $w \in \Wcal_{\ell, u} = \{ w \in \Delta_n \, : \,  \q{w}(x_i) \in [\ell, u] \text{ for all } i \in [n] \}$, where thresholds $u, l$ are chosen so that $u\geq \max(1, \gamma e) \geq 1 \geq \ell$ and $\|f\|_1 = \int |f(x)| dx$.
\end{lemma}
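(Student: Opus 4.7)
I plan to handle both estimators through a common template: (i) rewrite each target population quantity as an expectation under $p_0$ restricted to $S$; (ii) decompose the empirical--population gap into a ``denominator replacement'' piece (swapping $\hat{p}_\gamma$ for $p_0$) and a ``uniform empirical process'' piece (evaluated with the true $p_0$); and (iii) bound each piece separately. The first key observation is that by \Cref{assump:bounded-densities}, $p_0$ is supported on $S$ with $p_0 \geq \gamma$ on $S$, so
\[
\|\q{w} - p_0\|_1 - \q{w}(S^c) \;=\; \int_S |\q{w}-p_0|\,dx \;=\; \E_{X \sim p_0}\!\left[\left|\tfrac{\q{w}(X)}{p_0(X)} - 1\right|\right],
\]
and similarly $\KL_S(\q{w}|p_\theta) = \E_{X \sim p_0}\!\left[\tfrac{\q{w}(X)}{p_0(X)} \log \tfrac{\q{w}(X)}{p_\theta(X)}\right]$.

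\textbf{Denominator replacement.} Using the algebraic identity $\tfrac{1}{\hat{p}_\gamma} - \tfrac{1}{p_0} = \tfrac{p_0 - \hat{p}_\gamma}{p_0 \hat{p}_\gamma}$ together with $\hat{p}_\gamma, p_0 \geq \gamma$, I will show the pointwise bounds, valid for $w \in \Wcal_{\ell,u}$:
\[
\Big|\,|\tfrac{\q{w}(x_i)}{\hat{p}_\gamma(x_i)} - 1| - |\tfrac{\q{w}(x_i)}{p_0(x_i)} - 1|\,\Big| \;\leq\; \tfrac{u}{\gamma^2}\,|\hat{p}_\gamma(x_i) - p_0(x_i)|,
\]
and an analogous estimate for the KL summand that carries an extra $\log(u/(\ell \gamma))$ factor from bounding $|\log(\q{w}/p_\theta)|$ on $[\ell,u]$ and using $p_\theta \geq \gamma$. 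Averaging over $i$, the replacement error is controlled by $\tfrac{1}{n}\sum_i |\hat{p}_\gamma(x_i) - p_0(x_i)|$. Since truncation into $[\gamma, 1/\gamma]$ is an $L^1$-contraction (as $p_0 \in [\gamma,1/\gamma]$), we have $\tv(\hat{p}_\gamma, p_0) \leq \tv(\hat{p}, p_0)$, and hence $\E_{p_0}|\hat{p}_\gamma - p_0| \leq (2/\gamma)\tv(\hat{p},p_0)$. A Hoeffding/McDiarmid bound on this bounded ($\leq 2/\gamma$) quantity then yields the $\sqrt{(2/n)\log(6/\delta)}$ and $\tv(\hat{p},p_0)$ contributions to the final bound.

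\textbf{Uniform empirical process.} It remains to bound, uniformly over $w \in \Wcal_{\ell,u}$,
\[
\sup_{w \in \Wcal_{\ell,u}}\left|\tfrac{1}{n}\sum_i f_w(x_i) - \E_{p_0}[f_w(X)]\right|,
\]
for $f_w \in \{|\q{w}/p_0 - 1|,\ (\q{w}/p_0)\log(\q{w}/p_\theta)\}$. The map $w \mapsto \q{w}$ is linear, so $\{\q{w} : w \in \Delta_n\}$ is the convex hull of the $n$ functions $K_h(\cdot, x_j)$, each bounded by $c_0/h^d$ (by \Cref{assump:kernel-properties}). Post-composition with the Lipschitz map $z \mapsto |z/p_0(x) - 1|$ on $[\ell,u]$ (Lipschitz constant $1/\gamma$) and with $z \mapsto (z/p_0(x))\log(z/p_\theta(x))$ (Lipschitz constant $O(\gamma^{-1}\log(u/(\ell\gamma)))$ on $[\ell,u]$) preserves this structure. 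Symmetrization plus the Ledoux--Talagrand contraction principle then reduces the expected supremum to the Rademacher complexity of the base class $\{K_h(\cdot, x_j)/p_0(\cdot)\}$, which is $O(c_0/(\sqrt{n}\,h^d \gamma))$. Finally, a bounded-differences (McDiarmid) step converts the expected supremum into a high-probability bound, contributing the $\sqrt{(2/n)\log(6/\delta)}$ term. Summing the three errors and absorbing constants into $c_1$ delivers the stated inequalities.

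\textbf{Main obstacle.} The delicate step is the uniform control in Step~4: one must handle division by $p_0$ and the $\log$ factor (the source of the $\log(u/(\ell\gamma))$ in the KL bound) without losing factors of $h$ or introducing extraneous $\log n$ factors, and simultaneously manage the fact that $\hat{p}$, and hence the class of ``data-dependent denominators,'' is itself random. The convex-hull structure of $\{\q{w}\}$ together with the pointwise boundedness $\|K_h\|_\infty \leq c_0/h^d$ is what delivers the crucial $c_0/(\sqrt{n}\,h^d)$ rate rather than something worse.
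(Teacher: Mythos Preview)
Your overall architecture matches the paper's proof closely: the same rewriting of the population quantities as $p_0$-expectations on $S$, the same denominator-swap followed by a uniform empirical-process bound, and the same use of Lipschitz contraction to pass from $\q{w}$ to the composed functionals. Two steps, however, do not go through as written.

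\textbf{The Rademacher bound.} You assert that ``convex-hull structure together with pointwise boundedness $\|K_h\|_\infty \le c_0/h^d$'' yields Rademacher complexity $O(c_0/(\sqrt{n}\,h^d))$ for the base class $\{K_h(\cdot,x_j):j\in[n]\}$. That is not true: the convex hull has the same Rademacher complexity as its extreme points, but the extreme points form a class of $n$ functions bounded by $c_0/h^d$, and Massart's finite-class lemma then only gives $(c_0/h^d)\sqrt{(2\log 2n)/n}$, picking up exactly the $\sqrt{\log n}$ factor you flag as problematic. The paper removes this factor by using part~(iv) of \Cref{assump:kernel-properties} (positive semi-definiteness): it embeds the data-dependent class $\{\q{w}:w\in\Delta_n\}$ into the \emph{fixed} RKHS ball $\mathcal{F}_a$ with $a=\sqrt{c_0/h^d}$ (since $\sum_{i,j} w_i w_j K_h(x_i,x_j)\le c_0/h^d$ for $w\in\Delta_n$), and then applies the Bartlett--Mendelson bound $\Rad(\mathcal{F}_a)\le 2a\sqrt{c/n}$. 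Your proposal never invokes the PSD assumption, so as stated it cannot reach the target rate.

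\textbf{The denominator replacement.} You propose to control $\tfrac{1}{n}\sum_i|\hat p_\gamma(x_i)-p_0(x_i)|$ by ``Hoeffding/McDiarmid'' around $\E_{p_0}|\hat p_\gamma-p_0|\le (2/\gamma)\tv(\hat p,p_0)$. Hoeffding does not apply because the summands are not independent ($\hat p_\gamma$ depends on all the $x_j$), and McDiarmid concentrates around the fully deterministic mean $\E_{x_{1:n}}[\cdot]$, not around the sample-conditional quantity $\int|\hat p_\gamma-p_0|\,p_0$ (which is random since $\hat p$ is). So you would not recover the random $\tv(\hat p,p_0)$ term on the right-hand side. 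The paper sidesteps this by recognizing that $\hat p=\q{o}$ with $o\in\Delta_n$, so the \emph{same} uniform RKHS-based bound (its \Cref{lem:uniform-bounds-over-qclass}) applied at $w=o$ directly yields $\tfrac{1}{n}\sum_i|p_0(x_i)-\hat p(x_i)|\wedge\gamma^{-1}\le \int|p_0-\hat p|\,p_0 + O(c_0/(\sqrt n\,h^d)+\ldots)$ with high probability, and the integral is then bounded by $(2/\gamma)\tv(\hat p,p_0)$.

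Both gaps have the same fix: use the RKHS embedding rather than the bare convex-hull argument.
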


To establish this result, we will use techniques from uniform law of large numbers over a class of functions $\Fc \subseteq \{ f: S \rightarrow \R \}$. For a fixed class $\Fc$, given i.i.d. samples $x_1, \ldots, x_n$ from density $p_0$ supported on $S \subseteq \cX$, we consider the empirical Rademacher complexity (e.g.~\cite{boucheron2005theory}) of the function class $\Fc$ defined as:
\begin{equation}
	\Rad(\Fc) = \E_{\epsilon} \left[ \sup_{f \in \Fc} \left| \frac{1}{n} \sum_{i=1}^n \epsilon_i f(x_i)\right| \right]
\end{equation}
where the expectation is over Rademacher random variables $\epsilon_1, \ldots \epsilon_n \iid \operatorname{Uniform}(\{-1,1\})$. The Rademacher complexity is a measure of richness of a function class that can be used to obtain a uniform law of large numbers result like the following standard result.

\begin{lemma}[c.f. Theorem~3.2 of \citep{boucheron2005theory}]
	\label{lem:bbl-rademacher-bounds}
	Let $\Fc \subset \{ f: S \rightarrow [-a,a] \}$ be a fixed class of bounded functions and suppose $\delta > 0$. If $x_1, \ldots, x_n \sim p_0$, then with probability at least $1-\delta$
	\[ 
	\sup_{f \in \Fc}\left| \frac{1}{n} \sum_{i=1}^n f(x_i) - \int f(x) p_0(x) dx \right| \leq 2 \Rad(\Fc) + a\sqrt{ \frac{2}{n} \log \frac{2}{\delta}}. 
	\]
\end{lemma}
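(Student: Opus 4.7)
The plan is to prove this standard bound via the two-step recipe of concentration plus symmetrization. Let $Z = \sup_{f \in \Fc}\left| \frac{1}{n}\sum_{i=1}^n f(x_i) - \int f \, dp_0 \right|$, so that the goal is to control $Z$ with high probability. The first step is to show that $Z$ concentrates around its expectation, and the second is to bound $\E Z$ by $2\Rad(\Fc)$.

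For the concentration step, I would apply McDiarmid's bounded-differences inequality to $Z$ viewed as a function of the i.i.d.\ sample $(x_1,\ldots,x_n)$. Because $\|f\|_\infty \leq a$, replacing a single coordinate $x_i$ by another point changes each summand $\frac{1}{n}f(x_i)$ by at most $2a/n$, and hence changes $Z$ by at most $2a/n$ (using that the supremum of a family of $c$-Lipschitz maps is $c$-Lipschitz). Plugging $c_i = 2a/n$ into McDiarmid gives $\Pr(Z - \E Z \geq t) \leq \exp\!\bigl(-nt^2/(2a^2)\bigr)$; choosing $t = a\sqrt{(2/n)\log(2/\delta)}$ yields the residual term in the stated bound, with probability at least $1-\delta$.

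For bounding $\E Z$, I would use the standard symmetrization trick. Let $(x_1',\ldots,x_n')$ be an independent ghost sample drawn from $p_0$, so that $\int f\,dp_0 = \E[\tfrac{1}{n}\sum f(x_i')]$. By Jensen's inequality applied to the inner expectation,
\[
\E Z \;=\; \E_x \sup_{f \in \Fc}\Bigl|\E_{x'}\Bigl[\tfrac{1}{n}\sum_{i=1}^n (f(x_i)-f(x_i'))\Bigr]\Bigr| \;\leq\; \E_{x,x'}\sup_{f\in\Fc}\Bigl|\tfrac{1}{n}\sum_{i=1}^n(f(x_i)-f(x_i'))\Bigr|.
\]
The summands $f(x_i)-f(x_i')$ are symmetric in distribution, so introducing i.i.d.\ Rademacher variables $\epsilon_i$ leaves the joint law invariant; applying the triangle inequality and noting that $(x_i)$ and $(x_i')$ have the same distribution then gives $\E Z \leq 2\Rad(\Fc)$. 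Combining this with the McDiarmid step yields the claim.

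I do not expect a real obstacle here: this lemma is classical, stated almost verbatim as Theorem~3.2 of \citet{boucheron2005theory}, and both ingredients (McDiarmid and the symmetrization identity) are off-the-shelf. The only minor care points are (i) justifying the bounded-differences constant in terms of $a$, using only $\|f\|_\infty \leq a$ (no measurability or separability issues since $\Fc$ is a fixed class and we work with a sup implicitly through a countable dense subset if needed), and (ii) handling the absolute value throughout, which is harmless since $\epsilon_i$ and $-\epsilon_i$ are equidistributed and the supremum of $|\cdot|$ survives the triangle inequality used after symmetrization.
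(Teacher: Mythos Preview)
The paper does not supply its own proof of this lemma; it simply records the statement and defers to Theorem~3.2 of \citet{boucheron2005theory}. Your McDiarmid-plus-symmetrization argument is exactly the standard proof behind that citation, and it is correct. One small point worth tightening: as written, your symmetrization step yields $\E Z \le 2\,\E_x[\Rad(\Fc)]$, the \emph{expected} Rademacher complexity, whereas in this paper $\Rad(\Fc)$ denotes the \emph{empirical} quantity (expectation over the Rademacher signs only, conditional on $x_{1:n}$). Closing that gap requires one additional McDiarmid application to $\Rad(\Fc)$ itself, and a union bound over the two concentration events is what produces the $\log(2/\delta)$ rather than $\log(1/\delta)$. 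This distinction is immaterial for how the lemma is actually used downstream here, since every invocation (via \Cref{lem:kernel-rademacher-bounds}) upper-bounds $\Rad(\Fc)$ by a deterministic quantity anyway.
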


Thus a typical task to obtain uniform laws of large numbers is to obtain upper bounds on $\Rad(\Fc)$ for various function classes $\Fc$. To this end, the following results will be useful: 

\begin{lemma}[c.f. Theorem~3.3 of \citep{boucheron2005theory}]
\label{lem:bbl-rademacher-lipschitz}
Suppose $g: S \rightarrow \R$ is a bounded function, $\Fc \subset \{ f: S \rightarrow \R \}$, and $\psi: \R \rightarrow \R$ is a function with Lipschitz constant $L$. Then
\begin{align*}
    \Rad( \{ x \mapsto g(x) f(x) : f \in \Fc \}) &\leq \Rad(\Fc) \sup_{x \in S} |g(x)| \\
    \Rad( \{ x \mapsto  \psi(f(x) ): f \in \Fc \}) &\leq L  \Rad(\Fc) + \frac{|\psi(0)|}{\sqrt{n}}.
\end{align*}
\end{lemma}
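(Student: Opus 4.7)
The plan is to prove the two inequalities separately via short symmetrization arguments: the first by a convex-combination decomposition and the second by a reduction to the Ledoux--Talagrand contraction principle.

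For the first inequality, set $M = \sup_{x \in S} |g(x)|$ (the claim being trivial when $M = 0$) and introduce the functional
$$\Phi(y_1, \ldots, y_n) = \sup_{f \in \Fc}\left|\frac{1}{n}\sum_{i=1}^n y_i f(x_i)\right|,$$
which is convex in $y$ (being the supremum of absolute values of linear functionals) and positively homogeneous. Because $(g(x_1)/M, \ldots, g(x_n)/M) \in [-1,1]^n$ lies in the convex hull of $\{-1,1\}^n$, I can write this vector as $\sum_\alpha \lambda_\alpha s^{(\alpha)}$ with $\lambda_\alpha \ge 0$, $\sum_\alpha \lambda_\alpha = 1$, and sign vectors $s^{(\alpha)} \in \{-1,1\}^n$. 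Applying convexity of $\Phi$ coordinate-wise and taking expectations over the Rademacher variables yields
$$\E_\epsilon \Phi\!\left(\frac{\epsilon_1 g(x_1)}{M}, \ldots, \frac{\epsilon_n g(x_n)}{M}\right) \leq \sum_\alpha \lambda_\alpha \, \E_\epsilon \Phi(\epsilon_1 s^{(\alpha)}_1, \ldots, \epsilon_n s^{(\alpha)}_n) = \Rad(\Fc),$$
where the final equality uses that $(\epsilon_i s^{(\alpha)}_i)_{i=1}^n$ has the same joint distribution as $(\epsilon_i)_{i=1}^n$ for each $\alpha$. Multiplying through by $M$ and invoking homogeneity of $\Phi$ delivers the first bound.

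For the second inequality, I would decompose $\psi(t) = \tilde\psi(t) + \psi(0)$ with $\tilde\psi(t) = \psi(t) - \psi(0)$, so that $\tilde\psi$ is $L$-Lipschitz and vanishes at $0$. The triangle inequality applied inside the supremum gives
$$\Rad(\{\psi \circ f : f \in \Fc\}) \leq \E_\epsilon \sup_{f \in \Fc}\left|\frac{1}{n}\sum_{i=1}^n \epsilon_i \tilde\psi(f(x_i))\right| + |\psi(0)|\cdot \E_\epsilon\left|\frac{1}{n}\sum_{i=1}^n \epsilon_i\right|.$$
The first term is bounded by $L\,\Rad(\Fc)$ by the classical Ledoux--Talagrand contraction principle applied to $\tilde\psi$, while the second term is bounded by $|\psi(0)|/\sqrt{n}$ via Jensen's inequality applied to $t \mapsto t^2$, using $\E(n^{-1}\sum_i \epsilon_i)^2 = 1/n$.

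The main obstacle is reconciling constants when passing from the one-sided Ledoux--Talagrand contraction (which is usually stated without outer absolute values) to the two-sided version used above; depending on the exact formulation, an extra universal factor (typically $2$) can appear in the first term of the second inequality. I would absorb any such universal constants into the stated bound, which is the form given in Theorem 3.3 of Boucheron et al. (2005) cited here and the version required for downstream use in \Cref{lem:uniform-convergence}.
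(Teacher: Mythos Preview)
Your proposal is correct. For the second inequality, you and the paper take essentially the same route: decompose $\psi = \tilde\psi + \psi(0)$, invoke contraction for the centered piece, and bound the constant piece by $\E_\epsilon|n^{-1}\sum_i\epsilon_i| \le n^{-1/2}$ via Jensen. Your version is in fact slightly cleaner: the paper writes $\Rad(\{\tilde\psi\circ f + \psi(0)\boldsymbol{1}\}) = \Rad(\{\tilde\psi\circ f\}) + \Rad(\{\psi(0)\boldsymbol{1}\})$, which is only an inequality in general; your triangle-inequality argument avoids this slip.

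The first inequality is where the approaches genuinely differ. The paper does not prove it at all---it simply cites Theorem~3.3 of Boucheron et al.\ for both claims when $\psi(0)=0$. You instead give a self-contained argument: express $(g(x_i)/M)_i \in [-1,1]^n$ as a convex combination of sign vectors, apply convexity of $\Phi$, and use that $(\epsilon_i s_i)_i \stackrel{d}{=} (\epsilon_i)_i$ for any fixed sign vector $s$. This is a neat and elementary proof that makes the inequality transparent rather than a black-box citation. As for your worry about the constant in the two-sided Ledoux--Talagrand contraction: the paper handles this exactly as you suggest, by deferring to the cited theorem for the stated form, so there is nothing further to reconcile.
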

\begin{proof}
	The above statements with $\psi(0)=0$ follow from Theorem~3.3 of \citep{boucheron2005theory}. To handle the case $\psi(0) \neq 0$, define $\tilde{\psi} = \psi - \psi(0)\boldsymbol{1}$ where $\boldsymbol{1}$ denotes the function taking the constant value one, and note that 
	\begin{equation*}
		\begin{aligned}
			\Rad( \{ \psi \circ f : f \in \Fc \}) &= \Rad( \{ \tilde{\psi} \circ f + \psi(0) \boldsymbol{1} : f \in \Fc \}) \\
			&= \Rad( \{ \tilde{\psi} \circ f : f \in \Fc \}) + \Rad(\{\psi(0)\boldsymbol{1}\}) \\
			&\leq L \Rad(\Fc) + |\psi(0)| \Rad(\boldsymbol{1})
		\end{aligned}
	\end{equation*}
	where we have used that $\Rad(\{\boldsymbol{1}\}) \leq \E_\epsilon \abs{\frac{1}{n}\sum_{i=1}^n \epsilon_i} \leq \sqrt{\E_\epsilon \left(\frac{1}{n}\sum_{i=1}^n \epsilon_i \right)^2} = n^{-1/2}.$
\end{proof}

\begin{lemma}[Lemma~22 of \citep{bartlett2002rademacher}]
	\label{lem:kernel-rademacher-bounds}
	Let $k : \cX \times \cX \rightarrow \R$ be a positive definite kernel satisfying $k(x,x)\leq c$ for all $x \in \cX$. Then, for the function class 
\[ \Fc_a = \bigg\{ \cdot \mapsto \sum_{i=1}^m w_i k(\cdot, y_i) \, : \, m \geq 1, y_i \in \cX, w_i \in \R, \text{ and } \sum_{i,j} w_i w_j k(y_i, y_j) \leq a^2 \bigg\},  \]
	we have $\Rad(\Fc_a) \leq 2a \sqrt{\frac{c}{n}}$. \end{lemma}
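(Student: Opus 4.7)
The plan is to identify $\Fc_a$ with a subset of the norm ball of radius $a$ in the reproducing kernel Hilbert space (RKHS) $\mathcal{H}_k$ associated with the positive definite kernel $k$. By the Moore--Aronszajn construction, $\mathcal{H}_k$ is the completion of the space of finite linear combinations $\sum_i w_i k(\cdot, y_i)$ under the inner product $\langle \sum_i w_i k(\cdot, y_i), \sum_j v_j k(\cdot, z_j)\rangle_{\mathcal{H}_k} = \sum_{i,j} w_i v_j k(y_i, z_j)$, and it satisfies the reproducing property $f(x) = \langle f, k(\cdot, x)\rangle_{\mathcal{H}_k}$ for every $f \in \mathcal{H}_k$ and every $x \in \cX$. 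In particular, every $f \in \Fc_a$ has squared RKHS norm $\|f\|_{\mathcal{H}_k}^2 = \sum_{i,j} w_i w_j k(y_i, y_j) \leq a^2$, so $\Fc_a$ lies inside the closed $a$-ball of $\mathcal{H}_k$.

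With this identification, the proof proceeds in four steps. First, use the reproducing property to rewrite the Rademacher average as an RKHS inner product: $\frac{1}{n}\sum_{i=1}^n \epsilon_i f(x_i) = \langle f, g_\epsilon\rangle_{\mathcal{H}_k}$, where $g_\epsilon \doteq \frac{1}{n}\sum_{i=1}^n \epsilon_i k(\cdot, x_i) \in \mathcal{H}_k$. Second, apply Cauchy--Schwarz in $\mathcal{H}_k$ and take the supremum over the $a$-ball to obtain $\sup_{f \in \Fc_a}|\langle f, g_\epsilon\rangle_{\mathcal{H}_k}| \leq a\,\|g_\epsilon\|_{\mathcal{H}_k}$. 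Third, apply Jensen's inequality to the concave map $x \mapsto \sqrt{x}$ to pass the expectation inside the norm: $\E_\epsilon \|g_\epsilon\|_{\mathcal{H}_k} \leq \bigl(\E_\epsilon \|g_\epsilon\|_{\mathcal{H}_k}^2 \bigr)^{1/2}$. Fourth, expand the squared norm as $\|g_\epsilon\|_{\mathcal{H}_k}^2 = \frac{1}{n^2}\sum_{i,j}\epsilon_i \epsilon_j k(x_i, x_j)$ and use independence of the Rademacher signs, $\E[\epsilon_i \epsilon_j] = \I{i=j}$, to reduce this to the diagonal sum $\frac{1}{n^2}\sum_{i=1}^n k(x_i, x_i) \leq c/n$ via the assumption $k(x,x) \leq c$.

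Chaining these four inequalities gives $\Rad(\Fc_a) \leq a \sqrt{c/n}$, which in particular yields the stated bound $\Rad(\Fc_a) \leq 2a\sqrt{c/n}$; the factor of two is simply slack that one is free to absorb.

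I do not anticipate a serious technical obstacle. The only points requiring care are (a) verifying that the bilinear form on the pre-Hilbert space of finite kernel sums is well-defined and positive definite, which is exactly where positive-definiteness of $k$ is invoked, and (b) justifying Jensen's step, for which concavity of the square root suffices. The hypothesis $k(x,x) \leq c$ enters only at the very last step, when bounding the diagonal kernel evaluations, and the argument requires no topological structure on $\cX$ beyond what is needed to define $\mathcal{H}_k$ itself.
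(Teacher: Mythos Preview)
Your proof is correct and is essentially the standard RKHS argument; the paper does not give its own proof but simply cites this as Lemma~22 of Bartlett and Mendelson (2002), whose proof proceeds exactly along the lines you outline. Note that your chain of inequalities in fact yields the sharper bound $a\sqrt{c/n}$, so your remark that the factor of $2$ is slack is accurate.
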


We will now use the above results to bound the Monte Carlo estimation error uniformly over functions generated from a data-dependent function family $\{\q{w} :  w \in \Delta_n \}$.

\begin{lemma}
	\label{lem:uniform-bounds-over-qclass}
	Suppose bounded functions $\alpha, \beta, \eta : S \to \R$ and a Lipschitz function $\Phi: \R \to [-M, M]$ with Lipschitz constant $L$ are given, and suppose that \Cref{assump:kernel-properties} holds. Given a function $q: S \to \R$, denote $\psi(q, x) = \eta(x) \Phi(\alpha(x)q(x) + \beta(x))$. Then for any $\delta \in (0, 2/e)$, with probability $1-\delta$ it holds that:
	$$
	\sup_{w \in \Delta_n} \abs*{\frac{1}{n}\sum_{i=1}^n \psi(\q{w}, x_i) - \int \psi(\q{w}, x) p_0(x) dx } \leq \frac{C_1}{h^d\sqrt{n}} + C_2 \sqrt{\frac{2}{n}\log \frac{2}{\delta}}
	$$ 
	where $\q{w}$ is as in \Cref{defn:qhatw}, $C_1 = 4\|\eta\|_\infty \|\alpha\|_\infty L c_0$, and  $C_2 = \sqrt{2}\|\eta\|_\infty(L \|\beta\|_\infty + |\Phi(0)|) + \|\eta\|_\infty M$.
\end{lemma}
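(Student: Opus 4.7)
The central difficulty is that the class $\{\q{w} : w \in \Delta_n\}$ depends on the observations $x_1,\ldots,x_n$ at which we also evaluate the empirical mean, so we cannot apply \Cref{lem:bbl-rademacher-bounds} directly to it. The plan is to reduce to a fixed, data-independent function class by exploiting the common RKHS ball containing every $\q{w}$. Indeed, \Cref{assump:kernel-properties}(iv) asserts that $K_h$ is positive semi-definite; letting $\mathcal{H}$ denote the associated RKHS, the triangle inequality combined with $K_h(x,x) \leq c_0 h^{-d}$ gives
\[
\|\q{w}\|_{\mathcal{H}} \leq \sum_{i=1}^n w_i \|K_h(\cdot,x_i)\|_{\mathcal{H}} = \sum_{i=1}^n w_i \sqrt{K_h(x_i,x_i)} \leq \sqrt{c_0/h^d}
\]
for every $w \in \Delta_n$. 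Hence $\{\q{w} : w \in \Delta_n\} \subseteq \Fc_a$ with $a = \sqrt{c_0/h^d}$, the RKHS ball of \Cref{lem:kernel-rademacher-bounds}. The reproducing property further yields $\sup_{x}|f(x)| \leq \|f\|_{\mathcal{H}}\sqrt{K_h(x,x)} \leq c_0/h^d$ for every $f \in \Fc_a$, so the composed functions $\psi(f,\cdot) = \eta(\cdot)\Phi(\alpha(\cdot)f(\cdot)+\beta(\cdot))$ all lie in $[-M\|\eta\|_\infty, M\|\eta\|_\infty]$, and the quantity we wish to bound is dominated by the corresponding supremum over the data-independent class $\mathcal{G} = \{\psi(f,\cdot) : f \in \Fc_a\}$.

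The next step is to bound $\Rad(\mathcal{G})$ by a chain of contractions. Applying \Cref{lem:kernel-rademacher-bounds} with $c = c_0/h^d$ yields $\Rad(\Fc_a) \leq 2a\sqrt{c/n} = 2c_0/(h^d\sqrt{n})$. Using \Cref{lem:bbl-rademacher-lipschitz} in turn I would (a) multiply by $\alpha$, scaling the Rademacher complexity by $\|\alpha\|_\infty$; (b) add the singleton class $\{\beta\}$, which contributes $\Rad(\{\beta\}) \leq \|\beta\|_\infty/\sqrt{n}$ by the Cauchy--Schwarz bound $\mathbb{E}|n^{-1}\sum_i \epsilon_i \beta(x_i)| \leq \|\beta\|_\infty/\sqrt{n}$; (c) compose with the $L$-Lipschitz $\Phi$, scaling by $L$ and adding $|\Phi(0)|/\sqrt{n}$; and (d) multiply by $\eta$, scaling by $\|\eta\|_\infty$. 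Collecting these bounds gives
\[
\Rad(\mathcal{G}) \leq \frac{2 c_0 L \|\eta\|_\infty \|\alpha\|_\infty}{h^d\sqrt{n}} + \frac{\|\eta\|_\infty (L\|\beta\|_\infty + |\Phi(0)|)}{\sqrt{n}}.
\]

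Finally, I would apply \Cref{lem:bbl-rademacher-bounds} to $\mathcal{G}$ with envelope $M\|\eta\|_\infty$: with probability at least $1-\delta$,
\[
\sup_{w \in \Delta_n}\left|\tfrac{1}{n}\sum_{i=1}^n \psi(\q{w},x_i) - \int \psi(\q{w},x)p_0(x)\,dx\right| \leq 2\Rad(\mathcal{G}) + M\|\eta\|_\infty \sqrt{(2/n)\log(2/\delta)}.
\]
For $\delta \leq 2/e$ we have $\log(2/\delta) \geq 1$, so $2/\sqrt{n} = \sqrt{2}\sqrt{2/n} \leq \sqrt{2}\sqrt{(2/n)\log(2/\delta)}$, which lets the two $n^{-1/2}$ terms combine into $C_2\sqrt{(2/n)\log(2/\delta)}$ with the stated $C_2$, while the RKHS-ball term gives $C_1/(h^d\sqrt{n})$ with the stated $C_1$. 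The main obstacle is the opening reduction: the RKHS ball must be simultaneously large enough to contain every data-dependent $\q{w}$ and small enough in Rademacher complexity to achieve the $(h^d\sqrt{n})^{-1}$ rate, and this balance is enabled precisely by the positive semi-definiteness of $K_h$ together with the uniform bound $K_h(x,x) \leq c_0/h^d$.
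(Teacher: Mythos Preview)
Your proposal is correct and follows essentially the same approach as the paper: embed the data-dependent class $\{\q{w}:w\in\Delta_n\}$ into the fixed RKHS ball $\Fc_a$ with $a=\sqrt{c_0/h^d}$, bound $\Rad(\Fc_a)$ via \Cref{lem:kernel-rademacher-bounds}, push through the composition $\eta\cdot\Phi(\alpha\cdot+\beta)$ using \Cref{lem:bbl-rademacher-lipschitz}, and finish with \Cref{lem:bbl-rademacher-bounds}. Your verification that $\q{w}\in\Fc_a$ via the triangle inequality in $\mathcal{H}$ is a minor variant of the paper's Cauchy--Schwarz argument $K_h(x_i,x_j)\le\sqrt{K_h(x_i,x_i)K_h(x_j,x_j)}$, and your explicit justification of the $\sqrt{2}$ in $C_2$ (using $\delta\le 2/e$ to absorb the $2/\sqrt{n}$ term into the deviation term) spells out a step the paper leaves implicit.
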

\begin{proof}
	Using $\hat{\Qcal}$ to denote the data dependent class of functions $\{\q{w} \mid w \in \Delta_n \}$, and taking $\kappa = K_h$, $c=c_0/h^d$, and $a=\sqrt{c}$ in \Cref{lem:kernel-rademacher-bounds}, let us first note that $\hat{\Qcal} \subseteq \Fc_a$. Indeed, this holds since for any $w = (w_1, \ldots, w_n) \in \Delta_n$ we have $\q{w}(\cdot) = \sum_{i=1}^n K_h(\cdot, x_i) w_i$ and 
	$$
	\sum_{i,j=1}^n w_i w_j K_h(x_i,x_j) \leq c \sum_{i=1}^n \sum_{j=1}^n w_i w_j = c, 
	$$
	where the inequality follows from the Cauchy Schwarz bound $K_h(x, y) \leq \sqrt{K_h(x, x) K_h(y, y)}$ using the positive definiteness of the kernel $K_h$, and the bound $K_h(x,x) \leq c_0 /h^d = c$ from \Cref{assump:kernel-properties}. Further, note by \Cref{lem:kernel-rademacher-bounds} that $\Rad(\Fc_a) \leq \frac{2c_0}{h^d\sqrt{n}}$.
	
	Next, let $\Gc$ denote the fixed function class $\{x \mapsto \psi(f, x) | f \in \Fc_a\}$. Repeated application of \Cref{lem:bbl-rademacher-lipschitz} shows:
	\begin{align*}
		\Rad(\Gc) &\leq  \|\eta\|_\infty \|\alpha\|_\infty L \Rad(\Fc_a) + \frac{ \|\eta\|_\infty(L \|\beta\|_\infty + |\Phi(0)|)}{\sqrt{n}}\\
		&\leq \frac{2\|\eta\|_\infty \|\alpha\|_\infty L c_0}{h^d\sqrt{n}} + \frac{ \|\eta\|_\infty(L \|\beta\|_\infty + |\Phi(0)|)}{\sqrt{n}}.
	\end{align*}
	
	Thus we can complete the proof by an application of \Cref{lem:bbl-rademacher-bounds} for the bounded class of functions $\Fc = \Gc$, and noting the inclusion $\hat{\Qcal} \subseteq \Fc_a$.
\end{proof}

With these uniform convergence results in place, now we can prove the main result of this section:

\begin{proof}[Proof of Lemma~\ref{lem:uniform-convergence}]
	By assumption
	\[ \frac{1}{n} \sum_{i=1}^n \left| \frac{1}{p_0(x_i)} - \frac{1}{\hat{p}_\gamma(x_i)} \right| \leq \frac{1}{\gamma^2 n} \sum_{i=1}^n \left|p_0(x_i) - \hat{p}_\gamma(x_i)\right| \leq \frac{1}{\gamma^2 n} \sum_{i=1}^n (\left|p_0(x_i) - \hat{p}(x_i)\right| \wedge \gamma^{-1}), 
	\]
	where the last inequality uses the fact that $\abs{p_0(\cdot) - \hat{p}_\gamma(\cdot)} \leq \min(1/\gamma, \abs{p_0(\cdot) - \hat{p}(\cdot)})$ since $p_0(\cdot), \hat{p}_\gamma(\cdot) \in [1/\gamma, \gamma]$. Recall $\hat{p} = \q{o}$, and hence we may apply \Cref{lem:uniform-bounds-over-qclass} with $\Phi(x)=|x|\wedge \gamma^{-1}$, $\beta = p_0$, $\alpha(\cdot)= \eta(\cdot) = 1$, and $w=o$, to show with probability $1-\delta/3$:
	\begin{align*}
		\frac{1}{n}\sum_{i=1}^n \left|p_0(x_i) - \hat{p}(x_i)\right|\wedge \gamma^{-1}
		&\leq \int (\gamma^{-1} \wedge |p_0(x) - \hat{p}(x)|) p_0(x) dx + \left(\frac{4 c_0 }{ h^d \sqrt{n}} +   \frac{3}{\gamma}\sqrt{\frac{2}{n} \log \frac{6}{\delta}}\right) \\
		&\leq \frac{2}{\gamma}\tv(\hat{p}, p_0)  + \left(\frac{4 c_0 }{ h^d \sqrt{n}} +  \frac{3}{\gamma}\sqrt{\frac{2}{n} \log \frac{6}{\delta}}\right) = \frac{\nu}{\gamma},
	\end{align*}
	where $\nu = 2\tv(\hat{p}, p_0) + \frac{4 \gamma c_0 }{h^d \sqrt{n}} +  3\sqrt{\frac{2}{n}  \log \frac{6}{\delta}}$.

Turning to the $\ell_1$-bound, for any $w \in \Wcal_\gamma$, we have
	\begin{align*}
		\left| \frac{1}{n} \sum_{i=1}^n \left| \frac{\q{w}(x_i)}{\hat{p}_\gamma(x_i)} - 1 \right| - \frac{1}{n} \sum_{i=1}^n \left| \frac{\q{w}(x_i)}{{p}_0(x_i)} - 1 \right| \right|
		&\leq \frac{1}{n} \sum_{i=1}^n \left| \left| \frac{\q{w}(x_i)}{\hat{p}_\gamma(x_i)} - 1 \right| - \left| \frac{\q{w}(x_i)}{{p}_0(x_i)} - 1 \right| \right| \\
		&\leq \frac{1}{n} \sum_{i=1}^n \left| \frac{\q{w}(x_i)}{\hat{p}_\gamma(x_i)} -  \frac{\q{w}(x_i)}{{p}_0(x_i)} \right| \leq  \frac{\nu u}{\gamma^3}.
	\end{align*}
	Here the first inequality follows from the triangle inequality, the second follows from the reverse triangle inequality, and the last follows from the upper bound on $\q{w}$ and the bounds at the beginning of the proof. 
Observe that $\int \abs{\frac{q(x)}{{p_0}(x)} - 1} p_0(x) dx = \| q - p_0 \|_1 - q(S^c)$ for any density $q \in \Den$. 
Next, we apply \Cref{lem:uniform-bounds-over-qclass} with $\Phi(x) = |x| \wedge M$, $M = u/\gamma$, $\alpha = 1/p_0$, $\beta = -1$, $\eta = 1$, noting that:
	$$
	\psi(\q{w}, \cdot) = \abs*{\frac{\q{w}(\cdot)}{p_0(\cdot)} - 1} \wedge M  = \abs*{\frac{\q{w}(\cdot)}{p_0(\cdot)} - 1}
	$$ 
	for each $w \in \Wcal_{\ell, u}$. Thus we have
	\begin{align*}
		\left|\frac{1}{n} \sum_{i=1}^n \left| \frac{\q{w}(x_i)}{{p_0}(x_i)} - 1 \right| - \| \q{w} - p_0 \|_1  - \q{w}(S^c)  \right| \leq  \frac{4c_0}{\gamma \sqrt{n} h^{d}} + \left(3 + \frac{u}{\gamma}\right)\sqrt{ \frac{2}{n} \log \frac{6}{\delta}},
	\end{align*}
	with probability at least $1-\delta/3$ for all $w \in \Wcal_{\ell, u}$ simultaneously.

	Turning to the KL-bound, a similar chain of reasoning gives us
	\begin{align*}
		\left| \frac{1}{n} \sum_{i=1}^n \frac{q_w(x_i)}{\hat{p}(x_i)} \log \frac{q_w(x_i)}{p_\theta(x_i)} -  \frac{1}{n} \sum_{i=1}^n \frac{q_w(x_i)}{p_0(x_i)} \log \frac{q_w(x_i)}{p_\theta(x_i)}   \right|
		\leq \frac{\nu u}{\gamma^3} \log \frac{u}{\gamma},
	\end{align*}
	where we have additionally used the lower bound $p_\theta(\cdot) \in [\gamma, 1/\gamma]$. Next, we will invoke \Cref{lem:uniform-bounds-over-qclass} with $\alpha=1/p_\theta, \beta = 0, \eta = p_\theta/p_0$ and $\Phi (x) = k(x) \log k(x)$ where $k(x) = \min(\max(x, u/\gamma), l \gamma)$, noting that the function $\Phi$ is bounded within radius $M=\frac{u}{\gamma} \log \frac{u}{\gamma}$ and is a Lipschitz function with Lipschitz constant bounded by $L = \sup_{x \in [l\gamma, u/\gamma]} |1 + \log x| \leq 2\log \frac{u}{l \gamma}$. Then, using the bounds $C_1 \leq \tilde{C}_1 = \frac{8c_0}{\gamma^3} \log \frac{u}{l\gamma}$ and $C_2 \leq \tilde{C}_2 = \frac{3u}{\gamma^3} \log \frac{u}{\gamma}$, \Cref{lem:uniform-bounds-over-qclass} shows that with probability $1-\delta/3$,  
	$$
	\abs*{\frac{1}{n}\sum_{i=1}^n \frac{\q{w}(x_i)}{p_0(x_i)} \log \frac{\q{w}(x_i)}{p_\theta(x_i)} - \KL_S(\q{w} | p_\theta)} \leq \frac{\tilde{C_1}}{\sqrt{n} h^d} + \tilde{C_2} \sqrt{\frac{2}{n}\log \frac{6}{\delta}}
	$$
	for each $w \in \Wcal_{\ell, u}$ simultaneously, since $\psi(\q{w}, \cdot) = \frac{\q{w}(\cdot)}{p_0(\cdot)} \log \frac{\q{w}(\cdot)}{p_\theta(\cdot)}$ and $\KL_S(\q{w}|p_\theta) =$ \\ $\int \psi(\q{w}, x) p_0(x) dx$ whenever $w \in \Wcal_{\ell, u}$.
	Finally, we obtain the lemma statement by using the union bound and putting all the display equations together, noting that $u/\gamma \geq e$ and $\gamma \leq 1$.
\end{proof}

\subsubsection{Sup-norm approximation of kernel density estimates}
\label{sec:kernele-density-estimation-bounds}

Recall our notation for the truncated probability simplex $\Delta_n^\beta = \{(w_1, \ldots, w_n) \in \Delta_n : \frac{\beta}{n} \leq w_i \leq \frac{1}{n \beta}\}$. The next lemma provides approximation results for the subset of densities $\{\q{w} : w \in \Delta_n^{\gamma^2/4}\} \subseteq \Den$ that hold with arbitrary high probability when $n$ is large enough. In particular, Item 1 shows that $\hat{p}$ approximates well the convolved density $K_h \star p_0$, Item 2 provides upper and lower bounds on the density $\q{w}$, Item 3 bounds the mass of $\q{w}$ outside the set $S$, and Item 4 shows that any convolved density of the form $K_h \star q$ can be approximated well by a density of the form $\q{w}$, whenever $q(\cdot) \in [\gamma, 1/\gamma]$ is a density supported on $S$.

\begin{lemma}
	\label{lem:existence-of-good-estimator}
	Suppose that \Cref{assump:bounded-densities} and \Cref{assump:kernel-properties} hold and $\delta \in (0,1)$ is such that $n \geq \frac{2}{\gamma^4} \log \frac{12}{\delta}$. Further suppose that $q \in \Den$ is a density supported on $S$ satisfying $\gamma \leq q(x)  \leq 1/\gamma$ for each $x \in S$. If $x_1, \ldots, x_n \sim p_0$, with probability $1-\delta$, the following items hold:
	\begin{enumerate}
		\item  $\|\hat{p} - K_h \star p_0\|_{\infty} \leq \frac{6c_0}{\sqrt{n} h^d} \sqrt{\log \frac{6}{\delta}}$. \item  $\frac{\gamma^2}{4} \left( \gamma - \frac{6c_0}{\sqrt{n} h^d} \sqrt{\log \frac{2}{\delta}} \right)\leq \q{w}(x) \leq \frac{4}{\gamma^2} \left( \gamma^{-1} + \frac{6c_0}{\sqrt{n} h^d} \sqrt{\log \frac{2}{\delta}} \right)$ uniformly over $w \in \Delta_n^{\gamma^2/2}$ and $x \in \R^d$.
		\item  $\q{w}(S^c) \leq  \frac{4}{\gamma^2} \left( h + \frac{V_S}{\gamma}\phi(\sqrt{h}) +  \sqrt{\frac{1}{2n} \log \frac{3}{\delta}} \right)$
		uniformly over $w \in \Delta_n^{\gamma^2/2}$, whenever $h \leq h_0$, where $h_0$ is as defined in \Cref{lem:kernel-support-concentration}.
		
		\item  Denote $\smooth{q} = K_h \star q$. There is a $w \in \Delta_n^{\gamma^2/2}$ such that
		$\| \q{w} - \smooth{q} \|_{\infty} \leq
		\frac{8 c_0}{\gamma^4 h^d} \sqrt{\frac{1}{n} \log \frac{12}{\delta}}$. In particular, 
		$$
		\abs{\KL_S(\q{w}|p_\theta) - \KL_S(\smooth{q}|p_\theta)} \leq \frac{C}{ h^d \sqrt{n}} \sqrt{\log \frac{12}{\delta}
		}$$ 
		where $C = \frac{8 c_0}{\gamma^4} V_S(1 + \log(\frac{u}{\gamma l}))$ and $u$ and $l$ are the upper and lower bounds on $\q{w}$ in Item 2. We take $C = \infty$ if $l  \leq 0.$
	\end{enumerate}
\end{lemma}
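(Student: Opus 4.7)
All four items will be produced on a common high-probability event of probability at least $1-\delta$, formed as the union of a few sub-events. The unifying probabilistic tool is sup-norm concentration over the kernel class $\mathcal{F}=\{K_h(x,\cdot):x\in\R^d\}$. Writing $K_h(x,\cdot)=1\cdot K_h(\cdot,x)$ we have $\mathcal{F}\subseteq \Fc_{\sqrt{c_0/h^d}}$ in the notation of Lemma~\ref{lem:kernel-rademacher-bounds}, so $\Rad(\mathcal{F})\leq \tfrac{2c_0}{h^d\sqrt{n}}$. Combined with the uniform bound $\|K_h\|_\infty\leq c_0/h^d$, Lemma~\ref{lem:bbl-rademacher-bounds} yields a sup-norm concentration inequality of order $\tfrac{c_0}{h^d\sqrt{n}}\sqrt{\log(1/\delta)}$ for any empirical average over $\mathcal{F}$. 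Applied to $\hat{p}(\cdot)-(K_h\star p_0)(\cdot)$ this gives Item~1 after constant adjustment, on a good event $E_1$ of probability at least $1-\delta/3$.

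\textbf{Items 2 and 3.} Item~2 follows immediately on $E_1$: any $w\in\Delta_n^{\gamma^2/2}$ has $nw_i\in[\gamma^2/2,\,2/\gamma^2]$, whence $\q{w}(x)\in\bigl[\tfrac{\gamma^2}{2}\hat{p}(x),\,\tfrac{2}{\gamma^2}\hat{p}(x)\bigr]$, and Item~1 combined with the elementary bounds $0\leq (K_h\star p_0)(x)\leq \|p_0\|_\infty=1/\gamma$ produces the two-sided inequality. For Item~3, write
\[
\q{w}(S^c)=\sum_{i=1}^n w_i\Bigl(1-\int_S K_h(x,x_i)\,dx\Bigr)
\]
and split the sum according to whether $x_i\in S_{-\sqrt{h}}$. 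Lemma~\ref{lem:kernel-support-concentration} bounds the integrand by $h$ in the ``deep'' piece, and by $1$ in the ``boundary'' piece; Hoeffding then controls $\tfrac{1}{n}|\{i:x_i\in S\setminus S_{-\sqrt{h}}\}|$ around its mean, which is at most $(V_S/\gamma)\phi(\sqrt{h})$ by the upper bound on $p_0$, on a sub-event $E_3$ of probability $1-\delta/3$, and the constraint $nw_i\leq 2/\gamma^2$ yields the claimed bound.

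\textbf{Item~4, the main obstacle.} The natural candidate is the self-normalized importance-weight vector
\[
w_i=\frac{1}{nZ}\cdot\frac{q(x_i)}{p_0(x_i)},\qquad Z=\frac{1}{n}\sum_{j=1}^n\frac{q(x_j)}{p_0(x_j)},
\]
for which $q(x_i)/p_0(x_i)\in[\gamma^2,1/\gamma^2]$ and $\E\bigl[\tfrac{1}{n}\sum_i\tfrac{q(x_i)}{p_0(x_i)}K_h(x,x_i)\bigr] = (K_h\star q)(x)=\smooth{q}(x)$. Hoeffding gives $|Z-1|\leq (1/\gamma^2)\sqrt{\log(12/\delta)/(2n)}$, and the sample-size hypothesis $n\geq (2/\gamma^4)\log(12/\delta)$ forces $|Z-1|\leq 1/2$, whence $nw_i\in[2\gamma^2/3,\,2/\gamma^2]\subseteq[\gamma^2/2,\,2/\gamma^2]$ and $w\in\Delta_n^{\gamma^2/2}$. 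Setting $\hat{r}(x)=\tfrac{1}{n}\sum_i\tfrac{q(x_i)}{p_0(x_i)}K_h(x,x_i)$ and using the decomposition
\[
\q{w}(x)-\smooth{q}(x)=\tfrac{1}{Z}\bigl(\hat{r}(x)-\smooth{q}(x)\bigr)+\tfrac{1-Z}{Z}\smooth{q}(x),
\]
the first term is controlled by the same Rademacher machinery applied to the class $\{(q(\cdot)/p_0(\cdot))K_h(x,\cdot):x\in\R^d\}$: its Rademacher complexity scales by the $L^\infty$ bound $1/\gamma^2$ on $q/p_0$ (Lemma~\ref{lem:bbl-rademacher-lipschitz}), and with uniform bound $c_0/(\gamma^2 h^d)$ Lemma~\ref{lem:bbl-rademacher-bounds} yields $\|\hat{r}-\smooth{q}\|_\infty=O\bigl(\tfrac{c_0}{\gamma^2 h^d\sqrt{n}}\sqrt{\log(12/\delta)}\bigr)$. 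The second term is of strictly smaller order since $\|\smooth{q}\|_\infty\leq \|q\|_\infty\leq 1/\gamma$. The KL consequence follows by writing $\KL_S(f|p_\theta)=\int_S f\log f\,dx-\int_S f\log p_\theta\,dx$ and applying the mean-value estimate $|a\log a-a'\log a'|\leq (1+\log(u/l))|a-a'|$ on the Item~2 interval $[l,u]$, which also contains $\smooth{q}$ by the same sandwiching argument applied to the fixed weight choice representing $\smooth{q}$; combined with $|\log p_\theta|\leq \log(1/\gamma)$ and $\lambda(S)=V_S$ this yields the $V_S(1+\log(u/(\gamma l)))$ prefactor. A final union bound over $E_1$, $E_3$, and the Item~4 sub-events completes the proof.
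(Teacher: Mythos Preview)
Your proposal is correct and follows essentially the same route as the paper: Rademacher bounds on the kernel class (Lemmas~\ref{lem:bbl-rademacher-bounds} and~\ref{lem:kernel-rademacher-bounds}) for Items~1 and~4, Hoeffding for Item~3 and for the normalizer $Z$ in Item~4, the self-normalized importance weights $w_i\propto q(x_i)/p_0(x_i)$, and the Lipschitz estimate on $x\mapsto x\log x$ for the KL consequence. Two cosmetic differences: for Item~3 the paper first passes to $\q{w}(S^c)\leq (4/\gamma^2)\,\hat p(S^c)$ and applies Hoeffding to the single bounded function $y\mapsto\int_{S^c}K_h(y,x)\,dx$ (bounding its mean via Lemma~\ref{lem:smooth-support-concentration}) rather than splitting over $x_i\in S_{-\sqrt h}$; and for Item~4 the paper controls $|\q{w}-\hat r|$ and $|\hat r-\smooth q|$ separately rather than through your decomposition $\tfrac{1}{Z}(\hat r-\smooth q)+\tfrac{1-Z}{Z}\smooth q$, but both arguments are equivalent.

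One small slip: in Item~2 your stated lower bound $0\leq (K_h\star p_0)(x)$ is too weak to deliver the claimed inequality $\q{w}(x)\geq\tfrac{\gamma^2}{4}(\gamma-\cdots)$. The paper instead uses $(K_h\star p_0)(\cdot)\in[\gamma,1/\gamma]$, argued from $p_0\in[\gamma,1/\gamma]$ together with $\int K_h(y,x)\,dy=1$; you need that $\gamma$ (not $0$) on the left to recover the stated lower bound.
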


\begin{proof}[Proof of \Cref{lem:existence-of-good-estimator}]
	
	Combining \Cref{lem:bbl-rademacher-bounds} with \Cref{lem:kernel-rademacher-bounds} shows that with probability $1-\delta/3$:
	\begin{align*}
		\left|\frac{1}{n} \sum_{i=1}^n K_h(x_i, x)  - \int K_h(y, x) p_0(y) dy\right|
		&\leq  \frac{4c_0}{\sqrt{n} h^d} + \frac{c_0}{\sqrt{n} h^d}\sqrt{2\log \frac{6}{\delta}} \leq \frac{6c_0}{\sqrt{n} h^d} \sqrt{\log \frac{6}{\delta}}  .
	\end{align*}
	for all $x \in \R^d$ simultaneously.
	Noting that $\hat{p}(x) = \frac{1}{n} \sum_i K_h(x_i, x)$ and $(K_h \star p_0)(x) = \int K_h(y,x) p_0(y) dy$, the uniform bound in Item 1 follows. 
	
	To show Item 2, note for each $w \in \Delta_n^{\gamma^2/4}$ that
	\[ 
	\frac{\gamma^2 \hat{p}(x)}{4} \leq \q{w}(x) = \sum_{i=1}^n w_i K_h(x_i, x) \leq \frac{4  \hat{p}(x)}{\gamma^2} \quad \text{for each } x \in \R^d. 
	\]
	Item 2 now follows from the bound in Item 1 and as $(K_h \star p_0)(\cdot) \in [\gamma, 1/\gamma]$, since $\int K_h(y,x) dy = 1$ and $p_0(\cdot) \in [\gamma, 1/\gamma]$.

	Now we will show Item 3. For arbitrary $w \in \Delta_n^{\gamma^2/4}$ observe that 
	$$
	\q{w}(S^c) \leq \frac{4  \hat{p}(S^c)}{\gamma^2}. 
	$$
	An application of Hoeffding's inequality to the function $f(\cdot) = \int_{S^c} K_h(\cdot, x) dx \in [0,1]$ implies that with probability $1-\delta/3$,
	\begin{align*}
		\hat{p}(S^c) = \frac{1}{n}\sum_{i=1}^n f(x_i) \, dx 
		&\leq  \int f(y) p_0(y) dy dx  +  \sqrt{\frac{1}{2n} \log \frac{3}{\delta}} \\
		&= \int_{S^c} \int K_h(y,x) p_0(y) dy dx +  \sqrt{\frac{1}{2n} \log \frac{3}{\delta}}\\  
		&\leq h + \frac{V_S}{\gamma}\phi(\sqrt{h}) + \sqrt{\frac{1}{2n} \log \frac{3}{\delta}},
	\end{align*}
	where,  the second inequality follows by invoking \Cref{lem:smooth-support-concentration} with the choice $q=p_0$ whenever $h \leq h_0$.
	
	Finally to show Item 4, we choose the weights $w_i = \frac{q(x_i)}{Zp_0(x_i)}$ for $i = 1, \ldots, n$, where $Z = \sum_{i=1}^n \frac{q(x_i)}{p_0(x_i)}$. By Hoeffding's inequality, we have that with probability at least $1-\delta/6$, 
	\[|Z - n| = |Z - \E[Z]| \leq \frac{1}{\gamma^2} \sqrt{\frac{n}{2} \log \frac{12}{\delta}} . \]
	In the event that this holds, if $n \geq \frac{2}{\gamma^4} \log \frac{12}{\delta}$, then $Z \in [n/2, 3n/2]$ and we have that $w \in \Delta_n^{\gamma^2/4}$. Moreover, it implies that for any $x \in \R^d$,
	\begin{align*}
		\left|\q{w}(x) - \frac{1}{n} \sum_{i=1}^n \frac{q(x_i)}{p_0(x_i)} K_h(x_i, x)\right| &=
		\left| \sum_{i=1}^n w_i K_h(x_i, x) - \frac{1}{n} \sum_{i=1}^n \frac{q(x_i)}{p_0(x_i)} K_h(x_i, x) \right| \\
		&\leq \left| \frac{n - Z}{Z} \right| \max_{i \in [n]} \frac{q(x_i)}{p_0(x_i)} K_h(x_i, x) \\
		&\leq \frac{2}{n}|n - Z| \frac{c_0}{\gamma^2 h^d} \leq \frac{2c_0}{\gamma^4 h^d} \sqrt{\frac{1}{n} \log \frac{12}{\delta}}.
	\end{align*}
	Next, observe that \Cref{lem:bbl-rademacher-lipschitz} and \Cref{lem:kernel-rademacher-bounds} imply that the Rademacher complexity $\Rad$ of the function class $\{ \cdot \mapsto \frac{q(\cdot)}{p_0(\cdot)} K_h(\cdot, x) \, : \, x \in \R^d \}$ on domain $S$ is bounded above by $\frac{2c_0}{\gamma^2 h^d \sqrt{n}}$. Therefore by \Cref{lem:bbl-rademacher-bounds}, with probability $1-\delta/6$,  we have
	\begin{align*}
\sup_{x \in \R^d} \left|  \frac{1}{n} \sum_{i=1}^n \frac{q(x_i)}{p_0(x_i)} K_h(x_i, x) - \int_S \frac{q(y)}{p_0(y)} K_h(y,x) \, p_0(y) dy \right| \leq \frac{c_0}{\gamma^2 h^d \sqrt{n}} \left(4 +  \sqrt{2\log \frac{12}{\delta}}\right). 
	\end{align*}
	Since $p_0, q$ are both supported on $S$ with $\inf_{x \in S} p_0(x) > 0$, we have $ (K_h \star q)(x) = \int_S q(y) K_h(x,y) \, dy = \int_S \frac{q(y)}{p_0(y)} K_h(x,y) p_0(y) dy$. Hence combining the previous two display equations and using the union bound, the uniform bound in Item 4 between $\q{w}$ and $\smooth{q}$ is seen to hold with probability $1-\delta/3$. Finally to bound the differences between the KL terms, note that
	\begin{align*}
		\abs{\KL_S(\q{w}|p_\theta) - \KL_S(\smooth{q}|p_\theta)} &\leq \int_S |\q{w}(x) \log \q{w}(x) - \smooth{q}(x) \log \smooth{q}(x)| dx \\
		&\quad + \int_S |\log p_\theta(x)| |\q{w}(x) - \smooth{q}(x)| dx \\
		&\leq \|\q{w} - \smooth{q}\|_{\infty} V_S (1 + \log \frac{u}{\gamma l}),
	\end{align*}
	where the last inequality follows by using the Lipschitz continuity of the map $\Phi(x) = x \log x$ on the interval $[l, u]$, with Lipschitz constants $1 + \log(u/l)$, and using the upper bound $|\log p_\theta(\cdot)| \leq \log(1/\gamma)$.
	
	A final application of the union bound shows that Items 1 through 4 can be simultaneously satisfied with probability $1-\delta$.
\end{proof}

\subsubsection{KL and TV approximation by kernel smoothed densities}
\label{sec:okl-smoothed-approx}

 In this section we show that the terms $\KL(q|p_\theta)$ and $\tv(q,p_0)$ do not increase by much when a density $q \in \Den$ is replaced by its kernel-smoothed version $\smooth{q} = K_h \star q$ for a suitably small bandwidth parameter $h > 0$.

\begin{lemma}
	\label{lem:smoothed-densities-kl-tv-approx}
	Suppose \Cref{assump:bounded-support,assump:bounded-densities,assump:smooth-densities,assump:kernel-properties} hold. There exists a constant $\tilde{c} >0$ and $\tilde{h} > 0$ depending only on the constants in the above assumption, such that the following statement holds. Suppose that $q \in \Den$ is supported on $S$ and bounded above by $1/\gamma$ and $h \leq \tilde{h}$ then
	\begin{itemize}
		\item $\tv(\smooth{q}, p_0) \leq \tv(q, p_0) + \tilde{c} \left( h^{\alpha/2} + h + \phi(\sqrt{h})  \right)$
		\item $\KL_S(\smooth{q} | p_\theta) \leq \KL(q | p_\theta) +  \tilde{c} \left( h^{\alpha/2} + h \log \frac{1}{h} + \phi(\sqrt{h}) \log \frac{1}{\phi(\sqrt{h})} \right)$.
	\end{itemize}
\end{lemma}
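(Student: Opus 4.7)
The plan is to handle the TV and KL statements separately, each reducing to a pointwise kernel-smoothing estimate on the interior slab $S_{-\sqrt h}$ (where $\alpha$-Hölder smoothness applies and the kernel mass outside the window is exponentially small) combined with a crude boundary estimate on the annulus $S\setminus S_{-\sqrt h}$ (whose Lebesgue measure is at most $V_S\phi(\sqrt h)$ by \Cref{assump:bounded-support}) and on $S^c$ (controlled by \Cref{lem:smooth-support-concentration,lem:kernel-support-concentration}).

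For the TV bound, use the triangle inequality
\begin{equation*}
\tv(\smooth q, p_0) \;\le\; \tv(\smooth q,\smooth{p_0}) + \tv(\smooth{p_0}, p_0).
\end{equation*}
The first summand is at most $\tv(q,p_0)$, because convolution by the probability kernel $K_h$ is an $L^1$-contraction: $\|K_h\star(q-p_0)\|_1\le\|q-p_0\|_1$ by Fubini. For the second summand, bound $|\smooth{p_0}(y)-p_0(y)|$ pointwise: for $y\in S_{-\sqrt h}$, write $\smooth{p_0}(y)-p_0(y)=\int K_h(y,x)(p_0(x)-p_0(y))\,dx - p_0(y)\int_{S^c}K_h(y,x)\,dx$, split the first integral at $B(y,\sqrt h)\subseteq S$ (Hölder smoothness yields $C_\alpha h^{\alpha/2}$) and its complement (the kernel tail is $O(h^2)$ by \Cref{lem:kernel-support-concentration}), and use $\int_{S^c}K_h(y,\cdot)\le h$ on the mass-deficit term. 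For $y\in S\setminus S_{-\sqrt h}$ use the crude bound $|\smooth{p_0}(y)-p_0(y)|\le 2/\gamma$ integrated over an annulus of measure $V_S\phi(\sqrt h)$; for $y\in S^c$ note $p_0(y)=0$ and apply \Cref{lem:smooth-support-concentration} with $q=p_0$. Summing the three pieces gives $\tv(\smooth{p_0},p_0)=O(h^{\alpha/2}+h+\phi(\sqrt h))$.

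For the KL bound, decompose
\begin{equation*}
\KL_S(\smooth q|p_\theta)-\KL(q|p_\theta) \;=\; \underbrace{\Bigl[\textstyle\int_S \smooth q\log\smooth q - \int q\log q\Bigr]}_{(\mathrm{I})} \;+\; \underbrace{\Bigl[\textstyle\int q\log p_\theta - \int_S \smooth q\log p_\theta\Bigr]}_{(\mathrm{II})}.
\end{equation*}
For $(\mathrm{II})$, by Fubini $\int_S\smooth q(y)\log p_\theta(y)\,dy=\int q(x) g(x)\,dx$ with $g(x)\doteq \int_S K_h(y,x)\log p_\theta(y)\,dy$. Writing $g(x)-\log p_\theta(x)=\int_S K_h(y,x)(\log p_\theta(y)-\log p_\theta(x))\,dy - \log p_\theta(x)\int_{S^c}K_h(y,x)\,dy$ and using the $\alpha$-Hölder smoothness of $\log p_\theta$ plus the same $B(x,\sqrt h)$-versus-tail split yields $|g(x)-\log p_\theta(x)|\le C_\alpha h^{\alpha/2}+O(h\log(1/\gamma))$ on $S_{-\sqrt h}$, and $|g(x)-\log p_\theta(x)|\le 2\log(1/\gamma)$ on the annulus. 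Integrating against $q$ (bounded by $1/\gamma$) bounds $(\mathrm{II})$ by $O(h^{\alpha/2}+h+\phi(\sqrt h))$. For $(\mathrm{I})$ the key trick is the log-sum inequality applied pointwise in $y$ with $a(x)=K_h(y,x)q(x)$ and $b(x)=K_h(y,x)$, giving $\smooth q(y)\log\smooth q(y)\le\int K_h(y,x)q(x)\log q(x)\,dx$. Integrating over $y\in S$ and interchanging order yields
\begin{equation*}
\int_S \smooth q\log\smooth q \;\le\; \int q(x)\log q(x)\,A(x)\,dx, \qquad A(x)\doteq\int_S K_h(y,x)\,dy\in[0,1].
\end{equation*}
Substituting $A(x)=1-\int_{S^c}K_h(y,x)\,dy$, using $|q\log q|\le M(\gamma)\doteq \max\{e^{-1},\gamma^{-1}\log(1/\gamma)\}$, and repeating the interior/annulus split gives $(\mathrm{I})\le M(\gamma)\,V_S\,(h+\phi(\sqrt h))$. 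Adding the two estimates gives the KL conclusion (in fact with $h$ and $\phi(\sqrt h)$ in place of the looser $h\log(1/h)$ and $\phi(\sqrt h)\log(1/\phi(\sqrt h))$ in the stated bound, which is strictly stronger).

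The main obstacle is the control of $(\mathrm{I})$: since differential entropy of a convolution is not in general comparable to that of $q$, one cannot directly use a data-processing inequality, and the log-sum reduction above is the non-routine step. Everything else is a careful boundary-layer analysis built on the already established pointwise kernel-approximation machinery in \Cref{sec:useful-lemmmas}.
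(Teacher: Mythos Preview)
Your argument is correct. For the TV bound and for the cross-entropy term $(\mathrm{II})$ you follow essentially the same route as the paper (its \Cref{lem:bounded-tv} and \Cref{lem:bounded-cross-entropy}): the $L^1$-contraction of convolution, the interior/annulus split at $S_{-\sqrt h}$, and the H\"older plus kernel-tail estimate.

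The genuine difference is in term $(\mathrm{I})$. The paper (\Cref{lem:bounded-neg-entropy}) applies the Jensen/convexity bound $\Phi(\smooth q(y))\le\int K_h(y,x)\Phi(q(x))\,dx$ and integrates it over \emph{all} of $\R^d$, which forces them to account separately for $\int_{S^c}\smooth q\log(1/\smooth q)$; they do this via a Gaussian comparison on the normalized tail density $\bar q$, which introduces the extra factors $h\log(1/h)$ and $\phi(\sqrt h)\log(1/\phi(\sqrt h))$. You instead integrate the same pointwise Jensen bound only over $y\in S$ and swap the order of integration, landing directly on $\int_S q\log q\cdot A(x)\,dx$ with $A(x)=\int_S K_h(y,x)\,dy$. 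Since $|q\log q|$ is uniformly bounded (only the upper bound $q\le 1/\gamma$ is needed, together with $\sup_{t\in[0,1]}|t\log t|=e^{-1}$), the defect $1-A(x)$ is controlled by the same interior/annulus split and yields the sharper $O(h+\phi(\sqrt h))$. Your route is shorter, avoids the second-moment and Gaussian-entropy computations entirely, and produces a strictly stronger inequality; the paper's approach, on the other hand, makes explicit what happens to the smoothed mass that leaks outside $S$, which is conceptually informative even if not needed for the lemma as stated.
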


We will prove the TV and KL statements separately. We start with the TV statement.
\begin{lemma}
	\label{lem:bounded-tv}
	Suppose that $q \in \Den$ is supported on $S$ and suppose that \Cref{assump:bounded-support,assump:bounded-densities,assump:smooth-densities,assump:kernel-properties} hold. 
	Then 
	\[\tv(\smooth{q}, p_0) \leq \tv(q, p_0) 
	+ V_S \left( C_\alpha h^{\alpha/2} + h + \frac{\phi(\sqrt{h})}{\gamma}  \right)\]
	whenever $h \in (0,h_0)$.
\end{lemma}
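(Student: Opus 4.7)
The natural route is to insert the kernel-smoothed data density $\smooth{p_0} = K_h \star p_0$ as an intermediate and split
\[
\tv(\smooth{q}, p_0) \;\leq\; \tv(\smooth{q}, \smooth{p_0}) + \tv(\smooth{p_0}, p_0).
\]
The first summand is handled by the data-processing (contraction) property of convolution against a probability kernel: by Fubini and $\int K_h(x,y)\,dx = 1$,
\[
\|\smooth{q} - \smooth{p_0}\|_1 \;\leq\; \int\!\!\int K_h(x,y)\,|q(y)-p_0(y)|\,dy\,dx \;=\; \|q - p_0\|_1,
\]
so $\tv(\smooth{q}, \smooth{p_0}) \leq \tv(q, p_0)$. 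Thus it remains to bound $\tv(\smooth{p_0}, p_0)$ using only the assumptions on $p_0$ and $S$; notably, $q$ no longer appears.

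To control $\int |\smooth{p_0}(x) - p_0(x)|\,dx$, I would partition $\R^d$ into the ``deep interior'' $S_{-\sqrt{h}}$, the boundary envelope $S \setminus S_{-\sqrt{h}}$, and the outside $S^c$. On $S_{-\sqrt{h}}$, writing $\smooth{p_0}(x) - p_0(x) = \int K_h(x,y)(p_0(y)-p_0(x))\,dy$ and splitting the inner integral at the ball $B(x,\sqrt{h})$, \Cref{assump:smooth-densities} yields $|p_0(y)-p_0(x)| \leq C_\alpha h^{\alpha/2}$ on $B(x,\sqrt{h})$, while on $B(x,\sqrt{h})^c$ I apply \Cref{lem:kernel-support-concentration} (valid since $h \leq h_0$) to get $\int_{B(x,\sqrt{h})^c} K_h(x,y)\,dy \leq h$ and use $p_0 \leq 1/\gamma$. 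Integrating $x$ over $S_{-\sqrt{h}}$ (volume at most $V_S$) and invoking Fubini on the tail term, the contribution is at most $V_S C_\alpha h^{\alpha/2} + 2h$. On $S \setminus S_{-\sqrt{h}}$, I use the crude bound $|\smooth{p_0}(x) - p_0(x)| \leq 2/\gamma$ and the fact that the region has volume $V_S \phi(\sqrt{h})$. On $S^c$, since $p_0 \equiv 0$ there, the contribution equals $\smooth{p_0}(S^c)$, which is bounded by \Cref{lem:smooth-support-concentration} applied to $p_0$ itself.

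Summing the three pieces, dividing by $2$, and absorbing constants into the prefactor $V_S \geq 1$ gives the desired estimate $\tv(\smooth{p_0}, p_0) \leq V_S(C_\alpha h^{\alpha/2} + h + \phi(\sqrt{h})/\gamma)$ (up to harmless absolute constants, which can be tightened by choosing to split $|p_0(y)-p_0(x)| \leq p_0(x) + p_0(y)$ rather than $\leq 2/\gamma$ on the tail region). The main obstacle is purely bookkeeping: arranging the three regional bounds so that the $h$-term is not inflated by a $1/\gamma$ factor; this is why the tail on $S_{-\sqrt{h}}$ must be handled via Fubini against $p_0(x)+p_0(y)$ (each integrating to $1$) rather than by the naive $L^\infty$ bound $1/\gamma$, which would otherwise yield $V_S h/\gamma$ in place of the claimed $V_S h$.
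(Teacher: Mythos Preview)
Your proposal is correct and follows essentially the same route as the paper: insert $\smooth{p_0}$ via the triangle inequality, use the convolution contraction $\tv(\smooth{q},\smooth{p_0})\leq\tv(q,p_0)$, then bound $\tv(\smooth{p_0},p_0)$ by splitting $\R^d$ into $S_{-\sqrt{h}}$, $S\setminus S_{-\sqrt{h}}$, and $S^c$, invoking \Cref{lem:kernel-support-concentration} and \Cref{lem:smooth-support-concentration} exactly as you describe. One small sharpening available: on the boundary envelope the paper uses $|\smooth{p_0}-p_0|\leq 1/\gamma$ rather than $2/\gamma$, since both functions lie in $[0,1/\gamma]$; and your Fubini-against-$p_0(x)+p_0(y)$ device for the tail on $S_{-\sqrt{h}}$ is in fact slightly more careful than the paper's pointwise bookkeeping there.
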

\begin{proof}
	We first observe that for any densities $p, q \in \Den$, we have
	\begin{align*}
		\tv(\smooth{p},\smooth{q}) 
		&= \frac{1}{2}\int_{\R^d} |\Ez [q(x-hZ) - p(x-hZ)]| \, dx\\
		&\leq \frac{1}{2} \Ez \int_{\R^d}  |q(x-hZ) - p(x-hZ)| \, dx \\
		&= \frac{1}{2} \int_{\R^d} |q(x)-p(x)| \, dx = \tv(p, q),
	\end{align*}
	where $Z$ is an $\R^d$-valued random vector with probability density $x \mapsto \kappa(\|x\|_2)$, and the inequality follows from Jensen's inequality and Fubini's theorem. By the triangle inequality, we then have
	\begin{align*}
		\tv(\smooth{q}, p_0) \leq \tv(\smooth{q}, K_h \star p_0) + \tv(K_h \star p_0, p_0) \leq \tv(q, p_0) + \tv(K_h \star p_0, p_0).
	\end{align*}
	Next, note that for any $x \in S_{-\sqrt{h}}$ we have
\begin{align*}
    &\hspace{-3em}\left| p_0(x) - K_h \star p_0(x) \right| \\
    &= \left| \int_{\R^d} p_0(x) K_h(x, y)\, dy - \int_{\R^d} p_0(y) K_h(x,y)\, dy \right| \\
    &\leq \int_{\R^d} \left| p_0(x) -  p_0(y) \right| K_h(x,y) \, dy \\
    &= \int_{B(x,\sqrt{h})} \left| p_0(x) -  p_0(y) \right| K_h(x,y) \, dy 
    + \int_{B(x,\sqrt{h})^c} \left| p_0(x) -  p_0(y) \right| K_h(x,y) \, dy \\
    &\leq C_\alpha h^{\alpha/2} + h, 
\end{align*}
	where the first inequality follows from Jensen's inequality, and the second follows from \Cref{assump:smooth-densities} and \Cref{lem:kernel-support-concentration}, using the upper bound on the density $p_0(\cdot) \leq 1/\gamma$.
	
	Thus we have
	\begin{align*}
		&\hspace{-2em} \tv(K_h \star p_0, p_0) \\
		&= \frac{1}{2}\int_{\R^d} \left| p_0(x) - K_h \star p_0(x) \right| \, dx \\
		&= \frac{1}{2}\int_{S_{-\sqrt{h}}} \left| p_0(x) - K_h \star p_0(x) \right| \, dx + \frac{1}{2}\int_{\R^d \setminus S_{-\sqrt{h}}} \left| p_0(x) - K_h \star p_0(x) \right| \, dx \\ 
		&\leq \frac{\lambda(S_{-\sqrt{h}})}{2} \left( C_\alpha h^{\alpha/2} +  h \right) + \frac{1}{2}\int_{S \setminus S_{-\sqrt{h}}} \left| p_0(x) - (K_h \star p_0)(x) \right| \, dx
		+ \frac{1}{2} (K_h \star p_0)(S^c) \\
		&\leq V_S (C_\alpha h^{\alpha/2} + h + \gamma^{-1}\phi(\sqrt{h}) ).
	\end{align*}
	where the first inequality follows from the last display equation and using the fact that $p_0(x) = 0$ whenever $x \in S^c$, and the second inequality follows by using that  $|p_0(\cdot) - (K_h \star p_0)(\cdot)| \leq 1/\gamma$ and $\lambda(S\setminus S_{-r}) = V_S \phi(r)$, along with the bound $(K_h \star p_0)(S^c) \leq h + \frac{V_S}{\gamma} \phi(\sqrt{h})$ from \Cref{lem:smooth-support-concentration}.
\end{proof}

We now turn to proving the KL bound. To do so, we will use the decomposition of the KL into negative entropy and cross-entropy and bound each term separately.

\begin{lemma}
	\label{lem:bounded-neg-entropy}
	Suppose Assumptions~\ref{assump:bounded-support} and~\ref{assump:kernel-properties} hold, $q \in \Den$ has support $S$ and is bounded above by $1/\gamma$. Then 
	\begin{equation*}
		\int_S \smooth{q}(x) \log \smooth{q}(x) dx \leq \int_S q(x) \log q(x) dx 
		+ D(h)\left[ \frac{d}{2} \log(2\pi e M) + \left(\frac{d}{2} + 1 \right) \log \frac{1}{D(h)} \right],
	\end{equation*}
	where $M = 2 R^2 + 2 v_d \left(c_0 t_0^{d+3} + \frac{C_\rho}{\rho} \Gamma \left( \frac{d+3}{\rho} \right)\right)$ and $D(h) = h + \frac{V_S}{\gamma} \phi(\sqrt{h})$, whenever $h \leq \min(h_0, h_1)$ where $h_1 = \inf\{h > 0 : D(h) > 1/e\}.$
\end{lemma}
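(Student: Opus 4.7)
The plan is to split
$$
\int_S \smooth{q} \log \smooth{q}\, dx \;=\; \int_{\R^d} \smooth{q} \log \smooth{q}\, dx \;-\; \int_{S^c} \smooth{q} \log \smooth{q}\, dx,
$$
and control each piece. For the first piece, I would apply Jensen's inequality to the convex function $\phi(t) = t\log t$, noting that $K_h(x,\cdot)$ is a probability density in the second argument:
$$
\phi(\smooth{q}(x)) \;=\; \phi\!\left(\int q(y) K_h(x,y)\, dy\right) \;\leq\; \int \phi(q(y)) K_h(x,y)\, dy.
$$
Integrating in $x$, swapping integrals via Fubini, and using $\int K_h(x,y)\, dx = 1$, I obtain $\int_{\R^d} \smooth{q}\log \smooth{q}\, dx \leq \int_{\R^d} q \log q\, dx = \int_S q \log q\, dx$. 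The target inequality thus reduces to showing the stated correction upper-bounds $-\int_{S^c} \smooth{q}\log \smooth{q}\, dx$.

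For this tail contribution, set $m \doteq \smooth{q}(S^c)$; by \Cref{lem:smooth-support-concentration} (applicable since $h \leq h_0$) one has $m \leq D(h)$. Writing $\bar{q} \doteq \smooth{q}|_{S^c}/m$, a direct calculation gives the mass/entropy decomposition
$$
-\int_{S^c} \smooth{q}\log \smooth{q}\, dx \;=\; m \log(1/m) \,+\, m\, h(\bar{q}),
$$
where $h(\bar{q})$ is the differential entropy. I then bound $h(\bar{q})$ by the Gaussian maximum-entropy inequality: any density $f$ on $\R^d$ satisfying $\int \|x\|^2 f(x)\, dx \leq \sigma^2$ obeys $h(f) \leq \tfrac{d}{2}\log(2\pi e \sigma^2/d)$, obtained by extending $\bar{q}$ by zero and invoking nonnegativity of KL divergence against $\mathcal{N}(0,(\sigma^2/d) I)$. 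To supply the variance input, I estimate the second moment of $\smooth{q}$: using the bound $\|y+hz\|^2 \leq 2\|y\|^2 + 2h^2\|z\|^2$, the support $\supp q \subseteq B(0,R)$, the restriction $h \leq 1$, and \Cref{lem:bounded-second-moment}, I conclude $\int_{\R^d} \|x\|^2 \smooth{q}(x)\, dx \leq M$, whence $\int \|x\|^2 \bar{q}(x)\, dx \leq M/m$ and $h(\bar{q}) \leq \tfrac{d}{2}\log(2\pi e M/(dm))$.

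Assembling these pieces yields
$$
-\int_{S^c} \smooth{q}\log \smooth{q}\, dx \;\leq\; m\bigl(\tfrac{d}{2}+1\bigr)\log(1/m) \,+\, m\tfrac{d}{2}\log(2\pi e M/d).
$$
The final step is monotonicity: since $m \mapsto m\log(1/m)$ is nondecreasing on $[0,1/e]$ and the assumption $h \leq h_1$ forces $m \leq D(h) \leq 1/e$, both terms above are increasing in $m$ over the relevant range, so I may replace $m$ by $D(h)$; combined with $\log(2\pi e M/d) \leq \log(2\pi e M)$ (using $d \geq 1$), this yields the lemma's bound. The main obstacle is the Gaussian entropy bound for $\bar{q}$: one must be careful that the maximum-entropy argument is still valid for a density supported on $S^c$ rather than on all of $\R^d$, but this is handled cleanly by the zero extension, since the cross-entropy of $\bar{q}$ against a Gaussian depends only on its second moment.
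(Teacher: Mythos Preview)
Your proposal is correct and follows essentially the same route as the paper: Jensen for the convex map $t\mapsto t\log t$ to bound $\int_{\R^d}\smooth{q}\log\smooth{q}$ by $\int_S q\log q$, the mass/entropy decomposition of the $S^c$ piece, a Gaussian maximum-entropy bound on $h(\bar q)$ fed by the second-moment estimate $\int\|x\|^2\smooth{q}\,dx\le M$ (via \Cref{lem:bounded-second-moment} and $h\le 1$), and finally monotonicity of $m\mapsto m\log(1/m)$ on $(0,1/e]$ together with \Cref{lem:smooth-support-concentration}. The only cosmetic difference is that the paper compares $\bar q$ to $\mathcal N(\mu,\sigma^2 I)$ with $\mu=\E_{\bar q}[y]$ and $\sigma^2=\E_{\bar q}\|y-\mu\|^2$, whereas you use $\mathcal N(0,(\sigma^2/d)I)$ with the raw second moment; your bound is actually tighter by a $\log d$ which you then discard.
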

\begin{proof}
	First let us note that the integrals $\int_{\R^d} q(x) \log q(x) dx$ and $\int_{\R^d} \smooth{q}(x) \log \smooth{q}(x) dx$ are well-defined. Indeed, the first integral is well defined since $q$ is bounded above by $1/\gamma$ on a set $S$ with finite Lebesgue measure. Although, $\smooth{q}$ can also be shown to be bounded above by $1/\gamma$, its support is unbounded if the kernel $K_h$ also has unbounded support. Instead, for the existence of the second integral, it suffices to show that the second moment of $\smooth{q}$ is finite (e.g.~see \cite{ghourchian2017existence}). To show the latter, we will use the property a random variable $Y$ with density $\smooth{q}$, has the same distribution as $X + hZ$, where $X$ is random variable with density $q$, and let $Z$ be a random variable with density $x \mapsto \K(\|x\|_2)$ independent of $X$. Hence
	\begin{equation}
		\E_{Y \sim \smooth{q}}[\| Y \|_2^2] 
		= \E_{Z, X}[\|X+ h Z \|^2] 
		\leq 2 R^2 + 2 h^2 v_d \left(c_0 t_0^{d+3} + \frac{C_\rho}{\rho} \Gamma \left( \frac{d+3}{\rho} \right) \right) = M 
		\label{eq:l2bound}
	\end{equation}  
	where we have used the inequality $\|x + z \|_2^2 \leq 2\|x\|_2^2 + 2\|z \|_2^2$ along with the bound on $\E \|Z\|^2$ from \Cref{lem:bounded-second-moment} and the bound $\E\|X\|^2 \leq R^2$.

	Next, let us use the convexity of the function $\Phi:\Rnn \to \R$ given by $\Phi(x)=x \log x$ to conclude that
	\begin{equation*}
		\begin{aligned}
			\int_{\R^d} \smooth{q}(y) \log \smooth{q}(y) dy 
			&= \int_{\R^d} \Phi(\Ez[q(y-hZ)])  dy
			\overset{(i)}{\leq} \int \Ez [\Phi(q(y-hZ))] dy\\
			&\overset{(ii)}{=} \Ez \int_{\R^d} \Phi(q(y-hZ)) dy \overset{(iii)}{=} \int_S \Phi(q(u)) du = \int_S q(u) \log q(u) du
		\end{aligned}
	\end{equation*}
	where we have used Jensen's inequality in Step (i) and Fubini's theorem in Step (ii). In Step (iii) we used the change of variables $u=y-hZ$ as well as the convention that $\Phi(0) = 0$.
	
	Now observe that we can decompose the negative entropy of $\smooth{q}$ as
	\begin{align*}
		\int_{S} \smooth{q}(y) \log \smooth{q}(y) dy  
		&= \int_{\R^d} \smooth{q}(y) \log \smooth{q}(y) dy + \int_{S^c} \smooth{q}(y) \log \frac{1}{\smooth{q}(y)} dy .
	\end{align*}
	Now define the conditional probability density $\bar{q}(x) = \frac{\smooth{q}(x) \I{x \in S^c}}{\smooth{q}(S^c)}$. Then we have the identity
	\begin{align*}
		\int_{S^c} \smooth{q}(y) \log \frac{1}{\smooth{q}(y)} dy = \smooth{q}(S^c) \log \frac{1}{\smooth{q}(S^c)} + \smooth{q}(S^c) \int_{\R^d} \bar{q}(y) \log \frac{1}{\bar{q}(y)} dy .
	\end{align*}
	We now claim that $\bar{q}$ has bounded second moment. To see this, observe that we can write
	\[  \E_{y \sim \smooth{q}}[\|y \|^2] = \smooth{q}(S) \E_{y \sim \smooth{q}}[\|y \|^2 | y \in S] +   \smooth{q}(S^c) \E_{y \sim \smooth{q}}[\|y \|^2 | y \in S^c].\]
Thus, $\E_{y \sim \bar{q}}[\|y\|^2] = \E_{y \sim \smooth{q}}[\|y \|^2 | y \in S^c] \leq M/\smooth{q}(S^c)$ by \cref{eq:l2bound}. 
	
	Now in order to bound the negative entropy of $\bar{q}$, let $\mu = \E_{y \sim \bar{q}} [y]$ and $\sigma^2 = \E_{y \sim \bar{y}}[\|y-\mu\|^2] \leq M/\smooth{q}(S^c)$ denote the mean and mean squared error of $\bar{q}$, and let $g(x) = \frac{1}{(2\pi\sigma^2)^{d/2}} e^{-\frac{\|x-\mu\|^2}{2\sigma^2}}$ be the probability density function of the the multivariate normal distribution with the same first two moments. Examining the property $\KL(\bar{q}|g) \geq 0$, we obtain
	$$ 
	\int_{\R^d} \bar{q}(y) \log \frac{1}{\bar{q}(y)} dy \leq \frac{1}{2} + \frac{d}{2} \log(2 \pi \sigma^2) \leq \frac{d}{2} \log (2 \pi e M\smooth{q}(S^c)^{-1}). 
	$$
	Finally, by \Cref{lem:smooth-support-concentration}, $\smooth{q}(S^c) \leq h + \frac{V_S}{\gamma} \phi(\sqrt{h})$. Putting it all together and noting that the function $x \mapsto x \log \frac{1}{x}$ is monotonically increasing on $x \in (0,1/e]$, gives the lemma statement.
\end{proof}

\begin{lemma} Suppose that $q \in \Den$ is supported on $S$ and bounded above by $1/\gamma$. Further suppose that \Cref{assump:bounded-support,assump:bounded-densities,assump:smooth-densities,assump:kernel-properties} hold. Then
	\label{lem:bounded-cross-entropy}
\[ \left| \int_S \smooth{q}(x) \log p_\theta(x) dx -  \int_S q(x) \log p_\theta(x) dx \right| \leq  C_\alpha h^{\alpha/2} +  2h  \log \frac{1}{\gamma} 
	+\frac{V_S \phi(\sqrt{h})}{\gamma} \log \frac{1}{\gamma}
	\]
	whenever $h \leq h_0$.
\end{lemma}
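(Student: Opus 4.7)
The plan is to exchange integrals via Fubini's theorem and then locally approximate $\log p_\theta$ using its Hölder smoothness.

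First, since $q$ is supported on $S$ and $\smooth{q}(x) = \int q(y)K_h(x,y)\,dy$, Fubini's theorem gives
\[
\int_S \smooth{q}(x)\log p_\theta(x)\, dx = \int_S q(y)\left[\int_S K_h(x,y)\log p_\theta(x)\, dx\right] dy,
\]
while $\int_S q(x)\log p_\theta(x)\,dx = \int_S q(y)\log p_\theta(y)\, dy$. Subtracting and using $\log p_\theta(y) = \log p_\theta(y)\int_{\R^d}K_h(x,y)\,dx$ to rewrite the constant factor, the difference becomes
\[
\int_S q(y)\!\left[\int_S K_h(x,y)(\log p_\theta(x)-\log p_\theta(y))\, dx \;-\; \log p_\theta(y)\int_{S^c}K_h(x,y)\, dx\right] dy.
\]

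Next I would split the outer integration over $y$ into the interior $y \in S_{-\sqrt h}$ and the boundary layer $y \in S \setminus S_{-\sqrt h}$, and the inner integration over $x$ into $x \in B(y,\sqrt h)$ and $x \notin B(y,\sqrt h)$. For $y \in S_{-\sqrt h}$, the ball $B(y,\sqrt h)$ is contained in $S$; inside the ball, \Cref{assump:smooth-densities} gives $|\log p_\theta(x)-\log p_\theta(y)| \le C_\alpha\|x-y\|^\alpha \le C_\alpha h^{\alpha/2}$, so this piece contributes at most $C_\alpha h^{\alpha/2}$ since $K_h(\cdot,y)$ integrates to at most one. Outside the ball, \Cref{assump:bounded-densities} gives $|\log p_\theta(x)-\log p_\theta(y)|, |\log p_\theta(y)| \le 2\log(1/\gamma)$, and the kernel tail estimate $\int_{B(y,\sqrt h)^c}K_h(x,y)\,dx \le h$ from \Cref{lem:kernel-support-concentration} (valid for $h\le h_0$) bounds this piece by $2h\log(1/\gamma)$. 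For $y$ in the boundary layer, I apply the uniform bound $|\log p_\theta|\le \log(1/\gamma)$ both to the inner integrand and to the $S^c$ remainder; combined with $q(y)\le 1/\gamma$ and $\lambda(S\setminus S_{-\sqrt h}) = V_S\phi(\sqrt h)$, this contributes of order $V_S\phi(\sqrt h)\log(1/\gamma)/\gamma$.

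The main subtlety will be consolidating the two ``tail'' contributions (the $S^c$-term in the identity above and the part of the Hölder term with $x\notin B(y,\sqrt h)$) so as not to double count boundary mass. For $y \in S_{-\sqrt h}$ one uses that $S\setminus B(y,\sqrt h)$ and $S^c$ are disjoint with union $B(y,\sqrt h)^c$, which reduces both contributions to a single integral of $K_h(\cdot, y)$ over $B(y,\sqrt h)^c$, to which the tail bound of \Cref{lem:kernel-support-concentration} applies cleanly. Integrating $q$ over $S$ (using that $q$ is a probability density with $q\le 1/\gamma$) and combining the three resulting error pieces yields the stated inequality, provided $h \le h_0$ so that \Cref{lem:kernel-support-concentration} is in force.
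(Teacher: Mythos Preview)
Your proposal is correct and follows essentially the same approach as the paper: apply Fubini to write $\int_S \smooth{q}\log p_\theta = \int_S q(y)\,l(y)\,dy$ with $l(y)=\int_S K_h(x,y)\log p_\theta(x)\,dx$, then bound $|l(y)-\log p_\theta(y)|$ for $y\in S_{-\sqrt h}$ by splitting the inner integral at radius $\sqrt h$ (H\"older smoothness inside, kernel tail bound outside), and handle the boundary layer $S\setminus S_{-\sqrt h}$ with the crude bound $|\log p_\theta|\le \log(1/\gamma)$ together with $q\le 1/\gamma$ and $\lambda(S\setminus S_{-\sqrt h})=V_S\phi(\sqrt h)$. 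Your observation about consolidating the $S^c$ remainder with the $x\notin B(y,\sqrt h)$ tail into a single $\int_{B(y,\sqrt h)^c}K_h(\cdot,y)$ is exactly what the paper does implicitly by taking $l(y)$ as an integral over $S$ rather than $\R^d$.
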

\begin{proof}
	Using Fubini's theorem, we obtain 
	$$
	\int_S \smooth{q}(x) \log p_\theta(x) dx = \int_{\R^d} q(y) l(y) dy = \int_S q(y) l(y) dy
	$$
	where $l(y) = \int_S K_h(x,y) \log p_\theta(x) dx$ and last equality follows since $q$ is supported on $S$.
	For any $y \in S_{-\sqrt{h}}$ (recall this means that $B(y,\sqrt{h}) \subseteq S$) and hence we have:
	\begin{align*}
		\abs{\log p_\theta(y) - l(y)} &= \abs*{\int_{\R^d} \log p_\theta(y) K_h(x,y) dx - \int_S \log p_\theta(x)  K_h(x,y) dx } \\
		&\leq \int_{B(y,\sqrt{h})}  K_h(x,y) |\log p_\theta(y)- \log p_\theta(x)| \, dx \\
		&+ \int_{B(y,\sqrt{h})^c} K_h(x,y) \left(|\log p_\theta(y)| + |\log p_\theta(x)|\right) \, dx \\
		&\leq   C_\alpha h^{\alpha/2} +  2h  \log \frac{1}{\gamma},
	\end{align*}
	where we have used \Cref{assump:bounded-densities,assump:smooth-densities} and \Cref{lem:kernel-support-concentration} in the last step.

	Thus we have
	\begin{align*}
		\left| \int_S q(y) l(y) \, dy -  \int_S q(y) \log p_\theta(y) \, dy \right| 
		&\leq \int_S q(y) \left|  l(y)  - \log p_\theta(y) \right| \, dy \\
		&=  \int_{S_{-\sqrt{h}}} q(y) \left|  l(y)  - \log p_\theta(y) \right| \, dy \\
		&+ \int_{S\setminus S_{-\sqrt{h}}} q(y) \left|  l(y)  - \log p_\theta(y) \right| \, dy \\
		&\leq  C_\alpha h^{\alpha/2} +  2h  \log \frac{1}{\gamma} 
		+\frac{V_S \phi(\sqrt{h})}{\gamma} \log \frac{1}{\gamma} . \qedhere
	\end{align*}
\end{proof}

\begin{proof}[Proof of \Cref{lem:smoothed-densities-kl-tv-approx}]
	The TV statement follows immediately from \Cref{lem:bounded-tv}. To see the KL statement, we first observe that
	\[ \KL_S(\smooth{q}, p_\theta) = \int_S \smooth{q}(x) \log \frac{\smooth{q}(x)}{p_\theta(x)} \, dx = \int_S \smooth{q}(x) \log \smooth{q}(x) \, dx - \int_S \smooth{q}(x) \log p_\theta(x)\, dx.  \]
	Plugging in \Cref{lem:bounded-neg-entropy} for the first term and \Cref{lem:bounded-cross-entropy} for the second term completes the proof.
\end{proof}

\subsection{Proof of \Cref{thm:okl-convergence-formal}}
\label{sec:proof-of-theorem}

We now put all of the above together to prove \Cref{thm:okl-convergence-formal}. We suppose \Cref{assump:bounded-densities,assump:bounded-support,assump:kernel-properties,assump:smooth-densities} hold and thus obtain the following corollaries of   \Cref{lem:uniform-convergence,lem:existence-of-good-estimator,lem:smoothed-densities-kl-tv-approx}  when we take 
$\delta_{n} = e^{- (\log n)^{2\beta}}$ 
and $\eta_{n,h} = \frac{(\log n)^\beta}{\sqrt{n} h^d}$ for a fixed constant $\beta > 0$ (e.g.~take $\beta=1$). In the following, $n_0, \bar{h}, \bar{\eta} > 0$ are constant quantities that can depend on any of the terms used in the assumptions and on $\beta > 0$, but are independent of $n$ and $h$. Given two expressions $f$ and $g$, the notation $f = O(g)$ will be used to denote that  $|f| \leq c |g|$ holds for a similar constant $c$. We will delay instantiating a concrete value of $\epsilon'$ in \Cref{cor:existence-of-good-weights} until we start the final proof, but note for now that the constants $c, n_0, \bar{h}, \bar{\eta}$ do not depend on the choice of $\epsilon'$ in any way.

\begin{corollary}
\label{cor:existence-of-good-weights}
Suppose a fixed value $\epsilon' > 0$ is given. Then whenever $n \geq n_0, h \leq \bar{h}$ and $\eta_{n,h} \leq \bar{\eta}$, the following holds with probability at least $1-\delta_{n}$:
\begin{enumerate}
    \item $\tv(\hat{p}, p_0) =  O\left(\eta_{n,h} + h^{\min(\frac{\alpha}{2},1)} + \phi(\sqrt{h})\right)$.
    \item $\Delta_n^{\gamma^2/4} \subseteq \Wcal_{\frac{\gamma^3}{8}, \frac{8}{\gamma^3}}$, where $\Wcal_{\ell, u}$ is as defined in \Cref{lem:uniform-convergence}.
    \item $\sup_{w \in \Delta_n^{\gamma^2/2}} \q{w}(S^c) = O(h + \phi(\sqrt{h}))$.
    \item There is $w^{\epsilon'} \in \Delta_n^{\gamma^2/4}$ such that
    $$
    \KL_S(\q{w^{\epsilon'}}|p_\theta) \leq \okl[\epsilon'] + O \left( \eta_{n,h}  + h^{\frac{\alpha}{2}} + h \log \frac{1}{h} + \phi(\sqrt{h}) \log \frac{1}{\phi(\sqrt{h})}\right)
    $$
    and
    $$
    \tv(\q{w^{\epsilon'}}, p_0) \leq {\epsilon'} + O \left( \eta_{n,h}  +  h^{\min(\frac{\alpha}{2},1)} + \phi(\sqrt{h}) \right).
    $$
\end{enumerate}	
\end{corollary}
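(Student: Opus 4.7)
The plan is to combine \Cref{lem:uniform-convergence}, \Cref{lem:existence-of-good-estimator}, \Cref{lem:smoothed-densities-kl-tv-approx}, and \Cref{cor:tv-uniform-estimates} under the specified scalings of $\delta = \delta_{n} = e^{-(\log n)^{2\beta}}$ and $\eta_{n,h} = (\log n)^\beta / (\sqrt{n}\,h^d)$, and to take a union bound over the four implicit ``good events.'' Note that $\sqrt{\log(1/\delta_n)/n} = (\log n)^\beta/\sqrt{n} = \eta_{n,h}\cdot h^d \le \eta_{n,h}$ since $h \le 1$, so every probabilistic tail in the cited lemmas can be absorbed into an $O(\eta_{n,h})$ term. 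The constants $n_0, \bar h, \bar\eta$ are then selected so that (a) the lower bound on $n$ in Item~4 of \Cref{lem:existence-of-good-estimator} is satisfied, (b) $h \le \min(\bar h, \tilde h, h_0)$ so that \Cref{lem:smoothed-densities-kl-tv-approx}, \Cref{lem:bounded-tv}, and \Cref{lem:smooth-support-concentration} apply, and (c) the additive $O(\eta_{n,h})$ perturbations in Item~2 of \Cref{lem:existence-of-good-estimator} are at most $\gamma/2$, so the density bounds collapse to $[\gamma^3/8, 8/\gamma^3]$.

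For Item~1, triangle inequality gives $\tv(\hat p, p_0) \le \tv(\hat p, K_h\star p_0) + \tv(K_h\star p_0, p_0)$. The first term I bound with \Cref{cor:tv-uniform-estimates}, whose sup-norm input is controlled by Item~1 of \Cref{lem:existence-of-good-estimator} at order $O(\eta_{n,h})$, and whose extra exponential-tail term is negligible once $h$ is small. The second term is bounded by $O(h^{\alpha/2} + h + \phi(\sqrt h))$ via \Cref{lem:bounded-tv} applied with $q = p_0$. For Item~2, I directly unpack Item~2 of \Cref{lem:existence-of-good-estimator}: under $\eta_{n,h} \le \bar\eta$ the displayed density bounds yield $\Delta_n^{\gamma^2/4} \subseteq \Wcal_{\gamma^3/8,\,8/\gamma^3}$. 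For Item~3, Item~3 of \Cref{lem:existence-of-good-estimator} gives $\q{w}(S^c) \le (4/\gamma^2)\bigl(h + (V_S/\gamma)\phi(\sqrt h) + \sqrt{\log(3/\delta_n)/(2n)}\bigr)$; the last summand is $O(\eta_{n,h}\,h^d) = O(h)$ in our regime, and is absorbed into the stated bound.

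For Item~4 the key construction is the I-projection: since $\KL(p_0|p_\theta) \le \log(1/\gamma^2) < \infty$ by \Cref{assump:bounded-densities}, we have $I_{\epsilon'}(\theta) < \infty$, so by \citet{csiszar1975divergence} there is a unique $q^{\epsilon'} \in \Den$ with $\tv(q^{\epsilon'}, p_0) \le \epsilon'$ and $\KL(q^{\epsilon'}|p_\theta) = I_{\epsilon'}(\theta)$. By \Cref{lem:info-proj-sandwich}, $q^{\epsilon'}$ is sandwiched between $p_0$ and $p_\theta$, so it is supported on $S$ and pointwise bounded in $[\gamma, 1/\gamma]$, meeting the hypothesis of Item~4 of \Cref{lem:existence-of-good-estimator}. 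That item then produces $w^{\epsilon'} \in \Delta_n^{\gamma^2/4}$ with $\|\q{w^{\epsilon'}} - \smooth{q^{\epsilon'}}\|_\infty = O(\eta_{n,h})$ and $|\KL_S(\q{w^{\epsilon'}}|p_\theta) - \KL_S(\smooth{q^{\epsilon'}}|p_\theta)| = O(\eta_{n,h})$ (the latter because, by Items~2 and~3 already established, $\q{w^{\epsilon'}}$ lies in $[\gamma^3/8, 8/\gamma^3]$, so the Lipschitz constant in the KL bound is $O(1)$). I then apply \Cref{lem:smoothed-densities-kl-tv-approx} to $q^{\epsilon'}$ to get $\KL_S(\smooth{q^{\epsilon'}}|p_\theta) \le I_{\epsilon'}(\theta) + O(h^{\alpha/2} + h\log(1/h) + \phi(\sqrt h)\log(1/\phi(\sqrt h)))$ and $\tv(\smooth{q^{\epsilon'}}, p_0) \le \epsilon' + O(h^{\min(\alpha/2,1)} + \phi(\sqrt h))$; chaining with the sup-norm-to-TV bound from \Cref{cor:tv-uniform-estimates} for $\tv(\q{w^{\epsilon'}}, \smooth{q^{\epsilon'}})$ delivers both claims of Item~4.

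The genuinely delicate step is the KL portion of Item~4, where I must ensure that the $L^\infty$ error between $\q{w^{\epsilon'}}$ and $\smooth{q^{\epsilon'}}$ translates into an $O(\eta_{n,h})$ KL error. This requires that both densities be uniformly bounded away from $0$ and $\infty$ on $S$—which is precisely why I need the range guarantees from Items~1 and~2 in hand before invoking Item~4 (so that the Lipschitz constant $1 + \log(u/(\gamma\ell))$ appearing in \Cref{lem:existence-of-good-estimator} is $O(1)$). Everything else reduces to absorbing lower-order terms and union-bounding the four events of probability $\ge 1 - \delta_n/4$, which combined still leaves a probability of at least $1 - \delta_n$ after redefining constants.
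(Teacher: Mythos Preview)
Your proposal is correct and follows essentially the same approach as the paper's proof: invoke \Cref{lem:existence-of-good-estimator} with $q$ taken to be the I-projection $q^{\epsilon'}$ (whose boundedness comes from the sandwiching \Cref{lem:info-proj-sandwich}), use the triangle inequality with \Cref{cor:tv-uniform-estimates} and \Cref{lem:smoothed-densities-kl-tv-approx} for the TV bounds, and chain the sup-norm and smoothing estimates for the KL bound in Item~4. Two cosmetic remarks: \Cref{lem:uniform-convergence} is not actually needed for this corollary (it enters only in the subsequent \Cref{cor:all-the-bounds}), and there is no separate four-way union bound to perform here since all four items of \Cref{lem:existence-of-good-estimator} already hold simultaneously on a single event of probability $1-\delta_n$.
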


\begin{proof}
	Observe by \Cref{lem:info-proj-sandwich} that the information projection $\iproj[{\epsilon'}]$ is sandwiched between $p_0$ and $p_\theta$; in particular, $\iproj[{\epsilon'}]$ has support $S$ and is bounded between values $\gamma$ and $1/\gamma$ on $S$. Thus we will invoke \Cref{lem:existence-of-good-estimator} with $q = \iproj[\epsilon']$ and $\delta = \delta_n$, noting that the condition $n \geq \frac{2}{\gamma^2} \log \frac{12}{\delta}$ is satisfied when $n_0$ is suitably large. Hence we may now prove the respective items.

	We show Item 1 as follows:
	$$
	\tv(\hat{p}, p_0) \leq \tv(\hat{p}, K_h \star p_0) + \tv(K_h \star p_0, p_0) = O(\eta_{n,h} + h) + O(h^{\min(\frac{\alpha}{2},1)} + \phi(\sqrt{h})),
	$$
	where the term $\tv(\hat{p}, K_h \star p_0)$ was bounded by combining Item 1 of \Cref{lem:existence-of-good-estimator} with \Cref{cor:tv-uniform-estimates}, and the term $\tv(K_h \star p_0, p_0)$ was bounded by using  \Cref{lem:smoothed-densities-kl-tv-approx} with $q = p_0$ (assuming suitable choice of constants $\bar{h}$ and $c$).
	
	Next, Item 2 follows from Item 2 of \Cref{lem:existence-of-good-estimator} as long as $\frac{6c_0}{\sqrt{n}h^d} \sqrt{\log \frac{4}{\delta_n}} = O(\eta_{n,h}) \leq \gamma/2$ is satisfied, which holds when $\bar{\eta}$ is a suitably small constant.
	
	Next, Item 3 follows from Item 3 of \Cref{lem:existence-of-good-estimator} noting that $\sqrt{\frac{1}{2n} \log \frac{6}{\delta_n}} = O( \eta_{n,h} h^d ) = O( \bar{\eta} h)$.
	
	Finally to show Item 4, we will invoke Item 4 from \Cref{lem:existence-of-good-estimator} for the choice $q=\iproj[{\epsilon'}]$. Hence we obtain a $w^{\epsilon'} \in \Delta_{n}^{\gamma^2/4}$ such that $\|\q{w^{\epsilon'}}-\iproj[{\epsilon'}]\|_\infty = O(\eta_{n,h})$ and 
	$$
	|\KL_S(\q{w^{\epsilon'}}|p_\theta) - \KL_S(K_h \star \iproj[{\epsilon'}]|p_\theta)|  =  O(\eta_{n,h}).
	$$
	Further, \Cref{cor:tv-uniform-estimates} shows
	$$
	\tv(\q{w^{\epsilon'}}, K_h \star \iproj[{\epsilon'}]) \leq O(\eta_{n,h} + h)
	$$
	since $\mu = \sum_{i=1}^n w^{\epsilon'}_i \delta_{x_i}$ and the measure $\nu$ given by density $\iproj[{\epsilon'}]$, are both supported on $S$. The bounds in Item 4 now follow using \Cref{lem:smoothed-densities-kl-tv-approx} with $q=\iproj[{\epsilon'}]$ and the triangle inequality, noting that $\KL(\iproj[{\epsilon'}]|p_\theta) = \okl[{\epsilon'}]$ and $\tv(\iproj[{\epsilon'}], p_0) \leq \epsilon'$.
\end{proof}

Since our goal is to approximate $\okl$ by $\hatI(\theta)$, let us rewrite \cref{eqn:I-hat-defn} as 
$$
\hatI(\theta) = \inf_{\substack{w \in \Delta_n^{\gamma^2/4} \\ \hatTV(\q{w}, p_0) \leq \epsilon}} \hatKL(\q{w}|p_\theta)
$$
where, given $q \in \Den$,
$$
\hatKL(q|p_\theta) = \frac{1}{n} \sum_{i=1}^n \frac{q(x_i)}{\hat{p}_\gamma(x_i)} \log \frac{q(x_i)}{p_\theta(x_i)}
$$
and
$$
\hatTV(q,p_0) = \frac{1}{2n} \sum_{i=1}^n \abs*{\frac{q(x_i)}{\hat{p}_\gamma(x_i)} - 1}.
$$

Next, the following corollary of \Cref{lem:uniform-convergence}  shows that, with high probability, the estimators $\hatKL(\q{w}|p_\theta)$ and $\hatTV(\q{w}, p_0)$ are close to their population level targets $\KL_S(\q{w}|p_\theta)$ and $\tv(\q{w}, p_0)$, uniformly over all $w \in \Delta_n^{\gamma^2/4}$.   

\begin{corollary}
	\label{cor:all-the-bounds}
	Suppose $n \geq n_0$, $h \leq \bar{h}$ and $\eta_{n,h} \leq \bar{\eta}$, then with probability at least $1-2\delta_n$ the event in \Cref{cor:existence-of-good-weights} holds, and further uniformly over all $w \in \Delta_{n}^{\gamma^2/4}$ it holds that
	$$
	\abs*{\hatKL(\q{w}|p_\theta) - \KL_S(\q{w}|p_\theta)} \leq \psi_{n,h} \quad \text{and } \abs*{\hatTV(\q{w}, p_0) - \tv(\q{w}, p_0)} \leq \psi_{n,h}
	$$
	for a deterministic quantity $\psi_{n,h}$ that satisfies $\psi_{n,h} = O \left(\eta_{n,h} + h^{\min(\frac{\alpha}{2},1)} + \phi(\sqrt{h})\right)$.
\end{corollary}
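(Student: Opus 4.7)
The plan is to combine a union bound on the events from \Cref{cor:existence-of-good-weights} and \Cref{lem:uniform-convergence}, and then collapse the resulting error expression using the four items in \Cref{cor:existence-of-good-weights}. Let $E_1$ be the event of probability at least $1-\delta_n$ on which all four items of \Cref{cor:existence-of-good-weights} hold, and let $E_2$ be the event of probability at least $1-\delta_n$ obtained by invoking \Cref{lem:uniform-convergence} with $\delta = \delta_n$ and the constants $\ell = \gamma^3/8$, $u = 8/\gamma^3$ (these choices satisfy $u \geq \max(1, \gamma e)$ and $\ell \leq 1$ since $\gamma \leq 1$). On $E_1$, Item 2 of \Cref{cor:existence-of-good-weights} ensures the inclusion $\Delta_n^{\gamma^2/4} \subseteq \Wcal_{\ell,u}$ needed to apply the lemma's uniform bound over our feasible set of weights. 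A union bound gives $P(E_1 \cap E_2) \geq 1 - 2\delta_n$, so it suffices to establish both inequalities deterministically on $E_1 \cap E_2$.

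On this event, \Cref{lem:uniform-convergence} gives
\[
|\KL_S(\q w | p_\theta) - \hatKL(\q w | p_\theta)| \leq C_\gamma \left(\tfrac{c_0}{\sqrt n h^d} + \sqrt{\tfrac{2}{n}\log \tfrac{6}{\delta_n}} + \tv(\hat p, p_0)\right)
\]
uniformly over $w \in \Delta_n^{\gamma^2/4}$, with an analogous inequality for the $\ell_1$ estimator, where $C_\gamma$ is a constant depending only on $\gamma$ (via $\ell$, $u$ and the $\log(u/(\ell \gamma))$ factor). I would then substitute $\delta_n = e^{-(\log n)^{2\beta}}$ and $\eta_{n,h} = (\log n)^\beta/(\sqrt n h^d)$ to show each term inside the parentheses is $O(\eta_{n,h} + h^{\min(\alpha/2,1)} + \phi(\sqrt h))$: the first satisfies $1/(\sqrt n h^d) \leq \eta_{n,h}$ since $(\log n)^\beta \geq 1$; the second equals $\sqrt{2\log(6/\delta_n)/n} = O((\log n)^\beta/\sqrt n) = O(h^d \eta_{n,h}) = O(\eta_{n,h})$ because $h \leq 1$; and the third is bounded by Item 1 of \Cref{cor:existence-of-good-weights}. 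This immediately yields the desired KL bound, and simplifies the $\ell_1$ bound in the same way.

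The final step is to translate the $\ell_1$ bound into a TV bound. Since $\hatTV(\q w, p_0)$ equals one half of $\frac{1}{n}\sum_i |\q w(x_i)/\hat p_\gamma(x_i) - 1|$ and $\tv(\q w, p_0) = \tfrac{1}{2}\|\q w - p_0\|_1$, dividing the $\ell_1$ estimate from the lemma by two and using the triangle inequality yields
\[
|\tv(\q w, p_0) - \hatTV(\q w, p_0)| \leq \tfrac{1}{2}\q w(S^c) + O\!\left(\eta_{n,h} + h^{\min(\alpha/2,1)} + \phi(\sqrt h)\right).
\]
Item 3 of \Cref{cor:existence-of-good-weights} then controls $\q w(S^c) = O(h + \phi(\sqrt h))$ uniformly over $\Delta_n^{\gamma^2/4}$, which is absorbed into $\psi_{n,h}$ since $h \leq h^{\min(\alpha/2,1)}$ whenever $h \leq 1$. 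I expect no serious obstacles: the proof is essentially bookkeeping once the correct applications of the two prior results are lined up. The only point that deserves care is the translation from the lemma's quantity $\|\q w - p_0\|_1 - \q w(S^c)$ to the TV distance, so that the extra boundary mass $\q w(S^c)$ is picked up by Item 3 rather than by the uniform convergence step.
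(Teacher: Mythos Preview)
Your proposal is correct and follows essentially the same approach as the paper: defining the two events from \Cref{cor:existence-of-good-weights} and \Cref{lem:uniform-convergence} (with the same choices of $\ell$ and $u$), taking a union bound, and then combining Items 1--3 of the former with the uniform bounds of the latter. If anything, you are slightly more explicit than the paper in handling the translation from the lemma's quantity $\|\q w - p_0\|_1 - \q w(S^c)$ to $\tv(\q w,p_0)$ via Item~3, which the paper leaves implicit in its final sentence.
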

\begin{proof} Let $E_1$ denote the event in \Cref{cor:existence-of-good-weights}, and let $E_2$ denote the event in \Cref{lem:uniform-convergence} with the choices  $\delta = \delta_n$, $l = \frac{\gamma^3}{8}$, $u = \frac{8}{\gamma^2}$. We will assume that both events $E_1$ and $E_2$ hold simultaneously, which happens with probability at least $1-2\delta_n$.
	
	Recall that on the event $E_1$, we have $\Delta_{n}^{\gamma^2/4} \subseteq \Wcal_{\ell, u}$, $\sup_{w \in \Delta_n^{\gamma^2/4}} \q{w}(S^c) = O(h + \phi(\sqrt{h}))$, are  $\tv(\hat{p}, p_0) = O\left(\eta_{n,h} + h^{\min(\frac{\alpha}{2}, 1)} + \phi(\sqrt{h})\right)$. The proof can be completed by combining the above with the bounds from \Cref{lem:uniform-convergence} under the event $E_2$. 
\end{proof}

For a suitable constant $c$, the term $\xi_{n,h} = c\left(\eta_{n,h} + h^{\frac{\alpha}{2}} + h \log \frac{1}{h} + \phi(\sqrt{h}) \log \frac{1}{\phi(\sqrt{h})}\right)$ dominates all the error terms (i.e.~the terms of the form $O(\cdot)$) in the statement of \Cref{cor:existence-of-good-weights} and \Cref{cor:all-the-bounds}. Using this term, we are ready to finish the proof of \Cref{thm:okl-convergence-formal}.

\begin{proof}[Proof of \Cref{thm:okl-convergence-formal}]
We will invoke \Cref{cor:all-the-bounds} with the choice $\epsilon' = \epsilon - 2 \xi_{n,h}$ (see \Cref{cor:existence-of-good-weights}). We first claim that the following holds:
\begin{align}
\label{eqn:okl-upper-and-lower-bound}
\okl[\epsilon + 2\xi_{n,h}] - O(\xi_{n,h}) \leq \hatI(\theta) \leq \okl[\epsilon - 2\xi_{n,h}] + O(\xi_{n,h})
\end{align}
with probability at least $1-2\delta_n$.
	
For the upper bound, let $w^{\epsilon'} \in \Delta_n^{\gamma^2/4}$ be the vector from \Cref{cor:existence-of-good-weights,cor:all-the-bounds} and note that
$$
\hatTV(\q{w}, p_0) \leq \tv(\q{w}, p_0) + \xi_{n,h} \leq \epsilon ' + 2\xi_{n,h} \leq \epsilon,
$$
and hence 
$$
\hatI(\theta) \leq \hatKL(\q{w}|p_\theta) \leq \KL_S(\q{w}|p_\theta) + \xi_{n,h} \leq \okl[\epsilon'] + 2\xi_{n,h}.
$$

To lower bound $\hatI(\theta)$, let $w \in \Delta_{n}^{\gamma^2/4}$ be such that $\hatTV(\q{w}, p_0) \leq \epsilon$. Then by  \Cref{cor:existence-of-good-weights}, \Cref{cor:all-the-bounds}, and \Cref{lem:restrict-to-S}, there is a $\bar{q}_w$ that is supported on $S$ such that
$$
\tv(\bar{q}_w, p_0) \leq \tv(\q{w}, p_0) + \q{w}(S^c) \leq \hatTV(\q{w}, p_0) + 2\xi_{n,h} \leq \epsilon + 2\xi_{n,h}
$$
and 
$$
\begin{aligned}
    \hatKL(\q{w}|p_\theta) &\geq \KL_S(\q{w}|p_\theta) - \xi_{n,h}\\
     &= (1-\q{w}(S^c))  \KL(\bar{q}_w|p_\theta) + (1-\q{w}(S^c))\log (1-\q{w}(S^c)) - \xi_{n,h} \\
    &\geq (1-\xi_{n,h}) \okl[\epsilon + 2\xi_{n,h}] - 2\xi_{n,h}
\end{aligned}
$$
where we have used that $\q{w}(S^c) \leq \xi_{n,h}$, $h(x) = (1-x) \log (1-x) \geq -x$ by the convexity of $h$, and $\KL(\bar{q}_w|p_\theta) \geq \okl[\epsilon + 2\xi_{n,h}]$ since $\tv(\bar{q}_w, p_0) \leq \epsilon + 2\xi_{n,h}$. Since the right hand side of the previous display doesn't depend on the choice of $w$, by considering the infimum over all $w \in \Delta_n^{\gamma^2/4}$ such that $\hatTV(\q{w}, p_0) \leq \epsilon$ we see
$$
\begin{aligned}
    \hatI(\theta) &\geq \okl[\epsilon + 2\xi_{n,h}] - \xi_{n,h} (1+\okl[\epsilon + 2\xi_{n,h}]) \geq \okl[\epsilon + 2\xi_{n,h}] - \xi_{n,h} (1+ \KL(p_0|p_\theta)).
\end{aligned}
$$
Note that $\KL(p_0|p_\theta) \leq   2\log(1/\gamma)$ by H\"older's inequality, and we have thus shown the lower bound and established \cref{eqn:okl-upper-and-lower-bound}.

Finally, we appeal to the continuity of the OKL function with respect to the coarsening radius (i.e., \Cref{lem:okl-continuity}) to see that
\[ I_{\epsilon - 2\xi_{n,h}}(\theta) \leq I_{\epsilon}(\theta) + \frac{2\xi_{n,h}}{\epsilon} \KL(p_0 | p_\theta) \leq I_{\epsilon}(\theta) + \frac{4 \xi_{n,h}}{\epsilon} \log(1/\gamma) = I_{\epsilon}(\theta) + O(\xi_{n,h}) \]
and
\[ I_{\epsilon + 2\xi_{n,h}}(\theta) \geq I_{\epsilon}(\theta) - \frac{2\xi_{n,h}}{\epsilon + 2\xi_{n,h}} \KL(p_0 | p_\theta) \leq I_{\epsilon}(\theta) - \frac{4 \xi_{n,h}}{\epsilon} \log(1/\gamma) = I_{\epsilon}(\theta) - O(\xi_{n,h}), \]
where we have used \Cref{assump:bounded-densities} and H\"{o}lder's inequality to establish $\KL(p_0|p_\theta) \leq 2 \log(1/\gamma)$. Combining the above two inequalities with \Cref{eqn:okl-upper-and-lower-bound} completes the proof.
\end{proof}

\section{Consistency of OWL estimators} 
\label{sec:owl-consistency}
Consider the estimator $\hatI(\theta)$ for $\okl$ from \cref{eq:okle} based on data $x_1, \ldots, x_n \sim P_0$. In this section we establish basic theory that, under \Cref{ass:misspecification}, shows that the OWL estimator $\hat{\theta} = \argmin_{\theta \in \Theta} \hatI(\theta)$,  satisfies $\left(\tv(P_{\hat{\theta}}, P_0) - \epsilon \right)^{+} \pconv 0$ as $n \to \infty$. 

For this result, we will consider the setup when the model family $\{P_\theta\}_{\theta \in \Theta}$ has densities $\{p_\theta\}_{\theta \in \Theta}$ that are always positive and continuous on a compact data space $\cX$. This allows us to define the following notion of distance on $\Theta$:
$$
\dsuplog(\theta_1, \theta_2) = \sup_{x \in \cX} \, \abs*{\log\frac{ p_{\theta_1}(x)}{p_{\theta_2}(x)}}.
$$
When the parameters are identifiable (i.e. $P_{\theta_1} = P_{\theta_2} \implies \theta_1 = \theta_2$), $d$ is a metric on $\Theta$ which may  take the value $\infty$. 
\begin{assume}
	$\Theta$ is a compact topological space, and the ball $B_\dsuplog(\theta, r) \doteq \{ \theta' \in \Theta : \dsuplog(\theta, \theta') < r \}$ is open for every $\theta \in \Theta$ and $r > 0$.
	\label{ass:compact-d}
\end{assume}

For instance, \Cref{ass:compact-d} is satisfied whenever $\cX$ and $\Theta$ are both compact metric spaces and the map $(\theta, x) \mapsto \log p_{\theta}(x)$ from $\Theta \times \cX \to \R$ is jointly (and thus uniformly) continuous.

\begin{lemma} The OKL function $I_\epsilon$ and its estimator $\hatI$ are both Lipschitz continuous in $\dsuplog$. In particular, for any $\theta_1, \theta_2 \in \Theta$:
	\begin{align*}
			\abs*{I_\epsilon(\theta_1) - I_\epsilon(\theta_2)} \leq \dsuplog(\theta_1, \theta_2) \text{ and } \abs*{\hatI(\theta_1) - \hatI(\theta_2)} \leq (1+2\epsilon) \dsuplog(\theta_1, \theta_2). 
	\end{align*}
	\label{lem:okl-lipschitz-d}
\end{lemma}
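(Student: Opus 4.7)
My plan is to prove both Lipschitz bounds by the same elementary comparison trick: compare the two objectives pointwise (before taking the infimum) on any common feasible point, and then pass to the infimum.

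For the continuous statement, the key algebraic identity is that for any probability measure $Q$ on $\cX$,
\[
\KL(Q \mid P_{\theta_2}) = \KL(Q \mid P_{\theta_1}) + \int \log \tfrac{p_{\theta_1}(x)}{p_{\theta_2}(x)} \, dQ(x),
\]
using that the densities $p_\theta$ are positive so that the absolute continuity is automatic. Since $|\int \log(p_{\theta_1}/p_{\theta_2}) dQ| \leq \dsuplog(\theta_1,\theta_2)$ because $Q$ is a probability measure, I get $\KL(Q\mid P_{\theta_2}) \leq \KL(Q\mid P_{\theta_1}) + \dsuplog(\theta_1,\theta_2)$ for every feasible $Q \in \ball_\epsilon(P_0)$. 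Taking the infimum over $Q$ on both sides yields $I_\epsilon(\theta_2) \leq I_\epsilon(\theta_1) + \dsuplog(\theta_1,\theta_2)$, and swapping the roles of $\theta_1$ and $\theta_2$ gives the matching lower bound.

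For $\hatI$, I will apply exactly the same idea to the finite-dimensional objective $F_\theta(w) \doteq \sum_i w_i \log \tfrac{n w_i \hat{p}(x_i)}{p_\theta(x_i)}$. Writing
\[
F_{\theta_2}(w) - F_{\theta_1}(w) = \sum_{i=1}^n w_i \log \tfrac{p_{\theta_1}(x_i)}{p_{\theta_2}(x_i)},
\]
the only nontrivial point is that feasible $w$'s do not necessarily sum to one. However, feasibility requires $w \in \hat{\Delta}_n \subseteq \Rnn^n$ and $\frac{1}{2}\|w - o\|_1 \leq \epsilon$, and the triangle inequality gives $|\sum_i w_i - 1| \leq \|w - o\|_1 \leq 2\epsilon$, so $\sum_i w_i \leq 1 + 2\epsilon$. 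Combined with $w_i \geq 0$, this yields
\[
|F_{\theta_2}(w) - F_{\theta_1}(w)| \leq \Big(\sum_{i=1}^n w_i\Big) \dsuplog(\theta_1,\theta_2) \leq (1+2\epsilon)\, \dsuplog(\theta_1,\theta_2),
\]
and the Lipschitz bound on $\hatI$ follows by taking the infimum over feasible $w$ (the feasible set depends only on $\epsilon$, not on $\theta$).

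There is no real obstacle here; the only thing worth being careful about is the degenerate case $\dsuplog(\theta_1,\theta_2) = \infty$, in which the bounds are vacuous, and the case where the infimum defining $I_\epsilon(\theta_1)$ is not attained, which is harmless because the comparison holds pointwise in $Q$ before passing to the infimum. No existence of I-projections or continuity arguments are needed.
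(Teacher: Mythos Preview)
Your proposal is correct and follows essentially the same approach as the paper: both argue pointwise on the objective before the infimum, use the identity $F(\cdot,\theta_1)-F(\cdot,\theta_2)=\int \log(p_{\theta_1}/p_{\theta_2})$ (or its discrete analogue), bound this by $\dsuplog(\theta_1,\theta_2)$ times the total mass, and then take the infimum and swap $\theta_1,\theta_2$. The paper's proof of the $\hatI$ bound likewise uses $\|w\|_1 \leq 1+2\epsilon$ from $\|w-o\|_1 \leq 2\epsilon$, exactly as you do.
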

\begin{proof}
	Recall, $I_\epsilon(\theta) = \inf_{\substack{q \in \Den \\ \tv(q, p_0) \leq \epsilon}} F(q, \theta)$, where  $F(q, \theta) = \int q(x) \log \frac{q(x)}{p_\theta(x)} \lambda(x)$. Observe that
	$$
	\abs{F(q, \theta_1) - F(q, \theta_2)} \leq  \abs*{ \int  q(x) \log \frac{p_{\theta_1}(x)}{p_{\theta_2}(x)} d\lambda(x)}  \leq \dsuplog(\theta_1, \theta_2)
	$$
	for any $q \in \Den$. Thus the Lipschitz property of $I_\epsilon$ now follows by using 
	$$
	 I_\epsilon(\theta_1) = \inf_{\substack{q \in \Den \\ \tv(q, p_0) \leq \epsilon}} F(q, \theta_1) \leq \inf_{\substack{q \in \Den \\ \tv(q, p_0) \leq \epsilon}} F(q, \theta_2) + d(\theta_1, \theta_2) = I_\epsilon(\theta_2) + \dsuplog(\theta_1,\theta_2),
	$$
	and its symmetric bound obtained by exchanging $\theta_1$ and $\theta_2$. 
	
	The Lipschitz property of $\hatI(\theta) = \inf_{\substack{w \in \hat{\Delta}_n \\ \|w-o\|_1 \leq 2\epsilon}}  \sum_{i=1}^n w_i \log \frac{w_i n \hat{p}(x_i)}{p_\theta(x_i)}$ can similarly be shown by noting that for $F(w, \theta) = \sum_{i=1}^n w_i \log \frac{n w_i \hat{p}(x_i)}{p_\theta(x_i)}$, 
	$$
	\abs*{F(w,\theta_1) - F(w,\theta_2)} =  \abs*{\sum_{i=1}^n w_i \log \frac{p_{\theta_2}(x_i)}{p_{\theta_1}(x_i)}} \leq \dsuplog(\theta_1, \theta_2) \|w\|_1 \leq (1+2\epsilon) \dsuplog(\theta_1, \theta_2)
	$$ 
	whenever $\|w-o\|_1 \leq 2\epsilon$.
\end{proof}

This Lipschitz property can be used to show convergence of the estimators $\hatI(\theta)$ to $\okl$ uniformly over all $\theta \in \Theta$.

\begin{lemma} Suppose that \Cref{ass:compact-d} is satisfied, and the pointwise convergence $\abs{\hatI(\theta) - \okl} \pconv 0$ as $n \to \infty$ holds at each fixed $\theta \in \Theta$. Then we have the uniform convergence
	\begin{equation*}
			\sup_{\theta \in \Theta} \abs*{\hatI(\theta) - \okl} \pconv 0
	\end{equation*}
	as $n \to \infty$.
	\label{lem:uniform-conv-okle}
\end{lemma}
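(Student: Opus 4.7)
The plan is a standard covering argument that leverages the Lipschitz property from Lemma~\ref{lem:okl-lipschitz-d} together with compactness of $\Theta$. Fix $\eta>0$. By \Cref{ass:compact-d}, each $\dsuplog$-ball $B_\dsuplog(\theta,\eta)=\{\theta'\in\Theta:\dsuplog(\theta,\theta')<\eta\}$ is open in $\Theta$, so the family $\{B_\dsuplog(\theta,\eta):\theta\in\Theta\}$ is an open cover of $\Theta$. Compactness yields a finite sub-cover determined by centers $\theta_1,\ldots,\theta_N\in\Theta$ (with $N$ depending on $\eta$).

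Next, for any $\theta\in\Theta$ pick an index $i(\theta)\in\{1,\ldots,N\}$ with $\dsuplog(\theta,\theta_{i(\theta)})<\eta$. Applying the triangle inequality and the two Lipschitz estimates of Lemma~\ref{lem:okl-lipschitz-d},
\begin{equation*}
\bigl|\hatI(\theta)-I_\epsilon(\theta)\bigr|
\leq \bigl|\hatI(\theta)-\hatI(\theta_{i(\theta)})\bigr|
+ \bigl|\hatI(\theta_{i(\theta)})-I_\epsilon(\theta_{i(\theta)})\bigr|
+ \bigl|I_\epsilon(\theta_{i(\theta)})-I_\epsilon(\theta)\bigr|
\leq (2+2\epsilon)\eta + \max_{1\leq j\leq N}\bigl|\hatI(\theta_j)-I_\epsilon(\theta_j)\bigr|.
\end{equation*}
Taking the supremum over $\theta\in\Theta$ gives
\begin{equation*}
\sup_{\theta\in\Theta}\bigl|\hatI(\theta)-I_\epsilon(\theta)\bigr|
\leq (2+2\epsilon)\eta + \max_{1\leq j\leq N}\bigl|\hatI(\theta_j)-I_\epsilon(\theta_j)\bigr|.
\end{equation*}

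Since $N$ is finite and the assumed pointwise convergence $\hatI(\theta_j)\pconv I_\epsilon(\theta_j)$ holds at each of the $N$ centers, the maximum on the right tends to $0$ in probability as $n\to\infty$. Hence $\limsup_n \sup_{\theta\in\Theta}|\hatI(\theta)-I_\epsilon(\theta)|\leq (2+2\epsilon)\eta$ in probability; letting $\eta\downarrow 0$ concludes the proof. The only subtlety, which is dispatched by \Cref{ass:compact-d}, is ensuring that the $\dsuplog$-balls are actually open in the topology on $\Theta$ so that the compactness argument produces a finite cover; the Lipschitz estimates and the finite-intersection reduction are then completely routine.
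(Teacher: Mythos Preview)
Your proof is correct and follows essentially the same approach as the paper: a finite $\dsuplog$-cover of the compact $\Theta$ combined with the Lipschitz bounds of Lemma~\ref{lem:okl-lipschitz-d} reduces the supremum to a maximum over finitely many centers, and then pointwise convergence finishes the argument. Your constant $(2+2\epsilon)\eta$ matches the paper's $2(1+\epsilon)r$, and the logical structure is identical.
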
 
\begin{proof}
	For any $r > 0$, let $S_r \subseteq \Theta$ denote a subset of minimal size, such that balls $B_\dsuplog(\theta, r) \doteq \{ \theta' | \dsuplog(\theta, \theta') < r \}$ with centers $\theta$ in $S_r$ cover all of $\Theta$. Since $\{B_\dsuplog(\theta, r)\}_{\theta \in \Theta}$ is an open cover of $\Theta$ for every $r > 0$, we have $|S_r| < \infty$ by the compactness of $\Theta$. Hence, \Cref{lem:okl-lipschitz-d} shows that
	$$
		\sup_{\theta \in \Theta} \abs*{\hatI(\theta) - \okl} \leq \max_{\theta \in S_r} \abs*{\hatI(\theta) - \okl} + 2(1+\epsilon) r
	$$
	for every $r > 0$.
	
	Thus, given any $\delta > 0$, we can choose an $r > 0$ such that $2(1+\epsilon) r \leq \delta/2$. This would show that
	\begin{align*}
			\limsup_{n \to \infty} \prob\left(\sup_{\theta \in \Theta} \abs*{\hatI(\theta) - \okl}  > \delta\right) &\leq  \limsup_{n \to \infty} \prob\left(\max_{\theta \in S_r} \abs*{\hatI(\theta) - \okl}  > \delta/2\right) \\
			&\leq \sum_{\theta \in S_r } 
			\lim_{n \to \infty} \prob\left( \abs*{\hatI(\theta) - \okl}  > \delta/2\right) = 0.
	\end{align*}
	The equality in the right hand side limit to zero follows from the assumed convergence $\abs*{\hatI(\theta) - \okl} \pconv 0$ for each $\theta \in \Theta$, and that $S_r$ is a fixed finite subset of $\Theta$. Since $\delta > 0$ is arbitrary, this completes the proof.
\end{proof}

The following result, obtained by applying Pinsker's inequality to our setup, bounds the total variation distance $\tv(P_\theta, P_0)$ in terms of the OKL value $\okl$ for any $\epsilon \geq 0$. This shows that when $\okl$ is small, $\tv(P_\theta, P_0)$ cannot be substantially larger than $\epsilon$. 

\begin{lemma} For any $\theta \in \Theta$ and $\epsilon \geq 0$
$$
\tv(P_\theta, P_0) \leq \sqrt{\frac{\okl}{2}} + \epsilon.
$$
\label{lem:okl-pinsker}
\end{lemma}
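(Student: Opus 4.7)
The plan is to combine Pinsker's inequality with the triangle inequality for the total variation metric. The bound is essentially a one-line consequence of the definition of the OKL.

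First, I would dispose of the trivial case $\okl = \infty$, where the right-hand side is infinite and the inequality holds vacuously (since $\tv$ is at most $1$). For the main case $\okl < \infty$, I would avoid relying on attainment of the infimum in \cref{eq:okl} and instead argue via an $\eta$-approximate minimizer: for arbitrary $\eta > 0$, choose $Q_\eta \in \cP(\cX)$ with $\tv(Q_\eta, P_0) \leq \epsilon$ and
\[
\KL(Q_\eta | P_\theta) \leq \okl + \eta.
\]

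Next, I would apply Pinsker's inequality to the pair $(Q_\eta, P_\theta)$ to obtain
\[
\tv(Q_\eta, P_\theta) \leq \sqrt{\tfrac{1}{2}\KL(Q_\eta | P_\theta)} \leq \sqrt{\tfrac{1}{2}(\okl + \eta)}.
\]
Using the triangle inequality for $\tv$ (valid since $\tv$ is a metric on $\cP(\cX)$), this gives
\[
\tv(P_\theta, P_0) \leq \tv(P_\theta, Q_\eta) + \tv(Q_\eta, P_0) \leq \sqrt{\tfrac{1}{2}(\okl + \eta)} + \epsilon.
\]
Finally, sending $\eta \downarrow 0$ and using continuity of $\sqrt{\cdot}$ yields the desired bound $\tv(P_\theta, P_0) \leq \sqrt{\okl/2} + \epsilon$.

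There is no real obstacle here: the only subtle point is not assuming a minimizer exists in the definition of OKL, which is easily handled by the $\eta$-approximate argument above. Everything else is a direct invocation of two standard inequalities.
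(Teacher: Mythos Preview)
Your proposal is correct and follows essentially the same approach as the paper: Pinsker's inequality applied to a (near-)minimizer of the OKL, followed by the triangle inequality for $\tv$. The only cosmetic difference is that the paper invokes the existence of the exact I-projection $Q^\theta$ (guaranteed by Csisz\'ar's result, cf.\ \Cref{def:okl}) rather than an $\eta$-approximate minimizer, but the argument is otherwise identical.
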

\begin{proof} It suffices to consider the case $\okl < \infty$. Then, as mentioned in \Cref{def:okl}, there is a unique distribution $Q^\theta \in \cP(\cX)$ called as the I-projection, such that $\tv(Q^\theta, P_0) \leq \epsilon$ and $\okl = \KL(Q^\theta|P_\theta)$ (we take $Q^\theta = P_0$ when $\epsilon = 0$). 
	
Pinsker's inequality shows
$$
\tv(Q^\theta, P_\theta) \leq \sqrt{\frac{1}{2} \KL(Q^\theta|P_\theta)} = \sqrt{\frac{\okl}{2}},
$$
and thus the proof follows by noting that $\tv(P_\theta, P_0) \leq \tv(Q^\theta, P_0) + \tv(Q^\theta, P_\theta) \leq \epsilon + \sqrt{\frac{\okl}{2}}$.
\end{proof}

We can now state and prove our main result in this section:

\begin{proposition} Suppose that  \Cref{ass:misspecification} is satisfied, and let $x_1, \ldots, x_n \iid P_0$. 
\begin{enumerate}
	\item Further suppose that the uniform convergence in the conclusion of  \Cref{lem:uniform-conv-okle} holds, and
	\item Consider the OWL estimator $\hat{\theta} \in \Theta$ defined to be any global minimizer of $\hatI$, i.e. $|\hatI(\hat{\theta}) - \min_{\theta \in \Theta} \hatI(\theta) | \pconv 0$ as $n \to \infty$.
\end{enumerate}
Then, for large values of $n$, the OWL estimated model $P_{\hat{\theta}}$ will be within $\epsilon$ TV-distance of $P_0$. More precisely
$$
\max\left(\tv(P_{\hat{\theta}}, P_0) - \epsilon, 0\right) \pconv 0 \qquad \text{as } n \to \infty.
$$
\end{proposition}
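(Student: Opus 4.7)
The plan is to chain together three facts already established in the excerpt: (i) under \Cref{ass:misspecification}, $\min_{\theta \in \Theta} I_\epsilon(\theta) = 0$, (ii) the uniform convergence $\sup_{\theta \in \Theta} |\hat{I}_\epsilon(\theta) - I_\epsilon(\theta)| \pconv 0$ (assumption 1 in the proposition), together with the approximate-minimizer property (assumption 2) will transfer the vanishing of $\min_{\theta} \hat{I}_\epsilon(\hat{\theta})$ back to a vanishing of $I_\epsilon(\hat{\theta})$, and (iii) \Cref{lem:okl-pinsker} converts control on $I_\epsilon(\hat{\theta})$ into control on $\tv(P_{\hat{\theta}}, P_0)-\epsilon$ via Pinsker's inequality.

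First, I would observe that \Cref{ass:misspecification} provides some $\theta^* \in \Theta_I$ with $\tv(P_0, P_{\theta^*}) \leq \epsilon$. Taking $Q = P_{\theta^*}$ in the definition of the OKL gives $I_\epsilon(\theta^*) \leq \KL(P_{\theta^*}\mid P_{\theta^*}) = 0$, hence $\min_{\theta \in \Theta} I_\epsilon(\theta) = 0$. Combined with the uniform convergence assumption, this yields $\min_{\theta \in \Theta}\hat{I}_\epsilon(\theta) \pconv 0$, because
\[
\Bigl|\min_{\theta}\hat{I}_\epsilon(\theta) - \min_{\theta} I_\epsilon(\theta)\Bigr| \leq \sup_{\theta \in \Theta}\bigl|\hat{I}_\epsilon(\theta) - I_\epsilon(\theta)\bigr|.
\]
By the approximate-minimizer assumption on $\hat\theta$, it follows that $\hat{I}_\epsilon(\hat\theta) \pconv 0$, and by another application of uniform convergence, $I_\epsilon(\hat\theta)\pconv 0$.

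Finally, I would apply \Cref{lem:okl-pinsker} to $\theta = \hat\theta$ to obtain
\[
\tv(P_{\hat\theta}, P_0) \leq \sqrt{I_\epsilon(\hat\theta)/2} + \epsilon,
\]
so that $\max(\tv(P_{\hat\theta}, P_0) - \epsilon, 0) \leq \sqrt{I_\epsilon(\hat\theta)/2} \pconv 0$. Since convergence in probability to $0$ of a non-negative random variable is preserved under continuous monotone maps, this delivers the conclusion.

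There is no real obstacle in this argument; everything follows mechanically from the cited lemmas and assumptions. The only subtle point to check is that \Cref{lem:okl-pinsker} is applicable for the chosen $\hat\theta$ (i.e.~that its conclusion holds with probability one and not almost surely on some subset depending on $\hat\theta$), but since the bound is deterministic and holds for every $\theta\in\Theta$, it applies to the random $\hat\theta$ as well. The conditions (i)--(iii) listed in the statement of the proposition are exactly what is needed to guarantee the uniform convergence invoked above, so no additional regularity is required beyond what the proposition provides.
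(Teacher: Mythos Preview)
Your proposal is correct and follows essentially the same route as the paper: show that $\min_\theta I_\epsilon(\theta)=0$ under \Cref{ass:misspecification}, use uniform convergence plus the approximate-minimizer property to deduce $I_\epsilon(\hat\theta)\pconv 0$, and then apply \Cref{lem:okl-pinsker}. The paper compresses the chain into a single display (writing $\hat I_\epsilon(\hat\theta)$ directly in the Pinsker step), whereas you pass explicitly through $I_\epsilon(\hat\theta)$ via a second application of uniform convergence; your ordering is arguably a bit cleaner, but the argument is the same.
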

\begin{proof} The argument uses the following inequalities:
	\begin{equation*}
			\tv(P_{\hat{\theta}}, P_0) - \epsilon \leq \sqrt{\frac{1}{2} \hatI(\hat{\theta})}
			 \leq  \sqrt{\frac{1}{2} \inf_{\theta \in \Theta} \hatI(\theta) + o_P(1)} 
			 \leq \sqrt{\frac{1}{2} \inf_{\theta \in \Theta} \okl+ o_P(1)} = o_P(1),
	\end{equation*}
	Here, the first inequality follows from \Cref{lem:okl-pinsker}, the second from the fact that $\hat{\theta}$ is a global minimizer of $\hatI$ (item 2), the third from the uniform convergence $\sup_{\theta \in \Theta} \abs*{\hatI(\theta) - \okl} \pconv 0$ (item 1), and the last equality from the fact that $\min_{\theta \in \Theta} \okl = 0$ under \Cref{ass:misspecification}.
\end{proof}

\section{Performing OWL computations}
\label{sec:comp}

In this section, we expand on the computational details of the OWL methodology. Specifically, in \Cref{sec:i-projection} we discuss our approach to solve the constrained convex optimization problems in \cref{eq:finiteokle,eq:okle}, and in \Cref{sec:weighted-likelihood} we discuss the details of optimizing a weighted likelihood for exponential families and mixture models.

\subsection{Computing the \texorpdfstring{$w$}{w}-step (I-projection)}
\label{sec:i-projection}

\subsubsection{Unkernelized I-projection}
Suppose that $\Xcal$ is finite. Then the finite approximation to the OKL of \cref{eq:finiteokle} is given by the solution to the convex optimization problem
\[ \min_{\substack{w \in {\Delta}_n \\ \frac{1}{2} \|w-o\|_1 \leq \epsilon}} \sum_{i=1}^n w_i \log \frac{w_i n_i}{p_\theta(x_i)}, \]
where $n_i = |\{j : x_j = x_i \}|$ is the number of times that $x_i$ occurs in our sample.

This is a convex optimization problem for which there are a number of candidate solutions. Our approach is based on (consensus) Alternating Direction Method of Multipliers (ADMM)~\citep{boyd2011distributed, parikh2014proximal}. To frame the OKL optimization problem in the language of ADMM, we rewrite it as
\begin{equation}
\label{eqn:admm-form}
\min_{w \in \R^n} \overbrace{\sum_{i=1}^n w_i \log \frac{w_i n_i}{ p_\theta({x}_i)}}^{f_1(w)} + \overbrace{\ci{ w \in \Delta_n}}^{f_2(w)} + \overbrace{\ci{\frac{1}{2} \|w-o\|_1 \leq \epsilon }}^{f_3(w)},
\end{equation}
where $\ci{C}$ is 0 whenever the condition $C$ is true and $\infty$ otherwise. Written in this form, ADMM boils down to implementing the individual \emph{proximal operators} for the $f_i$'s, where the proximal operator of a closed proper convex function $f: \R^n \rightarrow \R$ and parameter $\lambda > 0$ is defined as the function
\[ \prox_{\lambda f}(x) = \argmin_{z \in \R^n} \left\{ f(z) + \frac{1}{2\lambda} \| z - x \|_2^2\right\}. \]
In our setting, the proximal operator for the KL term $f_1$ can be computed element-wise using the Lambert-W function~\citep{barratt2021optimal}, or in log-scale via the Wright omega function, for which fast algorithms exist~\citep{lawrence2012algorithm}, leading to $O(n)$ computational complexity. The proximal operator for $f_2$ corresponds to projection onto the $n$-dimensional scaled probability simplex, which can be computed in $O(n)$ time~\citep{condat2016fast}. The proximal operator for $f_3$ corresponds to projection onto an $\ell_1$-ball, and in fact it can be reduced to projection onto the simplex~\citep{condat2016fast}.

\begin{algorithm}[t]
\caption{Consensus ADMM to compute the I-projection}
\label{alg:consensus-admm}
\begin{algorithmic}
\STATE \textbf{Input:} Proximal operators $\prox_{\lambda f_i}$ for functions in \cref{eqn:admm-form}, dimension $n$, number of iterations $T$, and proximal penalty parameters $\lambda_i > 0$.
\STATE Initialize $w_i^0, y_i^0, z^0 = \vec{0} \in \R^n$ for $i=1, 2, 3$.
\FOR{$t=0,\ldots,T-1$}
	\STATE Set $w^{(i,t+1)} = \prox_{\lambda_i f_i}( z^{(t)} + \lambda_i y^{(i,t)})$ for $i=1, 2, 3$.
	\STATE Set $z^{(t+1)} = \frac{1}{\sum_{i=1}^k 1/\lambda_i}  \sum_{i=1}^k \left( w^{(i,t+1)}/\lambda_i - y^{(i,t)} \right)$.
	\STATE Set $y^{(i,t+1)} = y^{(i,t)} + (z^{(t+1)} - w^{(i,t+1)})/\lambda_i$ for $i=1, 2, 3$.
	\STATE \emph{Optionally, adjust $\lambda_i$ for $i=1, 2, 3$.}
\ENDFOR
\STATE \textbf{Output:} $w_1^T$. 
\end{algorithmic}
\end{algorithm}

To simplify our discussion on the implementation, we rewrite \cref{eqn:admm-form} for an arbitrary number of convex functions $f_1, \ldots, f_k: \R^n \rightarrow \R$ as
\[ \min_{\substack{w^{(1)}, \ldots, w^{(k)}, z \in \R^n \\ w^{(i)} = z \text{ for all }i}} \sum_{i=1}^k f_i\left(w^{(i)} \right) . \]
For penalty parameters $\lambda_1, \ldots, \lambda_k > 0$, the augmented Lagrangian associated with this problem is given by
\[ L\left(w^{(1:k)}, y^{(1:k)}, z \right) = \sum_{i=1}^k f_i\left(w^{(i)}\right) + \langle y^{(i)} , w^{(i)} - z \rangle + \frac{1}{2\lambda_i} \| w^{(i)} - z \|^2, \]
where $w^{(1)}, \ldots, w^{(k)} \in \R^n$ are the primal variables, $y^{(1)}, \ldots, y^{(k)} \in \R^n$ are the dual variables, and $z \in \R^n$ is the consensus variable.
Then the consensus ADMM algorithm is derived by iteratively optimizing the augmented Lagrangian coordinate-wise. That is, starting from some initialization $w^{(i,0)}, y^{(i,0)}, z^{(0)} \in \R^n$, we perform the following updates
\begin{align*}
w^{(i,t+1)} &= \argmin_{w^{(i)} \in \R^n} L\left(w^{(-i,t+1)}, w^{(i)}, y^{(1:k,t+1)}, z^{(t)} \right) = \prox_{\lambda_i f_i}\left(z^{(t)}  + \lambda_i y^{(i,t)} \right)\\
z^{(t+1)} &= \argmin_{z  \in \R^n} L\left(w^{(1:k,t+1)}, y^{(1:k,t+1)}, z \right) = \frac{1}{\sum_{i=1}^k 1/\lambda_i}  \sum_{i=1}^k \left( \frac{1}{\lambda_i}w^{(i,t+1)} - y^{(i,t)} \right) \\
y^{(i,t+1)} &= y^{(i,t)} + \frac{1}{\lambda_i} (z^{(t+1)} - w^{(i,t+1)}).
\end{align*}

\Cref{alg:consensus-admm} displays the full consensus ADMM algorithm. In our implementation, we use the self-adaptive rule suggested by \cite{he2000alternating} to independently update each of the penalty parameters.

\subsubsection{Kernelized I-projection}

In the Euclidean case with $\Xcal = \R^d$, beyond the sample $x_{1}, \ldots, x_n$ and parameter $\theta \in \Theta$, we additionally have a kernel matrix $K \in \R^{n \times n}$ with induced row sums $s_i = \sum_j K_{ij}$ and row-normalized matrix $A_{ij} = K_{ij}/s_i$. The optimization problem of \cref{eq:okle} translates to
\begin{equation}
\label{eqn:general-admm-form}
\min_{v \in \R^n} \sum_{i=1}^n (Av)_i \log \frac{(Av)_i s_i}{ p_\theta({x}_i)} + \ci{v \in \Delta^n } + \ci{  \frac{1}{2}\| Av-o\|_1 \leq \epsilon}.
\end{equation}
Although all the terms in \cref{eqn:general-admm-form} remain convex in $v$, the proximal operators for the KL objective and the $\ell_1$ constraint are no longer available in closed form. To circumvent this issue, we can rewrite problems of this form as
\[ \min_{\substack{w^{(1)}, \ldots, w^{(k)}, z \in \R^n \\ w^{(i)} = M_i z \text{ for all }i}} \sum_{i=1}^k f_i\left(w^{(i)} \right), \]
where each $M_i \in \R^{n \times n}$. The augmented Lagrangian in this case becomes
\[ L\left(w^{(1:k)}, y^{(1:k)}, z \right) = \sum_{i=1}^k f_i\left(w^{(i)}\right) + \langle y^{(i)} , w^{(i)} - M_i z \rangle + \frac{1}{2\lambda_i} \| w^{(i)} - M_i z \|^2, \]
which leads to the ADMM updates
\begin{align*}
w^{(i,t+1)} &= \prox_{\lambda_i f_i}\left(M_i z^{(t)}  + \lambda_i y^{(i,t)} \right)\\
z^{(t+1)} &= \left( \sum_{i=1}^k M_i^T M_i/\lambda_i \right)^{-1}  \sum_{i=1}^k M_i^T \left( w^{(i,t+1)}/\lambda_i - y^{(i,t)} \right) \\
y^{(i,t+1)} &= y^{(i,t)} + \frac{1}{\lambda_i} ( M_i z^{(t+1)} - w^{(i,t+1)}).
\end{align*}
In our setting, each $M_i$ is either equal to $A$ or to the identity matrix $I$. Thus, the matrix inverse in the update of $z$ can be computed efficiently using a single singular value decomposition of $A$, even if the penalty parameters change between iterations. To see this, suppose that $A = U \diag(\sigma_1, \ldots, \sigma_n) V^T$ is the SVD of $A$ and let $J = \{i : M_i = A \}$. Then the $z$ update can be written as
\[ z^{(t+1)} =  V \diag\left( (\beta_{J^c} + \sigma_1^2 \beta_{J})^{-1}, \ldots, (\beta_{J^c} + \sigma_n^2 \beta_{J})^{-1} \right) V^T \sum_{i=1}^k  M_i^T \left( w^{(i,t+1)}/\lambda_i - y^{(i,t)} \right), \]
where $\beta_{J} = \sum_{i \in J} \lambda_i^{-1}$ and  $\beta_{J^c} = \sum_{i \not \in J} \lambda_i^{-1}$.

\subsubsection{Computational complexity and generalization to other distances}
\label{sec:admm-complexity}

As mentioned above, each of the three proximal operators in \cref{eqn:admm-form} and \cref{eqn:general-admm-form} can be implemented in $O(n)$ time. For the non-kernelized version of the ADMM algorithm in \Cref{alg:consensus-admm}, the remaining linear algebraic steps can also be implemented in $O(n)$ time, bringing the total computational complexity of the procedure to $O(Tn)$, where $T$ is the number of ADMM steps taken. For the kernelized version of the ADMM algorithm, we require an SVD computation as a preprocessing step, which takes $O(n^3)$ time, and each step additionally involves matrix-vector products, taking time $O(n^2)$. Thus, the total time complexity of the kernelized ADMM procedure amounts to $O(n^3 + Tn^2)$, which is polynomial in $n$ but not feasible for large values of $n$.

We remark here that from an optimization perspective, for both \cref{eqn:admm-form,eqn:general-admm-form}, the $\ell_1$-distance is not special. We may substitute in any distance that is convex in its arguments and for which there are computationally efficient methods for projection onto the corresponding balls with a specified center and radius. This includes $\ell_2$-distance, whose projection can be computed by a simple shift and rescaling, and maximum mean discrepancy \citep{gretton2006kernel}
whose projection operator is onto an ellipsoid and can be reduced to a one-dimensional root finding problem~\citep{kiseliov1994algorithms}.

\subsection{Computing the \texorpdfstring{$\theta$}{theta}-step: maximizing weighted likelihoods}
\label{sec:weighted-likelihood}

We now shift focus to the second step in the OWL procedure, maximizing a weighted likelihood:
\[  \max_{\theta \in \Theta} \sum_{i=1}^n w_i \log p_\theta({x}_i).  \]

\subsubsection{Weighted maximum likelihood for exponential families}
\label{sec:wmle-for-exp-families}

Consider the setting where $p_\theta$ is an exponential family and $\Theta \subset \R^d$ is the natural parameter space so that
\[ p_\theta(x) = h(x) \exp \left( \theta^T T(x) - A(\theta) \right), \]
where $T: \Xcal \rightarrow \R^d$ is a sufficient statistic, $h$ is a base measure, and $A:\theta \rightarrow \R$ is the log-normalizing factor. Then for $w \in \Delta_n$, the weighted maximization step solves \begin{align}
\label{eqn:weighted-ml-expfam}
\max_{\theta \in \Theta} \sum_{i=1}^n w_i \log p_\theta({x}_i) = \min_{\theta \in \Theta} \left\{ A(\theta)  - \theta^T \sum_{i=1}^n w_i T(x_i) \right\}.
\end{align}
This solution satisfies the gradient condition $\nabla A(\theta) = \sum_{i=1}^n w_i T(x_i).$

For exponential families, whenever  $\cov(T(X))$ is positive definite for $X \sim p_\theta$, the function $\nabla A(\theta)$ is invertible. Thus, this step can be solved quickly whenever we can compute the inverse of $\nabla A(\cdot)$. When $\cov(T(X))$ is not strictly positive definite, or more generally when the inverse of $\nabla A(\cdot)$ is not available in closed form, the objective in \cref{eqn:weighted-ml-expfam} is still convex in $\theta$ and can be solved using tools from convex optimization.

\subsubsection{Weighted maximum likelihood for mixture models}
\label{sec:mixture-models}

In the mixture model setting, the parameters $\theta$ encode mixing weights $\pi \in \Delta_K$ and component parameters $\phi_1, \ldots, \phi_K \in \Phi$ such that $p_{\phi}$ denotes a probability density over $\Xcal$. Then the likelihood under $\theta$ can be written as
\[ p_\theta(x_{1:n}) =  \prod_{i=1}^n \left( \sum_{k=1}^k \pi_k p_{\phi_k}(x_i) \right). \]
To compute the maximum likelihood estimate, it is standard to introduce latent categorical random variables $z_i \stackrel{iid}{\sim} \pi$, $z_i \in \{1,\ldots,K\}, i=1,\ldots,n,$ and rewrite the likelihood as
\begin{align*}
p_\theta(x_{1:n}) =  \E_{z \sim \pi} \left[ \prod_{i=1}^n p_{\phi_{z_i}}(x_i) \right].
\end{align*}
The above can then be maximized using the EM algorithm~\citep{dempster1977maximum}. Unfortunately, the introduction of weights into the likelihood no longer allows for an easy decomposition via these latent variables. 

However, consider the likelihood with respect to the latent variables $\tilde{\theta} = (\theta, z_{1:n}) \in \tilde{\Theta}$:
\[ p_{\tilde{\theta}}(x_{1:n}) = \prod_{i=1}^n p_{\phi_{z_i}}(x_i) . \]
Then, following the setup from \Cref{sec:niid}, the weighted log-likelihood can be written as
\[ \sum_{i=1}^n w_i \log p_{\phi_{z_i}}(x_i) = \sum_{k=1}^K \sum_{i : z_i = k} w_i \log p_{\phi_{k}}(x_i).  \]
Thus, to maximize the above for fixed latent variables $z_1, \ldots, z_n$, we can maximize the weighted log-likelihood of the individual component parameters over its assigned (weighted) data. For many component distributions (e.g., Gaussians, Poissons), this can be computed in closed-form. On the other hand, for fixed component variables $\phi_1,\ldots, \phi_k$, we can maximize this weighted log-likelihood over the $z_1, \ldots, z_n$ by assigning each data point to the component that maximizes its individual likelihood. This suggests the following scheme, reminiscent of the `hard EM' algorithm~\citep{samdani2012unified}:
\begin{align*}
\phi_k^{t+1} &= \argmax_{\phi \in \Phi} \sum_{i: z^t_i = k} w_i \log p_{\phi}(x_i) \text{ for each } k=1,\ldots,K \\
z_i^{t+1} &= \argmax_{k \in \{1, \ldots, K \}} \log p_{\phi_k}(x_i) \text{ for each } i=1,\ldots,n .
\end{align*}
In general, optimizing over the augmented parameter space $\tilde{\Theta}$ implicitly assumes a generative model different from the one assumed by optimizing over $\Theta$. In particular, when one optimizes over $\Theta$, the assumed generative model is that the data come from some distribution $P_0$ that is $\epsilon$-close to $P_{\theta}$ for some $\theta \in \Theta$. The implied generative model when optimizing over $\tilde{\Theta}$ is that the data come from some mixture distribution $\sum_k \pi_k p_k$ such that each component $p_k$ is $\epsilon$-close to some $\phi_k \in \Phi$. For most settings, the first generative model is a strict generalization of the second. However, in the $\epsilon$-contamination model, these are equivalent, as implied by the following result.
\begin{proposition}
\label{prop:eps-contam-mixture-model}
Let $P_0, P_1, \ldots, P_K$ denote probability measures, $\pi \in \Delta_K$, and $\epsilon > 0$. Then the following are equivalent.
\begin{itemize}
	\item There exists a probability measure $Q$ such that $P_0 = (1- \epsilon) \sum_{k=1}^K \pi_k P_k + \epsilon Q$.
	\item There exist probability measures $Q_1, \ldots, Q_k$ such that $P_0 = \sum_{k=1}^K \pi_k ( (1- \epsilon) P_k + \epsilon Q_k )$.
\end{itemize}
\end{proposition}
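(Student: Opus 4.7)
The plan is to prove both implications by explicit construction, each amounting to a single algebraic identity that uses only the fact that convex combinations of probability measures (with weights summing to one) remain probability measures.

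For the direction $(ii) \Rightarrow (i)$, I would set $Q = \sum_{k=1}^K \pi_k Q_k$. Since each $Q_k$ is a probability measure and $\pi \in \Delta_K$, $Q$ is also a probability measure. Then by linearity,
\[
\sum_{k=1}^K \pi_k \bigl((1-\epsilon) P_k + \epsilon Q_k\bigr) = (1-\epsilon) \sum_{k=1}^K \pi_k P_k + \epsilon \sum_{k=1}^K \pi_k Q_k = (1-\epsilon) \sum_{k=1}^K \pi_k P_k + \epsilon Q,
\]
which matches the expression in $(i)$.

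For the direction $(i) \Rightarrow (ii)$, the key observation is that the contaminating measure may be taken to be the same across components. Concretely, given $Q$ from $(i)$, I would set $Q_k = Q$ for each $k \in \{1, \ldots, K\}$. Then, again by linearity and $\sum_k \pi_k = 1$,
\[
\sum_{k=1}^K \pi_k \bigl((1-\epsilon) P_k + \epsilon Q\bigr) = (1-\epsilon) \sum_{k=1}^K \pi_k P_k + \epsilon Q = P_0,
\]
which is precisely $(ii)$.

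There is no real obstacle to this argument: both implications reduce to rearranging a single convex combination. The conceptual content of the proposition is that in Huber's $\epsilon$-contamination model, contamination of the overall mixture is indistinguishable from per-component contamination, which justifies optimizing over the augmented parameter space $\tilde{\Theta}$ described in the preceding discussion of the hard-EM style updates.
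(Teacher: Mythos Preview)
Your proof is correct and is precisely the ``simple algebraic substitution'' the paper alludes to; the paper gives no further detail beyond that phrase, so your explicit constructions ($Q = \sum_k \pi_k Q_k$ for one direction, $Q_k = Q$ for the other) are exactly what is intended.
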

The proof of \Cref{prop:eps-contam-mixture-model} follows from simple algebraic substitution.

\section{Asymptotics of the coarsened likelihood}
\label{sec:coarsened-likelihood-asymptotics}

In this section, we will show the asymptotic convergence of the coarsened likelihood to the OKL function.

\subsection{Asymptotics for finite spaces}
\label{sec:coarsened-likelihood-asymptotic-finite}
Suppose $\cX$ is a finite space. We will use the notation from \Cref{sec:asymp-finite}; in particular, recall the probability simplex $\Delta_{\cX} = \{q \in [0,1]^{\cX} | \sum_{x \in X} q(x) = 1\}$ and the OKL function $I_\epsilon(\theta)$ in terms of a general distance $\D$ on $\Delta_{\cX}$ as in \cref{eqn:okl-general-distance}. We begin by showing an elementary rounding lemma, which will be useful when applying Sanov's theorem in \Cref{lem:finite-sanov}.

\begin{lemma}
\label{lem:rounded-probability-vector}
Let $p, p_\theta \in \Delta_\Xcal$ such that $\KL(p|p_\theta) < \infty$. For any integer $n > |\Xcal|$, there exists a $q_n \in \Delta_\Xcal$ such that $n q_n(x)$ is integral for all $x \in \Xcal$, $\|q_n - p \|_1 \leq \frac{2|\Xcal|}{n}$, and 
\[ \KL(q_n | p_\theta) \leq \left(1+ \frac{|\Xcal|^2}{n}\right)\KL(p|p_\theta) + \frac{|\Xcal|}{n} \left( 2 \log \frac{n}{2} + \log |\Xcal| + \frac{|\Xcal|}{e} \right).\]
\end{lemma}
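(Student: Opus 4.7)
The plan is to construct $q_n$ by rounding $np$ to integers in a way that concentrates the rounding deficit on the most probable atom, and then bound the resulting change in KL via an explicit algebraic identity. Let $x^\star \in \argmax_{x \in \Xcal} p(x)$, so that $p(x^\star) \geq 1/|\Xcal|$. Set $c_x = \lfloor np(x) \rfloor$ for $x \neq x^\star$ and $c_{x^\star} = n - \sum_{x \neq x^\star} c_x$. A short argument using $\sum_{x \neq x^\star} np(x) = n(1 - p(x^\star))$ shows $c_{x^\star}$ is an integer in $[np(x^\star), np(x^\star) + |\Xcal|-1]$; in particular $c_{x^\star} \geq 1$, $\sum_x c_x = n$, and $q_n := c/n$ satisfies $\supp(q_n) \subseteq \supp(p) \subseteq \supp(p_\theta)$. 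The $\ell_1$ bound follows from $|c_x - np(x)| \leq 1$ at $x \neq x^\star$ and $|c_{x^\star} - np(x^\star)| \leq |\Xcal|-1$, yielding $\|q_n - p\|_1 \leq (2|\Xcal|-2)/n < 2|\Xcal|/n$.

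For the KL bound, I would use the algebraic identity
\[
\KL(q_n | p_\theta) - \KL(p | p_\theta) = \KL(q_n | p) + \sum_{x} (q_n(x) - p(x)) \log \frac{p(x)}{p_\theta(x)},
\]
obtained by writing $q_n \log(q_n/p_\theta) = q_n \log(q_n/p) + q_n \log(p/p_\theta)$ and using $\sum_x (q_n - p)(x) = 0$. The first term $\KL(q_n | p)$ is $O(|\Xcal|^2/n)$ via a term-by-term reverse-Pinsker estimate $y \log(y/x) \leq (y-x) + (y-x)^2/(2\min(x,y))$, with the heavy atom $x^\star$ dominating through $np(x^\star) \geq n/|\Xcal|$.

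For the cross-term I split by atom. At $x^\star$, the multiplicative deviation $|q_n(x^\star)/p(x^\star) - 1| \leq (|\Xcal|-1)/(np(x^\star)) \leq |\Xcal|^2/n$ gives a contribution bounded by $(|\Xcal|^2/n) \cdot p(x^\star) |\log(p/p_\theta)(x^\star)|$, which is in turn $\leq (|\Xcal|^2/n)(\KL(p|p_\theta) + 2)$ using the standard estimate $\sum_x p(x)|\log(p(x)/p_\theta(x))| \leq \KL(p|p_\theta) + 2$ (which follows from $\log(p_\theta/p) \leq p_\theta/p - 1$ and $\sum_x [p_\theta - p]^+ \leq 1$). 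This produces the desired multiplicative factor $(1 + |\Xcal|^2/n)$. Atoms $x \neq x^\star$ with $p(x) \geq 1/n$ contribute a small non-positive perturbation absorbable into additive constants; atoms with $0 < p(x) < 1/n$ (where $q_n(x) = 0$) contribute $-\sum p \log(p/p_\theta) \leq \sum p \log(1/p)$, bounded by $O((|\Xcal|/n) \log n)$ via the pointwise estimate $p \log(1/p) \leq (1/n) \log n$ whenever $p < 1/n$. The remaining additive log factors $\log(n/2)$, $\log |\Xcal|$, and $|\Xcal|/e$ arise from Fano-style entropy-continuity bounds $|H(p) - H(q_n)| \leq \|q_n - p\|_1 \log|\Xcal| + h_b(\|q_n - p\|_1)$ with $h_b(x) \leq x \log(e/x)$, applied to the equivalent rewriting $\KL(q_n|p_\theta) - \KL(p|p_\theta) = [H(p) - H(q_n)] - \sum_x (q_n - p)(x) \log p_\theta(x)$.

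The main obstacle is that a naive ``largest-remainder'' rounding yields multiplicative deviation $|q_n(x)/p(x) - 1| = \Theta(1)$ at atoms with tiny $p(x)$, which would destroy the $(1 + |\Xcal|^2/n)$ factor in the bound. Concentrating the entire deficit at $x^\star$ sidesteps this: since $p(x^\star) \geq 1/|\Xcal|$, the multiplicative deviation there is at most $|\Xcal|^2/n$, small enough to produce the claimed multiplicative correction, while the remaining atoms are either shifted by at most $1/n$ in absolute terms (and hence controlled by $\sum p L^-$) or carry negligible mass (and hence controlled by the entropy refinement $p \log(1/p) \leq (\log n)/n$).
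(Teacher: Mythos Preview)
Your construction of $q_n$ and the $\ell_1$ bound are exactly what the paper does, and the key insight---concentrate the entire rounding deficit at $x^\star$ so that the only atom with $q_n(x)>p(x)$ has $p(x^\star)\ge 1/|\Xcal|$---is the right one.

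The KL argument, however, has a genuine error and is more tangled than necessary. In your first decomposition the cross-term is $\sum_x(q_n(x)-p(x))\log\tfrac{p(x)}{p_\theta(x)}$, and you claim that atoms $x\neq x^\star$ with $p(x)\ge 1/n$ contribute ``a small non-positive perturbation.'' That is false: $q_n(x)-p(x)\le 0$ there, but $\log\tfrac{p(x)}{p_\theta(x)}$ has no sign, so the product can be strictly positive (take $p(x)<p_\theta(x)$). These terms \emph{are} bounded---each by $(1/n)\log(p_\theta(x)/p(x))\le(\log n)/n$---but not for the reason you give, and tracking them this way does not produce the stated constants. You then switch to the entropy/cross-entropy decomposition to pick up the ``remaining additive log factors,'' but you cannot add pieces from two different decompositions of the same quantity.

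The paper works entirely with the second decomposition, $\KL(q_n|p_\theta)-\KL(p|p_\theta)=[H(p)-H(q_n)]+\sum_x(q_n(x)-p(x))\log\tfrac{1}{p_\theta(x)}$, and the point you are missing is that $\log\tfrac{1}{p_\theta(x)}\ge 0$ always. Combined with $q_n(x)\le p(x)$ for $x\neq x^\star$, every such term in the cross-entropy difference is $\le 0$ outright---no case analysis on $p(x)$ versus $1/n$ is needed. Only the single term at $x^\star$ survives, and it is bounded by $(|\Xcal|/n)\log\tfrac{1}{p_\theta(x^\star)}$. The paper then controls $\log\tfrac{1}{p_\theta(x^\star)}$ via the two-point log-sum inequality $\KL(p|p_\theta)\ge p(x^\star)\log\tfrac{p(x^\star)}{p_\theta(x^\star)}+(1-p(x^\star))\log\tfrac{1-p(x^\star)}{1-p_\theta(x^\star)}$, which after $p(x^\star)\ge 1/|\Xcal|$ and $-(1-t)\log(1-t)\le 1/e$ gives $\log\tfrac{1}{p_\theta(x^\star)}\le |\Xcal|(\KL(p|p_\theta)+1/e)+\log|\Xcal|$, yielding the multiplicative $|\Xcal|^2/n$ and the additive $|\Xcal|/e$ and $\log|\Xcal|$ in one stroke. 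The entropy continuity bound (Cover--Thomas, Theorem~17.3.3) supplies the $2\log(n/2)$ term. Your route via $\KL(q_n|p)$ and $\sum p|L|\le\KL+2$ is salvageable to the right order but with worse constants; switching to $\log(1/p_\theta)$ is the clean fix.
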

\begin{proof}
Choose an arbitrary $x^\star = \argmax_{x \in \cX} p(x)$, choosing arbitrarily if there are ties. Then we define $q_n \in \Delta_\Xcal$ as follows. For all $x \neq x^\star$, let $q_n(x) = \frac{1}{n} \lfloor n q_n(x) \rfloor$, and let $q_n(x^\star) = 1 - \sum_{x\neq x^\star}q_n(x)$. As $p \in \Delta_\Xcal$, we have $q_n \in \Delta_\Xcal$. Moreover, by construction we also have that $n q_n(x)$ is integral for all $x \in \Xcal$. 

Observe that for all $x \neq x^\star$, we have $0 \leq p(x) - q_n(x) \leq 1/n$. This implies that \[ 0 \leq q_n(x^\star) - p(x^\star) = \left(1 - \sum_{x \neq x^\star}q_n(x)\right) - \left(1-\sum_{x \neq x^\star}p(x)\right) \leq \frac{|\Xcal|}{n}.\]
Together, these statements give us that $\| p - q_n \|_1 \leq \frac{2|\Xcal|}{n}$. Further since $\KL(p|p_\theta) < \infty$, we must also have $\KL(q_n|p_\theta) < \infty$ since $\supp(q_n) \subseteq \supp(p) \subseteq \supp(p_\theta)$.  

To prove the more detailed KL bound, we first observe that by known bounds on the entropy function \citep[c.f. Theorem 17.3.3]{cover2006elements}, we have
\begin{align*}
    \sum_{x} q_n(x) \log q_n(x) 
    &\leq \sum_{x} p(x) \log p(x) + \| p - q_n \|_1 \log \frac{|\Xcal|}{\| p - q_n \|_1} \\
    &\leq \sum_{x} p(x) \log p(x) + \frac{2|\Xcal|}{n} \log \frac{n}{2}.
\end{align*} 
Next observe that we can bound the cross-entropy between $q_n$ and $p_\theta$ as follows.
\begin{align*}
\sum_x q_n(x) \log \frac{1}{p_\theta(x)} 
&= \sum_{x \neq x^\star} q_n(x) \log \frac{1}{p_\theta(x)} + q_n(x^\star) \log \frac{1}{p_\theta(x^\star)} \\
&\leq \sum_{x \neq x^\star} p(x) \log \frac{1}{p_\theta(x)} + (p(x^\star) + q_n(x^\star) - p(x^\star)) \log \frac{1}{p_\theta(x^\star)}\\
&\leq \sum_x p(x) \log \frac{1}{p_\theta(x)} + \frac{|\Xcal|}{n} \log \frac{1}{p_\theta(x^\star)}.
\end{align*}
Now observe that $p(x^\star)\geq 1/|\Xcal|$. This implies that $\log \frac{1}{p_\theta(x^\star)} \leq |\Xcal| (\KL(p | p_\theta) + 1/e) + \log |\Xcal|$. To see this, first note that if $p_\theta(x^\star) \geq 1/|\Xcal|$, then the claim is trivial. Thus, we may assume $p_\theta(x^\star) < 1/|\Xcal| \leq p(x^\star)$. The log-sum inequality implies that
\begin{align*}
\KL(p | p_\theta) 
&\geq p(x^\star) \log \frac{p(x^\star)}{p_\theta(x^\star)} + (1-p(x^\star)) \log \frac{1 - p(x^\star)}{1 - p_\theta(x^\star)} \\
&\geq p(x^\star) \log \frac{p(x^\star)}{p_\theta(x^\star)} - (1-p(x^\star)) \log \frac{1}{1 - p(x^\star)} \\
&\geq p(x^\star) \log \frac{p(x^\star)}{p_\theta(x^\star)} - \frac{1}{e} \geq \frac{1}{|\Xcal|} \log \frac{1}{|\Xcal| p_\theta(x^\star)} - \frac{1}{e}.
\end{align*}
Here, the last inequality follows from our bound on $p(x^\star)$ combined with the fact that $a\log \frac{a}{b}$ is an increasing function in $a$ for $a \geq b$. Rearranging the above gives us the claim.
\end{proof}

Our analysis of the coarsened likelihood will rely heavily on the following result, which is essentially a consequence of Sanov's theorem.

\begin{lemma}
\label{lem:finite-sanov} 
Suppose \Cref{assum:finite-continuous-distance,assum:finite-okl} hold, and let $r > \epsilon_0$ and $n \geq \frac{4C |\Xcal|}{r - \epsilon_0}$. Then 
\[  \left| \frac{1}{n} \log \prob_\theta\left(\D(\EmpDist{Z_{1:n}}, p_0) \leq r\right) + I_r(\theta) \right| \leq 
\frac{|\Xcal|}{n} \left( 3 \log(n+1) + \log|\Xcal| + |\Xcal|\left(V + \frac{1}{e} \right) + \frac{2C}{r - \epsilon_0} \right) \]
 where the probability operation $\prob_\theta$ is taken over random points $Z_1, \ldots, Z_n \in \cX$ drawn from the distribution $p_\theta \in \Delta_{\cX}$, and $\EmpDist{Z_{1:n}} \in \Delta_{\cX}$ is the empirical distribution of the data points $Z_1, \ldots, Z_n$.
\end{lemma}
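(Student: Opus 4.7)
The plan is to bound $\log \prob_\theta(\D(\EmpDist{Z_{1:n}}, p_0)\leq r)$ from above and below using the two method-of-types estimates (Sanov's theorem in the finite-alphabet form found in Chapter 11 of Cover and Thomas). For the upper bound, setting $A = \{q \in \Delta_\cX : \D(q, p_0) \leq r\}$, the standard bound gives
\[
\prob_\theta\bigl(\EmpDist{Z_{1:n}} \in A\bigr) \;\leq\; (n+1)^{|\cX|}\exp\Bigl(-n\inf_{q\in A} \KL(q|p_\theta)\Bigr) \;=\; (n+1)^{|\cX|}e^{-n I_r(\theta)},
\]
which immediately yields $\tfrac{1}{n}\log\prob_\theta + I_r(\theta) \leq \tfrac{|\cX|}{n}\log(n+1)$ and covers the upper-side direction of the target inequality.

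For the lower bound, I would lower bound the total probability by the probability of a single carefully chosen rational type. Since $n \geq 4C|\cX|/(r-\epsilon_0)$, the shrunken radius $r' := r - \tfrac{2C|\cX|}{n}$ satisfies $r' \geq \epsilon_0 + \tfrac{1}{2}(r-\epsilon_0) > \epsilon_0$, so there exists $q^\star \in \Delta_\cX$ (or a near-optimizer for any $\eta > 0$) with $\D(q^\star, p_0) \leq r'$ and $\KL(q^\star|p_\theta) \leq I_{r'}(\theta) \leq V$. Apply \Cref{lem:rounded-probability-vector} to obtain a rational type $q_n \in \Delta_\cX$ with $nq_n$ integer-valued and $\|q_n - q^\star\|_1 \leq \tfrac{2|\cX|}{n}$. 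Then, using \Cref{assum:finite-continuous-distance},
\[
\D(q_n, p_0) \leq \D(q^\star, p_0) + C\|q_n - q^\star\|_1 \leq r' + \tfrac{2C|\cX|}{n} = r,
\]
so $q_n \in A$, and by the lower method-of-types estimate
\[
\prob_\theta\bigl(\EmpDist{Z_{1:n}} \in A\bigr) \geq \prob_\theta\bigl(\EmpDist{Z_{1:n}} = q_n\bigr) \geq (n+1)^{-|\cX|}\exp(-n\KL(q_n|p_\theta)).
\]

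The remaining work is to upper bound $\KL(q_n|p_\theta)$ by $I_r(\theta)$ plus error terms matching the claimed RHS. By \Cref{lem:rounded-probability-vector},
\[
\KL(q_n|p_\theta) \leq \bigl(1 + \tfrac{|\cX|^2}{n}\bigr)\KL(q^\star|p_\theta) + \tfrac{|\cX|}{n}\Bigl(2\log\tfrac{n}{2} + \log|\cX| + \tfrac{|\cX|}{e}\Bigr);
\]
applying \Cref{lem:okl-continuity} with $\epsilon = r$, $\alpha = \tfrac{2C|\cX|}{n} \leq r-\epsilon_0$ bounds $\KL(q^\star|p_\theta) \leq I_r(\theta) + \tfrac{2C|\cX|V}{n(r-\epsilon_0)}$. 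Substituting, using $I_r(\theta) \leq V$ to absorb the $\tfrac{|\cX|^2}{n}\KL(q^\star|p_\theta)$ term into $\tfrac{|\cX|^2 V}{n}$, and combining with the $(n+1)^{-|\cX|}$ prefactor produces a $2\log(n/2)$ contribution that merges with the previous $\log(n+1)$ into $3\log(n+1)$, and a $\tfrac{2C|\cX|}{n(r-\epsilon_0)}$ contribution (up to the harmless multiplier $1 + |\cX|^2/n$), together with $\tfrac{|\cX|}{n}\log|\cX|$ and $\tfrac{|\cX|^2}{n}(V + 1/e)$, matching the RHS. The main obstacle is precisely this bookkeeping: the discretization step introduces small perturbations both in the $\D$-constraint (handled by pre-shrinking $r$ to $r'$) and in the KL value (handled by the multiplicative slack $1 + |\cX|^2/n$ together with continuity of $I_\cdot(\theta)$), and the hypothesis $n \geq 4C|\cX|/(r-\epsilon_0)$ is exactly what is needed to keep $r'$ safely above $\epsilon_0$ so that the continuity bound is usable.
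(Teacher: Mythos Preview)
Your proposal is correct and follows essentially the same route as the paper: Sanov's upper bound for one direction, and for the other direction pre-shrink the radius by $\alpha_n = 2C|\cX|/n$, pick a near-minimizer $q^\star$ at radius $r-\alpha_n$, round it to a type via \Cref{lem:rounded-probability-vector}, lower bound by the probability of that single type (Cover--Thomas Theorem~11.1.4), and close with \Cref{lem:okl-continuity}. The paper carries an explicit $\delta>0$ slack for the near-minimizer and sends $\delta\to 0$ at the end, which is exactly your parenthetical ``near-optimizer for any $\eta>0$''; otherwise the bookkeeping matches term for term.
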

\begin{proof}
Observe that $I_r(\theta) \leq V < \infty$ for all $r > \epsilon_0$. Sanov's theorem~\cite[Theorem~11.4.1]{cover2006elements} implies that
\[ \frac{1}{n} \log \prob_\theta\left(\D(\EmpDist{Z_{1:n}}, p_0) \leq r\right) \leq - I_r(\theta) + \frac{|\Xcal|}{n}\log(n+1). \]
Thus, we only need to show the lower bound for $\frac{1}{n} \log \prob_\theta\left(\D(\EmpDist{Z_{1:n}}, p_0) \leq r\right) + I_r(\theta)$. 
Pick $\delta > 0$ and for any $t > 0$, let $q_t \in \Delta_\Xcal$ satisfy $\D(q_t, p_0) \leq t$ and
\[ \KL(q_t | p_\theta) \leq I_t(\theta) + \delta. \]
Now let $\alpha_n = \frac{2C |\Xcal|}{n}$ and observe that for $n > \frac{2 C |\Xcal|}{r-\epsilon_0}$, such a $q_{r - \alpha_n} \in \Delta_{\cX}$ exists. Letting $q^{(n)}_{r - \alpha_n} \in \Delta_\Xcal$ be the discretization promised by \Cref{lem:rounded-probability-vector} and utilizing \Cref{assum:finite-continuous-distance}, we have
\begin{align*}
\D(q^{(n)}_{r - \alpha_n}, p_0) 
\leq \D(q^{(n)}_{r - \alpha_n}, q_{r - \alpha_n}) + \D(q_{r - \alpha_n}, p_0) 
\leq C \|q^{(n)}_{r - \alpha_n} - q_{r - \alpha_n}\|_1 + r - \alpha_n \leq r.
\end{align*}
Thus we have
\begin{align*}
&\hspace{-3em}\frac{1}{n} \log \prob_\theta\left(\D(\EmpDist{Z_{1:n}}, p_0) \leq r \right) \geq \frac{1}{n} \log \prob_\theta\left( \EmpDist{Z_{1:n}} = q_{r-\alpha_n}^{(n)}  \right)\\
&\geq -\frac{|\Xcal|}{n} \log (n+1) - \KL( q^{(n)}_{r - \alpha_n} | p_\theta) \\
&\geq - \left(1+ \frac{|\Xcal|^2}{n}\right)(I_{r - \alpha_n}(\theta)+\delta) - \frac{|\Xcal|}{n} \left( 3 \log(n+1) + \log |\Xcal| + \frac{|\Xcal|}{e} \right) \\
&\geq - I_{r - \alpha_n}(\theta) - \delta - \frac{|\Xcal|}{n}\left( 3 \log(n+1) + \log |\Xcal| + |\Xcal| \left( V + \frac{1}{e} + \delta \right) \right) \\
&\geq -I_r(\theta) - \delta - \frac{\alpha_n V}{r - \epsilon_0} - \frac{|\Xcal|}{n}\left( 3 \log(n+1) + \log |\Xcal| + |\Xcal| \left( V + \frac{1}{e} + \delta \right) \right),
\end{align*}
where the second line follows from~\cite[Theorem~11.1.4]{cover2006elements}, the third line follows from \Cref{lem:rounded-probability-vector}, and the last line follows from \Cref{lem:okl-continuity}. Rearranging the above and utilizing the fact that $\delta > 0$ was arbitrary gives us the lemma statement.
\end{proof}

With \Cref{lem:finite-sanov} in hand, we can prove the following convergence result for the coarsened likelihood.
\begin{theorem}
    \label{lem:finite-coarsened-likelihood-convergence}
    Suppose \Cref{assum:finite-continuous-distance,assum:finite-okl} hold, and let $\epsilon > \epsilon_0$. If $x_1,\ldots,x_n \sim p_0$, then with probability at least $1-\delta$, 
    \begin{align*}
    \left| \frac{1}{n} \log L_\epsilon(\theta|x_{1:n} ) + I_{\epsilon}(\theta)  \right| &\leq  \frac{C V |\Xcal|}{\epsilon - \epsilon_0} \sqrt{\frac{2}{n} \log \frac{2|\Xcal|}{\delta}} \\
    &\hspace{3em} + \frac{3|\Xcal|}{n} \left( 3 \log(n+1) + \log|\Xcal| + |\Xcal|\left(V + \frac{1}{e} \right) + \frac{4C}{\epsilon - \epsilon_0} \right).   
    \end{align*}
    whenever $n > \max \left\{2 \left( \frac{C |\Xcal|}{\epsilon - \epsilon_0} \right)^2 \log \frac{2|\Xcal|}{\delta}, \frac{8C |\Xcal|}{\epsilon - \epsilon_0} \right\}$.
\end{theorem}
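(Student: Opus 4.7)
The plan is to reduce this statement to Lemma~\ref{lem:finite-sanov} by switching the reference measure in the coarsening event from the empirical distribution $\pfinite = \EmpDist{x_{1:n}}$ to the true $p_0$. Since Lemma~\ref{lem:finite-sanov} already gives tight two-sided control of $\frac{1}{n}\log \prob_\theta(\D(\EmpDist{Z_{1:n}}, p_0) \le r)$ in terms of $I_r(\theta)$ for any $r > \epsilon_0$, the coarsened likelihood will be sandwiched between two such probabilities whose radii differ from $\epsilon$ by the (random) distance between $\pfinite$ and $p_0$.

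First, I would control $\D(\pfinite, p_0)$. By Hoeffding's inequality applied to each coordinate and a union bound over $\Xcal$, with probability at least $1-\delta$ we have $\|\pfinite - p_0\|_1 \le |\Xcal|\sqrt{\frac{1}{2n}\log\frac{2|\Xcal|}{\delta}}$. Invoking \Cref{assum:finite-continuous-distance}, this yields $\D(\pfinite, p_0) \le \alpha_n$ for $\alpha_n = C|\Xcal|\sqrt{\frac{1}{2n}\log\frac{2|\Xcal|}{\delta}}$. Condition on this event. The triangle inequality for $\D$ then gives the sandwich
\[
    \prob_\theta\bigl(\D(\EmpDist{Z_{1:n}}, p_0) \le \epsilon - \alpha_n\bigr)
    \;\le\; L_\epsilon(\theta|x_{1:n})
    \;\le\; \prob_\theta\bigl(\D(\EmpDist{Z_{1:n}}, p_0) \le \epsilon + \alpha_n\bigr).
\]
The hypothesis $n \ge 2(C|\Xcal|/(\epsilon-\epsilon_0))^2\log(2|\Xcal|/\delta)$ ensures $\alpha_n \le (\epsilon-\epsilon_0)/2$, so both radii $\epsilon\pm\alpha_n$ exceed $\epsilon_0$ by at least $(\epsilon-\epsilon_0)/2$, making Lemma~\ref{lem:finite-sanov} applicable. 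The companion condition $n \ge 8C|\Xcal|/(\epsilon-\epsilon_0)$ is exactly what the second bullet of Lemma~\ref{lem:finite-sanov} needs at radius $r = \epsilon - \alpha_n$ (with $r - \epsilon_0 \ge (\epsilon-\epsilon_0)/2$).

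Applying Lemma~\ref{lem:finite-sanov} to each side then yields
\[
    -I_{\epsilon - \alpha_n}(\theta) - E_n
    \;\le\; \tfrac{1}{n}\log L_\epsilon(\theta|x_{1:n})
    \;\le\; -I_{\epsilon + \alpha_n}(\theta) + E_n,
\]
where $E_n = \frac{|\Xcal|}{n}\bigl(3\log(n+1)+\log|\Xcal|+|\Xcal|(V+1/e) + 4C/(\epsilon-\epsilon_0)\bigr)$. Finally, Lemma~\ref{lem:okl-continuity} (whose hypotheses are supplied by \Cref{assum:finite-continuous-distance,assum:finite-okl}) gives $|I_{\epsilon\pm\alpha_n}(\theta) - I_\epsilon(\theta)| \le \frac{\alpha_n}{\epsilon-\epsilon_0}V$, since the $\alpha_n$ denominators in the bounds of that lemma can be crudely replaced by $\epsilon-\epsilon_0$ under our $n$-condition. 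Combining the last two displays gives the claimed rate $\tfrac{CV|\Xcal|}{\epsilon-\epsilon_0}\sqrt{\tfrac{2}{n}\log\tfrac{2|\Xcal|}{\delta}} + 3E_n$, absorbing constants.

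The main obstacle is bookkeeping: making sure the $n$-conditions simultaneously support (i) the Hoeffding bound on $\|\pfinite-p_0\|_1$, (ii) the applicability of Lemma~\ref{lem:finite-sanov} at both shifted radii, and (iii) the quantitative continuity of $I_r(\theta)$ via Lemma~\ref{lem:okl-continuity}. There is nothing conceptually deep beyond this sandwich argument; the only subtlety is that the stated $n$-threshold must be chosen large enough for all three steps, which is why the theorem's hypothesis combines a $\sqrt{\log/n}$-type condition with a linear-in-$C|\Xcal|/(\epsilon-\epsilon_0)$ condition.
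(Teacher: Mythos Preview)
Your proposal is correct and follows essentially the same approach as the paper: Hoeffding to control $\D(\pfinite,p_0)$, a triangle-inequality sandwich of $L_\epsilon$ between population-level coarsening probabilities, Lemma~\ref{lem:finite-sanov} to convert those probabilities into $I_{\epsilon\pm\alpha_n}(\theta)$, and Lemma~\ref{lem:okl-continuity} to pass from $I_{\epsilon\pm\alpha_n}$ to $I_\epsilon$. The paper organizes the algebra slightly differently---it first bounds $|\tfrac{1}{n}\log M_{n,\epsilon}(\theta)+I_\epsilon(\theta)|$ and then $|\log M_{n,\epsilon}(\theta)-\log L_\epsilon(\theta|x_{1:n})|$ via the same sandwich---but the ingredients, the $n$-conditions, and the resulting constants are the same (your direct route actually yields slightly smaller constants before you ``absorb'' them).
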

\begin{proof}
For $r > 0$, define $M_{n,r}(\theta)= \prob_\theta\left(\D(\EmpDist{Z_{1:n}}, p^0) \leq r\right)$. By \Cref{lem:finite-sanov}, 
\begin{equation}
\label{eqn:intermediate-coarsened-okl-approx}
    \left| \frac{1}{n}\log M_{n,r}(\theta) + I_r(\theta) \right| \leq  \frac{|\Xcal|}{n} \left( 3 \log(n+1) + \log|\Xcal| + |\Xcal|\left(V + \frac{1}{e} \right) + \frac{2C}{r - \epsilon_0} \right)
\end{equation}
for all $r > \epsilon_0$ satisfying that $n \geq \frac{4C |\Xcal|}{r - \epsilon_0}$.

Now suppose that $x_1, \ldots, x_n \sim p_0$. Then Hoeffding's inequality combined with \Cref{assum:finite-continuous-distance} implies that with probability at least $1-\delta$,
\[ \D(\EmpDist{x_{1:n}}, p_0) \leq C \|\EmpDist{x_{1:n}} - p_0 \|_1 \leq  C|\Xcal| \sqrt{\frac{1}{2n} \log \frac{2|\Xcal|}{\delta}} =: \alpha_n.  \]
Let us condition on this event occurring. For $n > 2\left( \frac{C |\Xcal|}{\epsilon - \epsilon_0}\right)^2 \log \frac{2|\Xcal|}{\delta}$, we have $\alpha_n < (\epsilon - \epsilon_0)/2$. Thus, we may write
\begin{align*}
\left| \log M_{n,\epsilon}(\theta) -  \log L_\epsilon(\theta | x_{1:n} )\right| 
&= \left| \log \prob_\theta\left( \D(\EmpDist{Z_{1:n}}, p_0) \leq \epsilon \right) - \log \prob_\theta\left( \D(\EmpDist{Z_{1:n}}, \EmpDist{x_{1:n}}) \leq \epsilon \right) \right| \\
&= \log \max \left\{ \frac{\prob_\theta\left(\D(\EmpDist{Z_{1:n}}, p_0) \leq \epsilon\right)}{\prob_\theta\left(\D(\EmpDist{Z_{1:n}}, \EmpDist{x_{1:n}}) \leq \epsilon\right) },  \frac{\prob_\theta\left(\D(\EmpDist{Z_{1:n}}, \EmpDist{x_{1:n}}) \leq \epsilon\right)}{\prob_\theta\left(\D(\EmpDist{Z_{1:n}}, p_0) \leq \epsilon\right)} \right\} \\
&\leq \log \max \left\{  \frac{\prob_\theta\left(\D(\EmpDist{Z_{1:n}}, p_0) \leq \epsilon\right)}{\prob_\theta\left(\D(\EmpDist{Z_{1:n}}, p_0) \leq \epsilon - \alpha_n\right)},  \frac{\prob_\theta\left(\D(\EmpDist{Z_{1:n}}, p_0) \leq \epsilon + \alpha_n\right) }{\prob_\theta\left(\D(\EmpDist{Z_{1:n}}, p_0) \leq \epsilon\right)} \right\} \\
&= \log \max \left\{ \frac{M_{n,\epsilon}(\theta)}{M_{n,\epsilon-\alpha_n}(\theta)},  \frac{M_{n,\epsilon + \alpha_n}(\theta)}{M_{n,\epsilon}(\theta)} \right\}
\end{align*}
By \cref{eqn:intermediate-coarsened-okl-approx} and \Cref{lem:okl-continuity}, we have
\begin{align*}
&\frac{1}{n}\log\frac{ M_{n,\epsilon}(\theta)}{M_{n,\epsilon - \alpha_n}(\theta)} \\
&\leq I_{\epsilon}(\theta) - I_{\epsilon - \alpha_n}(\theta) + \frac{2|\Xcal|}{n} \left( 3 \log(n+1) + \log|\Xcal| + |\Xcal|\left(V + \frac{1}{e} \right) + \frac{2C}{\epsilon - \alpha_n - \epsilon_0} \right) \\
&\leq \frac{2\alpha_n}{\epsilon - \epsilon_0}V + \frac{2|\Xcal|}{n} \left( 3 \log(n+1) + \log|\Xcal| + |\Xcal|\left(V + \frac{1}{e} \right) + \frac{4C}{\epsilon - \epsilon_0} \right).
\end{align*}
Similarly, we also have
\begin{align*}
&\frac{1}{n}\log \frac{M_{n,\epsilon+\alpha_n}(\theta)}{M_{n,\epsilon}(\theta)} \\
&\leq I_{\epsilon + \alpha_n}(\theta) - I_{\epsilon}(\theta) + \frac{2|\Xcal|}{n} \left( 3 \log(n+1) + \log|\Xcal| + |\Xcal|\left(V + \frac{1}{e} \right) + \frac{2C}{\epsilon + \alpha_n - \epsilon_0} \right) \\
&\leq \frac{\alpha_n}{\epsilon - \epsilon_0 + \alpha_n}V + \frac{2|\Xcal|}{n} \left( 3 \log(n+1) + \log|\Xcal| + |\Xcal|\left(V + \frac{1}{e} \right) + \frac{4C}{\epsilon - \epsilon_0} \right).
\end{align*}
Putting it all together gives us the theorem statement.
\end{proof}
 
\subsection{Asymptotics for  continuous spaces}
\label{sec:coarsened-likelihood-asymptotic-cont}

In this section, we will show that the coarsened likelihood continues to converge in probability to the OKL when $\cX$ is a metric space. More precisely, let $\cX$ be a Polish space (i.e.~a complete and separable metric space) equipped with its Borel sigma algebra $\cB(\cX)$. Then $\cP(\cX)$, the set of (Borel) probability measures on $\cX$, can be equipped with the topology of weak-convergence \citep{billingsley2013convergence}. In more detail, we say that a sequence of measures $\{\cP_n\}_{n \in \nat}  \subseteq \cP(\cX)$ weakly converges to a measure $P$, denoted as $P_n \dconv P$, if $\lim_{n \to \infty} \int f dP_n = \int f dP$ for each continuous and bounded function $f: \cX \to \R$. The space $\cP(\cX)$ is also a Polish space under this topology \citep{billingsley2013convergence}. 

 Recall the definition of the OKL for a general distance $\D$ over $\cP(\cX)$ from \cref{eqn:okl-general-distance}:
\[ \okl[\epsilon] = \inf_{\substack{Q \in  \cP(\cX) \\ \D(Q, P_0) \leq \epsilon}} \KL(Q | P_\theta). \]
To establish an asymptotic connection between the coarsened likelihood and the OKL, we will make the following assumption on $\D$.
\begin{assume}
\label{assump:general-distance}
    For any $P,Q, R \in \cP(\cX)$ and $\lambda \in (0,1)$, the following holds:
    \begin{enumerate}[(a)]
		\item $\D(P, P) = 0$.
		\item $\D(P, Q) = \D(Q, P)$.
		\item $\D(P, Q) \leq \D(P, R) + \D(R, Q)$.
		\item $\D$ is convex in its arguments: $\D(P, (1-\lambda)Q + \lambda R) \leq \lambda \D(P, Q) + (1-\lambda) \D(P, R)$.
		\item For any sequence of probability measures $P_n \dconv P$, $\D(P_n, Q) \to \D(P, Q)$ as $n \to \infty$.
	\end{enumerate}
\end{assume}
Conditions~(a)-(c) simply require that $\D$ is a pseudometric, i.e. it satisfies all the requirements of a metric except for the requirement that $\D(P,Q) > 0$ whenever $P \neq Q$. Condition~(d) (combined with condition~(a)) allows us to apply \Cref{lem:okl-continuity}, and condition~(e) ensures that $\D$ is continuous with respect to the topology of weak convergence.

Given this assumption on $\D$, we can demonstrate the following asymptotic convergence.

\begin{theorem}
\label{thm:clikelihood-asymptotics}
Suppose that \Cref{assump:general-distance} holds and that $\okl[\epsilon_0] < \infty$ for some $\epsilon_0 \geq 0$. For any $\epsilon > \epsilon_0$, if $x_1, \ldots, x_n \iid P_0$, then 
\begin{equation*}
	\frac{1}{n} \log L_\epsilon(\theta|x_{1:n}) \pconv -\okl[\epsilon]
\end{equation*}
as $n \to \infty$.
\end{theorem}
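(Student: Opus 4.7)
The plan is to sandwich $L_\epsilon(\theta|x_{1:n})$ between two non-random probabilities using the triangle inequality and then apply Sanov's theorem on the Polish space $\cP(\cX)$. Define the random quantity $\alpha_n \doteq \D(\EmpDist{x_{1:n}}, P_0)$. By Varadarajan's theorem, $\EmpDist{x_{1:n}} \dconv P_0$ almost surely, so assumption (e) of \Cref{assump:general-distance} gives $\alpha_n \to 0$ almost surely. For any fixed $\delta>0$, on the event $E_n(\delta) \doteq \{\alpha_n \leq \delta\}$ (which has probability tending to one) the triangle inequality (assumption (c)) yields the deterministic inclusions
\begin{equation*}
\{\D(\EmpDist{Z_{1:n}}, P_0) \leq \epsilon - \delta\} \subseteq \{\D(\EmpDist{Z_{1:n}}, \EmpDist{x_{1:n}}) \leq \epsilon\} \subseteq \{\D(\EmpDist{Z_{1:n}}, P_0) \leq \epsilon + \delta\},
\end{equation*}
and thus $\prob_\theta(\D(\EmpDist{Z_{1:n}}, P_0) \leq \epsilon - \delta) \leq L_\epsilon(\theta|x_{1:n}) \leq \prob_\theta(\D(\EmpDist{Z_{1:n}}, P_0) \leq \epsilon + \delta)$, where $Z_1,\ldots,Z_n \iid P_\theta$.

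Next I would invoke Sanov's theorem in its Polish-space form \citep[Theorem 6.2.10]{demboLargeDeviationsTechniques2010}, by which the empirical measures $\EmpDist{Z_{1:n}}$ satisfy a large deviation principle in $\cP(\cX)$ under the weak topology with good rate function $Q \mapsto \KL(Q|P_\theta)$. Assumption (e) implies that $Q \mapsto \D(Q, P_0)$ is continuous in the weak topology, so the sub-level set $F_\delta \doteq \{Q : \D(Q, P_0) \leq \epsilon+\delta\}$ is closed and the strict sub-level set $G_\delta \doteq \{Q : \D(Q, P_0) < \epsilon-\delta\}$ is open. The Sanov upper bound applied to $F_\delta$ gives
\begin{equation*}
\limsup_{n \to \infty} \frac{1}{n} \log \prob_\theta(\D(\EmpDist{Z_{1:n}}, P_0) \leq \epsilon+\delta) \leq -\inf_{Q \in F_\delta} \KL(Q|P_\theta) = -I_{\epsilon+\delta}(\theta).
\end{equation*}
For the matching lower bound, observe that $\{Q : \D(Q,P_0) \leq r\} \subseteq G_\delta$ for every $r < \epsilon-\delta$, so $\inf_{Q \in G_\delta}\KL(Q|P_\theta) \leq I_r(\theta)$. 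The Sanov lower bound therefore yields $\liminf_n \tfrac{1}{n}\log\prob_\theta(\D(\EmpDist{Z_{1:n}},P_0) \leq \epsilon-\delta) \geq -I_r(\theta)$ for every such $r$; letting $r \uparrow \epsilon-\delta$ and using the continuity of $r \mapsto I_r(\theta)$ supplied by \Cref{lem:okl-continuity} (applicable because $I_{\epsilon_0}(\theta)<\infty$ and assumption (d) provides the convexity of $\D$ that lemma requires) improves the lower bound to $-I_{\epsilon-\delta}(\theta)$.

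Combining the two halves on $E_n(\delta)$ gives that for every $\delta > 0$, with probability tending to one,
\begin{equation*}
-I_{\epsilon-\delta}(\theta) \leq \liminf_{n\to\infty} \tfrac{1}{n} \log L_\epsilon(\theta|x_{1:n}) \leq \limsup_{n\to\infty} \tfrac{1}{n} \log L_\epsilon(\theta|x_{1:n}) \leq -I_{\epsilon+\delta}(\theta).
\end{equation*}
Since $\epsilon > \epsilon_0$, a second application of \Cref{lem:okl-continuity} shows $r \mapsto I_r(\theta)$ is continuous at $r = \epsilon$, so letting $\delta \downarrow 0$ yields $\tfrac{1}{n}\log L_\epsilon(\theta|x_{1:n}) \pconv -I_\epsilon(\theta)$. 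The main obstacle is the lower-bound direction: Sanov produces an infimum over the open set $G_\delta$ rather than the closed sub-level set defining $I_{\epsilon-\delta}(\theta)$, and relating these two infima cleanly requires the continuity of $r \mapsto I_r(\theta)$ near $\epsilon$, which is exactly where the convexity condition (d) and the finiteness assumption $I_{\epsilon_0}(\theta) < \infty$ are consumed.
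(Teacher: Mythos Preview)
Your proposal is correct and takes essentially the same route as the paper: sandwich $L_\epsilon$ via the triangle inequality on the event $\{\D(\EmpDist{x_{1:n}},P_0)\leq\delta\}$, apply Sanov's theorem to the closed and open sub-level sets of the weakly continuous map $Q\mapsto\D(Q,P_0)$, and bridge the interior/closure gap using the continuity of $r\mapsto I_r(\theta)$ from \Cref{lem:okl-continuity}. The only cosmetic slip is the phrase ``with probability tending to one'' attached to your final $\liminf/\limsup$ display; since you invoked Varadarajan's theorem and hence have $\alpha_n\to 0$ almost surely, that display in fact holds almost surely, from which the convergence in probability follows.
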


\subsubsection{Proof of \texorpdfstring{\Cref{thm:clikelihood-asymptotics}}{thmclike}}

Similar to the setting in \Cref{sec:coarsened-likelihood-asymptotic-finite}, we will study asymptotics of the coarsened likelihood $\owl$ by first studying the asymptotics of its population level analog  
\begin{equation}
	\label{eq:pop-clike}
	\owlM[\epsilon] = \prob_\theta(\D(\EmpDist{Z_{1:n}}, P_0)\leq \epsilon)
\end{equation}
obtained by replacing the  empirical distribution of the data  $\EmpDist{x_{1:n}}$ by the  population level quantity $P_0$.

Next, to study the asymptotics of $\owlM$, we will invoke Sanov's theorem from Large Deviation theory which says that the law of the empirical distribution $\EmpDist{Z_{1:n}}$  satisfies a Large Deviation principle in the space $\cP(\cX)$ with rate function $\mu \mapsto \KL(\mu|P_\theta)$, when $Z_1, \ldots, Z_n \iid P_\theta$. More precisely, we will show that the error term 
\begin{equation}
	\label{eq:sanov-error}
	\lderr = \left|\frac{1}{n}\log \owlM + \okl\right|
\end{equation}
converges to zero as $n \to \infty$.

\begin{lemma}
If \Cref{assump:general-distance} holds and $\okl[\epsilon_0] < \infty$ for some $\epsilon_0 \geq 0$, then for any $\epsilon > \epsilon_0$, $\lim\limits_{n \to \infty} \lderr = 0$.
\label{lem:sanov}
\end{lemma}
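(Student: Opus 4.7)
The plan is to prove \Cref{lem:sanov} by applying Sanov's theorem \citep{demboLargeDeviationsTechniques2010} to the empirical measure $\EmpDist{Z_{1:n}}$ of the i.i.d.\ sample $Z_1,\ldots,Z_n \sim P_\theta$. Because $\cX$ is Polish, so is $\cP(\cX)$ under the weak convergence topology, and Sanov's theorem guarantees that $\EmpDist{Z_{1:n}}$ satisfies a large deviation principle in $\cP(\cX)$ with good rate function $Q \mapsto \KL(Q|P_\theta)$.

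First I would translate the event $\{\D(\EmpDist{Z_{1:n}}, P_0) \leq \epsilon\}$ into a topological statement about $\cP(\cX)$. Define the ball $A_\epsilon = \{Q \in \cP(\cX): \D(Q,P_0) \leq \epsilon\}$ and its open counterpart $A_\epsilon^{\circ} = \{Q \in \cP(\cX): \D(Q,P_0) < \epsilon\}$. Continuity of $Q \mapsto \D(Q,P_0)$ with respect to weak convergence, guaranteed by part (e) of \Cref{assump:general-distance}, implies that $A_\epsilon$ is closed and $A_\epsilon^\circ$ is open in $\cP(\cX)$. Then the Sanov upper bound for closed sets yields
\begin{equation*}
\limsup_{n \to \infty} \frac{1}{n} \log \owlM \;\leq\; -\inf_{Q \in A_\epsilon} \KL(Q|P_\theta) \;=\; -\okl[\epsilon].
\end{equation*}

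For the matching lower bound, observe that for any $\alpha \in (0, \epsilon - \epsilon_0)$ we have $A_{\epsilon - \alpha} \subseteq A_\epsilon^{\circ}$. Applying the Sanov lower bound to the open set $A_\epsilon^{\circ}$ gives
\begin{equation*}
\liminf_{n \to \infty} \frac{1}{n} \log \owlM \;\geq\; -\inf_{Q \in A_\epsilon^{\circ}} \KL(Q|P_\theta) \;\geq\; -\inf_{Q \in A_{\epsilon - \alpha}} \KL(Q|P_\theta) \;=\; -\okl[\epsilon - \alpha].
\end{equation*}
Since $\okl[\epsilon_0] < \infty$ and $\D$ is convex in its arguments by part (d) of \Cref{assump:general-distance}, \Cref{lem:okl-continuity} applies and shows $\okl[\epsilon - \alpha] \to \okl[\epsilon]$ as $\alpha \downarrow 0$. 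Letting $\alpha \downarrow 0$ thus yields $\liminf_n n^{-1} \log \owlM \geq -\okl[\epsilon]$. Combined with the upper bound, we conclude $\lim_n n^{-1} \log \owlM = -\okl[\epsilon]$, i.e.\ $\lderr \to 0$.

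There is no serious obstacle here: the argument is a textbook application of Sanov's theorem, and the only subtlety is ensuring that $\okl$ is continuous at $\epsilon$ from the left so that the LDP upper and lower bounds meet. That continuity is exactly what \Cref{lem:okl-continuity} provides under the assumptions $\okl[\epsilon_0] < \infty$, $\epsilon > \epsilon_0$, and convexity of $Q \mapsto \D(Q,P_0)$. The assumption $\epsilon > \epsilon_0$ is used solely to ensure that $A_{\epsilon - \alpha}$ is non-empty for small $\alpha > 0$ and that \Cref{lem:okl-continuity} yields a finite, continuous approximation from below.
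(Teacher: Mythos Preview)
Your proposal is correct and follows essentially the same route as the paper: both apply Sanov's theorem to the set $\{Q:\D(Q,P_0)\leq\epsilon\}$, use the continuity of $\D$ in the weak topology to identify the relevant open and closed sets, and invoke \Cref{lem:okl-continuity} to make the upper and lower LDP bounds coincide via left-continuity of $\epsilon\mapsto\okl[\epsilon]$. The only cosmetic difference is that the paper writes $\Gamma_{\epsilon^-}$ for your $A_\epsilon^{\circ}$ and explicitly notes $\Gamma_{\epsilon^-}\subseteq\Gamma_\epsilon^{\circ}$ (the topological interior), which is implicit in your argument since you establish that $A_\epsilon^{\circ}$ is open.
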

\begin{proof} 
	
	Assume $Z_1, \ldots, Z_n \iid P_\theta$. Then Sanov's  theorem \cite[Theorem 6.2.10]{demboLargeDeviationsTechniques2010} shows that for any Borel measurable subset $\Gamma \subseteq \cP(\cX)$,
	\begin{equation}
		\label{eq:sanov}
		\begin{aligned}
			-\inf_{Q \in \Gamma^\circ} \KL(Q|P_\theta) &\leq \liminf_{n \to \infty} \frac{1}{n} \log \prob_\theta(\EmpDist{Z_{1:n}} \in \Gamma)\\ 
			&\leq \limsup_{n \to \infty} \frac{1}{n} \log \prob_\theta(\EmpDist{Z_{1:n}} \in \Gamma) \leq -\inf_{Q \in \bar{\Gamma}} \KL(Q|P_\theta),
		\end{aligned}
	\end{equation}
	where $\Gamma^\circ$ and $\bar{\Gamma}$ refer to the interior and closure of $\Gamma$ under the weak-topology on $\cP(\cX)$.

	Now consider the set $\Gamma_t = \{Q \in \cP(\cX) | \D(Q, P_0) \leq t \}$ indexed by the parameter $t \geq 0$. By the assumed continuity of $\D$ under the topology of weak-convergence (\Cref{assump:general-distance}(e)), the set  $\Gamma_\epsilon$ is closed while the set $\Gamma_{\epsilon^-} = \{Q \in \cP(\cX) \mid \D(Q,P_0) < \epsilon\}$ is an open set. Thus using $\bar{\Gamma}_{\epsilon} = \Gamma_\epsilon$ and $\cup_{r < \epsilon} \Gamma_r = \Gamma_{\epsilon^-}\subseteq \Gamma_\epsilon^\circ$, we see 
	\begin{equation*}
		\begin{aligned}
			\okl[\epsilon] = 
			\inf_{Q \in \bar{\Gamma}_\epsilon} \KL(Q|P_\theta) \leq \inf_{Q \in \Gamma^\circ_\epsilon}  \KL(Q|P_\theta)
			\leq \inf_{Q \in \Gamma_{\epsilon^-}} \KL(Q|P_\theta)  \leq \okl[r]
		\end{aligned}
	\end{equation*}
	for each $0 < r < \epsilon$. Next by the condition $\okl[\epsilon_0] < \infty$, we may apply \Cref{lem:okl-continuity} to conclude that the map $t \mapsto \okl[t]$ is continuous at $t = \epsilon$, i.e. $\lim_{t \to \epsilon} \okl[t] = \okl[\epsilon]$. Hence letting $r$ increase to $\epsilon$ in the above display, we find that
	$$
	\inf_{Q \in \bar{\Gamma}_\epsilon} \KL(Q|P_\theta) = \inf_{Q \in \Gamma^\circ_\epsilon} \KL(Q|P_\theta) = \okl[\epsilon].
	$$
	Thus taking $\Gamma = \Gamma_\epsilon$ in \cref{eq:sanov} shows that the limit
	$$
	\frac{1}{n} \log \prob_\theta(\D(\EmpDist{Z_{1:n}},P_0) \leq \epsilon) = \frac{1}{n} \log \prob_\theta(\EmpDist{Z_{1:n}} \in \Gamma_\epsilon) \to - \okl
	$$
	holds as $n \to \infty$. Recalling  the definition of $\owlM$ in \cref{eq:pop-clike}, we see that the asymptotic error term $\lderr$ in \cref{eq:sanov-error} is converging to zero as $n \to \infty$.

\end{proof}

Now we can prove \Cref{thm:clikelihood-asymptotics} by carefully accounting for the error between the coarsened likelihood  $\owl$ and its population-level analog $\owlM$ from  \Cref{eq:pop-clike}.

\begin{proof}[Proof of \Cref{thm:clikelihood-asymptotics}] Let $x_1, \ldots, x_n \iid P_0$ and pick $0 < t < \epsilon - \epsilon_0$. Define $E_{n,t}$ as the event that $\D(\EmpDist{x_{1:n}}, P_0) \leq t$. Since $\D$ is continuous with respect to weak convergence and $\EmpDist{x_{1:n}} \dconv P_0$ as $n \to \infty$ by the weak law of large numbers, we have $\lim_{n \to \infty} \prob(E_{n,t}) = 1$ for any $t > 0$. By \Cref{assump:general-distance}(c) we have $|\D(\EmpDist{Z_{1:n}}, P_0) -  \D(\EmpDist{Z_{1:n}}, \EmpDist{x_{1:n}})| \leq \D(P_0, \EmpDist{x_{1:n}})$, implying that on the event $E_{n,t}$ we have
\begin{equation}
	\label{eq:owlboundsonEnt}
	\owlM[\epsilon-t] \leq \owl \leq \owlM[\epsilon+t].
\end{equation}
Thus on the event $E_{n,t}$, we can bound
\begin{align*}
	\abs*{\frac{1}{n} \log \owl + \okl} 
	&\leq \abs*{\frac{1}{n} \log \owlM[\epsilon+t] + \okl}+ \abs*{\frac{1}{n} \log \owlM[\epsilon-t] + \okl}\\
	&\leq \lderr[\epsilon+t] + \lderr[\epsilon-t] + \abs{\okl[\epsilon+t] - \okl}+\abs{\okl[\epsilon-t] - \okl}\\
	&\leq \lderr[\epsilon+t] +  \lderr[\epsilon-t] + \frac{2t}{\epsilon - \epsilon_0} \okl[\epsilon_0]
\end{align*}
where the first inequality uses  \cref{eq:owlboundsonEnt}, the second uses \cref{eq:sanov-error}, and the last uses \Cref{lem:okl-continuity}. Hence given any $\delta > 0$ and $t \in (0,\epsilon - \epsilon_0)$ such that $t \okl[\epsilon_0] <  (\epsilon - \epsilon_0) \delta/4$, we have
\begin{align*}
	\prob\brR*{\abs*{\frac{1}{n} \log \owl + \okl} > \delta} \leq \prob(E_{n,t}^c) + \I{\lderr[\epsilon-t] + \lderr[\epsilon+t] > \delta/2}.
\end{align*}
Hence using \Cref{lem:sanov} and the fact that $\prob(E_{n,t}) \to 1$ as $n\to \infty$, we now see that 
$$
\lim_{n \to \infty} \prob\brR*{\abs*{\frac{1}{n} \log \owl - \okl} > \delta} = 0.
$$ 
Since $\delta > 0$ was arbitrary, this completes the proof.
\end{proof}

\subsection{The smoothed total variation distance}
\label{sec:smoothed-tvd}

In this section, we continue the discussion about smoothed total-variation distance from \Cref{sec:asymp-continuous} when $\cX = \R^d$ and $\Den$ denotes the set of densities on $\cX$ with respect to the Lebesgue measure.
To formally define the smoothed TV distance, let $\phi \in \Den$ be a continuous and bounded probability density function (e.g.~standard Gaussian density), let $h > 0$ be a bandwidth parameter. Then the kernel $K_h: \cX \times \cX \to [0,\infty)$ is defined as  $K_h(x,y) = \frac{1}{h^d} \phi((x-y)/h)$, and for any measure $\mu \in \cP(\cX)$, the convolved density $K_h \star \mu \in \Den$ is defined as $(K_h \star \mu)(x) = \int K_h(x, y) \mu(dy)$.  

\begin{definition} Given two measures $\mu, \nu \in \cP(\cX)$ and bandwidth $h > 0$, the \emph{smoothed total variation (TV) distance} is defined as:
	$$
	\tvk[h](\mu,\nu) = \frac{1}{2} \int |(K_h \star \mu)(x) - (K_h \star \nu)(x)| dx.
	$$
 
	We extend the notion of smoothed TV distance $\tvk[h](p,q) = \tvk[h](\mu, \nu)$ to densities $p, q \in \Den$\ based on their  induced measures $\mu, \nu \in \cP(\cX)$.
 \label{def:smoothed-tvd}
\end{definition}

We show in \Cref{sec:smoothed-tvd-is-continous-ips} that $\D=\tvk[h]$  satisfies conditions of \Cref{thm:clikelihood-asymptotics-density}.
Further, when $\phi$ has fast tail-decay and densities $p, q \in \Den$ satisfy appropriate regularity conditions, standard results on kernel density estimation (e.g.~\cite{rinaldo2010generalized, jiang2017uniform}) show the pointwise convergence of densities  $K_h \star q \to q$ as $h \to 0$. This, when combined with Scheffe's lemma and the triangle inequality, shows that $\lim_{h \to 0} \tvk[h](p, q) = \tv(p,q)$. In other words, for suitably small bandwidth parameter  $h > 0$, the neighborhoods based on the smoothed total variation distance    approximate those based on the total variation distance.

Thus, by invoking \Cref{thm:clikelihood-asymptotics-density} with the choice $\D=\tvk[h]$, one expects $- \frac{1}{n} \log L_\epsilon(\theta|x_{1:n}) \approx I_\epsilon(\theta)$ when $n$ is large and $h$ is small. As in the finite setting, we again see that maximizing the coarsened likelihood is closely related to minimizing the OKL function in the large sample regime. Hence the OWL methodology can be used to approximately maximize the coarsened likelihood when $\D=\tvk[h]$  for large sample size $n$ and a suitably small bandwidth $h$. In fact for many other metrics $\D$ satisfying the conditions of \Cref{thm:clikelihood-asymptotics-density}, one can adapt the OWL methodology to maximize the function $\theta \mapsto L_\epsilon(\theta|x_{1:n})$ as $n \to \infty$.

\subsubsection{Smoothed total variation distance satisfies \texorpdfstring{\Cref{assump:general-distance}}{assgenraldist}}
\label{sec:smoothed-tvd-is-continous-ips}

Here we introduce the smoothed-TV distance for a general metric space $\cX$, and show that it is a convex-pseudo metric that is continuous with respect to the weak convergence topology. This shows that one may take $\D$ in \Cref{thm:clikelihood-asymptotics-density} to be the  smoothed TV distance.

Suppose $\K: \cX \times \cX \to [0,\infty)$ is a probability kernel with respect to measure $\lambda$. That is, assume $\int \K(x,y) d\lambda(x) = 1$ for each $y \in \cX$.  Given a measure $\mu \in \cP(\cX)$, this allows us to define a smoothed probability measure $\K \star \mu \in \cP(\cX)$ that has density  

\begin{equation*}
	\frac{d(\K \star \mu)}{d\lambda} (x) =  f_{\K, \mu}(x) = \int \K(x, y) d\mu(y)
\end{equation*}
with respect to $\lambda$.

Recall the definition of the total variation distance on $\cP(\cX)$, 
\begin{equation}
	\label{eq:tvd}
	\tv(\mu, \nu) = \sup_{B \in \cB(\cX)}|\mu(B)-\nu(B)| = \sup_{g: \cX \to [-1,1]} \abs*{\int g d\mu - \int g d\nu}.
\end{equation}
When $\mu$ and $\nu$ have densities $f_\mu$ and $f_\nu$ with respect to a common measure $\lambda$, one can additionally show $\tv(\mu, \nu) = \frac{1}{2}\int  \abs{f_\mu-f_\nu} d\lambda$. 

Although $\tv$ is not continuous with respect to weak convergence, we can show that the following kernel-smoothed version of TV distance is.
\begin{definition} Given a probability density kernel $\K$, the smoothed total-variation distance is given by  
	\begin{equation*}
		\tvk(\mu, \nu) = \tv(\K \star \mu, \K \star \nu) = \frac{1}{2} \int \abs{f_{\K,\mu}(x) - f_{\K, \nu}(x)} d\lambda(x)
	\end{equation*}
\end{definition}
Now we will show that $\tvk$ satisfies \Cref{assump:general-distance} when $\K$ is a bounded and continuous kernel. 
\begin{proposition}
If $\K$ is a bounded and continuous kernel, then $\tvk$ satisfies \Cref{assump:general-distance}.
\end{proposition}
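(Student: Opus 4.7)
\medskip

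The plan is to exploit the identity $\tvk(\mu,\nu) = \tv(\K\star\mu, \K\star\nu)$ to reduce every condition of Assumption~\ref{assump:general-distance} to a corresponding property of either the ordinary total variation distance $\tv$ or the convolution map $\mu \mapsto \K\star\mu$. Conditions (a)--(c) are immediate: $\tv$ itself is a pseudometric on $\cP(\cX)$, so $\tvk(P,P) = \tv(\K\star P, \K\star P) = 0$, symmetry is inherited directly, and the triangle inequality $\tvk(P,Q) \le \tvk(P,R) + \tvk(R,Q)$ follows from the triangle inequality for $\tv$ applied to $\K\star P, \K\star R, \K\star Q$. For convexity (d), I would use that the map $\mu \mapsto f_{\K,\mu}$ is linear in $\mu$, so $f_{\K,\,(1-\lambda)Q+\lambda R}(x) = (1-\lambda) f_{\K,Q}(x) + \lambda f_{\K,R}(x)$; plugging this into the $L^1$ representation $\tvk(P,\nu) = \tfrac{1}{2}\int |f_{\K,P} - f_{\K,\nu}|\,d\lambda$ and using convexity of the absolute value yields the claim.

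The real work is condition (e): showing that $P_n \dconv P$ implies $\tvk(P_n, Q) \to \tvk(P,Q)$. The strategy is first to show pointwise convergence of the smoothed densities, then to lift it to $L^1$ convergence via Scheff\'e's lemma, and finally to conclude by the triangle inequality. For the pointwise step, fix $x \in \cX$ and note that by hypothesis $y \mapsto \K(x,y)$ is bounded and continuous, so the definition of weak convergence gives
\[
f_{\K,P_n}(x) = \int \K(x,y)\, dP_n(y) \;\longrightarrow\; \int \K(x,y)\, dP(y) = f_{\K,P}(x)
\]
for every $x$. Since each $f_{\K,P_n}$ and $f_{\K,P}$ is a probability density with respect to $\lambda$ (because $\K$ is a probability kernel and Fubini applies), Scheff\'e's lemma upgrades this pointwise convergence to $\|f_{\K,P_n} - f_{\K,P}\|_1 \to 0$, that is, $\tvk(P_n, P) \to 0$. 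Applying the already-established triangle inequality (c) gives
\[
\bigl|\tvk(P_n, Q) - \tvk(P, Q)\bigr| \;\le\; \tvk(P_n, P) \;\longrightarrow\; 0,
\]
which is exactly (e).

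The main obstacle is really just ensuring the hypotheses of weak convergence and Scheff\'e's lemma are satisfied cleanly, so the key technical points to be careful about in a full write-up are: verifying that $f_{\K,\mu}$ integrates to $1$ (a direct Fubini argument using the probability kernel assumption $\int \K(x,y)\,d\lambda(x) = 1$); checking that boundedness and continuity of $\K$ in the second argument are used precisely where the definition of weak convergence is invoked; and noting that no uniformity in $x$ is needed since Scheff\'e only requires pointwise (in fact $\lambda$-a.e.) convergence of probability densities. No other properties of $\K$ are used, so the argument goes through exactly under the stated hypothesis.
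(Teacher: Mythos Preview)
Your proposal is correct and follows essentially the same approach as the paper: reduce (a)--(c) to the fact that $\tv$ is a metric, establish (d) via linearity of $\mu \mapsto f_{\K,\mu}$ and convexity of the absolute value, and prove (e) by using boundedness and continuity of $\K(x,\cdot)$ to get pointwise convergence of the smoothed densities, applying Scheff\'e's lemma, and concluding with the triangle inequality. The paper's proof and yours are structurally identical.
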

\begin{proof}
First observe that the smoothed TV distance is just an ordinary TV distance between smoothed densities.
Since the ordinary TV distance is a metric, this immediately implies that the smoothed TV  satisfies the identity, symmetry, and triangle inequality properties. 

To establish convexity of $\tvk$, note that for measures $\mu, \nu \in \cP(\cX)$ and $v \in [0,1]$, we have
\[f_{\K, (1-v)\mu + v\nu}(x) = (1-v) \int \K(x, y) d\mu(y) + v  \int \K(x, y) d\nu(y) = (1-v)f_{\K, \mu}(x) + v f_{\K, \nu}(x) .  \]
Thus, for $\mu, \nu, \pi \in \cP(\cX)$, we have
\begin{align*}
\tvk(\pi, (1-v)\mu + v\nu) 
&= \frac{1}{2}\int \abs{f_{\K,\pi}(x) - f_{\K, (1-v)\mu + v\nu}(x)} d\lambda(x) \\
&= \frac{1}{2} \int \abs{f_{\K,\pi}(x) - ((1-v)f_{\K, \mu}(x) + v f_{\K, \nu}(x))} d\lambda(x) \\
&\leq (1-v) \frac{1}{2} \int \abs{f_{\K,\pi}(x) - f_{\K, \mu}(x)} d\lambda(x) + v \frac{1}{2}\int \abs{f_{\K,\pi}(x) - f_{\K, \nu}(x)} d\lambda(x) \\
&= (1-v) \tvk(\pi, \mu )  + v \tvk(\pi,\nu),
\end{align*}
where the inequality follows from the convexity of the absolute value function. Thus, $\tvk$ is convex in its arguments.

To establish the continuity of $\tvk$ under the topology of weak-convergence, suppose $\mu_n \dconv \mu$ in $\cP(\cX)$. Then since $y \mapsto \K(x,y)$ is a continuous and bounded function, the convergence 
$$
f_{\K, \mu_n}(x) \to f_{\K, \mu}(x)
$$
follows for each $x \in \cX$ as $n \to \infty$. Since $f_{\K, \nu}$ for any $\nu \in \cP(\cX)$ is a density with respect to $\lambda$, Scheff\'e's lemma shows that
$$
\tvk(\mu_n, \mu) = \frac{1}{2} \int \abs{f_{\K, \mu_n}(x) - f_{\K, \mu}(x)}d \lambda(x) \to 0  \ \ \text{ as } n \to \infty.
$$
Finally, by the triangle inequality, $\abs{\tvk(\mu_n, \nu) - \tvk(\mu, \nu)} \leq \tvk(\mu_n, \mu)$.
\end{proof}

\section{OWL and MM/DC algorithms}  
\label{sec:owl-and-MM-DC}
In general, finding the global minimizer of the function $\theta \mapsto \hatI(\theta)$ is difficult, since this function can be non-convex even in the simplest case when the model densities $\{p_\theta\}_{\theta \in \Theta}$ follow a natural exponential family (\Cref{sec:owl-optimization}). 
However, the OWL algorithm (\Cref{alg:owl}) exploits a special form of the objective $\hatI$ to reduce this optimization to a sequence of simpler optimization problems. We now discuss some theoretical properties of the iterates $\{(\theta_t, w_t)\}_{t \geq 1}$ from the OWL algorithm.

First, it is straightforward to see that the OWL iterations must decrease the objective function $\theta \mapsto \hatI(\theta)$ at each step, as we have
\begin{align*}
    \hatI(\theta_{t+1}) 
    &= \sum_{i=1}^n w_{t+1,i} \log \frac{n w_{t+1,i} \hat{p}(x_i)}{ p_{\theta_{t+1}}({x}_i)} 
	\leq \sum_{i=1}^n w_{t,i} \log \frac{n w_{t,i} \hat{p}(x_i)}{ p_{\theta_{t+1}}({x}_i)} \\
	&
	\leq \sum_{i=1}^n w_{t,i} \log \frac{n w_{t,i} \hat{p}(x_i)}{ p_{\theta_{t}}({x}_i)} = \hatI(\theta_t).
\end{align*}
Since \cref{eq:okle} is a convex optimization problem with a strictly convex objective, the first inequality above is strict unless $w_{t+1} = w_{t}$. This is reminiscent of the Majorization Minimization (MM) class of algorithms \citep{hunter2004tutorial} (of which EM is a special case). Indeed, let $w(\theta) \in \hat{\Delta}_n$ denote the solution for the optimization in \cref{eq:okle}. Then the iterates $\{\theta_t\}_{t \geq 1}$ can be seen as arising from the instance of the MM algorithm based on the majorant $Q(\theta|\theta') =  \sum_{i=1}^n w_i(\theta')\log \frac{n w_{i}(\theta') \hat{p}(x_i)}{ p_{\theta}({x}_i)} \geq \hatI(\theta)$ for every $\theta, \theta' \in \Theta$, with an equality when $\theta = \theta'$. In \Cref{sec:owl-optimization}, we further show that the iterates $\{w_t\}_{t \geq 1}$  arise as an instance of the well-studied class of DC algorithms \citep{le2018dc} that use convex optimization techniques to minimize non-convex objectives that are expressed as a difference of two convex functions. Hence, theoretical results surrounding the MM \citep{lange2021nonconvex, kang2015global} and  DC algorithms \citep{le2018dc, de2020abc} could be used to study properties of the OWL iterates $\{(\theta_t, w_t)\}_{t \geq 1}$ in any given instance, and we leave this theoretical analysis as a direction for future. 
\subsection{DC programming}
\label{sec:owl-optimization}
DC programming \citep{le2018dc,de2020abc} is a well-studied class of non-convex optimization techniques that use tools from convex optimization to minimize objective functions of the form $f(x) = g(x)-h(x)$ over $x \in X$, where $f$ and $g$ are two lower-semicontinuous and proper convex functions mapping $X \to (-\infty, \infty]$.  While the  class of functions $f$ that can be represented in this way is large, containing in particular all sufficiently smooth functions and being closed under linear combinations and finite maximums, DC programming algorithms  crucially rely on knowing the constituent functions $g$ and $h$. 

In this section, we show how the OWL procedure (\Cref{alg:owl}) and the function $\theta \mapsto \hatI(\theta)$ fall into the framework of DC programming. We have two results in this regard. First when the family of models $\{p_\theta\}_{\theta \in \Theta}$ follows an exponential family, we show that $\hatI(\theta)$ is a difference of two convex functions of $\theta$, and that the iterates $\{\theta_t\}_{t \geq 1}$ from \Cref{alg:owl} coincide with the DC Algorithm \citep{le2018dc} for this case. Secondly, even if the model density does not follow an exponential family, we show that the iterates $\{w_t\}_{t \geq 1}$ from \Cref{alg:owl} can always be regarded as an instance of the DC Algorithm \citep{le2018dc}. 

\begin{lemma}
	\label{lem:linear-dc}
	Suppose $\Theta \subseteq \R^p$ is a convex subset, and the model $\{p_\theta\}_{\theta \in \Theta}$ follows a natural exponential family with sufficient statistic $T: \cX \to \R^p$. In particular, suppose that $p_\theta(x) = \exp \left( \langle \theta, T(x) \rangle - A(\theta) \right)$. Then the unkernalized OKL estimator $\hatI$ (i.e.~\cref{eq:okle} with $\K(x,y) = \I{x = y}$) can be written as 
	$$
	\hatI(\theta)  = A(\theta) - B(\theta) 
	$$
	where both $A(\theta)$ and 
	\begin{equation}
    \label{eq:Bfunc}
	B(\theta) \doteq \max_{\substack{w \in \Delta_n \\ \|w-o\|_1 \leq 2\epsilon}} \left\{ - \sum_{i=1}^n w_i \log w_i + \langle \theta, \sum_{i=1}^n w_i T(x_i )\rangle\right\} 
	\end{equation}
	are convex functions in the parameter $\theta$. Moreover, iterates $\{\theta_t\}_{t \geq 1}$ of the DC Algorithm \citep{le2018dc} based on the representation $\hatI(\theta)  = A(\theta) - B(\theta)$, given by
	$$
	\theta_{t+1} = \argmin_{\theta \in \Theta} \left\{ A(\theta) - \langle \nabla B(\theta_t), \theta \rangle \right\}
	$$ 
	coincide with the iterates $\{\theta_t\}_{t \geq 1}$ obtained from \Cref{alg:owl}.
\end{lemma}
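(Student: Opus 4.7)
The plan is to proceed in three stages: (i) establish the decomposition $\hatI(\theta)=A(\theta)-B(\theta)$, (ii) verify convexity of both $A$ and $B$, and (iii) identify the OWL iterates with the DC update. For stage (i), I would start by substituting the exponential family form $\log p_\theta(x_i) = \langle\theta, T(x_i)\rangle - A(\theta)$ into the definition \eqref{eq:okle} specialized to the indicator kernel (so that $n\hat{p}(x_i)=1$ on the support of the data), and then use $\sum_i w_i = 1$ to pull $A(\theta)$ outside the infimum. The remainder
\[
\inf_{w\in\Delta_n,\,\|w-o\|_1\leq 2\epsilon}\Big\{\sum_i w_i \log w_i - \langle \theta, \sum_i w_i T(x_i)\rangle\Big\}
\]
is precisely $-B(\theta)$ after a sign flip, giving the claimed decomposition.

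For stage (ii), $A$ is convex as the standard log-partition function of an exponential family. The convexity of $B$ follows immediately from its defining formula \eqref{eq:Bfunc}: for each fixed $w$ the objective is affine in $\theta$ (the entropy term $-\sum_i w_i\log w_i$ is a constant in $\theta$), and $B$ is the pointwise supremum of such affine functions over the feasible set of weights, hence lower-semicontinuous and convex.

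For stage (iii), I would apply Danskin's (envelope) theorem to $B$. The inner problem defining $B(\theta_t)$ has a strictly concave objective (thanks to the entropy term) on a compact convex feasible set, so the maximizer $w^\star(\theta_t)$ is unique and $B$ is differentiable at $\theta_t$ with $\nabla B(\theta_t) = \sum_i w^\star_i(\theta_t)\, T(x_i)$. Next I would show that the OWL $w$-step coincides with this inner maximization: rewriting the weighted KL objective in the $w$-step of \Cref{alg:owl} as $\sum_i w_i\log w_i - \langle\theta_t, \sum_i w_i T(x_i)\rangle + A(\theta_t)$ shows that minimizing it over the feasible set of weights is the same (up to the $\theta$-constant $A(\theta_t)$) as the maximization in \eqref{eq:Bfunc}. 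Hence $w_t = w^\star(\theta_t)$. Finally, the OWL $\theta$-step maximizes $\sum_i w_{t,i}\log p_\theta(x_i) = \langle \theta, \sum_i w_{t,i} T(x_i)\rangle - A(\theta)$, which is exactly $\argmin_\theta\{A(\theta) - \langle \nabla B(\theta_t),\theta\rangle\}$, the DC update.

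The main technical obstacle will be the differentiability requirement for the DC iteration as stated: if the $x_i$ are not distinct or the feasible set boundary is active, one needs to work with arbitrary subgradients of $B$ rather than $\nabla B$. This is resolved by noting that the strict concavity of the Shannon entropy in $w$ guarantees a unique inner maximizer, so Danskin's theorem yields a genuine gradient and selecting the subgradient $\nabla B(\theta_t)=\sum_i w_{t,i} T(x_i)$ from the $w$-step produces precisely the DC update. A minor bookkeeping subtlety appears if the data contain repeats (so $n\hat{p}(x_i)\neq 1$); the extra data-dependent term is independent of $\theta$ and can be absorbed into the definition of $B$ without affecting convexity or the identification of the iterates.
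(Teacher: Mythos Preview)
Your proposal is correct and follows essentially the same approach as the paper's proof: both substitute the exponential-family form into \eqref{eq:okle}, pull out $A(\theta)$ using $\sum_i w_i=1$, argue convexity of $B$ as a pointwise supremum of affine functions, invoke Danskin's theorem (with uniqueness of the inner maximizer from strict concavity of the entropy) to obtain $\nabla B(\theta_t)=\sum_i w_i(\theta_t)T(x_i)$, and then match the DC update to the OWL $\theta$-step. Your additional remarks on subgradients and the $n\hat p(x_i)$ bookkeeping when data repeat are sound refinements that the paper's proof omits.
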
 
\begin{proof}
	First let us show that $\hatI(\theta)  = A(\theta) - B(\theta)$. Indeed from \cref{eq:okle}
    \begin{align}
        \hatI(\theta) &= \min_{\substack{w \in \Delta_n \\ \frac{1}{2}\|w-o\|_1 \leq \epsilon}} \left\{ \sum_{i=1}^n w_i \log w_i  - \sum_{i=1}^n w_i (\langle \theta, T(x) \rangle - A(\theta)) \right\} \label{eq:okl-rewrite-exp-family}\\ 
                    &= A(\theta) + \min_{\substack{w \in \Delta_n \\ \frac{1}{2}\|w-o\|_1 \leq \epsilon}} \left\{ \sum_{i=1}^n w_i \log w_i  - \sum_{i=1}^n w_i \langle \theta, T(x) \rangle  \right\}  = A(\theta) - B(\theta) \nonumber
    \end{align}
    since we have used that $\sum_{i=1}^n w_i = 1$ for any $w \in \Delta_n$. Note that $A$ is convex in $\theta$ since it is the log-partition function of a natural exponential family, while $B$ is convex in $\theta$ since it is the pointwise maximum of linear functions $f_w(\theta) = \sum_{i=1}^n w_i \log w_i  - \sum_{i=1}^n w_i \langle \theta, T(x) \rangle$ in $\theta$. Further, since the optimization in \cref{eq:Bfunc} involves maximization of a bounded strictly concave objective function on a convex set, this optimization problem has a unique maximizer $w(\theta) \in \Delta_n$. Note that $w(\theta)$ is also the minimizer in \cref{eq:okl-rewrite-exp-family}. Hence Danskin's theorem \citep[e.g.][]{nonlinearprog} shows that $B(\theta)$ is differentiable at any $\theta \in \Theta$ with
	$$
	\nabla B(\theta) = \sum_{i=1}^n w_i(\theta) T(x_i).
	$$
	Since $w(\theta) \in \Delta_n$, the iterates of the DC Algorithm can be re-written as
    \begin{align*}
	   \theta_{t+1} &= \argmin_{\theta \in \Theta} \left\{ A(\theta) - \langle \sum_{i=1}^n w_i(\theta_t) T(x_i), \theta \rangle \right\} = \argmax_{\theta \in \Theta} \left\{ \sum_{i=1}^n  w_i(\theta_t) \log p_\theta(x_i)\right\},
    \end{align*}
    which is seen to coincide with \Cref{alg:owl} by noting that $w_t = w(\theta_t)$.
\end{proof}

\begin{remark} The connection to DC Algorithm provides several insights about the iterates $\{\theta_t\}_{t > 1}$ in \Cref{lem:linear-dc}. 
	\begin{enumerate}
        \item Under suitable conditions \citep{abbaszadehpeivasti2023rate}, the sequence $\{\theta_t\}_{t \geq 1}$ will converge  at a linear rate to a stationary point $\theta^* \in \Theta$ satisfying $\nabla \hat{I}_\epsilon(\theta^*) = 0$.
        \item For the exponential family in \Cref{lem:linear-dc},  the condition $\nabla \hat{I}_\epsilon(\theta^*) = 0$ is equivalent to 
        $$
        \sum_{i=1}^n w_i(\theta^*) T(x_i) = \E_{\theta^*} T(X),
        $$ 
        where $\E_{\theta}$ is the expectation under $X \sim p_\theta$ and $w(\theta)$ is the minimizer in \cref{eq:okl-rewrite-exp-family}. Thus the stationary points of $\hat{I}_\epsilon(\theta)$ are points $\theta$ at which the empirical re-weighted moment of the sufficient statistic $\sum_{i=1}^n w_i(\theta) T(x_i)$ matches the population moment $\E_{\theta} T(X)$.
        
        \item Finally, $\theta^*$ also approximates the \emph{global minimizer} of $\theta \mapsto \hat{I}_\epsilon(\theta)$ in the sense that $\theta^*$ is an optimal solution to the surrogate problem \citep{le2018dc}:
        $$
        \inf_{\theta} \left\{ A(\theta) - \tilde{B}(\theta)\right\}
        $$
        where $\tilde{B}(\theta) = \sup_{t \in \nat} \left\{ - \sum_{i=1}^n w_{i,t} \log w_{i,t} + \langle \theta, \sum_{i=1}^n w_{i,t} T(x_i )\rangle\right\}$ is a convex under-approximation of $B$ based only on iterates $\{w_t\}_{t \geq 1}$ of \Cref{alg:owl}. Thus for global minimization of $\hat{I}_\epsilon(\theta)$, one can initialize \Cref{alg:owl} by using random re-starts so that $\tilde{B}(\theta)$ approximates $B(\theta)$ well.
	\end{enumerate}
    \label{rem:owl-iterates-exponential-family}
\end{remark}

Similar to \Cref{lem:linear-dc}, the next result exploits another DC representation for the OWL minimization problem $\min_{\theta \in \Theta} \hatI(\theta)$ to show that the iterates $\{w_t\}_{t \geq 1}$ in \Cref{alg:owl} also follow the DC Algorithm. Importantly, this equivalence does not assume that the model follows an exponential family, and holds for a wide class of models $\{p_\theta\}_{\theta \in \Theta}$.

\begin{lemma}  The iterates $\{w_t\}_{t \geq 0}$ in \Cref{alg:owl} coincide with that of the DC Algorithm:
    \begin{equation}
    \label{eq:owl-dc2}
	w_{t+1} = \argmin_{\substack{w \in \hat{\Delta}_n \\ \|w-o\|_1 \leq 2\epsilon}} \left\{ \sum_{i=1}^n w_i \log (n w_i \hat{p}(x_i)) - \langle w, v_{t+1} \rangle \right\}
    \end{equation}	
	where $v_{t+1} \in \partial h(w_t)$ is a suitable sub-gradient at $w=w_t$ of the convex function 
    \begin{equation}
    \label{eq:okl-dc-rep2}
	h(w) = \max_{\theta \in \Theta} \left\{ \sum_{i=1}^n w_i \log p_\theta(x_i)\right\}.
    \end{equation}
\end{lemma}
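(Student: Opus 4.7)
The plan is to identify the OWL minimization problem with a DC program in the variable $w$, and then show that the $\theta$-step in Algorithm~\ref{alg:owl} precisely supplies a subgradient of the concave part at the current iterate, so that the subsequent $w$-step becomes the standard DC update.

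First I would rewrite $\min_{\theta \in \Theta} \hat{I}_\epsilon(\theta)$ by exchanging the order of minimization. Since $\hat{I}_\epsilon(\theta)$ is itself an infimum over $w$, we obtain
\begin{equation*}
\min_{\theta \in \Theta} \hat{I}_\epsilon(\theta) = \min_{\substack{w \in \hat{\Delta}_n \\ \|w-o\|_1 \leq 2\epsilon}} \Bigl\{ g(w) - h(w) \Bigr\},
\end{equation*}
where $g(w) = \sum_{i=1}^n w_i \log (n w_i \hat{p}(x_i))$ is convex in $w$ (as a sum of weighted entropy terms plus a linear term) and $h(w)$ is defined in \eqref{eq:okl-dc-rep2}. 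The function $h$ is convex in $w$ because it is a pointwise supremum of linear functions $w \mapsto \sum_i w_i \log p_\theta(x_i)$ indexed by $\theta \in \Theta$. This exhibits the OWL problem as a DC program with constituents $(g, h)$, so the DC Algorithm would iterate
\begin{equation*}
w_{t+1} = \argmin_{\substack{w \in \hat{\Delta}_n \\ \|w-o\|_1 \leq 2\epsilon}} \Bigl\{ g(w) - \langle w, v_{t+1} \rangle \Bigr\}, \qquad v_{t+1} \in \partial h(w_t).
\end{equation*}

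Next I would verify that the $\theta$-step of Algorithm~\ref{alg:owl} produces exactly such a subgradient. Recall the $\theta$-step sets $\theta_{t+1} \in \argmax_{\theta \in \Theta} \sum_i w_{t,i} \log p_\theta(x_i)$, and define $v_{t+1} \in \R^n$ by $v_{t+1, i} = \log p_{\theta_{t+1}}(x_i)$. By a standard subdifferential computation for a supremum of linear functions (or Danskin's theorem applied to the optimal $\theta_{t+1}$ attaining the maximum defining $h(w_t)$), for any $w \in \hat{\Delta}_n$ we have
\begin{equation*}
h(w) \geq \sum_{i=1}^n w_i \log p_{\theta_{t+1}}(x_i) = h(w_t) + \langle v_{t+1}, w - w_t \rangle,
\end{equation*}
which is precisely the subgradient inequality showing $v_{t+1} \in \partial h(w_t)$.

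Finally, I would observe that with this choice of $v_{t+1}$, the DC update becomes
\begin{equation*}
\argmin_{\substack{w \in \hat{\Delta}_n \\ \|w-o\|_1 \leq 2\epsilon}} \Bigl\{ \sum_{i=1}^n w_i \log (n w_i \hat{p}(x_i)) - \sum_{i=1}^n w_i \log p_{\theta_{t+1}}(x_i) \Bigr\},
\end{equation*}
which is exactly the $w$-step of Algorithm~\ref{alg:owl} based on $\theta = \theta_{t+1}$ (this is the finite-dimensional OKL objective from \eqref{eq:okle} evaluated at $\theta_{t+1}$). Hence $w_{t+1}$ from Algorithm~\ref{alg:owl} coincides with the DC iterate, completing the proof. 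The only subtle point is the existence of a maximizer $\theta_{t+1}$ so that Danskin's theorem applies; if the supremum in \eqref{eq:okl-dc-rep2} is not attained, one would need to either invoke an approximate maximizer or assume compactness of $\Theta$ and continuity of $\theta \mapsto \log p_\theta(x_i)$ to guarantee attainment---this is implicit in Algorithm~\ref{alg:owl} since a well-defined $\theta$-step presumes a maximizer exists.
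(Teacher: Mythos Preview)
Your proposal is correct and follows essentially the same approach as the paper: you express the OWL problem as a DC program $g(w)-h(w)$ by exchanging the order of minimization, invoke Danskin's theorem to identify $v_{t+1}=(\log p_{\theta_{t+1}}(x_i))_i$ as a subgradient of $h$ at $w_t$, and then observe that the resulting DC update coincides with the $w$-step of Algorithm~\ref{alg:owl} at $\theta_{t+1}$. Your added remark on attainment of the maximizer $\theta_{t+1}$ is a reasonable clarification that the paper leaves implicit.
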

\begin{proof}
    By exchanging the order of minimization, the OWL optimization problem can be expressed as the following DC program:
	\begin{equation*}
	    	\min_{\theta \in \Theta} \hatI(\theta) = \min_{\substack{w \in \hat{\Delta}_n \\ \|w-o\|_1 \leq 2\epsilon}} \min_{\theta \in \Theta} \left\{ \sum_{i=1}^n w_i \log \frac{n w_i \hat{p}(x_i)}{p_\theta(x_i)}\right\}  = \min_{\substack{w \in \hat{\Delta}_n \\ \|w-o\|_1 \leq 2\epsilon}} \left\{ \sum_{i=1}^n w_i \log (n w_i \hat{p}(x_i)) - h(w)\right\}
	\end{equation*}
	where $h$ as defined in \eqref{eq:okl-dc-rep2}, being a pointwise maximum of linear functions of $w$, is a convex function of $w$. This results in the DC Algorithm from \cref{eq:owl-dc2}.
 
    Observe next that we can consider iterates $\{w_t\}_{t \geq 1}$ given by \cref{eq:owl-dc2} with $v_{t+1} = (\log p_{\theta_{t+1}}(x_1), \ldots, \log p_{\theta_{t+1}}(x_n))$, where
    $$
    \theta_{t+1} \in \argmax_{\theta \in \Theta} \sum_{i=1}^n w_{t,i} \log p_\theta(x_i).
    $$
    Indeed by Danskin's theorem \citep{nonlinearprog}, it follows that $v_{t+1} \in \partial h(w_t)$. However this shows that
    $$
    w_{t+1} = \argmin_{\substack{w \in \hat{\Delta}_n \\ \|w-o\|_1 \leq 2\epsilon}} \left\{ \sum_{i=1}^n w_i \log (n w_i \hat{p}(x_i)) - \sum_{i=1}^n w_i \log p_{\theta_{t+1}}(x_i) \right\},
    $$
    and hence $\{(\theta_t,w_t)\}_{t \geq 1}$ can be seen to coincide with the iterates in \Cref{alg:owl}. 
\end{proof}

\section{More simulation study details and results}
\label{app:more-simulations}

\new{For problems requiring a kernel density estimate, we used the Gaussian/RBF kernel: 
\[ K_h(x,y) = (2\pi h )^{-p/2} \exp(- \frac{1}{2h}\|x - y \|^2).\] 
We adaptively set the bandwidth by searching over a fixed grid and using the method's criterion to select the optimal bandwidth parameter. In the case of kernelized OWL, the criterion to be minimized was the final parameter's OKL estimator in \cref{eq:okle}. In the case of Pearson's residuals, we used the final model's empirical Hellinger distance to the density estimate:
\[ D(h, \theta) = \frac{1}{2n} \sum_{i=1}^n \left( 1 - \sqrt{ p_\theta(x_i)/\hat{p}_{n,h}(x_i)} \right)^2,\]
where $\hat{p}_{n,h}$ is the kernel density estimate 
\[ \hat{p}_{n,h}(x) = \frac{1}{n} \sum_{i=1}^n K_h(x_i, x).\]
The grid of candidate bandwidth parameters was created from the average $k$-nearest neighbor distances in the dataset for $k=5, 10, 25, 50$.

In the Gaussian setting, we were primarily concerned with the role of kernelization in the OWL procedure. As such, we set the parameter $\epsilon$ equal to the true level of corruption in the dataset.}

\Cref{fig:rand-sim} presents the simulation results for the random corruption setting. For each of the problem settings, all of the details remain exactly the same as in the max-likelihood corruption simulations, with the only change coming from the fact that the corruptions were chosen completely at random.

\begin{figure}
    \centering
    \includegraphics[width=\textwidth]{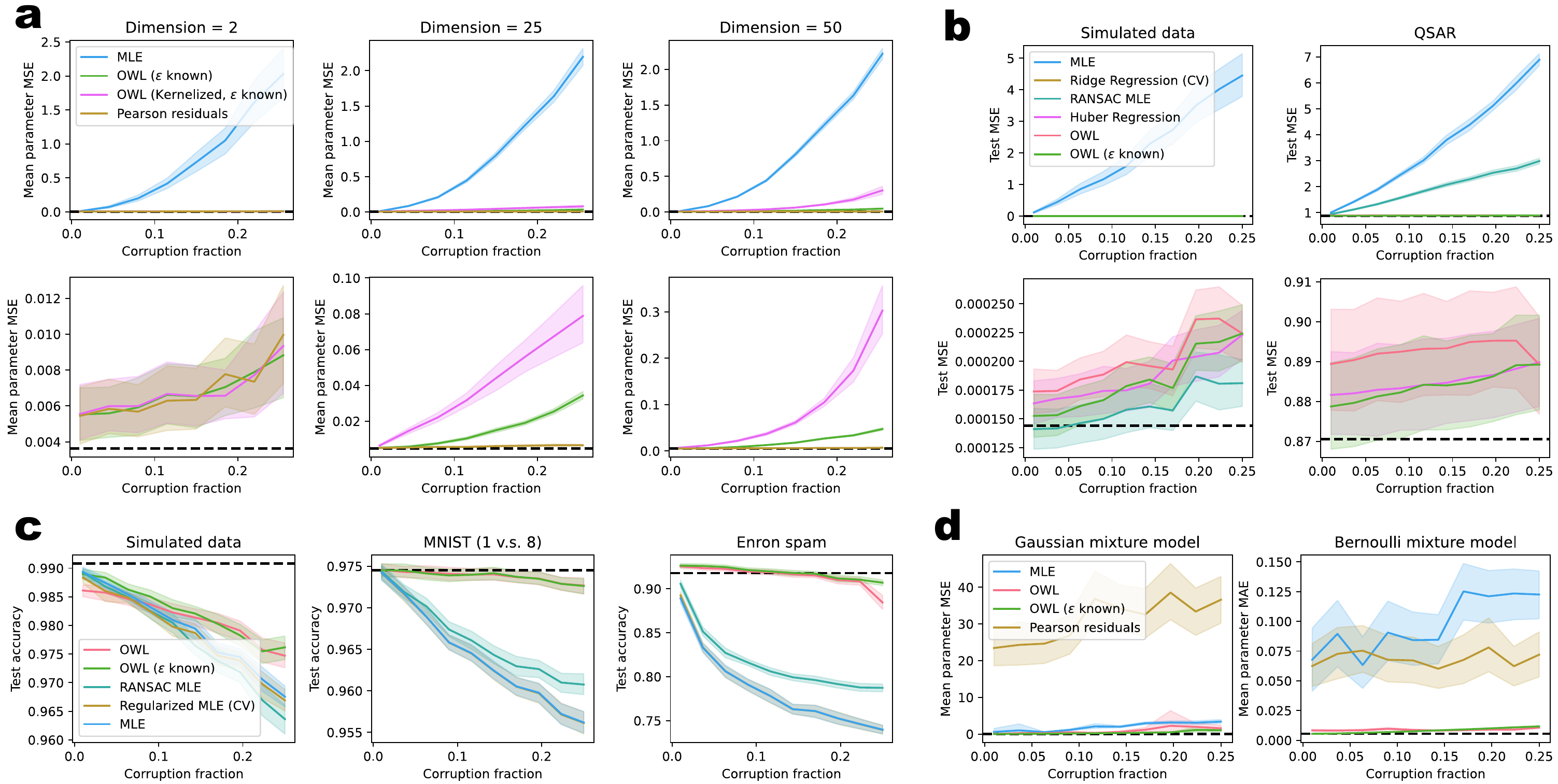}
    \caption{Simulation results for random corruptions. (a) Mean parameter reconstruction for a multivariate normal model. (b) Test MSE for linear regression. (c) Test accuracy for logistic regression. (d) Mean parameter reconstruction for mixture models. In all figures, the dashed black line denotes median performance of MLE on full uncorrupted training set, and the shaded regions denote bootstrapped 95\% confidence intervals over 50 random seeds.}
    \label{fig:rand-sim}
\end{figure}

\Cref{fig:rand-sim}a presents the random corruption results for estimating a multivariate normal distribution. In contrast with the max-likelihood corruption setting, we see that OWL without kernelization outperforms OWL with kernelization in higher dimensions. It is possible that this is due to the difficulty of density estimation in higher dimensions.

\Cref{fig:rand-sim}b presents the random corruption results for linear regression. The results here are qualitatively similar to those for the max-likelihood corruption setting, with all three robust methods performing well in the simulated data setting but with RANSAC performing notably worse with QSAR data.

\Cref{fig:rand-sim}c presents the random corruption results for logistic regression. As in the max-likelihood corruption case, we see that OWL outperforms the other methods across all three datasets. Moreover, we see that on the Enron spam dataset, OWL even outperforms the uncorrupted MLE baseline, which is entirely possible if the logistic regression model is mis-specified.

\Cref{fig:rand-sim}d presents the random corruption results for the mixture model settings. The results here are qualitatively similar to those for the max-likelihood corruption setting.

\new{\Cref{fig:kernel-choice-sim} presents the simulation results in the Gaussian setting for different choices of the kernel bandwidth parameter for kernelized OWL. We observe that while the choice of bandwidth parameter plays some role in performance, all versions of the kernelized OWL estimator significantly outperform MLE.}

\begin{figure}
    \centering
    \includegraphics[width=\textwidth]{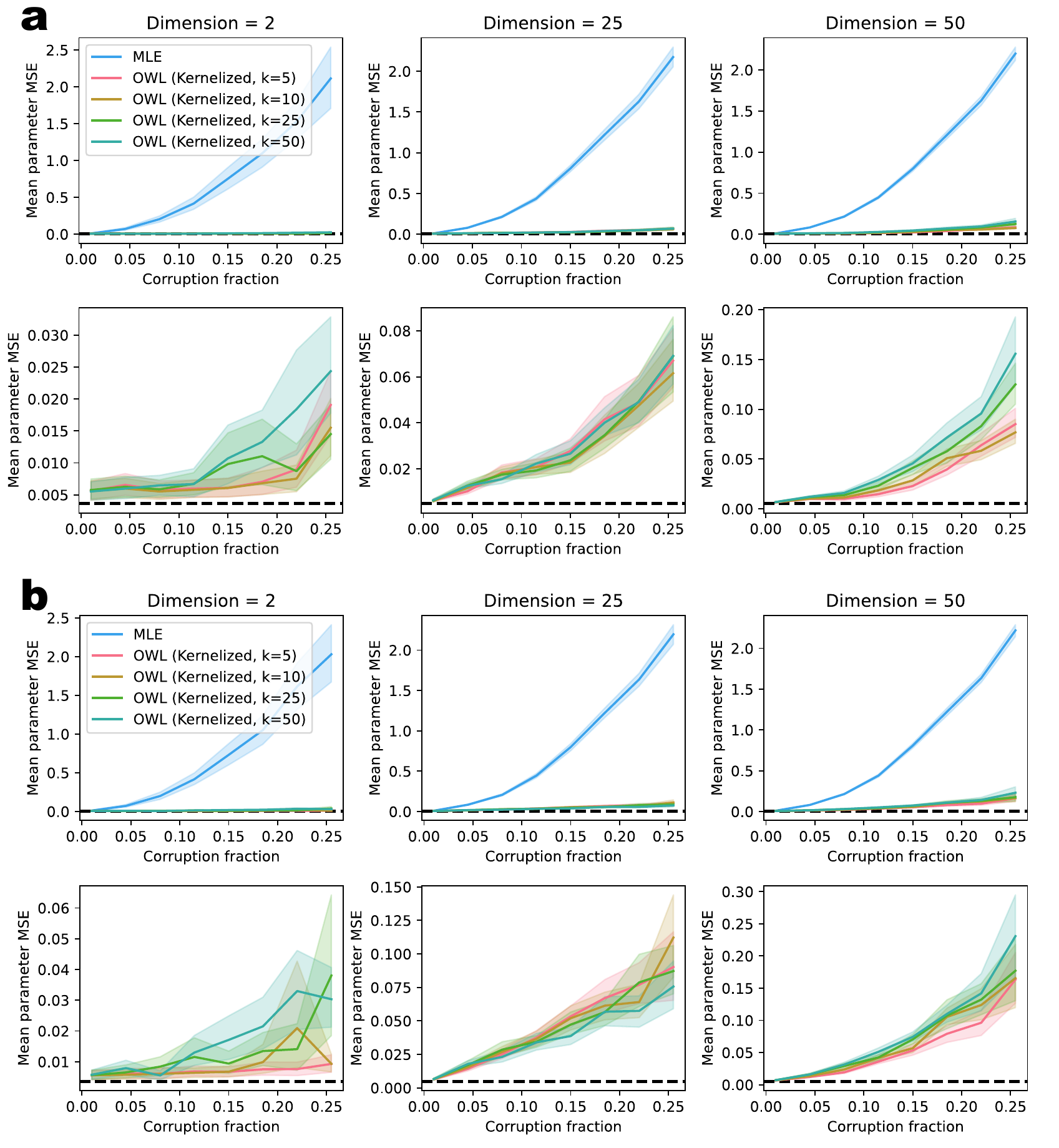}
    \caption{\new{Simulation results comparing different kernel bandwidth choices on the Gaussian simulation. The notation $k=K$ refers to choosing the average $K$-nearest neighbor distance as the kernel bandwidth parameter. (a) Max-likelihood corruptions. (b) Random corruptions. In all figures, the dashed black line denotes median performance of MLE on full uncorrupted training set, and the shaded regions denote bootstrapped 95\% confidence intervals over 50 random seeds.}}
    \label{fig:kernel-choice-sim}
\end{figure} 

\section{Application to scRNA-seq Clustering}
\label{sec:application}

In this section, we apply our OWL methodology to a single-cell RNA sequencing (scRNA-seq) clustering problem. The cell line data from \cite{li2017reference}, accessible at NCBI GEO database \citep{barrett2012ncbi} with accession GSE81861, contains single-cell RNA expression data for 630 cells from 7 cell lines across 57,241 genes. We followed the preprocessing steps of \cite{chandra2023escaping}: we dropped cells with low reads, normalized according to \cite{lun2016pooling}, and dropped uninformative genes with M3Drop~\citep{andrews2019m3drop}. After preprocessing, the dataset contains 531 cells and 7666 genes.  \Cref{tab:cline-breakdown} shows the breakdown of the remaining cells across cell lines. Finally, we used PCA to project down to 10 dimensions. We implemented OWL using a mixture of general Gaussians, $ \sum_{k=1}^K \pi_k \Ncal(\mu_k, \Sigma_k)$, using the same optimization procedure as in the clustering simulations of \Cref{sec:simul}.
\begin{table}
\centering
 \begin{tabular}{||c | c c c c c c c||} 
 \hline
 Cell line & A549 & GM12878 & H1 & H1437 & HCT116 & IMR90 & K562 \\
 \hline
 Counts & 74 & 126 & 164 & 47 & 51 & 23 & 46 \\ 
 \hline
 \end{tabular}
 \caption{Breakdown of samples in GSE81861 dataset by cell line.}
 \label{tab:cline-breakdown}
\end{table}

\subsection{Cluster recovery with OWL}
We measured the ability of OWL to recover the ground-truth clustering of samples. For baseline methods, we compared against maximum likelihood estimation with the same model class and K-means. As a metric of cluster recovery, we measured the adjusted Rand index (ARI)~\citep{rand1971objective,hubert1985comparing}. In all our comparisons, we fixed the number of clusters for all methods to be 7, the number of ground truth cell lines.

The left panel of \Cref{fig:ari-comparison} shows the ARI for OWL over a range of values for the $\ell_1$ radius parameter $\epsilon$, where we also display the performance of MLE and K-means for comparison. We see that OWL performs best when $\epsilon$ takes on values between $0.25$ and $0.45$, but generally has reasonable performance when $\epsilon$ is not too large. Moreover, we see that performance of OWL varies smoothly as a function of $\epsilon$, which may reflect the continuity of the OKL function with respect to $\epsilon$ predicted by our theory.

\Cref{fig:umap-baseline-comparison} shows Uniform Manifold Approximation and Projection (UMAP) visualizations of the dataset clustered under the various methods (for one arbitrary run). We see that of all the methods, K-means performs worst by a significant margin. The improved performance of OWL (with $\epsilon=0.25$) over MLE can be mostly attributed to the better resolution the boundary between the K562 and GM12878. However, all methods struggle to identify the IMR90 cell lines as a cluster distinct from K562.

\begin{figure}
    \centering
    \includegraphics[width=0.95\textwidth]{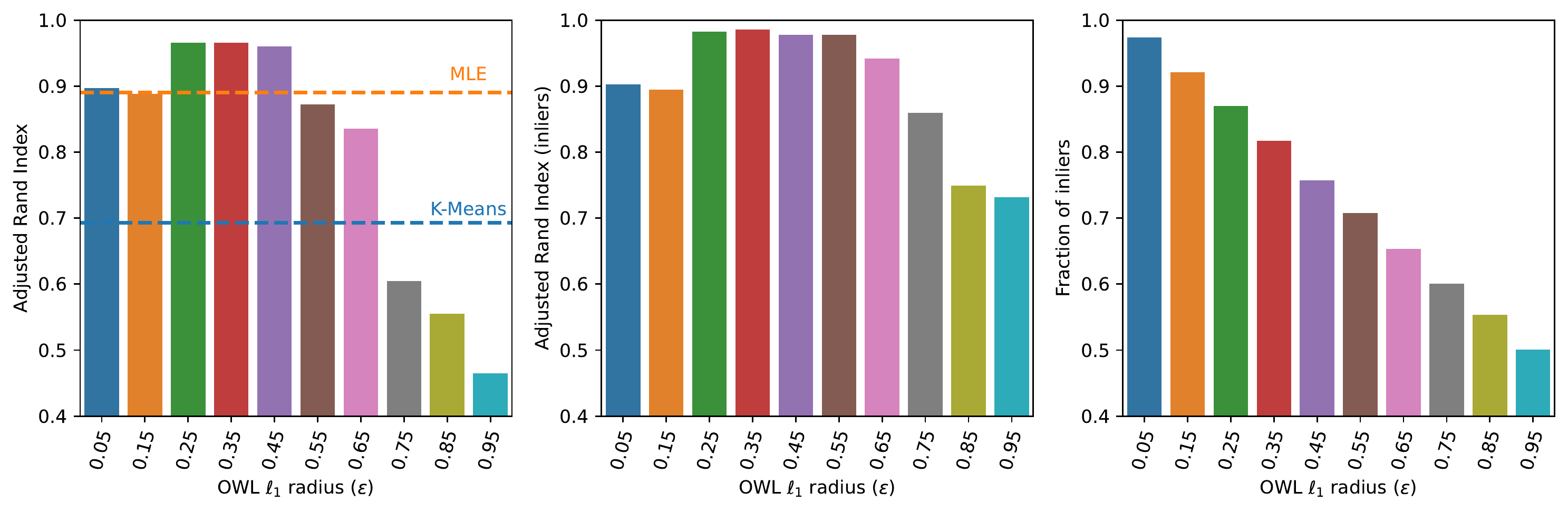}
    \caption{Comparison of clustering methods. \emph{Left}: Adjusted Rand index (ARI) over the entire dataset for each of the methods. \emph{Middle}: ARI of inliers for the OWL methods. \emph{Right}: Fraction of data points classified as inliers for the OWL methods.}
    \label{fig:ari-comparison}
\end{figure}

\begin{figure}
    \centering
    \includegraphics[width=0.8\textwidth]{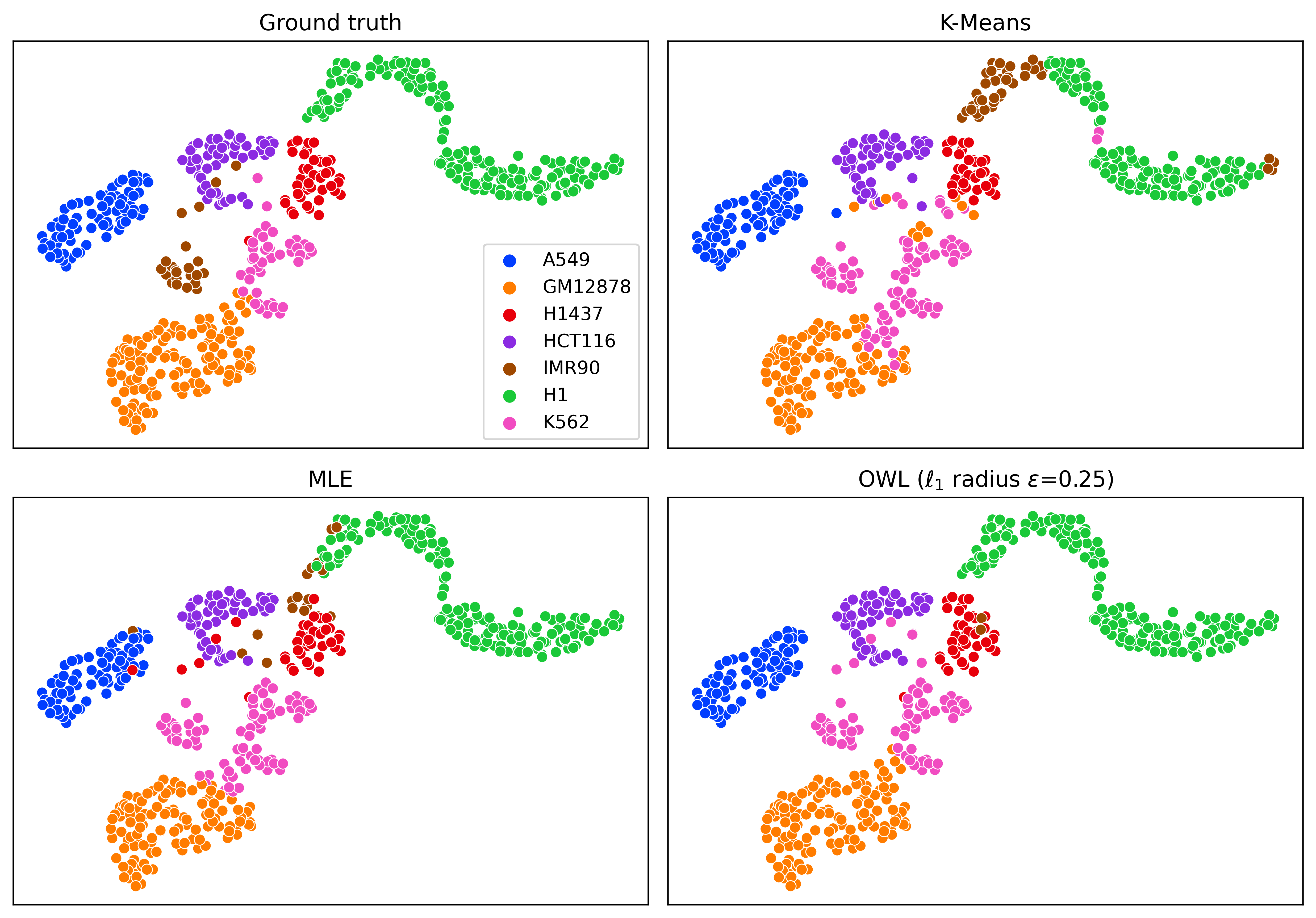}
    \caption{UMAP plots of the GSE81861 dataset under the considered clustering algorithms. Top left displays the ground truth cell lines. For the other panels, colors were selected by maximizing agreement with the ground truth clustering.}
    \label{fig:umap-baseline-comparison}
\end{figure}

\subsection{Exploratory analysis with OWL}

In some settings, it is desirable to segment a dataset into those data points that are well-described by a model in the class (so-called \emph{inliers}) and those that do not conform well to the model class (\emph{outliers}). One interpretation of the weights that are learned by the OWL procedure is that, subject to the constraint that they are close in TV distance to the empirical distribution, they represent the most optimistic reweighting of the data relative to the model class. Thus, one might suspect that data points with higher weights are inliers and those with lower weights are outliers. Here, we explore inlier/outlier detection with OWL weights by classifying all data points with weights less than $1/n$ (the average value) as outliers, and the remainder as inliers.

The middle panel of \Cref{fig:ari-comparison} shows the ARI of the OWL procedure when we restrict to the detected inliers. We observe that for all values of $\epsilon$, the ARI is no lower on the selected inliers than on the whole dataset, and in some cases is significantly higher. This suggests that the OWL procedure identifies a `core' set of points that are both well-described by a mixture of Gaussians as well as aligned with the ground truth clustering. The right panel of \Cref{fig:ari-comparison} shows the fraction of data points that are classified as inliers. Although it is theoretically possible for the OWL weights to classify anywhere from 1 to $n-1$ points as outliers for any value of $\epsilon$, we see that the fraction of outliers is relatively small for low values of $\epsilon$ and only increases gradually as $\epsilon$ increases.

\begin{figure}
    \centering
    \includegraphics[width=0.9\textwidth]{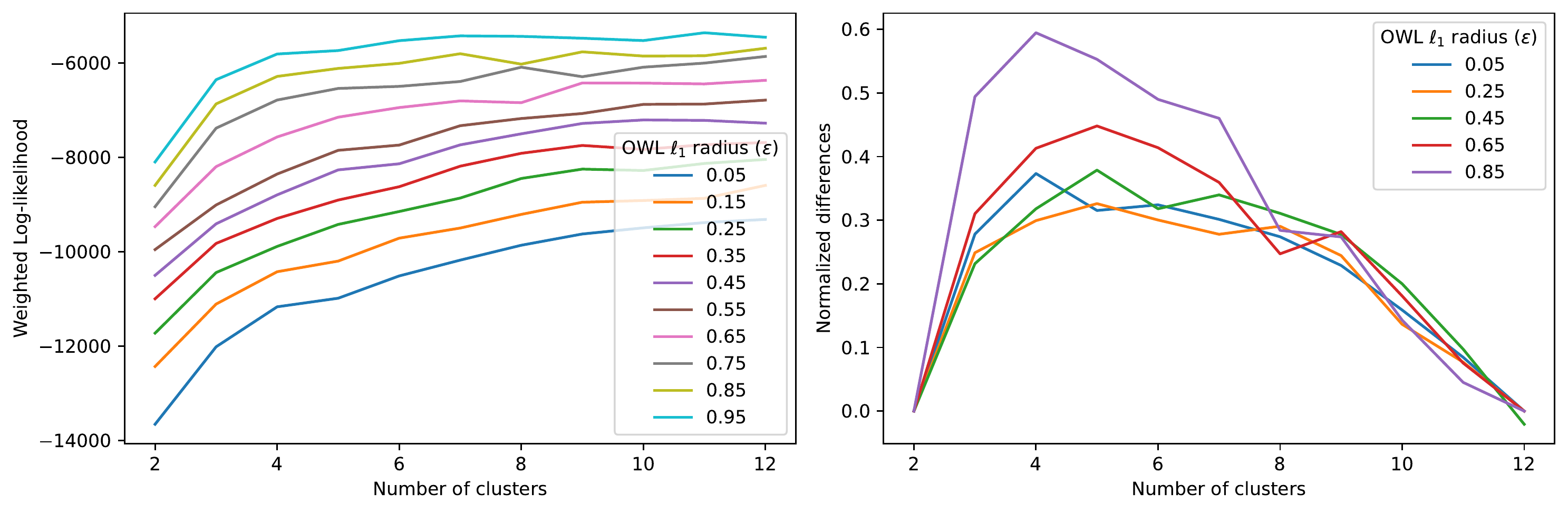}
    \caption{\emph{Left}: Weighted log-likehood of the data for various settings of $\epsilon$ and the number of clusters. \emph{Right}: Normalized difference graph of the weighted log-likehood function for select values of $\epsilon$. The kneedle algorithm chooses the value with the largest corresponding normalized difference.}
    \label{fig:k-selection}
\end{figure}

\begin{table}

\centering
 \begin{tabular}{||c | c c c c c c c c c c ||} 
 \hline
  OWL $\ell_1$ radius ($\epsilon$) & 0.05 & 0.15 &  0.25 & 0.35 & 0.45 & 0.55 & 0.65 & 0.75 & 0.85 & 0.95 \\
 \hline
 Selected $K$ & 6 & 6 & 8 & 7 & 7 & 7 & 5 & 4 & 4 & 4 \\
 \hline
 \end{tabular}
  \caption{Number of clusters chosen by the kneedle method as a function of the $\ell_1$ radius $\epsilon$.}
  \label{tab:k-selection}
\end{table}

In many settings, the number of ground truth clusters are not known a priori. A common way to deal with this problem is to plot a metric such as sum-of-squares errors or log-likelihood and look for `elbows' or `knees' in the graph where there are diminishing returns for increasing model capacity. Here, we apply the `kneedle' algorithm~\citep{satopaa2011finding} to the weighted log-likehood produced by the OWL procedure. The kneedle algorithm computes the normalized differences of a given function and selects the value that maximizes the corresponding normalized differences. \Cref{fig:k-selection} shows both the weighted log-likelihoods as well as a subset of the normalized difference graphs. \Cref{tab:k-selection} shows the selected numbers of clusters for various values of $\epsilon$. We see that for relatively small values of $\epsilon$, this results in number of clusters that is close to the ground truth. While for larger values of $\epsilon$, this procedure underestimates the number of clusters in the data. This agrees with the observation in the right panel of  \Cref{fig:ari-comparison} that larger values of $\epsilon$ result in fewer points being identified as inliers, and thus fewer clusters are needed to describe those points.
 
\section{More details of the micro-credit study}
\label{app:micro-credit}
This section contains additional details of our analysis in \Cref{sec:micro-credit}. 
\subsection{Reproducing the brittleness of MLE}
\label{sec:brittleness}

To reproduce the brittleness in estimating the AIT on household profits demonstrated in \cite{broderick2020automatic}, we first obtained the profit data from \cite{angelucci2015microcredit} as imputed and scaled in \cite{data:meager2019}. The MLE estimate of $\beta_1 = -4.55$ USD PPP per fortnight (standard error [s.e.] of 5.88), changes to $\beta_1 = 0.4$ USD PPP per fortnight (s.e. 3.19) if we remove a single household identified by the \texttt{zaminfluence} R package \citep{broderick2020automatic}. Moreover, by removing 14 further observations which were identified by the \texttt{zaminfluence} package, we observe that the non-significant value of the MLE estimate can be changed to a significant value of $\beta_1 = -6.01$ USD PPP (s.e. 2.57). As seen in a scatter-plot summarizing the data (\Cref{fig:micro-scatter}), this brittleness of the MLE is likely due to a small fraction of households with outlying profit values.

\begin{figure}[h]
	\centering
	\includegraphics{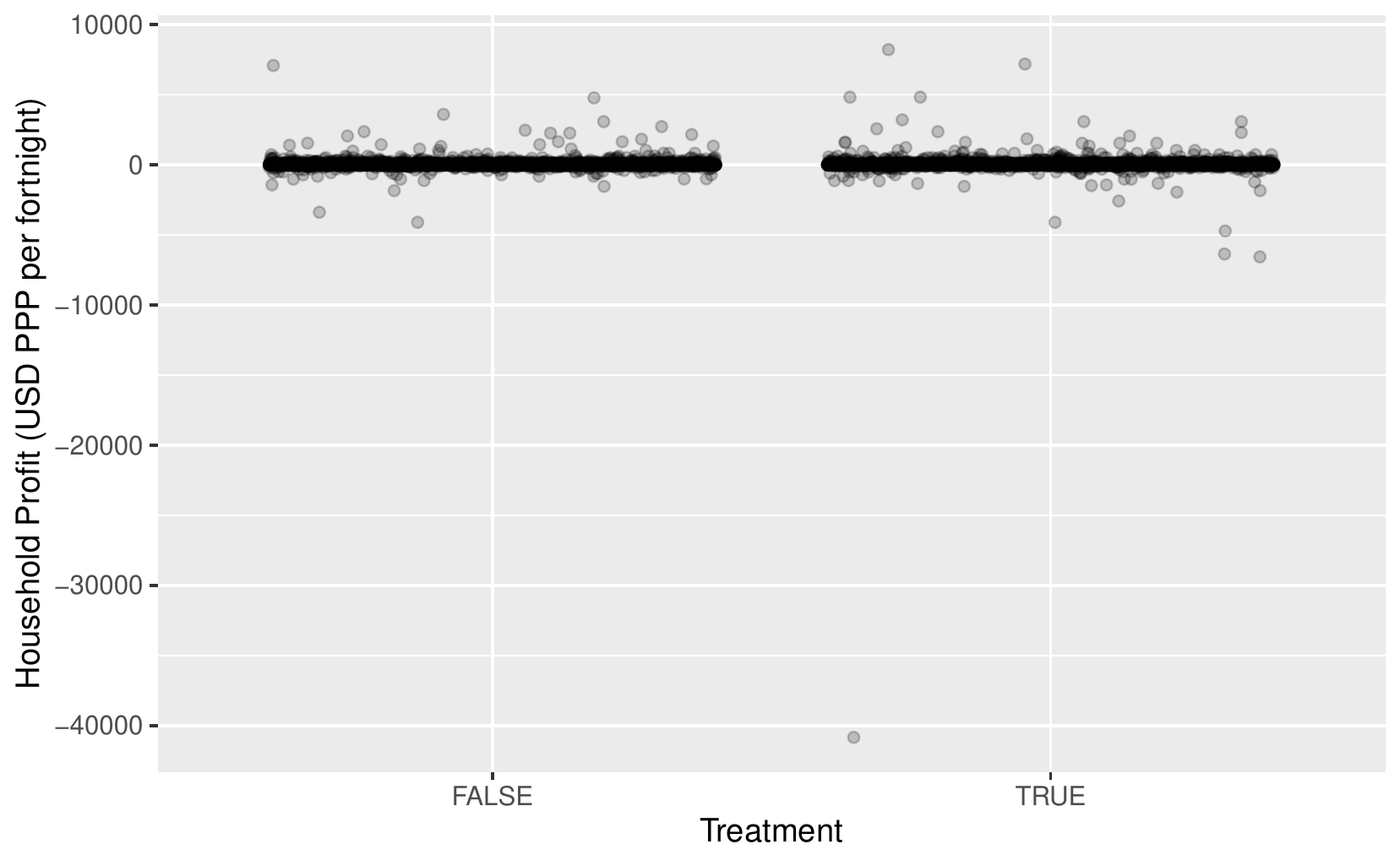}
	\caption{Scatter-plot of the household profit values across treated and non-treated households. Even after removing the  household with the extreme profit of -40000 USD PPP, there are still households with extreme profit values that cause brittleness in estimating the average treatment effect.}
	\label{fig:micro-scatter}
\end{figure}

\subsection{OWL uncertainty using outlier-stratified bootstrap}
\label{sec:os-bootstrap}

To quantify uncertainty in the AIT estimates obtained by OWL, we reran our analysis in \Cref{sec:micro-credit} on $m=50$ independently bootstrapped data sets of size $n$ each. Since we wanted to retain a small fraction of outlying observations in each data set, we used an \emph{outlier-stratified} (OS) sampling strategy. Namely, in each iteration, the  new data set was obtained by combining a bootstrap sample of the (roughly $1\%$) households that were down-weighted by the OWL procedure at $\epsilon_0$ and a bootstrap sample from the remaining households that were not down-weighted. 

The resulting 90\% OS-bootstrap confidence bands for estimates of AIT and minimum-OKL as a function of $\epsilon$ can be found in \Cref{fig:micro-log-scale-plots} ($x$-axis scaled to emphasize the uncertainty for small values of $\epsilon$) and \Cref{fig:micro-okl-plot}, respectively.

\subsection{Additional plots}

\Cref{fig:micro-scatter} shows a scatter-plot for the data and the presence of outliers.\\
\Cref{fig:micro-outlier-hist-plot} shows the distribution of profit values for points that were declared as outliers by the OWL procedure at the  parameter value $\epsilon_0$.

\begin{figure}[h]
    \centering
    \includegraphics{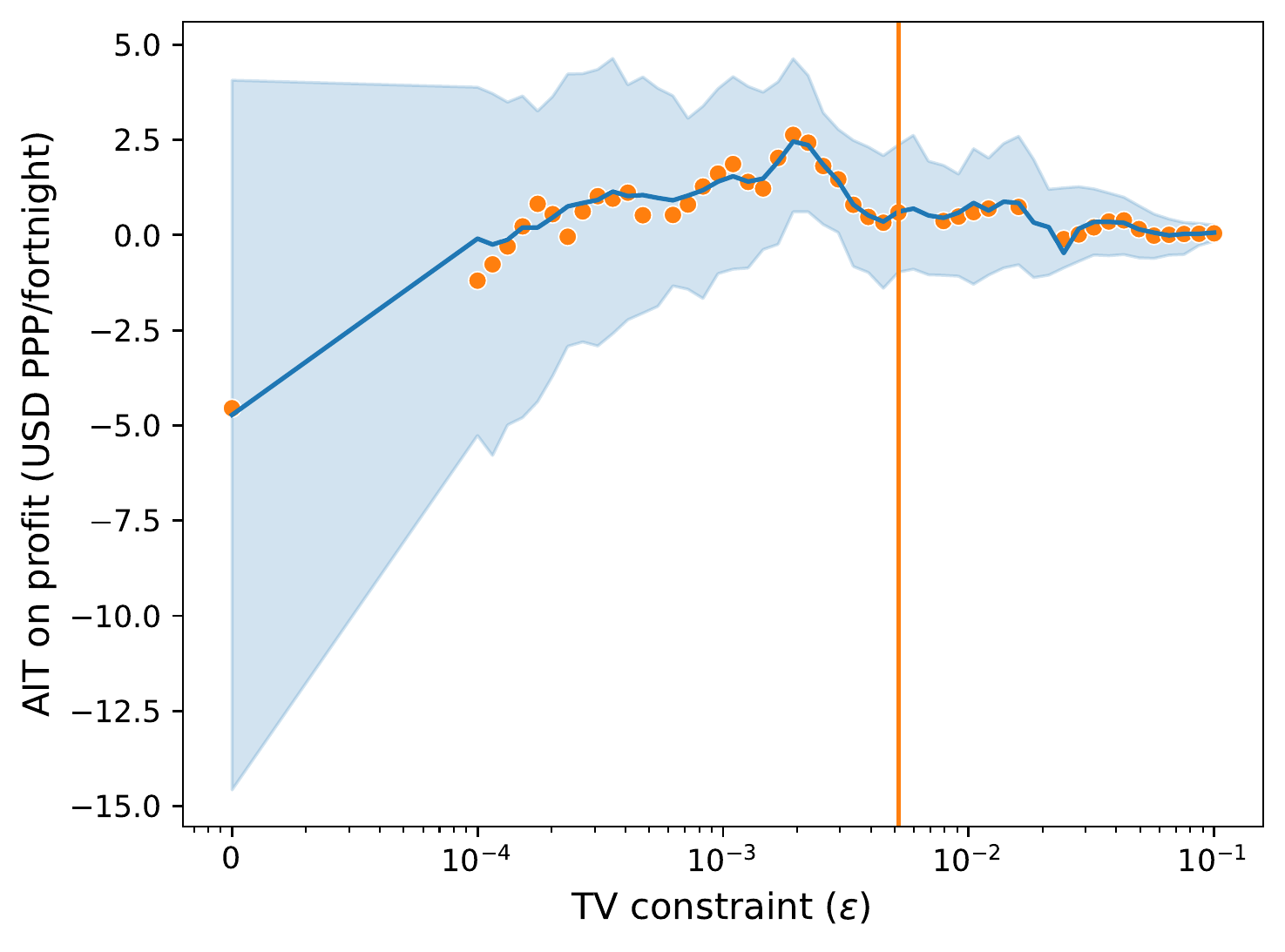}
    \caption{The plot from the left panel of \Cref{fig:micro_fig}, plotted on a re-scaled $x$-axis to emphasize the uncertainty in the AIT estimates for small values of $\epsilon$. The confidence bands become narrow roughly at the tuned value of $\epsilon_0 = 0.005$ (\Cref{sec:tune-epsilon}), suggesting that the outliers that cause brittleness may  have been down-weighted by OWL for $\epsilon = \epsilon_0$.}
    \label{fig:micro-log-scale-plots}
\end{figure}

\begin{figure}[h]
    \centering
    \includegraphics{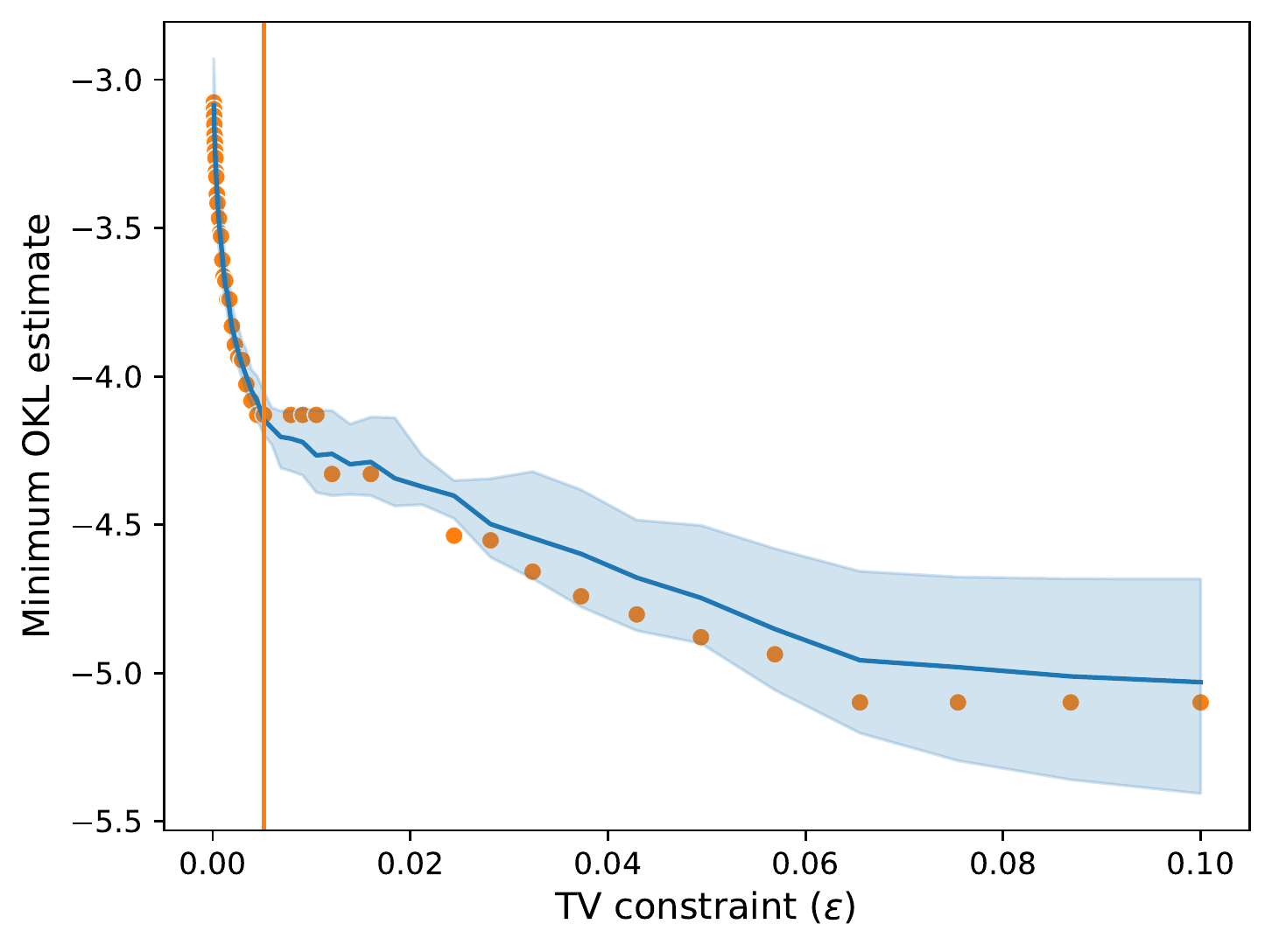}
    \caption{The minimum OKL estimate (i.e. $\hat{R}(\epsilon) = \min_{\theta \in \Theta} \hat{I}_\epsilon(\theta)$) versus $\epsilon$ plot for the micro-credit example. Using the notion of curvature in \Cref{sec:tune-epsilon}, the value $\epsilon_0 = 0.005$ was identified as the point at which this graph has its prominent kink. The $90\%$ confidence bands under $m=50$ OS-bootstrap iterations are also shown.} 
    \label{fig:micro-okl-plot}
\end{figure}

\begin{figure}[h]
    \centering
    \includegraphics{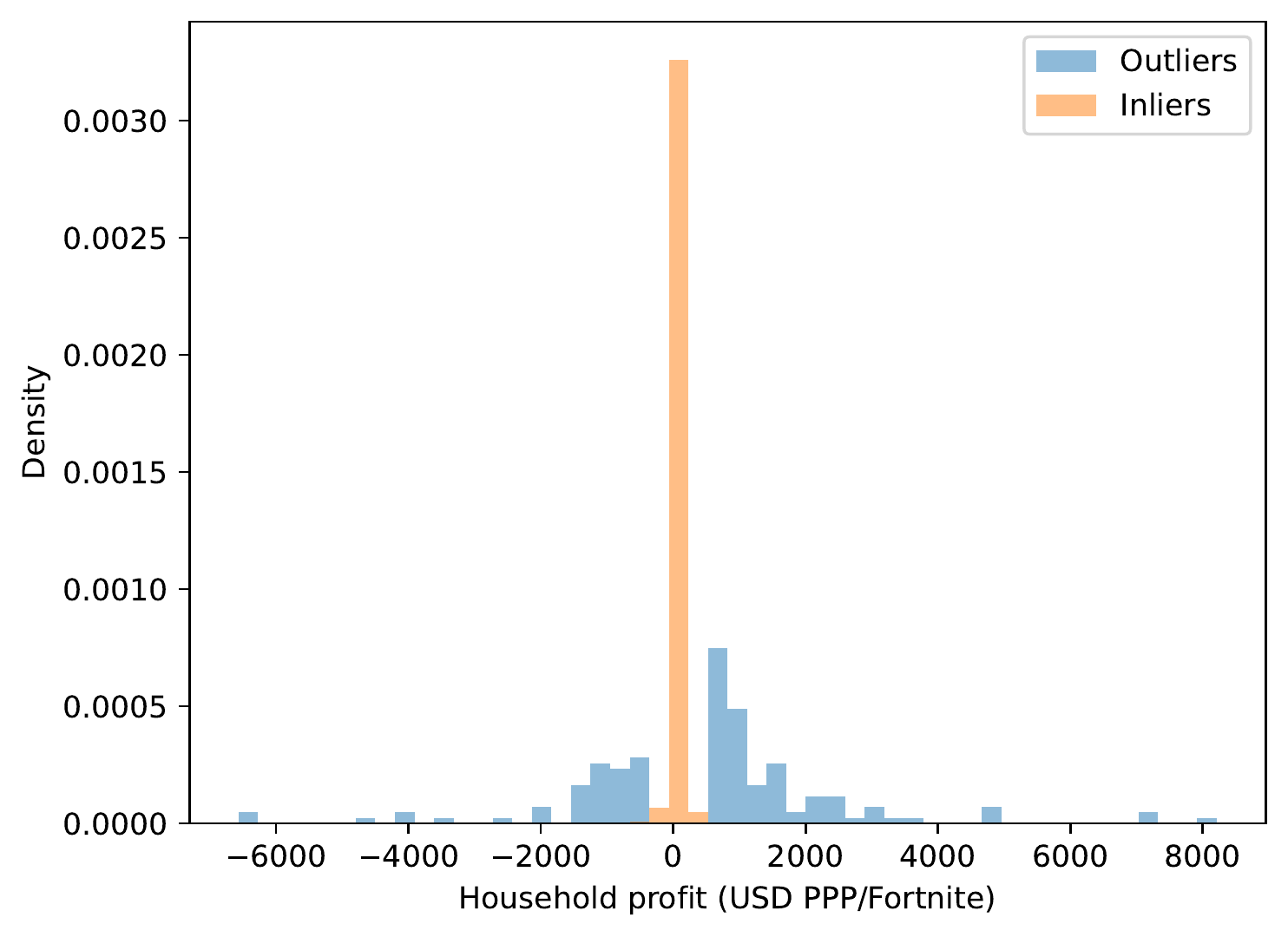}
    \caption{\new{The profit distribution for households that were declared to be inliers ($\{ i : w_i < 0.05 \}$) versus outliers ($\{ i: w_i \leq 0.05 \}$) by the OWL procedure at parameter $\epsilon_0$. For clarity, we omitted an outlying household with a profit value of less that $-40K$ USD PPP.}} 
    \label{fig:micro-outlier-hist-plot}
\end{figure}

    \clearpage
    \putbib[refs]
\end{bibunit}


\begin{thebibliography}{66}
\providecommand{\natexlab}[1]{#1}
\providecommand{\url}[1]{\texttt{#1}}
\expandafter\ifx\csname urlstyle\endcsname\relax
  \providecommand{\doi}[1]{doi: #1}\else
  \providecommand{\doi}{doi: \begingroup \urlstyle{rm}\Url}\fi

\bibitem[Amari(2016)]{amari2016information}
Shun'ichi Amari.
\newblock \emph{Information Geometry and its Applications}.
\newblock Springer, 2016.

\bibitem[Angelucci et~al.(2015)Angelucci, Karlan, and
  Zinman]{angelucci2015microcredit}
Manuela Angelucci, Dean Karlan, and Jonathan Zinman.
\newblock Microcredit impacts: Evidence from a randomized microcredit program
  placement experiment by {C}ompartamos {B}anco.
\newblock \emph{American Economic Journal: Applied Economics}, 7\penalty0
  (1):\penalty0 151--182, 2015.

\bibitem[Avella~Medina and Ronchetti(2015)]{avella2015robust}
Marco Avella~Medina and Elvezio Ronchetti.
\newblock Robust statistics: A selective overview and new directions.
\newblock \emph{Wiley Interdisciplinary Reviews: Computational Statistics},
  7\penalty0 (6):\penalty0 372--393, 2015.

\bibitem[Avella-Medina and Ronchetti(2018)]{avella2018robust}
Marco Avella-Medina and Elvezio Ronchetti.
\newblock Robust and consistent variable selection in high-dimensional
  generalized linear models.
\newblock \emph{Biometrika}, 105\penalty0 (1):\penalty0 31--44, 2018.

\bibitem[Basu and Lindsay(1994)]{basu1994minimum}
Ayanendranath Basu and Bruce~G Lindsay.
\newblock Minimum disparity estimation for continuous models: efficiency,
  distributions and robustness.
\newblock \emph{Annals of the Institute of Statistical Mathematics},
  46:\penalty0 683--705, 1994.

\bibitem[Basu et~al.(1998)Basu, Harris, Hjort, and Jones]{basu1998}
Ayanendranath Basu, Ian~R Harris, Nils~L Hjort, and MC~Jones.
\newblock Robust and efficient estimation by minimising a density power
  divergence.
\newblock \emph{Biometrika}, 85\penalty0 (3):\penalty0 549--559, 1998.

\bibitem[Beath(2018)]{beath2018mixture}
Ken~J Beath.
\newblock A mixture-based approach to robust analysis of generalised linear
  models.
\newblock \emph{Journal of Applied Statistics}, 45\penalty0 (12):\penalty0
  2256--2268, 2018.

\bibitem[Bernton et~al.(2019)Bernton, Jacob, Gerber, and
  Robert]{bernton2019parameter}
Espen Bernton, Pierre~E Jacob, Mathieu Gerber, and Christian~P Robert.
\newblock On parameter estimation with the {W}asserstein distance.
\newblock \emph{Information and Inference: A Journal of the IMA}, 8\penalty0
  (4):\penalty0 657--676, 2019.

\bibitem[Bondell and Stefanski(2013)]{bondell2013efficient}
Howard~D Bondell and Leonard~A Stefanski.
\newblock Efficient robust regression via two-stage generalized empirical
  likelihood.
\newblock \emph{Journal of the American Statistical Association}, 108\penalty0
  (502):\penalty0 644--655, 2013.

\bibitem[Boyd et~al.(2011)Boyd, Parikh, Chu, Peleato, and
  Eckstein]{boyd2011distributed}
Stephen Boyd, Neal Parikh, Eric Chu, Borja Peleato, and Jonathan Eckstein.
\newblock Distributed optimization and statistical learning via the alternating
  direction method of multipliers.
\newblock \emph{Foundations and Trends{\textregistered} in Machine learning},
  3\penalty0 (1):\penalty0 1--122, 2011.

\bibitem[Bradic(2016)]{bradic2016robustness}
Jelena Bradic.
\newblock Robustness in sparse high-dimensional linear models: Relative
  efficiency and robust approximate message passing.
\newblock \emph{Electronic Journal of Statistics}, 10\penalty0 (2):\penalty0
  3894--3944, 2016.

\bibitem[Bradic et~al.(2011)Bradic, Fan, and Wang]{bradic2011penalized}
Jelena Bradic, Jianqing Fan, and Weiwei Wang.
\newblock Penalized composite quasi-likelihood for ultrahigh dimensional
  variable selection.
\newblock \emph{Journal of the Royal Statistical Society: Series B (Statistical
  Methodology)}, 73\penalty0 (3):\penalty0 325--349, 2011.

\bibitem[Briol et~al.(2019)Briol, Barp, Duncan, and
  Girolami]{briol2019statistical}
Francois-Xavier Briol, Alessandro Barp, Andrew~B Duncan, and Mark Girolami.
\newblock Statistical inference for generative models with maximum mean
  discrepancy.
\newblock \emph{arXiv preprint arXiv:1906.05944}, 2019.

\bibitem[Broderick et~al.(2023)Broderick, Giordano, and
  Meager]{broderick2020automatic}
Tamara Broderick, Ryan Giordano, and Rachael Meager.
\newblock An automatic finite-sample robustness metric: When can dropping a
  little data make a big difference?
\newblock \emph{arXiv preprint arXiv:2011.14999}, 2023.

\bibitem[Cai et~al.(2021)Cai, Campbell, and Broderick]{cai2021finite}
Diana Cai, Trevor Campbell, and Tamara Broderick.
\newblock Finite mixture models do not reliably learn the number of components.
\newblock In \emph{International Conference on Machine Learning}, pages
  1158--1169, 2021.

\bibitem[Chandra et~al.(2023)Chandra, Canale, and Dunson]{chandra2023escaping}
Noirrit~Kiran Chandra, Antonio Canale, and David~B Dunson.
\newblock Escaping the curse of dimensionality in {B}ayesian model-based
  clustering.
\newblock \emph{Journal of Machine Learning Research}, 24\penalty0
  (144):\penalty0 1--42, 2023.

\bibitem[Chen et~al.(2016)Chen, Gao, and Ren]{chen2016general}
Mengjie Chen, Chao Gao, and Zhao Ren.
\newblock A general decision theory for {H}uber’s {$\epsilon$}-contamination
  model.
\newblock \emph{Electronic Journal of Statistics}, 10:\penalty0 3752--3774,
  2016.

\bibitem[Cherief-Abdellatif and Alquier(2020)]{Cherief-Abdellatif2019}
Badr-Eddine Cherief-Abdellatif and Pierre Alquier.
\newblock {MMD}-{B}ayes: Robust {B}ayesian estimation via maximum mean
  discrepancy.
\newblock In \emph{Proceedings of The 2nd Symposium on Advances in Approximate
  Bayesian Inference}, pages 1--21, 2020.

\bibitem[Choi et~al.(2000)Choi, Hall, and Presnell]{choi2000rendering}
Edwin Choi, Peter Hall, and Brett Presnell.
\newblock Rendering parametric procedures more robust by empirically tilting
  the model.
\newblock \emph{Biometrika}, 87\penalty0 (2):\penalty0 453--465, 2000.

\bibitem[Claeskens and Hjort(2008)]{claeskens2015model}
Gerda Claeskens and Nils~Lid Hjort.
\newblock \emph{Model Selection and Model Averaging}.
\newblock Cambridge University Press, 2008.

\bibitem[Cover and Thomas(2006)]{cover2006elements}
Thomas~M Cover and Joy~A Thomas.
\newblock \emph{Elements of {I}nformation {T}heory}.
\newblock John Wiley \& Sons, second edition, 2006.

\bibitem[Csisz{\'a}r(1975)]{csiszar1975divergence}
Imre Csisz{\'a}r.
\newblock I-divergence geometry of probability distributions and minimization
  problems.
\newblock \emph{The Annals of Probability}, 3\penalty0 (1):\penalty0 146--158,
  1975.

\bibitem[Dembo and Zeitouni(2010)]{demboLargeDeviationsTechniques2010}
Amir Dembo and Ofer Zeitouni.
\newblock \emph{Large Deviations Techniques and Applications}.
\newblock {Springer Berlin Heidelberg}, 2010.

\bibitem[Donoho and Liu(1988)]{donoho1988automatic}
David~L Donoho and Richard~C Liu.
\newblock The ``automatic'' robustness of minimum distance functionals.
\newblock \emph{The Annals of Statistics}, 16\penalty0 (2):\penalty0 552--586,
  1988.

\bibitem[Dupuis and Morgenthaler(2002)]{dupuis2002}
Debbie~J Dupuis and Stephan Morgenthaler.
\newblock Robust weighted likelihood estimators with an application to
  bivariate extreme value problems.
\newblock \emph{Canadian Journal of Statistics}, 30\penalty0 (1):\penalty0
  17--36, 2002.

\bibitem[Field and Smith(1994)]{field1994}
C~Field and B~Smith.
\newblock Robust estimation: A weighted maximum likelihood approach.
\newblock \emph{International Statistical Review/Revue Internationale de
  Statistique}, 62\penalty0 (3):\penalty0 405--424, 1994.

\bibitem[Ghosh and Basu(2018)]{ghosh2018new}
Abhik Ghosh and Ayanendranath Basu.
\newblock A new family of divergences originating from model adequacy tests and
  application to robust statistical inference.
\newblock \emph{IEEE Transactions on Information Theory}, 64\penalty0
  (8):\penalty0 5581--5591, 2018.

\bibitem[Greco and Agostinelli(2020)]{greco2020weighted}
Luca Greco and Claudio Agostinelli.
\newblock Weighted likelihood mixture modeling and model-based clustering.
\newblock \emph{Statistics and Computing}, 30\penalty0 (2):\penalty0 255--277,
  2020.

\bibitem[Hampel et~al.(2005)Hampel, Ronchetti, Rousseeuw, and
  Stahel]{Hampel2005}
Frank~R. Hampel, Elvezio~M. Ronchetti, Peter~J. Rousseeuw, and Werner~A.
  Stahel.
\newblock \emph{Robust Statistics: The Approach Based on Influence Functions}.
\newblock John Wiley \& Sons, 2005.

\bibitem[Huber(1964)]{huber1964robust}
Peter~J Huber.
\newblock Robust estimation of a location parameter.
\newblock \emph{Ann. Math. Statist.}, 35\penalty0 (4):\penalty0 73--101, 1964.

\bibitem[Huber and Ronchetti(2009)]{huber2009}
Peter~J. Huber and Elvezio~M. Ronchetti.
\newblock \emph{Robust Statistics}.
\newblock Wiley, 2009.

\bibitem[Huber-Carol et~al.(2012)Huber-Carol, Balakrishnan, Nikulin, and
  Mesbah]{huber2012goodness}
Catherine Huber-Carol, Narayanaswamy Balakrishnan, Mikhail Nikulin, and Mounir
  Mesbah.
\newblock \emph{Goodness-of-fit {T}ests and {M}odel {V}alidity}.
\newblock Springer Science \& Business Media, 2012.

\bibitem[H{\"u}llermeier(2014)]{hullermeier2014learning}
Eyke H{\"u}llermeier.
\newblock Learning from imprecise and fuzzy observations: Data disambiguation
  through generalized loss minimization.
\newblock \emph{International Journal of Approximate Reasoning}, 55\penalty0
  (7):\penalty0 1519--1534, 2014.

\bibitem[H{\"u}llermeier and Cheng(2015)]{hullermeier2015superset}
Eyke H{\"u}llermeier and Weiwei Cheng.
\newblock Superset learning based on generalized loss minimization.
\newblock In \emph{Proceedings of the 2015th European Conference on Machine
  Learning and Knowledge Discovery in Databases-Volume Part II}, pages
  260--275, 2015.

\bibitem[Hunter and Lange(2004)]{hunter2004tutorial}
David~R Hunter and Kenneth Lange.
\newblock A tutorial on {MM} algorithms.
\newblock \emph{The American Statistician}, 58\penalty0 (1):\penalty0 30--37,
  2004.

\bibitem[Le~Thi and Pham~Dinh(2018)]{le2018dc}
Hoai~An Le~Thi and Tao Pham~Dinh.
\newblock {DC} programming and {DCA:} thirty years of developments.
\newblock \emph{Mathematical Programming}, 169\penalty0 (1):\penalty0 5--68,
  2018.

\bibitem[LeCun et~al.(1998)LeCun, Bottou, Bengio, and
  Haffner]{lecun1998gradient}
Yann LeCun, L{\'e}on Bottou, Yoshua Bengio, and Patrick Haffner.
\newblock Gradient-based learning applied to document recognition.
\newblock \emph{Proceedings of the IEEE}, 86\penalty0 (11):\penalty0
  2278--2324, 1998.

\bibitem[Lienen and H{\"u}llermeier(2021)]{lienen2021instance}
Julian Lienen and Eyke H{\"u}llermeier.
\newblock Instance weighting through data imprecisiation.
\newblock \emph{International Journal of Approximate Reasoning}, 134:\penalty0
  1--14, 2021.

\bibitem[Lindsay(1994)]{lindsay1994efficiency}
Bruce~G Lindsay.
\newblock Efficiency versus robustness: The case for minimum {H}ellinger
  distance and related methods.
\newblock \emph{The Annals of Statistics}, 22\penalty0 (2):\penalty0
  1081--1114, 1994.

\bibitem[Liu and Moitra(2023)]{liu2023robustly}
Allen Liu and Ankur Moitra.
\newblock Robustly learning general mixtures of {G}aussians.
\newblock \emph{Journal of the ACM}, 70\penalty0 (3):\penalty0 1--53, 2023.

\bibitem[Liu and Lindsay(2009)]{liu2009building}
Jiawei Liu and Bruce~G. Lindsay.
\newblock Building and using semiparametric tolerance regions for parametric
  multinomial models.
\newblock \emph{The Annals of Statistics}, 37\penalty0 (6A):\penalty0
  3644--3659, 2009.

\bibitem[Mancini et~al.(2005)Mancini, Ronchetti, and
  Trojani]{mancini2005optimal}
Loriano Mancini, Elvezio Ronchetti, and Fabio Trojani.
\newblock Optimal conditionally unbiased bounded-influence inference in dynamic
  location and scale models.
\newblock \emph{Journal of the American Statistical Association}, 100\penalty0
  (470):\penalty0 628--641, 2005.

\bibitem[Markatou(2000)]{markatou2000mixture}
Marianthi Markatou.
\newblock Mixture models, robustness, and the weighted likelihood methodology.
\newblock \emph{Biometrics}, 56\penalty0 (2):\penalty0 483--486, 2000.

\bibitem[Markatou et~al.(1998)Markatou, Basu, and Lindsay]{markatou1998}
Marianthi Markatou, Ayanendranath Basu, and Bruce~G Lindsay.
\newblock Weighted likelihood equations with bootstrap root search.
\newblock \emph{Journal of the American Statistical Association}, 93\penalty0
  (442):\penalty0 740--750, 1998.

\bibitem[Maronna et~al.(2019)Maronna, Martin, Yohai, and
  Salibi{\'a}n-Barrera]{maronna2019robust}
Ricardo~A Maronna, R~Douglas Martin, Victor~J Yohai, and Mat{\'\i}as
  Salibi{\'a}n-Barrera.
\newblock \emph{Robust statistics: theory and methods (with R)}.
\newblock John Wiley \& Sons, 2019.

\bibitem[Metsis et~al.(2006)Metsis, Androutsopoulos, and
  Paliouras]{metsis2006spam}
Vangelis Metsis, Ion Androutsopoulos, and Georgios Paliouras.
\newblock Spam filtering with naive {B}ayes - which naive {B}ayes?
\newblock In \emph{{CEAS} 2006 - The Third Conference on Email and Anti-Spam,
  July 27-28, 2006, Mountain View, California, {USA}}, 2006.

\bibitem[Miller and Dunson(2019)]{miller2018robust}
Jeffrey~W. Miller and David~B. Dunson.
\newblock Robust {B}ayesian {I}nference via {C}oarsening.
\newblock \emph{Journal of the American Statistical Association}, 114\penalty0
  (527):\penalty0 1113--1125, 2019.

\bibitem[Olier(2020)]{data:qsar}
Ivan Olier, 2020.
\newblock {QSAR} datasets -- {M}eta-{QSAR}, V1. Mendeley Data;
  \url{https://doi.org/10.17632/spwgrcnjdg.1}.

\bibitem[Olier et~al.(2018)Olier, Sadawi, Bickerton, Vanschoren, Grosan,
  Soldatova, and King]{olier2018meta}
Ivan Olier, Noureddin Sadawi, G~Richard Bickerton, Joaquin Vanschoren, Crina
  Grosan, Larisa Soldatova, and Ross~D King.
\newblock Meta-{QSAR}: {A} large-scale application of meta-learning to drug
  design and discovery.
\newblock \emph{Machine Learning}, 107\penalty0 (1):\penalty0 285--311, 2018.

\bibitem[Parikh and Boyd(2014)]{parikh2014proximal}
Neal Parikh and Stephen Boyd.
\newblock Proximal algorithms.
\newblock \emph{Foundations and Trends{\textregistered} in Optimization},
  1\penalty0 (3):\penalty0 127--239, 2014.

\bibitem[Rogers and Hahn(2010)]{rogers2010extended}
David Rogers and Mathew Hahn.
\newblock Extended-connectivity fingerprints.
\newblock \emph{Journal of Chemical Information and Modeling}, 50\penalty0
  (5):\penalty0 742--754, 2010.

\bibitem[Rousseeuw and Leroy(2005)]{rousseeuw2005robust}
Peter~J Rousseeuw and Annick~M Leroy.
\newblock \emph{Robust regression and outlier detection}.
\newblock John Wiley \& Sons, 2005.

\bibitem[Royall(2017)]{royall2017statistical}
Richard Royall.
\newblock \emph{Statistical evidence: a likelihood paradigm}.
\newblock Routledge, 2017.

\bibitem[Samdani et~al.(2012)Samdani, Chang, and Roth]{samdani2012unified}
Rajhans Samdani, Ming-Wei Chang, and Dan Roth.
\newblock Unified expectation maximization.
\newblock In \emph{Proceedings of the 2012 Conference of the North American
  Chapter of the Association for Computational Linguistics: Human Language
  Technologies}, pages 688--698, 2012.

\bibitem[Satopaa et~al.(2011)Satopaa, Albrecht, Irwin, and
  Raghavan]{satopaa2011finding}
Ville Satopaa, Jeannie Albrecht, David Irwin, and Barath Raghavan.
\newblock Finding a ``kneedle'' in a haystack: {D}etecting knee points in
  system behavior.
\newblock In \emph{31st International Conference on Distributed Computing
  Systems Workshops}, pages 166--171, 2011.

\bibitem[Scarpa and Dunson(2009)]{scarpa2009bayesian}
Bruno Scarpa and David~B Dunson.
\newblock Bayesian hierarchical functional data analysis via contaminated
  informative priors.
\newblock \emph{Biometrics}, 65\penalty0 (3):\penalty0 772--780, 2009.

\bibitem[She and Owen(2011)]{she2011outlier}
Yiyuan She and Art~B Owen.
\newblock Outlier detection using nonconvex penalized regression.
\newblock \emph{Journal of the American Statistical Association}, 106\penalty0
  (494):\penalty0 626--639, 2011.

\bibitem[Simon-Gabriel and Sch{\"o}lkopf(2018)]{simon2018kernel}
Carl-Johann Simon-Gabriel and Bernhard Sch{\"o}lkopf.
\newblock Kernel distribution embeddings: {U}niversal kernels, characteristic
  kernels and kernel metrics on distributions.
\newblock \emph{The Journal of Machine Learning Research}, 19\penalty0
  (1):\penalty0 1708--1736, 2018.

\bibitem[Taplin(1993)]{taplin1993robust}
Ross~H Taplin.
\newblock Robust likelihood calculation for time series.
\newblock \emph{Journal of the Royal Statistical Society Series B: Statistical
  Methodology}, 55\penalty0 (4):\penalty0 829--836, 1993.

\bibitem[Tsou and Royall(1995)]{tsou1995robust}
Tsung-Shan Tsou and Richard~M Royall.
\newblock Robust likelihoods.
\newblock \emph{Journal of the American Statistical Association}, 90\penalty0
  (429):\penalty0 316--320, 1995.

\bibitem[William(1981)]{william1981minimum}
C~Parr William.
\newblock Minimum distance estimation: a bibliography.
\newblock \emph{Communications in Statistics {--} Theory and Methods},
  10\penalty0 (12):\penalty0 1205--1224, 1981.

\bibitem[Windham(1995)]{windham1995}
Michael~P Windham.
\newblock Robustifying model fitting.
\newblock \emph{Journal of the Royal Statistical Society. Series B
  (Methodological)}, 57\penalty0 (3):\penalty0 599--609, 1995.

\bibitem[Wolfowitz(1957)]{MinimumDistanceMethod}
Jacob Wolfowitz.
\newblock The minimum distance method.
\newblock \emph{The Annals of Mathematical Statistics}, pages 75--88, 1957.

\bibitem[Yatracos(1985)]{yatracos1985rates}
Yannis~G Yatracos.
\newblock Rates of convergence of minimum distance estimators and
  {K}olmogorov's entropy.
\newblock \emph{The Annals of Statistics}, 13\penalty0 (2):\penalty0 768--774,
  1985.

\bibitem[Zellner(1988)]{zellner1988optimal}
Arnold Zellner.
\newblock Optimal information processing and {B}ayes's theorem.
\newblock \emph{The American Statistician}, 42\penalty0 (4):\penalty0 278--280,
  1988.

\bibitem[Zhou et~al.(2020)Zhou, Claeskens, and Bradic]{zhou2020detangling}
Jing Zhou, Gerda Claeskens, and Jelena Bradic.
\newblock Detangling robustness in high dimensions: Composite versus
  model-averaged estimation.
\newblock \emph{Electronic Journal of Statistics}, 14\penalty0 (2):\penalty0
  2551--2599, 2020.

\end{thebibliography}


\begin{thebibliography}{69}
\providecommand{\natexlab}[1]{#1}
\providecommand{\url}[1]{\texttt{#1}}
\expandafter\ifx\csname urlstyle\endcsname\relax
  \providecommand{\doi}[1]{doi: #1}\else
  \providecommand{\doi}{doi: \begingroup \urlstyle{rm}\Url}\fi

\bibitem[Abbaszadehpeivasti et~al.(2023)Abbaszadehpeivasti, de~Klerk, and
  Zamani]{abbaszadehpeivasti2023rate}
H.~Abbaszadehpeivasti, E.~de~Klerk, and M.~Zamani.
\newblock On the rate of convergence of the difference-of-convex algorithm
  ({DCA}).
\newblock \emph{Journal of Optimization Theory and Applications}, pages 1--22,
  2023.

\bibitem[Andrews and Hemberg(2019)]{andrews2019m3drop}
T.~S. Andrews and M.~Hemberg.
\newblock {M3Drop}: Dropout-based feature selection for {scRNASeq}.
\newblock \emph{Bioinformatics}, 35\penalty0 (16):\penalty0 2865--2867, 2019.

\bibitem[Angelucci et~al.(2015)Angelucci, Karlan, and
  Zinman]{angelucci2015microcredit}
M.~Angelucci, D.~Karlan, and J.~Zinman.
\newblock Microcredit impacts: Evidence from a randomized microcredit program
  placement experiment by {C}ompartamos {B}anco.
\newblock \emph{American Economic Journal: Applied Economics}, 7\penalty0
  (1):\penalty0 151--182, 2015.

\bibitem[Avella~Medina and Ronchetti(2015)]{avella2015robust}
M.~Avella~Medina and E.~Ronchetti.
\newblock Robust statistics: A selective overview and new directions.
\newblock \emph{Wiley Interdisciplinary Reviews: Computational Statistics},
  7\penalty0 (6):\penalty0 372--393, 2015.

\bibitem[Avella-Medina and Ronchetti(2018)]{avella2018robust}
M.~Avella-Medina and E.~Ronchetti.
\newblock Robust and consistent variable selection in high-dimensional
  generalized linear models.
\newblock \emph{Biometrika}, 105\penalty0 (1):\penalty0 31--44, 2018.

\bibitem[Barratt et~al.(2021)Barratt, Angeris, and Boyd]{barratt2021optimal}
S.~Barratt, G.~Angeris, and S.~Boyd.
\newblock Optimal representative sample weighting.
\newblock \emph{Statistics and Computing}, 31\penalty0 (2):\penalty0 1--14,
  2021.

\bibitem[Barrett et~al.(2012)Barrett, Wilhite, Ledoux, Evangelista, Kim,
  Tomashevsky, Marshall, Phillippy, Sherman, Holko, et~al.]{barrett2012ncbi}
T.~Barrett, S.~E. Wilhite, P.~Ledoux, C.~Evangelista, I.~F. Kim,
  M.~Tomashevsky, K.~A. Marshall, K.~H. Phillippy, P.~M. Sherman, M.~Holko,
  et~al.
\newblock {NCBI GEO}: archive for functional genomics data sets --- update.
\newblock \emph{Nucleic Acids Research}, 41\penalty0 (Database issue):\penalty0
  D991--D995, 2012.

\bibitem[Bartlett and Mendelson(2002)]{bartlett2002rademacher}
P.~L. Bartlett and S.~Mendelson.
\newblock {R}ademacher and {G}aussian complexities: Risk bounds and structural
  results.
\newblock \emph{Journal of Machine Learning Research}, 3:\penalty0 463--482,
  2002.

\bibitem[Bertsekas(1997)]{nonlinearprog}
D.~P. Bertsekas.
\newblock {N}onlinear {P}rogramming.
\newblock \emph{Journal of the Operational Research Society}, 48\penalty0
  (3):\penalty0 334--334, 1997.

\bibitem[Billingsley(2013)]{billingsley2013convergence}
P.~Billingsley.
\newblock \emph{Convergence of {P}robability {M}easures}.
\newblock John Wiley \& Sons, 2013.

\bibitem[Boucheron et~al.(2005)Boucheron, Bousquet, and
  Lugosi]{boucheron2005theory}
S.~Boucheron, O.~Bousquet, and G.~Lugosi.
\newblock Theory of classification: A survey of some recent advances.
\newblock \emph{ESAIM: Probability and Statistics}, 9:\penalty0 323--375, 2005.

\bibitem[Boyd et~al.(2011)Boyd, Parikh, Chu, Peleato, and
  Eckstein]{boyd2011distributed}
S.~Boyd, N.~Parikh, E.~Chu, B.~Peleato, and J.~Eckstein.
\newblock Distributed optimization and statistical learning via the alternating
  direction method of multipliers.
\newblock \emph{Foundations and Trends{\textregistered} in Machine learning},
  3\penalty0 (1):\penalty0 1--122, 2011.

\bibitem[Broderick et~al.(2023)Broderick, Giordano, and
  Meager]{broderick2020automatic}
T.~Broderick, R.~Giordano, and R.~Meager.
\newblock An automatic finite-sample robustness metric: When can dropping a
  little data make a big difference?
\newblock \emph{arXiv preprint arXiv:2011.14999}, 2023.

\bibitem[Budhiraja and Dupuis(2019)]{budhiraja2019analysis}
A.~Budhiraja and P.~Dupuis.
\newblock \emph{Analysis and {A}pproximation of {R}are {E}vents}.
\newblock Springer, 2019.

\bibitem[Cai et~al.(2021)Cai, Campbell, and Broderick]{cai2021finite}
D.~Cai, T.~Campbell, and T.~Broderick.
\newblock Finite mixture models do not reliably learn the number of components.
\newblock In \emph{International Conference on Machine Learning}, pages
  1158--1169, 2021.

\bibitem[Celeux and Govaert(1993)]{celeux1993comparison}
G.~Celeux and G.~Govaert.
\newblock Comparison of the mixture and the classification maximum likelihood
  in cluster analysis.
\newblock \emph{Journal of Statistical Computation and simulation}, 47\penalty0
  (3-4):\penalty0 127--146, 1993.

\bibitem[Chandra et~al.(2023)Chandra, Canale, and Dunson]{chandra2023escaping}
N.~K. Chandra, A.~Canale, and D.~B. Dunson.
\newblock Escaping the curse of dimensionality in {B}ayesian model-based
  clustering.
\newblock \emph{Journal of Machine Learning Research}, 24\penalty0
  (144):\penalty0 1--42, 2023.

\bibitem[Condat(2016)]{condat2016fast}
L.~Condat.
\newblock Fast projection onto the simplex and the $\ell_1$ ball.
\newblock \emph{Mathematical Programming}, 158\penalty0 (1):\penalty0 575--585,
  2016.

\bibitem[Cover and Thomas(2006)]{cover2006elements}
T.~M. Cover and J.~A. Thomas.
\newblock \emph{Elements of {I}nformation {T}heory}.
\newblock John Wiley \& Sons, second edition, 2006.

\bibitem[Csisz{\'a}r(1975)]{csiszar1975divergence}
I.~Csisz{\'a}r.
\newblock I-divergence geometry of probability distributions and minimization
  problems.
\newblock \emph{The Annals of Probability}, 3\penalty0 (1):\penalty0 146--158,
  1975.

\bibitem[de~Oliveira(2020)]{de2020abc}
W.~de~Oliveira.
\newblock The {ABC} of {DC} programming.
\newblock \emph{Set-Valued and Variational Analysis}, 28:\penalty0 679--706,
  2020.

\bibitem[Dembo and Zeitouni(2010)]{demboLargeDeviationsTechniques2010}
A.~Dembo and O.~Zeitouni.
\newblock \emph{Large Deviations Techniques and Applications}.
\newblock {Springer Berlin Heidelberg}, 2010.

\bibitem[Dempster et~al.(1977)Dempster, Laird, and Rubin]{dempster1977maximum}
A.~P. Dempster, N.~M. Laird, and D.~B. Rubin.
\newblock Maximum likelihood from incomplete data via the {EM} algorithm.
\newblock \emph{Journal of the Royal Statistical Society: Series B
  (Methodological)}, 39\penalty0 (1):\penalty0 1--22, 1977.

\bibitem[Donoho and Liu(1988)]{donoho1988automatic}
D.~L. Donoho and R.~C. Liu.
\newblock The ``automatic'' robustness of minimum distance functionals.
\newblock \emph{The Annals of Statistics}, 16\penalty0 (2):\penalty0 552--586,
  1988.

\bibitem[Dupuis and Morgenthaler(2002)]{dupuis2002}
D.~J. Dupuis and S.~Morgenthaler.
\newblock Robust weighted likelihood estimators with an application to
  bivariate extreme value problems.
\newblock \emph{Canadian Journal of Statistics}, 30\penalty0 (1):\penalty0
  17--36, 2002.

\bibitem[Field and Smith(1994)]{field1994}
C.~Field and B.~Smith.
\newblock Robust estimation: A weighted maximum likelihood approach.
\newblock \emph{International Statistical Review/Revue Internationale de
  Statistique}, 62\penalty0 (3):\penalty0 405--424, 1994.

\bibitem[Ghourchian et~al.(2017)Ghourchian, Gohari, and
  Amini]{ghourchian2017existence}
H.~Ghourchian, A.~Gohari, and A.~Amini.
\newblock Existence and continuity of differential entropy for a class of
  distributions.
\newblock \emph{IEEE Communications Letters}, 21\penalty0 (7):\penalty0
  1469--1472, 2017.

\bibitem[Gretton et~al.(2006)Gretton, Borgwardt, Rasch, Sch\"{o}lkopf, and
  Smola]{gretton2006kernel}
A.~Gretton, K.~Borgwardt, M.~Rasch, B.~Sch\"{o}lkopf, and A.~Smola.
\newblock A kernel method for the two-sample-problem.
\newblock In \emph{Advances in Neural Information Processing Systems}, 2006.

\bibitem[Hampel(1968)]{hampel1968contributions}
F.~R. Hampel.
\newblock \emph{Contributions to the theory of robust estimation}.
\newblock University of California, Berkeley, 1968.

\bibitem[Hampel(2001)]{hampel2001robust}
F.~R. Hampel.
\newblock Robust statistics: A brief introduction and overview.
\newblock In \emph{Research Report/Seminar f{\"u}r Statistik,
  Eidgen{\"o}ssische Technische Hochschule (ETH)}, volume~94, 2001.

\bibitem[Hampel et~al.(2005)Hampel, Ronchetti, Rousseeuw, and
  Stahel]{Hampel2005}
F.~R. Hampel, E.~M. Ronchetti, P.~J. Rousseeuw, and W.~A. Stahel.
\newblock \emph{Robust Statistics: The Approach Based on Influence Functions}.
\newblock John Wiley \& Sons, 2005.

\bibitem[He et~al.(2000)He, Yang, and Wang]{he2000alternating}
B.~He, H.~Yang, and S.~Wang.
\newblock Alternating direction method with self-adaptive penalty parameters
  for monotone variational inequalities.
\newblock \emph{Journal of Optimization Theory and Applications}, 106\penalty0
  (2):\penalty0 337--356, 2000.

\bibitem[Huber(1964)]{huber1964robust}
P.~J. Huber.
\newblock Robust estimation of a location parameter.
\newblock \emph{Ann. Math. Statist.}, 35\penalty0 (4):\penalty0 73--101, 1964.

\bibitem[Huber(1972)]{huber1972}
P.~J. Huber.
\newblock The 1972 {W}ald lecture robust statistics: A review.
\newblock \emph{The Annals of Mathematical Statistics}, 43\penalty0
  (4):\penalty0 1041--1067, 1972.

\bibitem[Huber and Ronchetti(2009)]{huber2009}
P.~J. Huber and E.~M. Ronchetti.
\newblock \emph{Robust Statistics}.
\newblock Wiley, 2009.

\bibitem[Hubert and Arabie(1985)]{hubert1985comparing}
L.~Hubert and P.~Arabie.
\newblock Comparing partitions.
\newblock \emph{Journal of Classification}, 2\penalty0 (1):\penalty0 193--218,
  1985.

\bibitem[Hunter and Lange(2004)]{hunter2004tutorial}
D.~R. Hunter and K.~Lange.
\newblock A tutorial on {MM} algorithms.
\newblock \emph{The American Statistician}, 58\penalty0 (1):\penalty0 30--37,
  2004.

\bibitem[Jiang(2017)]{jiang2017uniform}
H.~Jiang.
\newblock Uniform convergence rates for kernel density estimation.
\newblock In \emph{Proceedings of the 34th International Conference on Machine
  Learning}, pages 1694--1703, 2017.

\bibitem[Kang et~al.(2015)Kang, Zhang, and Li]{kang2015global}
Y.~Kang, Z.~Zhang, and W.-J. Li.
\newblock On the global convergence of majorization minimization algorithms for
  nonconvex optimization problems.
\newblock \emph{arXiv preprint arXiv:1504.07791}, 2015.

\bibitem[Kearns et~al.(1998)Kearns, Mansour, and Ng]{kearns1998information}
M.~Kearns, Y.~Mansour, and A.~Y. Ng.
\newblock An information-theoretic analysis of hard and soft assignment methods
  for clustering.
\newblock \emph{Learning in graphical models}, pages 495--520, 1998.

\bibitem[Keribin(2000)]{keribin2000consistent}
C.~Keribin.
\newblock Consistent estimation of the order of mixture models.
\newblock \emph{Sankhy{\=a}: The Indian Journal of Statistics, Series A}, pages
  49--66, 2000.

\bibitem[Kiseliov(1994)]{kiseliov1994algorithms}
Y.~N. Kiseliov.
\newblock Algorithms of projection of a point onto an ellipsoid.
\newblock \emph{Lithuanian Mathematical Journal}, 34\penalty0 (2):\penalty0
  141--159, 1994.

\bibitem[Klenke(2014)]{klenkeProbabilityTheoryComprehensive2014a}
A.~Klenke.
\newblock \emph{Probability {{Theory}}: {{A Comprehensive Course}}}.
\newblock Springer London, 2014.

\bibitem[Lange et~al.(2021)Lange, Won, Landeros, and Zhou]{lange2021nonconvex}
K.~Lange, J.~Won, A.~Landeros, and H.~Zhou.
\newblock Nonconvex optimization via {MM} algorithms: Convergence theory.
\newblock \emph{Wiley StatsRef: Statistics Reference Online}, page 1–22,
  2021.

\bibitem[Lawrence et~al.(2012)Lawrence, Corless, and
  Jeffrey]{lawrence2012algorithm}
P.~W. Lawrence, R.~M. Corless, and D.~J. Jeffrey.
\newblock Algorithm 917: Complex double-precision evaluation of the {W}right
  $\omega$ function.
\newblock \emph{ACM Transactions on Mathematical Software (TOMS)}, 38\penalty0
  (3):\penalty0 1--17, 2012.

\bibitem[Le~Thi and Pham~Dinh(2018)]{le2018dc}
H.~A. Le~Thi and T.~Pham~Dinh.
\newblock {DC} programming and {DCA:} thirty years of developments.
\newblock \emph{Mathematical Programming}, 169\penalty0 (1):\penalty0 5--68,
  2018.

\bibitem[Li et~al.(2017)Li, Courtois, Sengupta, Tan, Chen, Goh, Kong, Chua,
  Hon, Tan, et~al.]{li2017reference}
H.~Li, E.~T. Courtois, D.~Sengupta, Y.~Tan, K.~H. Chen, J.~J.~L. Goh, S.~L.
  Kong, C.~Chua, L.~K. Hon, W.~S. Tan, et~al.
\newblock Reference component analysis of single-cell transcriptomes elucidates
  cellular heterogeneity in human colorectal tumors.
\newblock \emph{Nature Genetics}, 49\penalty0 (5):\penalty0 708--718, 2017.

\bibitem[Li and Ghosal(2021)]{li2021posterior}
W.~Li and S.~Ghosal.
\newblock Posterior contraction and credible regions for level sets.
\newblock \emph{Electronic Journal of Statistics}, 15\penalty0 (1), 2021.

\bibitem[Lindsay and Liu(2009)]{lindsay2009model}
B.~Lindsay and J.~Liu.
\newblock Model assessment tools for a model false world.
\newblock \emph{Statistical Science}, 24\penalty0 (3), 2009.

\bibitem[Liu and Lindsay(2009)]{liu2009building}
J.~Liu and B.~G. Lindsay.
\newblock Building and using semiparametric tolerance regions for parametric
  multinomial models.
\newblock \emph{The Annals of Statistics}, 37\penalty0 (6A):\penalty0
  3644--3659, 2009.

\bibitem[Lun et~al.(2016)Lun, Bach, and Marioni]{lun2016pooling}
A.~T.~L. Lun, K.~Bach, and J.~C. Marioni.
\newblock Pooling across cells to normalize single-cell {RNA} sequencing data
  with many zero counts.
\newblock \emph{Genome Biology}, 17\penalty0 (1):\penalty0 1--14, 2016.

\bibitem[Mancini et~al.(2005)Mancini, Ronchetti, and
  Trojani]{mancini2005optimal}
L.~Mancini, E.~Ronchetti, and F.~Trojani.
\newblock Optimal conditionally unbiased bounded-influence inference in dynamic
  location and scale models.
\newblock \emph{Journal of the American Statistical Association}, 100\penalty0
  (470):\penalty0 628--641, 2005.

\bibitem[Maronna et~al.(2019)Maronna, Martin, Yohai, and
  Salibi{\'a}n-Barrera]{maronna2019robust}
R.~A. Maronna, R.~D. Martin, V.~J. Yohai, and M.~Salibi{\'a}n-Barrera.
\newblock \emph{Robust statistics: theory and methods (with R)}.
\newblock John Wiley \& Sons, 2019.

\bibitem[McLachlan et~al.(2019)McLachlan, Lee, and
  Rathnayake]{mclachlan2019finite}
G.~J. McLachlan, S.~X. Lee, and S.~I. Rathnayake.
\newblock Finite mixture models.
\newblock \emph{Annual review of statistics and its application}, 6:\penalty0
  355--378, 2019.

\bibitem[Meager(2019)]{data:meager2019}
R.~Meager, 2019.
\newblock Replication data for: Understanding the Average Impact of Microcredit
  Expansions: A Bayesian Hierarchical Analysis of Seven Randomized Experiments.
  Ann Arbor, MI: Inter-university Consortium for Political and Social Research.
  doi: 10.3886/E116357V1.

\bibitem[Merkle(2014)]{merkle2014completely}
M.~Merkle.
\newblock Completely monotone functions: {A} digest.
\newblock In \emph{Analytic Number Theory, Approximation Theory, and Special
  Functions}, pages 347--364. Springer, 2014.

\bibitem[Miller and Dunson(2019)]{miller2018robust}
J.~W. Miller and D.~B. Dunson.
\newblock Robust {B}ayesian {I}nference via {C}oarsening.
\newblock \emph{Journal of the American Statistical Association}, 114\penalty0
  (527):\penalty0 1113--1125, 2019.

\bibitem[Parikh and Boyd(2014)]{parikh2014proximal}
N.~Parikh and S.~Boyd.
\newblock Proximal algorithms.
\newblock \emph{Foundations and Trends{\textregistered} in Optimization},
  1\penalty0 (3):\penalty0 127--239, 2014.

\bibitem[Pinelis(2020)]{pinelis2020exact}
I.~Pinelis.
\newblock Exact lower and upper bounds on the incomplete gamma function.
\newblock \emph{Mathematical Inequalities and Applications}, 23\penalty0
  (4):\penalty0 1261, 2020.

\bibitem[Rand(1971)]{rand1971objective}
W.~M. Rand.
\newblock Objective criteria for the evaluation of clustering methods.
\newblock \emph{Journal of the American Statistical Association}, 66\penalty0
  (336):\penalty0 846--850, 1971.

\bibitem[Ressel(1976)]{ressel1976short}
P.~Ressel.
\newblock A short proof of {S}choenberg’s theorem.
\newblock \emph{Proceedings of the American Mathematical Society}, 57\penalty0
  (1):\penalty0 66--68, 1976.

\bibitem[Rigollet and Vert(2009)]{rigollet2009optimal}
P.~Rigollet and R.~Vert.
\newblock Optimal rates for plug-in estimators of density level sets.
\newblock \emph{Bernoulli}, 15\penalty0 (4):\penalty0 1154--1178, 2009.

\bibitem[Rinaldo and Wasserman(2010)]{rinaldo2010generalized}
A.~Rinaldo and L.~Wasserman.
\newblock Generalized density clustering.
\newblock \emph{The Annals of Statistics}, 38\penalty0 (5):\penalty0
  2678--2722, 2010.

\bibitem[Rousseeuw and Hubert(2011)]{rousseeuw2011robust}
P.~J. Rousseeuw and M.~Hubert.
\newblock Robust statistics for outlier detection.
\newblock \emph{Wiley interdisciplinary reviews: Data mining and knowledge
  discovery}, 1\penalty0 (1):\penalty0 73--79, 2011.

\bibitem[Samdani et~al.(2012)Samdani, Chang, and Roth]{samdani2012unified}
R.~Samdani, M.-W. Chang, and D.~Roth.
\newblock Unified expectation maximization.
\newblock In \emph{Proceedings of the 2012 Conference of the North American
  Chapter of the Association for Computational Linguistics: Human Language
  Technologies}, pages 688--698, 2012.

\bibitem[Satopaa et~al.(2011)Satopaa, Albrecht, Irwin, and
  Raghavan]{satopaa2011finding}
V.~Satopaa, J.~Albrecht, D.~Irwin, and B.~Raghavan.
\newblock Finding a ``kneedle'' in a haystack: {D}etecting knee points in
  system behavior.
\newblock In \emph{31st International Conference on Distributed Computing
  Systems Workshops}, pages 166--171, 2011.

\bibitem[Sriperumbudur et~al.(2010)Sriperumbudur, Gretton, Fukumizu,
  Sch{\"{o}}lkopf, and Lanckriet]{Sriperumbudur2009}
B.~K. Sriperumbudur, A.~Gretton, K.~Fukumizu, B.~Sch{\"{o}}lkopf, and G.~R.~G.
  Lanckriet.
\newblock {H}ilbert space embeddings and metrics on probability measures.
\newblock \emph{Journal of Machine Learning Research}, 11, 2010.

\bibitem[van~der Vaart(1998)]{van2000asymptotic}
A.~W. van~der Vaart.
\newblock \emph{Asymptotic Statistics}.
\newblock Cambridge University Press, 1998.

\bibitem[Windham(1995)]{windham1995}
M.~P. Windham.
\newblock Robustifying model fitting.
\newblock \emph{Journal of the Royal Statistical Society. Series B
  (Methodological)}, 57\penalty0 (3):\penalty0 599--609, 1995.

\end{thebibliography}
\end{document}